\def\l@subsubsection#1#2{}
\newcommand{\R}{\mathbb{R}}
\newcommand{\C}{\mathbb{C}}
\newcommand{\N}{\mathbb{N}}
\newcommand{\Q}{\mathbb{Q}}
\newcommand{\chn}{\mathcal{E}}
\newcommand{\BH}{\mathcal{B} \left( \mathcal{H} \right)}
\newcommand{\BHn}{\mathcal{B} \left( \mathcal{H}^{\otimes n} \right)}
\newcommand{\SH}{\mathcal{S} \left( \mathcal{H} \right)}
\newcommand{\SHs}{\mathcal{S} \left( \mathcal{H}_S \right)}
\newcommand{\SHi}{\mathcal{S} \left( \mathcal{H}_1 \right)}
\newcommand{\SHii}{\mathcal{S} \left( \mathcal{H}_2 \right)}
\newcommand{\SHbi}{\mathcal{S} \left( \mathcal{H}_{B_1} \right)}
\newcommand{\SHbii}{\mathcal{S} \left( \mathcal{H}_{B_2} \right)}
\newcommand{\SHbI}{\mathcal{S} \left( \mathcal{H}_{B_i} \right)}
\newcommand{\SHn}[1]{\mathcal{S} \left( \mathcal{H}^{\otimes {#1}} \right)}
\newcommand{\hil}{\mathcal{H}}
\newcommand{\Id}{\mathbb{I}}
\newcommand{\tr}[1]{\mathrm{Tr}\left[ {#1} \right]} 
\newcommand{\Tr}[2]{\mathrm{Tr}_{#1}\left[ {#2} \right]} 
\newcommand{\p}{\mathrm{p}}
\newcommand{\q}{\mathrm{q}}
\newcommand{\h}{\mathrm{h}}
\newcommand{\rt}{\text{R}}
\newcommand{\f}{\mathcal{F}}
\newcommand{\s}{\mathcal{S}}
\newcommand{\A}{\mathcal{C}}
\newcommand{\supp}[1]{\text{supp}\left( {#1} \right)}
\newcommand{\re}[2]{D( {#1} \, \| \, {#2} )}
\newtheorem{thm}{Theorem}
\newtheorem{prop}[thm]{Proposition}
\newtheorem{lem}[thm]{Lemma}
\newtheorem{coro}[thm]{Corollary}
\newtheorem{definition}[thm]{Definition}
\newlist{describe}{description}{1}
\setlist[describe,1]{%
  font=\normalfont\textsf,
  itemindent=0pt,
  wide,
  itemsep=0pt,topsep=2pt,
  format={\normalfont\textsfcolor}
}
\newcommand*{\textsfcolor}[1]{\textsf{#1:}}
\xpatchcmd{\enit@description@i}{%
  \labelsep\z@
}{%
  \phantomsection
  \let\org@label\label
  \let\label\@gobble
  \protected@edef\@currentlabel{##1}%
  \let\label\org@label
  \labelsep\z@
}{}{\undefined}
\tikzstyle{box} = [rectangle, rounded corners, minimum width=3cm, minimum height=1cm,text centered, text width=4cm, draw=black]
\tikzstyle{finbox} = [rectangle, rounded corners, draw, double, double distance =1pt, minimum width=3cm, minimum height=1cm,text centered, text width=4cm, draw=black]
\tikzstyle{longbox} = [rectangle, rounded corners, draw, double, double distance =1pt, minimum width=3cm, minimum height=1cm,text centered, text width=8cm, draw=black]
\tikzstyle{initbox} = [rectangle, draw, double, double distance =1pt, minimum width=3cm, minimum height=1cm,text centered, text width=12cm, draw=black]
\tikzstyle{arrow} = [thick,->,>=angle 60]
\begin{document}
\title{The first law of general quantum resource theories}
\author{Carlo Sparaciari}\email{c.sparaciari@ucl.ac.uk}
\affiliation{Department of Physics and Astronomy, University College London,
London WC1E 6BT, United Kingdom} 
\author{L\'idia del Rio}
\affiliation{Institute for Theoretical Physics, ETH Zurich, 8093 Z{\"u}rich, Switzerland}
\author{Carlo Maria Scandolo}
\affiliation{Department of Computer Science, University of Oxford, Oxford OX1 3QD, UK}
\author{Philippe Faist}
\affiliation{Dahlem Center for Complex Quantum Systems, Freie Universit\"at Berlin, 14195 Berlin, Germany}
\affiliation{Institute for Quantum Information and Matter, Caltech, Pasadena CA, 91125 USA}
\author{Jonathan Oppenheim}
\affiliation{Department of Physics and Astronomy, University College London,
London WC1E 6BT, United Kingdom}
\date{20-04-2020}
\begin{abstract}
We extend the tools of quantum resource theories to scenarios in which multiple quantities
(or resources) are present, and their interplay governs the evolution of physical systems. We derive
conditions for the interconversion of these resources, which generalise the first law
of thermodynamics. We study reversibility conditions for multi-resource theories, and find that the
relative entropy distances from the invariant sets of the theory play a fundamental role in the quantification
of the resources. The first law for general multi-resource theories is a single relation which links the change
in the properties of the system during a state transformation and the weighted sum of the resources exchanged.
In fact, this law can be seen as relating the change in the relative entropy from different sets of states. In
contrast to typical single-resource theories, the notion of free states and invariant sets of states become distinct
in light of multiple constraints. Additionally, generalisations of the Helmholtz free energy, and of adiabatic and
isothermal transformations, emerge. We thus have a set of laws for general quantum resource theories, which
generalise the laws of thermodynamics. We first test this approach on thermodynamics with multiple
conservation laws, and then apply it to the theory of local operations under energetic restrictions.
\end{abstract}
\maketitle
\tableofcontents
\newpage
\section{Introduction}
\label{intro}
\textbf{Resource theories.} 
Resource theories are a versatile set of tools developed in quantum information theory. They are used
to describe the physical world from the perspective of an agent, whose ability to modify a quantum
system is restricted by either practical or fundamental constraints. These limitations mean that while
some states can still be created under the restricted class of operations (the {\it free} or {\it invariant
set} of states), other state transformations can only be done with the help of additional resources. 
The goal of resource theories is then to quantify this cost, and to consequently assign a price to every
state of the system, from the most expensive to the free ones.  Because of their very general structure,
which only involves the set of states describing a quantum system and a given set of allowed operations
for acting on such system, resource theories can be used to study many different branches of quantum
physics, from entanglement theory~\cite{bennett_mixed-state_1996,rains_entanglement_1997,
vedral_entanglement_1998,rains_bound_1999,horodecki_quantum_2009}
to thermodynamics~\cite{janzing_thermodynamic_2000,horodecki_reversible_2003,
rio_thermodynamic_2011,workvalue,brandao_resource_2013,horodecki_fundamental_2013,
skrzypczyk_work_2014,gallego_thermodynamic_2016}, from asymmetry~\cite{gour_resource_2008,
gour_measuring_2009,marvian_theory_2013} to the theory of magic states~\cite{mari_positive_2012,
veitch_resource_2014,veitch_negative_2012}. Additionally, these theories can often be formulated
within more abstract, axiomatic frameworks~\cite{lieb_physics_1999,lieb_entropy_2013,
weilenmann_axiomatic_2016,fritz_resource_2015,del_rio_resource_2015,coecke_mathematical_2016,
anshu_quantifying_2017}.
\par
Thanks to the underlying common structure present in all the theories described within this framework,
one can find general results which apply to all. For example, a resource theory may be equipped with
a zeroth, second, and even third law, i.e., relations that regulate the different aspects of the theory, which
are reminiscent of the Laws of Thermodynamics. In fact, we have that the \emph{zeroth law} for resource
theories states that there exists equivalence classes of free states, and that states from one of these
classes are the only ones that can be freely added to the system without trivialising the
theory~\cite{brandao_second_2015}. The \emph{second law} of resource theories states that some
quantities, linked to the amount of resource contained in a system, never increase under the action of
the allowed operations~\cite{popescu_thermodynamics_1997}, and for reversible resource theories
satisfying modest assumptions, this quantity is unique~\cite{horodecki_are_2002,brandao_entanglement_2008,
brandao_reversible_2010,horodecki_quantumness_2012,brandao_reversible_2015} --- an example of this is
the free energy, which is a monotone in
thermodynamics as it decreases in any cyclic process, and the local entropy for pure state entanglement
theory. Finally, one might have a generalisation of the third law which places limitations on the time needed
to reach a state when starting from another one, rather than simply telling us whether such transformation
is possible or not~\cite{masanes_general_2017}. With the present work, we aim to derive the
\emph{first law} for resource theories, and to do so we will have to extend the framework so as to include
multiple resources. The law we derive connects the amount of different resources exchanged during a
state transformation to the change, quantified by a specific monotone, between the initial and final state of
the system. When considering thermodynamics, this law connects the amount of work and heat exchanged
during a process to the internal energy of the systems.
\par
\textbf{Multiple resources.}
It is often the case that many resources are needed to perform a given task. For instance, thermodynamics
can be understood as a resource theory with multiple resources~\cite{sparaciari_resource_2016, bera_thermodynamics_2017},
where in order to transform the state of the system we need both \emph{energy} and \emph{information},
or equivalently, work and heat. As another example, some quantum computational schemes consider
the idealized case in which the input qubits are pure, and the gates acting on them create coherence. In order
to better understand the role played in quantum computation by these two resources, \emph{coherence} and
\emph{purity}, a possible approach might consist in combining the resource theories of purity~\cite{horodecki_reversible_2003,
gour_resource_2015} and coherence~\cite{aberg_quantifying_2006,baumgratz_quantifying_2014,winter_operational_2016}
together. Other examples of theories in which multiple resources are
considered can be found in the literature~\cite{slepian_noiseless_1973,horodecki_partial_2005,
ahmadi_wignerarakiyanase_2013,singh_maximally_2015,streltsov_entanglement_2016,chitambar_relating_2016,
sparaciari_resource_2016,erker_autonomous_2017,bera_thermodynamics_2017}.
Given the success of resource theories to describe physical situations where only one resource is
involved, it seems natural to ask the question whether the framework can be extended to the case
in which more resources are involved. For example, it is known that the resource theoretic approach
to thermodynamics allows us to derive a second law relation even in the case in which many (commuting,
non-commuting) conserved quantities are present~\cite{guryanova_thermodynamics_2016,
yunger_halpern_microcanonical_2016,yunger_halpern_beyond_2016,lostaglio_thermodynamic_2017,
halpern_beyond_2018}, and one can consider trade-offs of these \cite{popescu_quantum_2018}. We
are thus interested in understanding if one can extend these results to other resource theories, and
whether a first law of general resource theories exists.
\begin{figure}[t!]
\center
\includegraphics[width=0.4\textwidth]{./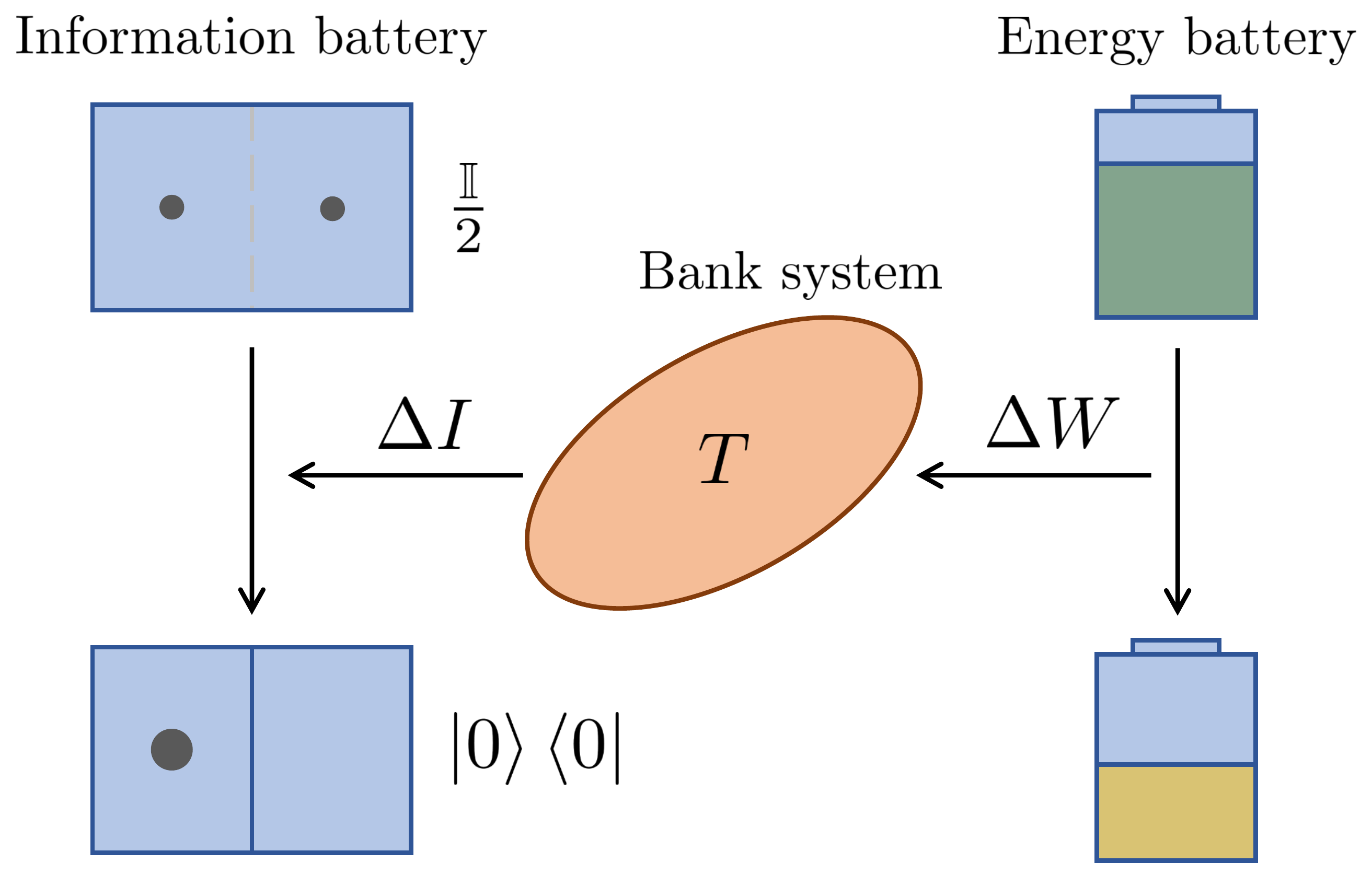}
\hspace{0.1\textwidth}
\includegraphics[width=0.4\textwidth]{./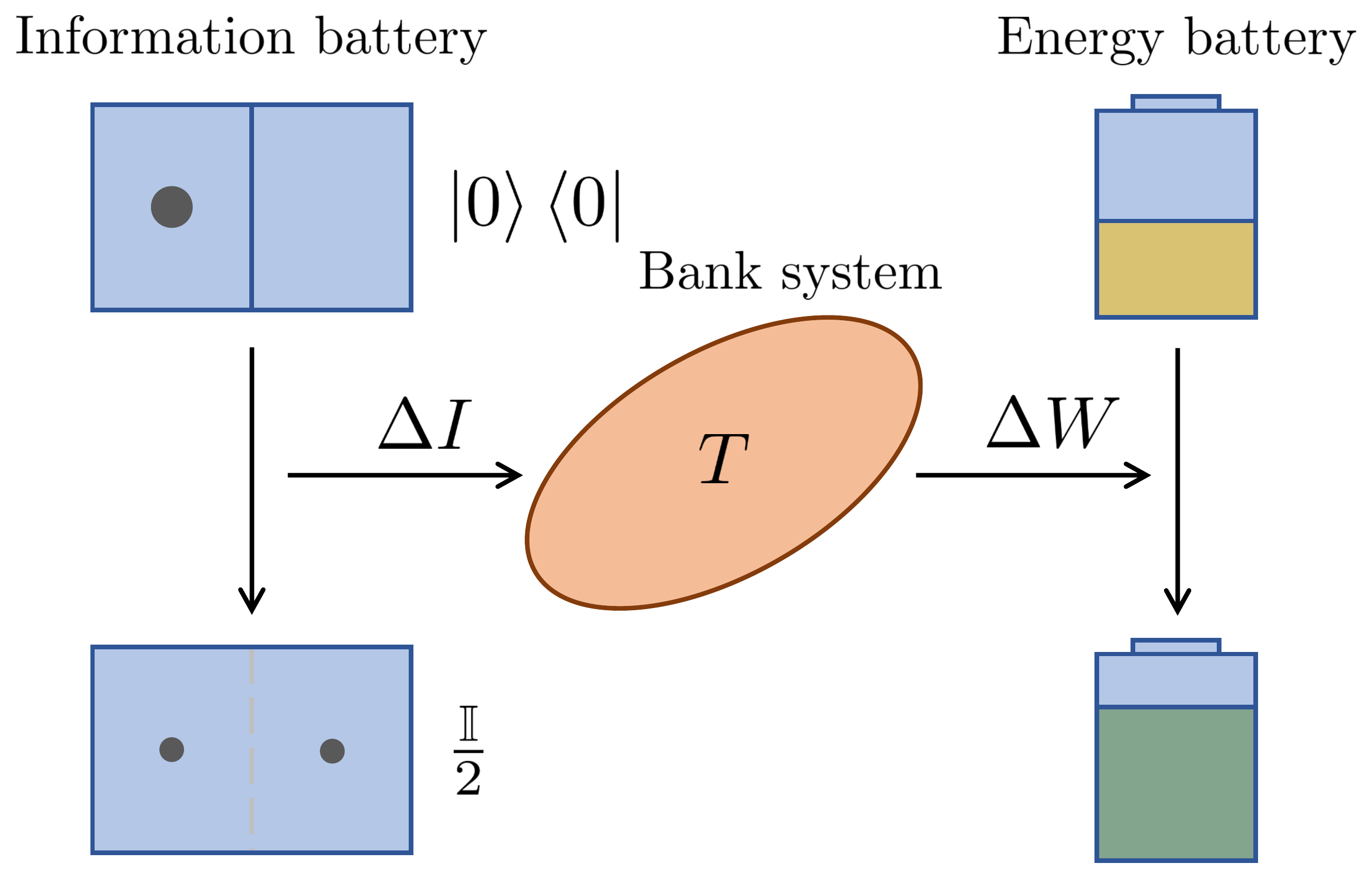}
\caption{An example of a multi-resource theory is thermodynamics, where energy and information
are resources which can be inter-converted. In the figure, we represent three
different systems. The main system is a Szil{\'a}rd box, i.e., a box which can be divided in two
partitions, here containing a single particle of gas. We can either know in which side of the
box the particle is, or we might not have this information (if, for instance, the partition is removed
and the particle is free to move between sides). We additionally have a thermal reservoir surrounding
the box, with a well-defined temperature $T$, and we have an ancillary system that we use
to store energy (or work), which we refer to as the battery. We can then consider the following
two processes. {\bf Left.} Landauer's erasure is the process of converting some of the energy
contained in the battery, $\Delta W$, into information, $\Delta I$, which is then used to reset the state
of the particle in the box (from completely unknown, $\frac{\Id}{2}$, to perfectly known, $\ket{0}$,
in this case). The conversion is realised using the thermal environment, and energy and information
are exchanged at the rate $k_B T$, which only depends on the properties (the temperature)
of the reservoir. {\bf Right.} In the other direction we can convert information,
$\Delta I$, into work, $\Delta W$, at the same exchange rate. Information is extracted form
the box, and converted using the thermal bath into energy, which is then stored in the battery.
Here, we generalise the function of the thermal reservoir to other multi-resource theories, and we
name this system the {\it bank}, since it allows for the exchange of one resource into another.}
\label{fig:slizard_boxes}
\end{figure}
\par
\textbf{Contribution of this work.}
In this paper we present a framework for resource theories with multiple resources, introduced in
Sec.~\ref{multi_res_framework}. In our framework we first consider the different constraints
and conservation laws that the model needs to satisfy, and for each of these constraints, we introduce
the corresponding single-resource theory. Then, we define the class of allowed operations of the
multi-resource theory as the set of maps lying in the intersection of all the classes of allowed
operations of the single-resource theories. Due to this construction, we find that a multi-resource
theory with $m$ resources has at least $m$ invariant sets (i.e., sets of states that are mapped
into themselves by the action of the allowed operations of the theory), each of them corresponding
to the set of free states of one of the $m$ single-resource theories. In order to make
the paper self-contained, in Sec.~\ref{multi_res_framework} we also provide a brief review of the
resource theoretic formalism (see Ref.~\cite{horodecki_quantumness_2012,chitambar_quantum_2018}
for reviews on this topic).
\par
We then study, in Sec.~\ref{rev_theory_mult}, the properties of general multi-resource theories
in the asymptotic limit, that is, when the agent is allowed to act globally over many identical copies of
the system. This limit is of fundamental importance in resource theories since it allows us to investigate
reversibility and the emergence of unique measures for quantifying different resources~\cite{horodecki_quantumness_2012}.
In a reversible theory, we have that the resources consumed to perform a given state transformation
can always be completely recovered with the reverse transformation, so that no resource is ever lost.
In single-resource theories, we can rephrase this notion of reversibility in terms of rates of conversion,
but for general multi-resource theories this is not always possible. As a result, we focus our study on
multi-resource theories that satisfy an additional property, which we refer to as the \emph{asymptotic equivalence
property}~\cite{fritz_resource_2015,sparaciari_resource_2016}, see Def.~\ref{def:asympt_equivalence_multi}
below. We show that, when a multi-resource theory satisfies the asymptotic equivalence property, there is
a unique measure associated with each resource present in the theory. Furthermore, when the invariant sets
of the theory satisfy some natural properties, we find that the unique measures are given by the (regularised)
relative entropy distances from these sets, each of those associated with a different resource.
Finally, we show that when a resource theory satisfies asymptotic equivalence,
it is also reversible in the sense that resources are never lost during a state transformation, and they
can be recovered. This result can be seen as the extension of what has already been shown for reversible
single-resource theories~\cite{popescu_thermodynamics_1997,horodecki_are_2002,horodecki_quantumness_2012,
brandao_reversible_2015}.
\par
In Sec.~\ref{interconv} we address the question of whether it is possible to exchange resources. We
consider the case in which different resources are individually stored in separate systems, which we
call \emph{batteries}. Then, we investigate under which conditions it is possible to find an additional
system, which we refer to as a \emph{bank}\footnote{We apologise in advance for introducing this
terminology into the field of resource theories, but the banks considered here exchange resources
without charging interest or fees, and are thus more akin to community cooperative banks than their
more exploitative cousins.}, that allows us to reduce the amount of resource contained
in one battery while simultaneously increasing the amount of resource in another battery. During such
conversion, we ask the bank not to change its properties -- with respect to a specific measure defined
in Eq.~\eqref{f3_monotone} -- so as to be able to use this system again. For example, in thermodynamics
the thermal bath plays the role of the bank, as it allows us to exchange energy for information and vice versa,
see Fig.~\ref{fig:slizard_boxes}. In order to study interconversion, we demand the invariant sets of the
theory to satisfy an "additivity" condition, which is satisfied by some resource theories, for example by
thermodynamics and purity theory. We find that a multi-resource theory needs to have an empty set of free
states for a bank to exist, and when this condition is satisfied we derive an interconversion relation, see
Thm.~\ref{thm:interconvert_relation}, which defines the rates at which resources are exchanged.
\par
We additionally show that, when the agent is allowed to use batteries and bank, they can perform any
state transformation using variable amounts of resources. Indeed, since the agent can use the bank to
inter-convert between resources, they can decide to invest a higher amount of one resource to save on
the others. This freedom is reflected in our framework by a single relation, the first law of resource theories,
which connects the different resources, each of them weighted by the corresponding exchange rate, to the
change of a particular monotone between the initial and final state of the system, see Cor.~\ref{coro:first_law}.
This equality is a generalisation of the first law of thermodynamics, where the sum of the work performed on
the system and the heat absorbed from the environment is equal to the change in internal energy of the system.
In fact, the first law of thermodynamics can be understood as equating various relative entropy distances which
quantify different types of resources, as we discuss at the beginning of Sec.~\ref{interconv}.
\par
Finally, in Sec.~\ref{examples} we provide two examples of multi-resource
theories which admit an interconversion relation between their resources. The first example concerns
thermodynamics of multiple conserved quantities, for which the interconversion of resources was shown
in Ref.~\cite{guryanova_thermodynamics_2016}. The second example concerns the theory of local
control under energy restrictions. Here we consider a system with a non-local Hamiltonian, and we
assume that the experimentalists acting on this system only have access to a portion of the system.
In this scenario, the entanglement between the different portions of the system and the overall energy
of the global system are the main resources of the theory, and we study under which conditions we
can inter-convert energy and entanglement. For a summary of how to apply our work to an arbitrary
resource theory, see the flowchart in Fig.~\ref{fig:flowchart}.
\section{Framework for multi-resource theories}
\label{multi_res_framework}
Let us now introduce the framework for multi-resource theories. A multi-resource
theory is useful when we need to describe a physical task or process which is subjected to different
constraints and conservation laws. The first step consists in associating each of these constraints with a
single-resource theory, whose class of allowed operations satisfies the specific constraint or conservation
law. The multi-resource theory is then obtained by defining its class of allowed operations as the intersection
between the sets of allowed operations of the different single-resource theories previously defined. In this
way, we are sure of acting on the quantum system with operations that do not violate the multiple constraints
imposed on the task.
\subsection{Single-resource theory}
\label{single_resource}
For simplicity, we restrict ourselves to the study of finite-dimensional quantum systems. Therefore,
the system under investigation is described by a Hilbert space $\hil$ with dimension $d$.
The state-space of this quantum system is given by the set of density operators acting on the Hilbert
space, $\SH = \left\{ \rho \in \BH \ | \ \rho \geq 0, \ \tr{\rho} = 1 \right\}$, where $\BH$ is the set
of bounded operators acting on $\hil$. A single-resource theory for the quantum system under
examination is defined through a class of allowed operations $\A$, that is, a constrained set of
completely positive maps acting on the state-space $\SH$\footnote{Although the operations we
consider are endomorphisms of a given state space, our formalism is still able to describe the general
case in which the agent modifies the quantum system. If the agent's action transforms the state of the
original system, associated with $\hil_1$, into the state of a final system $\hil_2$, we can model this
action with a map acting on the state space of $\hil = \hil_1 \otimes \hil_2$. Suppose the operation maps
$\rho_1 \in \SHi$ into $\sigma_2 \in \SHii$. Then, the map acting on $\SH$ takes the state $\rho_1
\otimes \gamma_2$ and outputs the state $\gamma_1' \otimes \sigma_2$, where $\gamma_1$
and $\gamma_2'$ are free states for the systems described by $\hil_2$ and $\hil_1$, respectively.} \cite{horodecki_are_2002}.
The constraints posed on the set of allowed operations are specific to the resource theory under
consideration. For example, in the theories that study entanglement it is often the case that we
constrain the set of allowed operations to be composed by the maps that are local, and only make
use of classical communication~\cite{bennett_mixed-state_1996}. In asymmetry theory, instead,
we only allow the maps whose action is covariant with respect to the elements of a given
group~\cite{gour_resource_2008}. Furthermore, in the resource theoretic approach to
thermodynamics we can, without loss of generality, constrain this set to those operations, known
as Thermal Operations, which preserve the energy of a closed system, and can thermalise the
system with respect to a background temperature~\cite{janzing_thermodynamic_2000,
brandao_resource_2013,horodecki_fundamental_2013,renes_work_2014}. Once the set of
allowed operations is defined, it is usually possible to identify which states in $\SH$ are resourceful,
and which ones are not. In particular, the set of \emph{free states} for a single-resource theory,
$\f \subset \SH$, is composed of those states that can always be prepared using the allowed
operations, no matter the initial state of the system. Mathematically, this set of states is
defined as
\begin{equation}
\label{free_state_set}
\f = \left\{ \sigma \in \SH \ | \ \forall \, \rho \in \SH, \exists \, \chn \in \A : \chn(\rho) = \sigma \right\}.
\end{equation}
For example, in entanglement theory the free states are the separable states, in asymmetry theory they are
the ones that commute with the elements of the considered group, and in thermodynamics they are the
thermal states at the background temperature.
\par
An \emph{invariant set} is a set of states that is preserved under action of any allowed operation.
From the definition of free states in Eq.~\eqref{free_state_set}, it is easy to show that $\f$ is an
invariant set, and we write this as $\chn(\f) \subseteq \f$ for all $\chn \in \A$. It is worth
noting that while the set of free states is invariant, the opposite clearly does not need to be true. In particular,
when we study multi-resource theory, we will see that several invariant sets can be found, and still there
might be no free set for the theory. Due to the invariant property of free states, we can
also define the class of allowed operations in a different way. Instead of considering the specific
constraints defining the set of allowed operations $\A$, we can simply assume that this set is a
subset of the bigger class of completely positive and trace preserving (CPTP) maps
\begin{equation}
\label{maps_single_copy}
\tilde{\A} = \left\{ \chn : \BH \rightarrow \BH \ | \ \chn \left( \f \right) \subseteq \f \right\},
\end{equation}
that is, the set of maps for which the free states $\f$ form an invariant set. It is worth noting that
$\A$ is often a proper subset of $\tilde{\A}$. For example, in entanglement theory, we have that
$\A$ might be composed by local operations and classical communication (LOCC), which is a
proper subset of the set of all quantum channels which preserve the separable states. Indeed, the
map that swaps between the local states describing the quantum system is clearly not LOCC, but it
preserves separable states~\cite{bennett_quantum_1999}.
\par
We can also extend the single-resource theory to the case in which we consider $n \in \N$ copies of
the quantum system. The class of allowed operations, which in this case we refer to as $\A^{(n)}$, is
still defined by the same constraints, but now acts on $\SHn{n}$, the state-space of $n$ copies of
the system. For example, in the resource theory of thermodynamics with Thermal Operations we
have that the energy of a closed system needs to be exactly conserved. For a single system, this implies
that the operations need to commute with the Hamiltonian $H^{(1)}$. For $n$ non-interacting copies
of the system, instead, the operations commute with the global Hamiltonian $H_n = \sum_{i=1}^n H^{(1)}_i$.
Within the state-space $\SHn{n}$, we can find the set of free states, $\f^{(n)} \subset \SHn{n}$. It
is worth noting that the set of free states for $n$ copies of the system is such that $\f^{\otimes n}
\subseteq \f^{(n)}$, that is, it contains more states than just the tensor product of $n$ states in
$\f$. This is the case, for example, of entanglement theory, where among the free states for two
copies of the system we can find states that are locally entangled, since each agent is allowed to
entangle the partitions of the system they own. On the contrary, the two sets coincide for any $n \in
\N$ for the resource theory of thermodynamics, where the free state is the Gibbs state of a given
Hamiltonian. Anyway, it is still the case that $\f^{(n)}$ is invariant under the class $\A^{(n)}$, and
therefore we can think of the set of allowed operations acting on $n$ copies of the system as a
subset of the bigger set of CPTP maps
\begin{equation}
\label{maps_n_copies}
\tilde{\A}^{(n)} = \left\{ \chn_n : \BHn \rightarrow \BHn \ | \ \chn_n \left( \f^{(n)} \right) \subseteq \f^{(n)} \right\}.
\end{equation}
Thus, in order to extend a single-resource theory to the many-copy case, we need to
take into account the sequence of all sets of allowed operations $\A^{(n)}$, where $n \in \N$ is the number of
copies of the system the maps are acting on.
\par
It is worth noting that the allowed operations we have introduced keep the number of copies
of the system fixed, see Eq.~\eqref{maps_n_copies}. Indeed, we only consider these maps
because, when the number of input and output systems of a quantum channel changes, the
internal structure of the channel involves the discarding (or the addition) of some of these
systems. However, in a (reversible) resource theory, one can perform such operations only if
the amount of resources is kept constant. This is certainly possible if we are to add or trace out
some free states of the theory (which do not contain any resource), but as we will see in the
next section, multi-resource theory not always have any free states. For this reason, we decide
to only focus on maps that conserve the number of systems, even for single-resource theories.
\par
We can now address the problem of quantifying the amount of resource associated with different states of
the quantum system. In resource theories, a resource quantifier is called \emph{monotone}. This
object is a function $f$ from the state-space $\SH$ to the set of real numbers $\R$, which satisfies the
following property,
\begin{equation}
\label{second_law}
f \left( \chn(\rho) \right) \leq f \left( \rho \right) , \qquad \forall \, \rho \in \SH, \ \forall \, \chn \in \A.
\end{equation} 
The above inequality can be interpreted as a ``second law'' for the resource theory, since there is a
quantity (the monotone) that never increases as we act on the system with allowed operations. In the
thermodynamic case, in fact, we know that the Second Law of Thermodynamics imposes that the entropy
of a closed system can never decrease as time goes by. We can extend the definition of monotones to the
case in which we consider $n$ copies of the system. In this case, the function $f$ maps states in $\SHn{n}$
into $\R$, and an analogous relation to the one of Eq.~\eqref{second_law} holds, this time for states in
$\SHn{n}$ and the set of allowed operations $\A^{(n)}$. Finally, we can also define the \emph{regularisation}
of a monotone $f$ as
\begin{equation}
f^{\infty}(\rho) = \lim_{n \rightarrow \infty} \frac{f \left( \rho^{\otimes n} \right)}{n},
\end{equation}
where $\rho \in \SH$, and $\rho^{\otimes n} \in \SHn{n}$. Notice that, given a generic monotone $f$,
we need the above limit to exist and be finite in order to define its regularisation.
\par
For each resource theory there exists several monotones, and we can always build one out of a
\emph{contractive distance}~\cite{brandao_reversible_2015}. Consider the distance
$C \left( \cdot, \cdot \right) : \SH \times \SH \rightarrow \R$ such that
\begin{equation}
C \left( \chn(\rho) , \chn(\sigma) \right)
\leq
C \left( \rho, \sigma \right) , \qquad \forall \, \rho, \sigma \in \SH, \ \forall \, \chn \ \text{CPTP map}.
\end{equation}
Then, a monotone for the single-resource theory with allowed operations $\A$ and free states $\f$ is
\begin{equation}
\label{monotone_contract}
M_{\f} (\rho) = \inf_{\sigma \in \f} C \left( \rho, \sigma \right),
\end{equation}
where it is easy to show that $M_{\f}$ satisfies the property of Eq.~\eqref{second_law}, which follows
from the fact that $\f$ is invariant under the set of allowed operations $\A$, and from the contractivity
of $C \left( \cdot, \cdot \right)$ under any CPTP map. A specific example of a monotone obtained from a
contractive distance is the relative entropy distance from the set $\f$. Consider two states $\rho, \sigma
\in \SH$, such that $\supp{\rho} \subseteq \supp{\sigma}$. Then, we define the relative entropy between
these two states as
\begin{equation} \label{rel_entr}
\re{\rho}{\sigma} = \tr{\rho \, \left( \log \rho - \log \sigma \right)}.
\end{equation}
The relative entropy is contractive under CPTP maps~\cite{lindblad_completely_1975}, and
even if it does not satisfy all the axioms to be a metric\footnote{The relative entropy is non-negative
for any two inputs, and zero only when the two inputs coincide, but it is not symmetric, nor does it
satisfy the triangular inequality.} over $\SH$, we can still obtain a monotone out of this quantity,
building it as in Eq.~\eqref{monotone_contract}. This monotone is
\begin{equation} \label{rel_entr_dist}
E_{\f}( \rho ) = \inf_{\sigma \in \f} \re{\rho}{\sigma},
\end{equation}
and is known as the relative entropy distance from $\f$. When the separable states form the set $\f$,
for example, the monotone is the relative entropy of entanglement~\cite{vedral_entanglement_1998}.
It is worth noting that, in order for $E_{\f}$ to be well-defined, the set $\f$ has to contain at least one
full-rank state.
\subsection{Multi-resource theory}
\label{multi_resource}
Let us consider the case in which we can identify in the theory a number $m > 1$ of resources,
which can arise from some conservation laws, or from some constraints. We now introduce a multi-resource
theory with these $m$ resources. The quantum system under investigation is the same as in the
previous section, described by the states in the state-space $\SH$. For the $i$-th resource of interest, where
$i = 1, \ldots, m$, we consider the corresponding single-resource theory $\rt_i$, defined by the set of allowed
operations $\A_i$ acting on the state-space $\SH$. We denote the set of free states of this single-resource
theory as $\f_i \subset \SH$, and we recall that any allowed operation in $\A_i$ leaves this set invariant.
Therefore, we can consider the class of allowed operation as a subset of the set of CPTP maps
\begin{equation}
\label{maps_single_copy_i}
\tilde{\A}_i = \left\{ \chn_i : \BH \rightarrow \BH
\ | \
\chn_i \left( \f_i \right) \subseteq \f_i \right\}.
\end{equation}
We can also extend the resource theory $\rt_i$ to the case in which we consider more than one copy
of the system, following the same procedure used in the previous section. Then, the class of allowed
operations $\A_i^{(n)}$ acting on $n$ copies of the system is a subset of the set of operations which
leave $\f_i^{(n)} \subset \SHn{n}$ invariant, see Eq.~\eqref{maps_n_copies}.
\par
Once all the single-resource theories $\rt_i$'s are defined, together with their sets of allowed operations,
we can build the multi-resource theory $\rt_{\text{multi}}$ for the quantum system described by the
Hilbert space $\hil$. The set of allowed operations for this theory is given by the maps contained in
the intersection\footnote{While other multi-resource theory constructions can be imagined, the one we
use in this paper provides the certainty that no resource can be created out of free states.} between
the classes of allowed operations of the $m$ single-resource theories, that is
\begin{equation}
\label{all_ops_multi}
\A_{\text{multi}} = \overset{m}{\underset{i=1}{\cap}} \A_i.
\end{equation}
Notice that, alternatively, one can define the set of allowed operations $\A_{\text{multi}}$ as a subset
of the bigger set $\cap_{i=1}^m \tilde{\A}_i$, where $\tilde{\A}_i$ is the set of all the CPTP maps for
which $\f_i$ is invariant, see Eq.~\eqref{maps_single_copy_i}. When $n$ copies of the system are
considered, the class of allowed operations for the multi-resource theory, $\A_{\text{multi}}^{(n)}$, is
obtained by the intersection between the sets of allowed operations $\A_i^{(n)}$ of the different
single-resource theories, that is, $\A_{\text{multi}}^{(n)} = \cap_{i=1}^m \A_i^{(n)}$.
\begin{figure}[t!]
\center
\includegraphics[width=0.3\textwidth]{./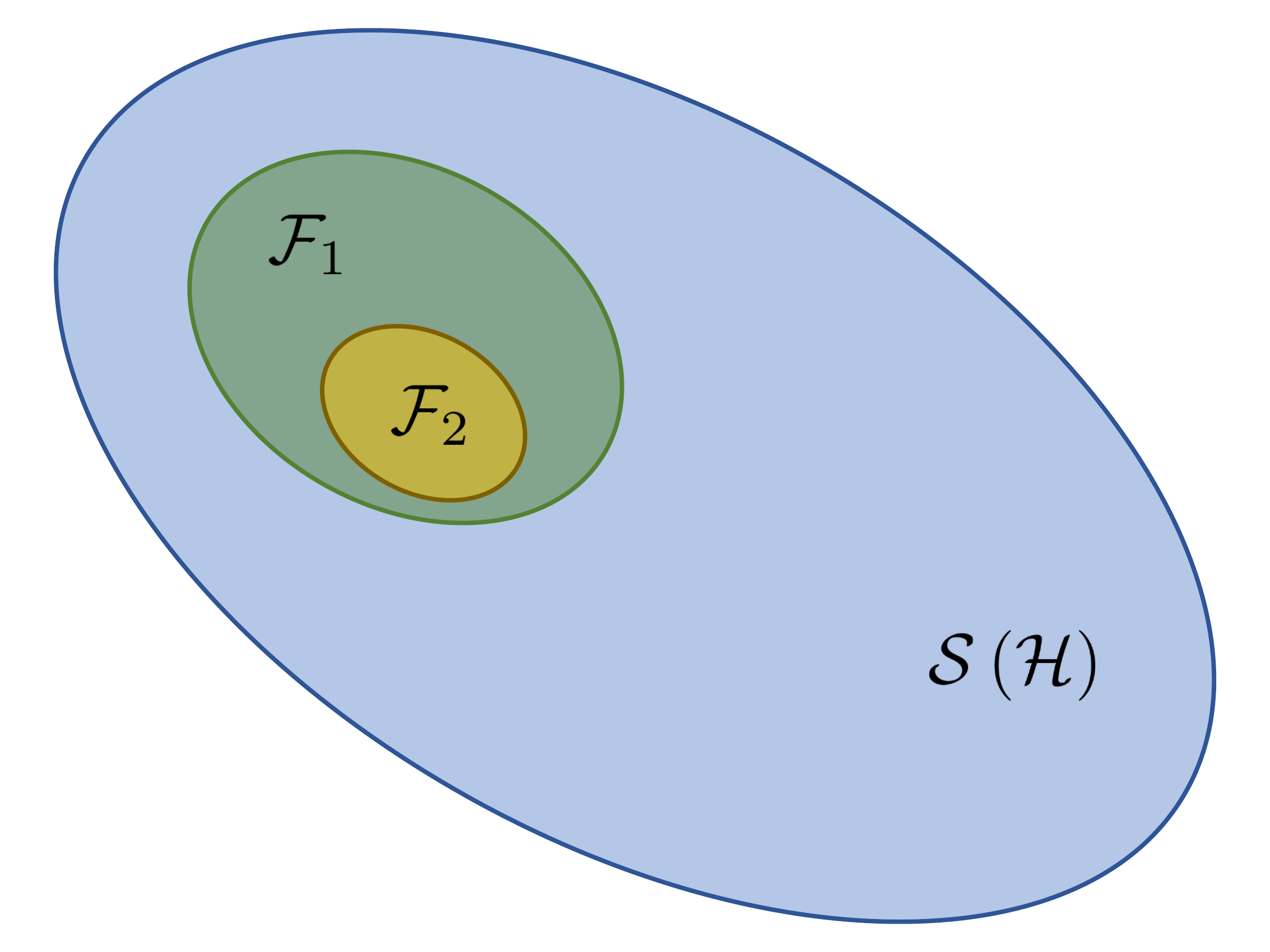}
\includegraphics[width=0.3\textwidth]{./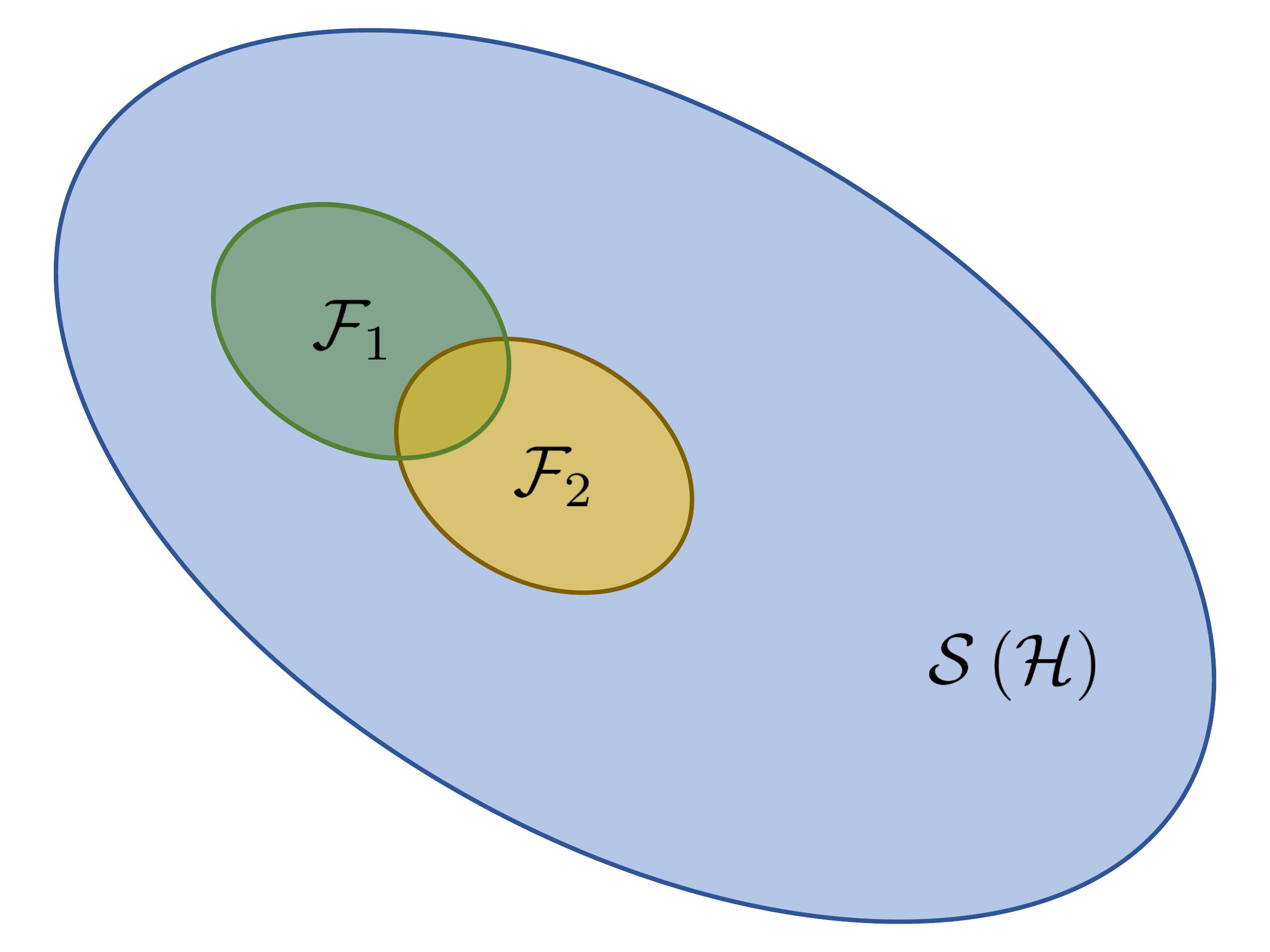}
\includegraphics[width=0.3\textwidth]{./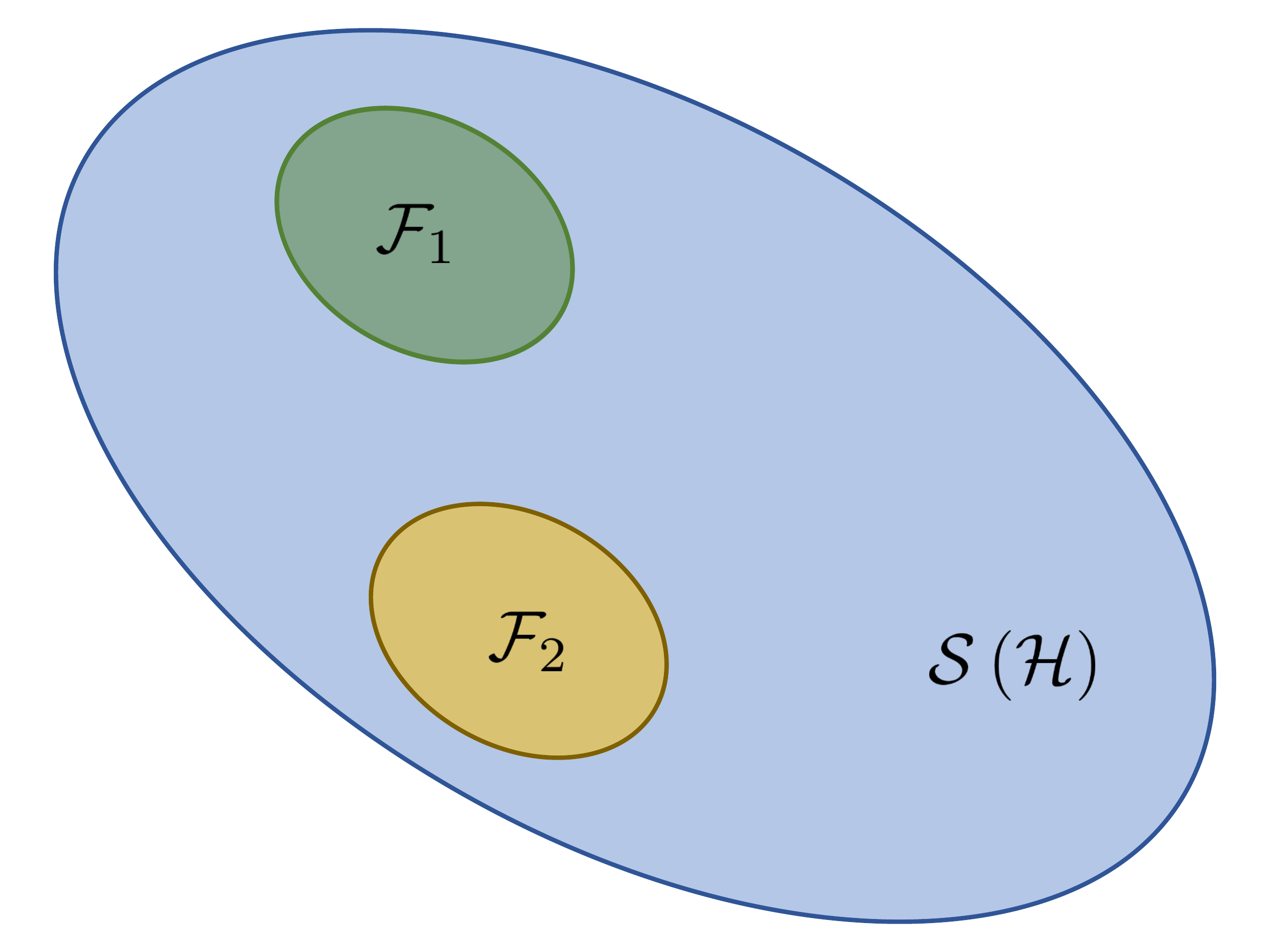}
\caption{The structure of the sets of free states for two single-resource theories
which compose a multi-resource theory. These sets are invariant under the allowed
operations of the resulting multi-resource theory. For theories with $m > 2$ resources,
the structure of the free sets can be obtained by composing the three fundamental scenarios presented here.
{\bf Left.} The invariant set $\f_2$ is a subset of $\f_1$. This multi-resource theory has a set of free
states, which coincides with $\f_2$. An example of such a theory is that of
coherence~\cite{baumgratz_quantifying_2014} and purity~\cite{gour_resource_2015}, where the
invariant sets are incoherent states with respect to a given basis and the maximally-mixed state,
respectively. {\bf Centre.} The two invariant sets intersect each other. This theory has a set of
free states which coincides with the intersection, $\f_1 \cap \f_2$. An example of multi-resource
theory with this structure concerns tripartite entanglement for systems $A$, $B$, and $C$. The
allowed operations of this theory are defined by the intersection of the operations associated
with the theories of bipartite entanglement for systems $AB$ and $C$, systems $AC$ and $B$,
and systems $A$ and $BC$. Notice that this theory does not coincide with the theory of tripartite
LOCC, since some of the free states are entangled~\cite{bennett_unextendible_1999}.
{\bf Right.} The two invariant sets are separated. Consequently, the theory does not
have any free states. In this situation, one can find an interconversion relation between the resources,
as shown in Sec.~\ref{sec:first_law}. An example of a multi-resource theory with this structure is
thermodynamics of closed systems. If the agent does not have perfect control on the reversible
operations they implement, and the closed system is coupled to a sink of energy (an ancillary system
which can only absorb energy), then the allowed operations are given by the intersection between the
set of mixtures of unitary operations, and the set of average-energy-non-increasing maps. In this case, the
maximally-mixed state and the ground state of the Hamiltonian are the two invariant sets of the theory.
Notice that the set of energy-preserving unitary operations, considered in Ref.~\cite{sparaciari_resource_2016},
is a subset of this bigger set.}
\label{fig:invariant_sets_structure}
\end{figure}
\par
We can now consider the invariant sets of this multi-resource theory. Clearly, each set of free states
$\f_i$ associated with the single-resource theory $\rt_i$ is an invariant set for the class of operations
$\A_{\text{multi}}$. However, it is worth noting that the states contained in the $\f_i$'s might not be free
when the multi-resource theory is considered, where a free state is (as we pointed out in the previous
section) a state that does not contain any resource and can be realised using the allowed operations.
Indeed, the states contained in the set $\f_i$ might be resourceful states for the single-resource theory
$\rt_j$, and therefore we would not be able to realise such states with the class of operations
$\A_{\text{multi}}$. In Fig.~\ref{fig:invariant_sets_structure} we show the different configurations for
the invariant sets of a multi-resource theory with two resources. While in the left and central panels the
theory has free states, in the right panel no free states can be found, a noticeable difference from the
framework for single-resource theories.
\par
The multi-resource theory $\rt_{\text{multi}}$ also inherits the monotones of the single-resource theories
that compose it. This follows trivially from the choice we made in defining the class of allowed operations
$\A_{\text{multi}}$, see Eq.~\eqref{all_ops_multi}. Furthermore, other monotones, that are only valid for
the multi-resource theory, can be obtained from the ones inherited from the single-resource theories $\rt_i$'s.
For example, if $f_i$ is a monotone for the single-resource theory $\rt_i$, and $f_j$ is a monotone for the
theory $\rt_j$, their linear combination, where the linear coefficients are positive, is a monotone for the
multi-resource theory $\rt_{\text{multi}}$. Interestingly, in Sec.~\ref{interconv} we will see that a specific
linear combination of the monotones of the different single-resource theories plays an important role in
the interconversion of resources.
\par
Examples of multi-resource theories that can be described within our formalism are already present in the
literature. In Ref.~\cite{streltsov_entanglement_2016}, for instance, the authors study the problem of state-merging
when the parties can only use local operations and classical communication (LOCC), and they restrict
the local operations to be incoherent operations, that is, operations that cannot create coherence (in a
given basis). This theory coincides with the multi-resource theory obtained from combining two single-resource
theories, the one of entanglement, whose set of allowed operations only contains quantum channels
built out of LOCC, and the one of coherence, whose set of allowed operations only contains maps
which do not create coherence. In this case, the structure of the invariant sets is given by the central panel
of Fig.~\ref{fig:invariant_sets_structure}. Another example is the one of Ref.~\cite{sparaciari_resource_2016},
where thermodynamics is obtained as a multi-resource theory whose class of allowed operations is a subset of the
one obtained by taking the intersection of energy-non-increasing maps (operations which do not increase
the average energy of the quantum system, see Sec.~\ref{average_non_increasing}), and mixtures of unitary
operations. In this case the resources are, respectively, average
energy and entropy, and the structure of the invariant sets is given by the right panel of
Fig.~\ref{fig:invariant_sets_structure}, where $\f_1$  coincides with the ground state of the Hamiltonian
(if the Hamiltonian is non-degenerate), while $\f_2$ coincides with the maximally-mixed state.
Other examples of multi-resource theories can be found, and in future work~\cite{multi_resource_paper}
we will present the general properties of multi-resource theories with different invariant sets structures.
\section{Reversible multi-resource theories}
\label{rev_theory_mult}
In this section we study reversibility in the context of multi-resource theories. We first introduce
a property, which we refer to as the \emph{asymptotic equivalence property}, for multi-resource
theories. We then show that, when a resource theory satisfies this property, we can (uniquely)
quantify the amount of resources needed to perform an asymptotic state transformation. This allows
us to introduce the notion of \emph{batteries}, i.e., systems where each individual resource can be stored,
and to keep track of the changes of the resources during a state transformation.
Furthermore, we show that a theory which satisfies the asymptotic equivalence property is also reversible, that is,
the amount of resources exchanged with the batteries during an asymptotic state transformation mapping
$\rho$ into $\sigma$ is equal, with negative sign, to the amount of resources exchanged when mapping
$\sigma$ into $\rho$. Finally, we show that, when the invariant sets of the theory satisfy some general
properties, and the theory satisfies asymptotic equivalence, then the relative entropy distances from the
different invariant sets are the unique measures of the resources. This result is a generalisation of the one
obtained in single-resource theories, see Ref.~\cite{horodecki_are_2002,horodecki_quantumness_2012,brandao_reversible_2015}.
\subsection{Asymptotic equivalence property}
\label{asympt_eqiv_prop}
Let us consider the multi-resource theory $\rt_{\text{multi}}$ introduced in Sec.~\ref{multi_resource}.
This theory has $m$ resources, its set of allowed operations $\A_{\text{multi}}$ is defined in
Eq.~\eqref{all_ops_multi}, and its invariant sets are the $\f_i$'s, that is, the sets of free states of
the different single-resource theories composing it. The multi-resource theory $\rt_{\text{multi}}$ is
\emph{reversible} if the amount of resources spent to perform an asymptotic state transformation is
equal to the amount of resources gained when the inverse state transformation is performed. In this
way, performing a cyclic state transformation over the system (which recovers its initial state at the
end of the transformation) never consumes any of the $m$ resources initially present in the system.
\par
For a single-resource theory, the notions of reversibility and state transformation are usually associated with
the \emph{rates of conversion}. Suppose that we are given $n \gg 1$ copies of a state $\rho \in \SH$, and
we want to find out the maximum number of copies of the state $\sigma \in \SH$ that can be obtained by
acting on the system with the allowed operations. If $k$ is the maximum number of copies of $\sigma$
achievable, then the rate of conversion is defined as $R(\rho \rightarrow \sigma) = \frac{k}{n}$, see
Def.~\ref{def:rate_conversion} in appendix~\ref{rev_theory_sing}. Reversibility is then defined by asking
that, for all $\rho, \sigma \in \SH$, the rates of conversion associated to the forward and backward state
transformations are such that $R(\rho \rightarrow \sigma) R(\sigma \rightarrow \rho) = 1$, see
Def.~\ref{def:reversibility} in the appendix. It is worth noting that, when considering rates of conversion,
one is in general allowed to trace out part of the system, or to add ancillary systems in a free state. For
example, being able to map $n$ copies of $\rho$ into $k$ copies of $\sigma$, with $n < k$, implies that
we have the possibility to add $k-n$ copies in a free state to the initial $n$ copies of $\rho$, and to act
globally to produce $k$ copies of $\sigma$. This is certainly possible for single-resource theories,
where free states always exists, but not always possible for multi-resource theories, see the invariant
set structure of the right panel of Fig.~\ref{fig:invariant_sets_structure}.
\par
Due to the possible absence of free states in a generic multi-resource theory, we first need to
introduce the following definition\footnote{Notice that this definition is analogous to the notion
of ``seed regularisation'' in Ref.~\cite[Sec.~6]{fritz_resource_2015}, although in our case we
are solely focused on reversible transformations and on equalities of monotones.}, which will then allow us to
study reversibility.
\begin{definition}
\label{def:asympt_equivalence_multi}
Consider a multi-resource theory $\rt_{\mathrm{multi}}$. We say that $\rt_{\mathrm{multi}}$ satisfies the
\emph{asymptotic equivalence property} with respect to the set of monotones $\left\{ f_i \right\}_{i=1}^m$,
where each $f_i$ is a monotone for the corresponding single-resource theory $\rt_i$ whose regularisation
is not identically zero, if for all $\rho, \sigma \in \SH$ we have that the following two statements are equivalent,
\begin{itemize}
\item $f_i^{\infty}(\rho) = f_i^{\infty}(\sigma)$ for all $i = 1, \ldots , m$.
\item There exist a sequence of maps $\left\{ \tilde{\chn}_n : \SHn{n} \rightarrow \SHn{n} \right\}_n$
such that 
\begin{equation}
\lim_{n \rightarrow \infty} \left\| \tilde{\chn}_n(\rho^{\otimes n}) - \sigma^{\otimes n} \right\|_1 = 0,
\end{equation}
as well as a sequence of maps performing the reverse process. The maps $\left\{\tilde{\chn}_n\right\}$
are defined as
\begin{equation}
\label{allowed_ancilla}
\tilde{\chn}_n(\cdot) = \Tr{A}{\chn_n(\cdot \otimes \eta^{(A)}_n)},
\end{equation}
where $A$ is an ancilla composed by a sub-linear number $o(n)$ of copies of the system,
and it is described by an arbitrary state $\eta^{(A)}_n \in \SHn{o(n)}$, such that $f_i(\eta^{(A)}_n)
= o(n)$ for all $i = 1, \ldots , m$. The map $\chn_n \in \A_{\mathrm{multi}}^{(n+o(n))}$ is an
allowed operation of the multi-resource theory.
\end{itemize}
Here, $f_i^{\infty}$ is the regularisation of the monotone $f_i$, $\| \cdot \|_1$ is the trace norm,
define as $\| O \|_1 = \tr{\sqrt{O^{\dagger}O}}$ for $O \in \BH$, and we are using the little-o notation,
where $g(n) = o(n)$ means $\lim_{n \rightarrow \infty} \frac{g(n)}{n} = 0$.
\end{definition}
An example of a multi-resource theory that satisfies the above property is thermodynamics
(even in the case in which multiple conserved quantities are present), as shown in
Refs.~\cite{sparaciari_resource_2016,bera_thermodynamics_2017}. In this example the monotones
for which asymptotic equivalence is satisfied are the average energy and the Von Neumann entropy of
the system. Notice that the above property implicitly assumes that the monotones $f_i$'s can be regularised,
that is, that the limit involved in the regularisation is always finite. Furthermore, in this property
we are allowing the agent to act over many copies of the system with more than just the set of allowed
operations; we assume the agent to be able to use a small ancillary system, sub-linear in the number of
copies of the main system. Roughly speaking, the role of this ancilla is to absorb the fluctuations in the
monotones $f^{\infty}_i$'s during the asymptotic state transformation. It is important to notice that this
ancillary system only contributes to the transformation by exchanging a sub-linear amount of
resources. Thus, its contribution per single copy of the system is negligible when $n \gg 1$, 
which justifies the use of this additional tool.
\par
The asymptotic equivalence property
essentially states that the multi-resource theory can reversibly map between any two states with
the same values of the monotones $f_i$'s. In particular, transforming between such two states comes
at no cost, since we can do so by using the allowed operations of the theory, $\A_{\text{multi}}$. It is
worth noting that, when the number of considered resources is $m = 1$, that is, our theory is a single-resource
theory, the notion of asymptotic equivalence given in Def.~\ref{def:asympt_equivalence_multi} corresponds
to the one given in terms of rates of conversion, Def.~\ref{def:reversibility}. We prove this equivalence in
appendix~\ref{rev_theory_sing}, see Thm.~\ref{thm:reversible_asympt_equiv}. The set of monotones
in Def.~\ref{def:asympt_equivalence_multi} is not a priori unique; however, in the following section we identify
the properties that the monotones need to satisfy for this set to be unique, see Thm.~\ref{thm:reversible_multi}.
Finally, notice that the asymptotic equivalence property does not say anything about the state transformations which
involve states with different values of the monotones $f_i$'s. To include these transformations in the theory,
we will have to add a bit more structure to the current framework, by considering some additional
systems that can store a single type of resource each, which we refer to as
\emph{batteries}~\cite{kraemer_currencies_2016}.
\subsection{Quantifying resources with batteries}
\label{quant_res}
When a multi-resource theory satisfies the asymptotic equivalence property of
Def.~\ref{def:asympt_equivalence_multi}, we have that states with the same values of a specific set of
monotones can be inter-converted between each others. In this section, we show that these monotones
actually quantify the amount of resources contained in the system. To do so, we need to introduce some
additional systems, which can only store a single kind of resource each, and can be independently addressed
by the agent. These additional systems are referred to as batteries. Let us suppose that the multi-resource
theory $\rt_{\text{multi}}$ satisfies the asymptotic equivalence property with respect to the set of monotones
$\left\{ f_i \right\}_{i=1}^m$, and that the quantum system over which the theory acts is actually divided
into $m+1$ partitions. The first partition is the main system $S$, and the remaining ones are the batteries
$B_i$'s. Then, the Hilbert space under consideration is $\hil = \hil_S \otimes \hil_{B_1} \otimes \ldots \otimes
\hil_{B_m}$.
\par
Let us now introduce some properties the monotones need to satisfy in order for
the resources to be quantified in a meaningful way. Since each resource is associated
to a different monotone, we can forbid a battery to store more than one resource by
constraining the set of states describing it to those ones with a fixed value
of all but one monotones.
\begin{description}
\item[M1\label{item:M1}] Consider two states $\omega_i, \omega'_i  \in \SHbI$ describing
the battery $B_i$. Then, the value of the regularisation of any monotone $f_{j}$ (where $j \neq i$)
over these two states is fixed,
\begin{equation}
\label{eq:batt_mon}
f^{\infty}_j(\omega'_i) = f^{\infty}_j(\omega_i ) , \quad \forall j \neq i.
\end{equation}
\end{description}
In this way, the battery $B_i$ is only able to store and exchange the resource associated with
the monotone $f_i$. It would be natural to extend the condition of Eq.~\eqref{eq:batt_mon}
to the monotones themselves, rather than to use their regularisations. However, this
condition is not required for deriving our results, and to use it in our proofs we would need an
additional assumption, namely the additivity of the monotones.
\par
In order to address each battery
as an individual system, we ask the value of the monotones over the global system to be given
by the sum of their values over the individual components,
\begin{description}
\item[M2\label{item:M2}] The regularisations of the monotones $f_i$'s can be separated between main
system and batteries,
\begin{equation}
f^{\infty}_i(\rho \otimes \omega_{1} \otimes \ldots \otimes \omega_{m})
=
f^{\infty}_i(\rho) + f^{\infty}_i(\omega_{1}) + \ldots + f^{\infty}_i(\omega_{m}),
\end{equation}
where $\rho \in \SHs$ is the state of the main system, and $\omega_{i} \in \SHbI$ is the
state of the battery $B_i$.
\end{description}
The above property allows us to separate the contribution given by each subsystem to the amount
of $i$-th resource present in the global system. It is important to stress that we are here requiring
additivity for the regularisation of the monotones between system and batteries, and between batteries,
but we are not requiring the regularised monotones to be additive in general.
\par
We then ask the
monotones to satisfy an additional property, so as to simplify the notation. Namely, we ask the zero of each monotone
$f_i$ to coincide with its value over the states in $\f_i$,
\begin{description}
\item[M3\label{item:M3}] For each $n \in \N$ and $i \in \left\{ 1, \ldots , m \right\}$, the monotone $f_i$ is equal to
$0$ when computed over the states of $\f_i^{(n)}$, that is
\begin{equation}
f_i(\gamma_{i,\,n}) = 0 , \qquad \forall \, \gamma_{i,\,n} \in \f_i^{(n)}.
\end{equation}
\end{description}
This property serves as a way to ``normalise'' the monotone, setting its value to $0$ over the states
that were free for the specific single-resource theory the monotone is linked to. Notice that
property~\ref{item:M3} is trivially satisfied by any monotone after a translation. The next
property requires that tracing out part of the system does not increase the value of the
monotones $f_i$'s,
\begin{description}
\item[M4\label{item:M4}] For all $n, k \in \N$ where $k < n$, the monotones $f_i$'s are such that
\begin{equation}
f_i(\Tr{k}{\rho_n}) \leq f_i(\rho_n) , \qquad \forall \, i \in \left\{ 1, \ldots , m \right\}.
\end{equation}
where $\rho_n \in \SHn{n}$ and $\Tr{k}{\rho_n} \in \SHn{n-k}$.
\end{description}
This property implies that the resources contained in a system cannot increase if we
discard/forget part of it.
\par
We require our monotones to satisfy sub-additivity, namely
\begin{description}
\item[M5\label{item:M5}] For all $n, k \in \N$, the monotones $f_i$'s are such that
\begin{equation}
f_i(\rho_n \otimes \rho_k) \leq f_i(\rho_n) + f_i(\rho_k) , \qquad \forall \, i \in \left\{ 1, \ldots , m \right\}.
\end{equation}
where $\rho_n \in \SHn{n}$ and $\rho_k \in \SHn{k}$.
\end{description}
That is, the amount of resources contained in two uncorrelated systems, when measured on the
two systems independently, is bigger or equal to the value measured on the two systems together.
This is the case, for example, of the relative entropy of entanglement~\cite{vollbrecht_entanglement_2001}.
Notice that sub-additivity is here explicity required since, as we stressed before, property~\ref{item:M2}
only requires additivity between system and different batteries, but not between different partitions of the
individual system or battery. Another property we require is for the monotones $f_i$'s to be sub-extensive,
\begin{description}
\item[M6\label{item:M6}] Given any sequence of states $\left\{ \rho_n \in \SHn{n} \right\}$, the
monotones $f_i$'s are such that
\begin{equation}
f_i(\rho_n) = O(n) , \qquad \forall \, i \in \left\{ 1, \ldots , m \right\}.
\end{equation}
where we are using the big-O notation.
\end{description}
This property is satisfied, for example, if the monotones scale extensively, that is, if they scale
linearly in the number of systems considered. In the next section we will encounter a family of
monotones which indeed satisfy this property, namely the relative entropy distance from a given set of free
states, when some fairly generic conditions are satisfied by such set (see Prop.~\ref{thm:properties_rel_ent}).
However, it is worth noting that property~\ref{item:M6} is not equivalent to extensivity, since a
monotone scaling sub-linearly in the number of systems would still satisfy it. We demand that our
monotones satisfy this property so as to be able to regularise them (although their regularisation might be
identically zero on the whole state space). The last property we ask concerns a particular kind of
continuity the monotones need to satisfy,
\begin{description}
\item[M7\label{item:M7}] The monotones $f_i$'s are \emph{asymptotic continuous}, that is, for all sequences
of states $\rho_n, \sigma_n \in \SHn{n}$ such that $\left\| \rho_n - \sigma_n \right\|_1 \rightarrow 0$
for $n \rightarrow \infty$, where $\| \cdot \|_1$ is the trace norm, we have
\begin{equation}
\frac{\left| f_i \left( \rho_n \right) - f_i \left( \sigma_n \right) \right|}{n} \rightarrow 0
\ \text{for} \ n \rightarrow \infty , \qquad \forall \, i \in \left\{ 1, \ldots , m \right\}.
\end{equation}
This notion of asymptotic continuity coincides with condition (C2) given in Ref.~\cite{horodecki_entanglement_2001}.
\end{description}
This property implies that the monotones are physically meaningful, since their values
over sequences of states converge if the sequences of states converge asymptotically.
In Thm.~\ref{thm:reversible_multi} we show that, when the monotones satisfy
asymptotic continuity, they are the unique quantifiers of the amount of resources
contained in the main system.
\par
We can now use this formalism to discuss how resources can be quantified in a multi-resource theory,
and consequently how the asymptotic equivalence property implies that the theory is reversible.
Let us consider any two states $\rho, \sigma \in \SHs$, that do not need to have the same values for
the monotones $f_i$'s. Then, we choose the initial and final states of each battery $B_i$ such that
\begin{equation}
\label{eq:fi_condition_Ri}
f^{\infty}_i \left( \rho \otimes \omega_1 \otimes \ldots \otimes \omega_m \right)
=
f^{\infty}_i \left( \sigma \otimes \omega'_1 \otimes \ldots \otimes \omega'_m \right)
, \qquad
\forall \, i = 1, \ldots, m,
\end{equation}
where $\omega_i$, $\omega'_i \in \SHbI$, for $i = 1, \ldots, m$. Under these conditions, due to the
asymptotic equivalence property of $\rt_{\text{multi}}$, we have that the two global states can be
asymptotically mapped one into the other in a reversible way, using the allowed operations of the
theory, that is
\begin{equation} \label{eq:transform}
\rho \otimes \omega_1 \otimes \ldots \otimes \omega_m
\xleftrightarrow{\text{asympt}}
\sigma \otimes \omega'_1 \otimes \ldots \otimes \omega'_m,
\end{equation}
where the symbol $\xleftrightarrow{\text{asympt}}$ means that there exists two allowed
operations that maps $n \gg 1$ copies of the state on the lhs into the state of the rhs, and viceversa,
while satisfying the condition in the second statement of Def.~\ref{def:asympt_equivalence_multi}.
\par
We can now properly define the notion of resources in this framework. The resource associated with
the monotone $f_i$ is the one exchanged by the battery $B_i$ during a state transformation.
\begin{definition}
Consider a multi-resource theory $\rt_{\text{multi}}$ with $m$ resources, satisfying the asymptotic
equivalence property with respect to the set of monotones $\left\{ f_i \right\}^m_{i=1}$. For a state
transformation of the form given in Eq.~\eqref{eq:transform}, we define the amount of $i$-th resource
exchanged between the system $S$ and the battery $B_i$ as
\begin{equation} \label{work_Ri}
\Delta W_i := f^{\infty}_i(\omega'_i) - f^{\infty}_i(\omega_i),
\end{equation}
where $\omega_i, \omega'_i \in \SHbI$ are, respectively, the initial and final state of the battery $B_i$.
\end{definition}
It is now possible to compute the amount of the $i$-th resource $\Delta W_i$ needed to map the state of
the main system $\rho$ into $\sigma$.
\begin{prop}
Consider a theory $\rt_{\text{multi}}$ with $m$ resources and allowed operations $\A_{\text{multi}}$,
equipped with batteries $B_1$, $\ldots$, $B_m$. If the theory satisfies the asymptotic equivalence
property with respect to the set of monotones $\left\{ f_i \right\}^m_{i=1}$, and these monotones
satisfy the properties~\ref{item:M1} and~\ref{item:M2}, then the amount of $i$-th resource needed
to perform the asymptotic state transformation $\rho \rightarrow \sigma$ is equal to
\begin{equation}
\label{resource_work_i}
\Delta W_i = f_i^{\infty}(\rho) - f_i^{\infty}(\sigma).
\end{equation}
\end{prop}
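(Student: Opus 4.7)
The plan is to unpack the condition in Eq.~\eqref{eq:fi_condition_Ri}, which is the hypothesis that makes the asymptotic equivalence applicable, and then use properties~\ref{item:M1} and~\ref{item:M2} to peel away the battery contributions on the side we are not interested in. Since the theorem is essentially a bookkeeping statement once the right structural properties are in hand, the main work is already done by Def.~\ref{def:asympt_equivalence_multi}; the proof will just be an algebraic manipulation.

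First I would note that, by construction, the initial and final battery states in the transformation of Eq.~\eqref{eq:transform} satisfy Eq.~\eqref{eq:fi_condition_Ri}, namely
\begin{equation}
f^{\infty}_i \left( \rho \otimes \omega_1 \otimes \ldots \otimes \omega_m \right)
=
f^{\infty}_i \left( \sigma \otimes \omega'_1 \otimes \ldots \otimes \omega'_m \right)
\end{equation}
for every $i = 1, \ldots, m$. Applying property~\ref{item:M2} to both sides splits each regularised monotone into contributions from the system $S$ and from each battery $B_j$, yielding
\begin{equation}
f^{\infty}_i(\rho) + \sum_{j=1}^m f^{\infty}_i(\omega_j)
=
f^{\infty}_i(\sigma) + \sum_{j=1}^m f^{\infty}_i(\omega'_j).
\end{equation}

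Next I would invoke property~\ref{item:M1} to eliminate all cross terms. Since $\omega_j$ and $\omega'_j$ are both states of the battery $B_j$, property~\ref{item:M1} enforces $f^{\infty}_i(\omega_j) = f^{\infty}_i(\omega'_j)$ for every $i \neq j$. Thus every term with $j \neq i$ cancels between the two sides of the equation, and what remains is simply
\begin{equation}
f^{\infty}_i(\rho) + f^{\infty}_i(\omega_i) = f^{\infty}_i(\sigma) + f^{\infty}_i(\omega'_i).
\end{equation}
Rearranging and recalling the definition of the exchanged resource in Eq.~\eqref{work_Ri}, $\Delta W_i = f^{\infty}_i(\omega'_i) - f^{\infty}_i(\omega_i)$, I obtain the claimed identity $\Delta W_i = f_i^{\infty}(\rho) - f_i^{\infty}(\sigma)$.

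There is no real obstacle here, since the asymptotic equivalence property has already been assumed to provide the existence of reversible allowed operations implementing the transformation; the proposition only needs to translate that equivalence into a statement about the batteries. The only subtle point worth flagging in the write-up is that one must be careful that property~\ref{item:M2} is applied at the level of regularisations (not of the monotones themselves), which is exactly how~\ref{item:M2} is stated, so there is no gap.
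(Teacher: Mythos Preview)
Your proof is correct and matches the paper's own argument essentially step for step: start from the conservation condition of Eq.~\eqref{eq:fi_condition_Ri}, use property~\ref{item:M2} to split the regularised monotone across system and batteries, then use property~\ref{item:M1} to cancel the $j\neq i$ battery terms, and rearrange using the definition of $\Delta W_i$. The only cosmetic difference is that you write the battery sum with a $\sum_{j}$ whereas the paper spells it out term by term.
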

\begin{proof}
Due to asymptotic equivalence, a transformation mapping the global state $\rho \otimes \omega_1 \otimes
\ldots \otimes \omega_m$ into $\sigma \otimes \omega'_1 \otimes \ldots \otimes \omega'_m$ exists iff the
conditions in Eq.~\eqref{eq:fi_condition_Ri} are satisfied. For a given $i$, using the property~\ref{item:M2} of
the monotone $f_i$, we can re-write the condition as
\begin{equation}
f^{\infty}_i \left( \rho \right) + f^{\infty}_i \left( \omega_1 \right) + \ldots + f^{\infty}_i \left( \omega_m \right)
=
f^{\infty}_i \left( \sigma \right) + f^{\infty}_i \left( \omega'_1 \right) + \ldots + f^{\infty}_i \left( \omega'_m \right).
\end{equation}
Then, we can use the property~\ref{item:M1}, which guarantees that the only systems for which $f_i$
changes are the main system and the battery $B_i$. Thus, we find that
\begin{equation}
f^{\infty}_i \left( \rho \right) + f^{\infty}_i \left( \omega_i \right)
=
f^{\infty}_i \left( \sigma \right) + f^{\infty}_i \left( \omega'_i \right).
\end{equation}
By rearranging the factors in the above equation, and using the definition of $\Delta W_i$ given in
Eq.~\eqref{work_Ri}, we prove the proposition.
\end{proof}
It is now easy to show that, if $\rt_{\text{multi}}$ satisfies the asymptotic equivalence property, any
state transformation on the main system $S$ is reversible. Indeed, from Eq.~\eqref{resource_work_i}
it follows that the amount of resources used to map the state of this system from $\rho$ to $\sigma$
is equal, but with negative sign, to the amount of resources used to perform the reverse transformation,
from $\sigma$ to $\rho$. Therefore, any cyclic state transformation over the main system leaves
the amount of resources contained in the batteries unchanged.
\par
The above formalism also provides us with a way to quantify the amount of resources contained
in the main system. Indeed, if the system is described by the state $\rho \in \SHs$, the amount
of $i$-th resource contained in the system is given by the amount of $i$-th resource exchanged,
$\Delta W_i$, while mapping $\rho$ into a state contained in $\f_i$. Using property~\ref{item:M3} 
and Prop.~\eqref{resource_work_i} it follows that
\begin{coro}
\label{cor:quantifier}
Consider a theory $\rt_{\text{multi}}$ with $m$ resources and allowed operations $\A_{\text{multi}}$,
equipped with batteries $B_1$, $\ldots$, $B_m$. If the theory satisfies the asymptotic equivalence
property with respect to the set of monotones $\left\{ f_i \right\}^m_{i=1}$, and these monotones
satisfy the properties~\ref{item:M1}, \ref{item:M2}, and~\ref{item:M3}, then the amount of the $i$-th
resource contained in the main system, when described by the state $\rho$, is given by $f_i^{\infty}(\rho)$.
\end{coro}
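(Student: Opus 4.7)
The plan is to reduce the Corollary to a direct application of the preceding Proposition (Eq.~\eqref{resource_work_i}), instantiated at a target state drawn from the invariant set $\f_i$. The physical intuition is that the amount of resource ``contained'' in $\rho$ is, by definition, the resource one can extract into the battery $B_i$ when transforming $\rho$ into a state with zero $i$-th resource, and the natural candidates for zero-resource states are those in $\f_i$.

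Concretely, first I would pick any $\gamma_i \in \f_i$ (non-emptiness follows from $\f_i$ being the free set of the single-resource theory $\rt_i$). Then I would show that $f_i^{\infty}(\gamma_i) = 0$: using the inclusion $\f_i^{\otimes n} \subseteq \f_i^{(n)}$ recalled in Sec.~\ref{single_resource}, the tensor powers $\gamma_i^{\otimes n}$ lie in $\f_i^{(n)}$, so property~\ref{item:M3} gives $f_i(\gamma_i^{\otimes n}) = 0$ for every $n$, and dividing by $n$ and taking the limit yields $f_i^{\infty}(\gamma_i) = 0$.

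Next I would apply the preceding Proposition to the asymptotic transformation $\rho \to \gamma_i$. Since the theory satisfies the asymptotic equivalence property and $f_i$ satisfies~\ref{item:M1} and~\ref{item:M2}, the hypotheses of that Proposition are in force, and the amount of $i$-th resource exchanged with the battery $B_i$ is
\begin{equation}
\Delta W_i = f_i^{\infty}(\rho) - f_i^{\infty}(\gamma_i) = f_i^{\infty}(\rho).
\end{equation}
Interpreting $\Delta W_i$ as the net gain of the battery, this is exactly the amount of $i$-th resource extracted from the main system during its transition to a state of vanishing $i$-th monotone, which justifies calling $f_i^{\infty}(\rho)$ the amount of $i$-th resource contained in $\rho$.

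I do not anticipate a genuine obstacle here: the work has already been done in proving the Proposition and in setting up properties~\ref{item:M1}--\ref{item:M3}. The only nontrivial bookkeeping is verifying that property~\ref{item:M3}, which is stated for the non-regularised monotone, propagates to its regularisation on $\f_i$; this is immediate from $\f_i^{\otimes n} \subseteq \f_i^{(n)}$, but is the one step that should be spelled out explicitly for the argument to be self-contained.
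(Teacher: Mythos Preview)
Your proposal is correct and follows essentially the same approach as the paper: the paper's argument is the single sentence preceding the corollary, which invokes property~\ref{item:M3} and the preceding Proposition to compute $\Delta W_i$ for the transformation $\rho \to \gamma_i \in \f_i$. You are simply more explicit about the one bookkeeping step the paper leaves implicit, namely that~\ref{item:M3} together with $\f_i^{\otimes n} \subseteq \f_i^{(n)}$ forces $f_i^{\infty}(\gamma_i)=0$.
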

It is worth noting that, in general, one cannot extract all the resources contained in the main system at once.
Indeed, this is only possible when the multi-resource theory contains free states, like for example in the cases
depicted in the left and centre panels of Fig.~\ref{fig:invariant_sets_structure}. Furthermore, the process
of resources extraction is in general non-trivial, since property~\ref{item:M1} forbids each battery from storing
more than one kind of resource. As a result, it is not possible to simply perform a swap operation
which exchanges the state of the system with one of the batteries, see Sec.~\ref{control_theory_ex} for an example
involving the theory of local control.
\begin{figure}[t!]
\center
\includegraphics[width=0.5\textwidth]{./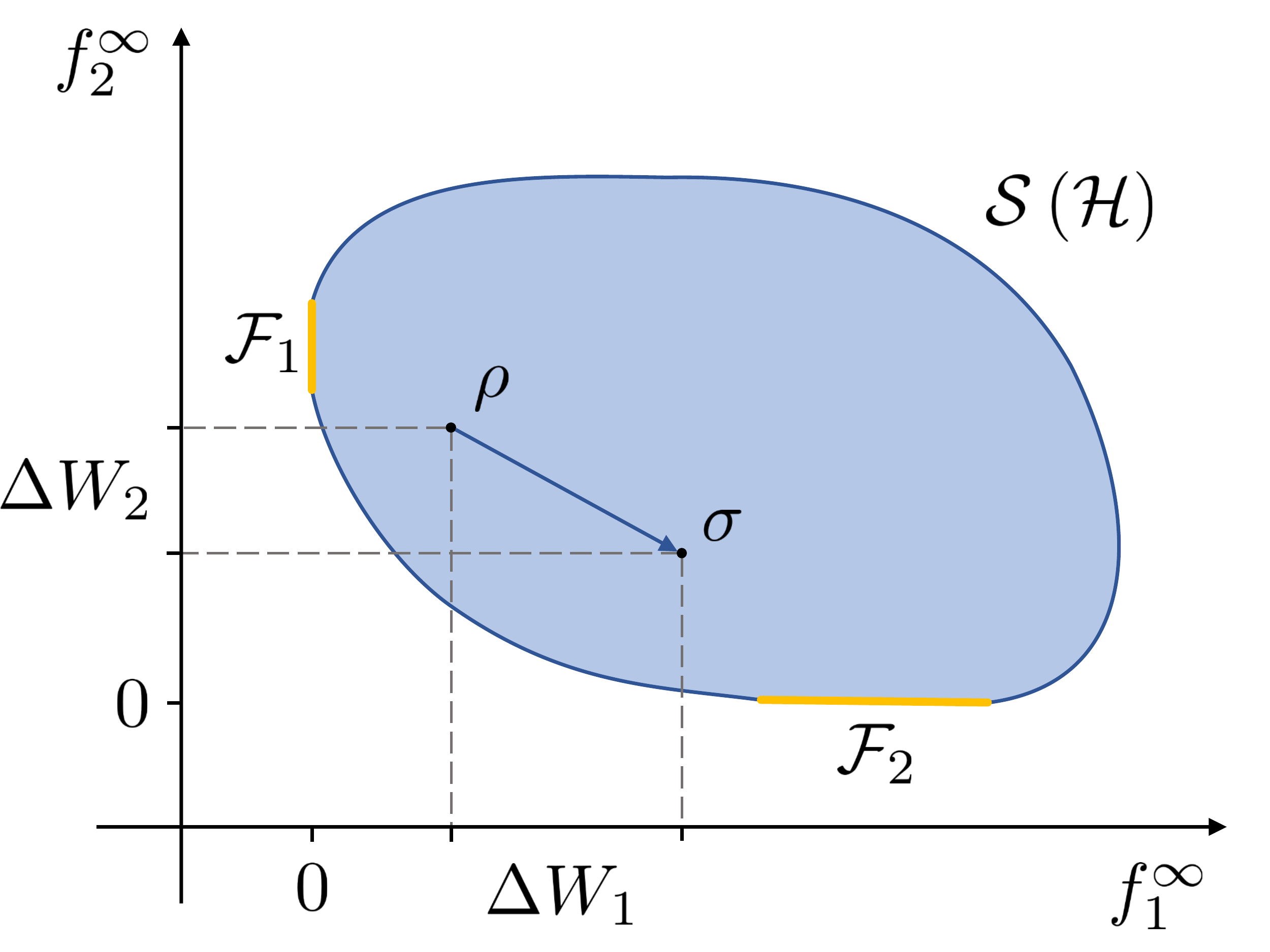}
\caption{In the figure we represent the state-space $\SH$ of a multi-resource theory
$\rt_{\text{multi}}$ with two resources. In order for the diagram to be a meaningful
representation of this state-space, we need the theory to satisfy the asymptotic
equivalence property of Def.~\ref{def:asympt_equivalence_multi} with respect to the
monotones $f_1$ and $f_2$. In fact, when the theory satisfies this property we can divide $\SH$
into equivalence classes of states with the same value of the regularised monotones
$f^{\infty}_1$ and $f^{\infty}_2$, which become the abscissa and ordinate of the diagram.
The state-space of the theory is represented by the blue region, and the yellow
segments are the invariant sets $\f_1$ and $\f_2$. These sets are disjoint, since the two
segments do not intercept each other, and the resource theory $\rt_{\text{multi}}$ thus
corresponds to the one depicted in the right panel of Fig.~\ref{fig:invariant_sets_structure}.
Two equivalence classes, respectively associated to the states $\rho$ and $\sigma$, are
represented in the diagram. The amount of resources that is exchanged when transforming
from one state to the other, Eq.~\eqref{resource_work_i}, is given in the diagram by the difference
between the coordinates of these two points. Notice that, for ease of viewing, we have shifted
the origin of the axis.}
\label{fig:resource_diagram}
\end{figure}
\par
Being able to quantify the amount of resources contained in a given quantum state allows us to represent
the whole state-space of the theory in a \emph{resource diagram}~\cite{fritz_resource_2015,sparaciari_resource_2016}.
In fact, from the definition of asymptotic equivalence it follows that, if two states contain the same
amount of resources, i.e., if they have the same values of the monotones $f^{\infty}_i$'s, then we
can map between them using the allowed operations $\A_{\text{multi}}$. This property implies that
we can divide the entire state-space into equivalence classes, that is, sets of states with same value
of the $m$ monotones (where we recall that $m$ is the number of resources, or batteries, in the theory).
Then, we can represent each equivalence class as a point in a $m$-dimensional diagram, with coordinates
given by the values of the monotones. By considering all the different equivalence classes, we can
finally represent the state-space of the main system in the diagram, see for example
Fig.~\ref{fig:resource_diagram}, where the state-space of a two-resource theory is shown.
\subsection{Reversibility implies a unique measure for each resource}
\label{multi_rev_unique}
We now show that, when a multi-resource theory satisfies the asymptotic equivalence
property with respect to a set of monotones $\left\{ f_i \right\}_{i=1}^m$, and these monotones
satisfy the properties~\ref{item:M1} -- \ref{item:M7}, then there exists a unique quantifier
for each resource contained in the main system. In particular, when the $i$-th resource is
considered, this quantifier coincides with $f_i^{\infty}$, modulo a multiplicative factor which
sets the scale. It is worth noting that such multiplicative factor can be different for each
resource. Indeed, the resources are generally independent of each other, since they are quantified
by different measures. Each measure can have a different unit, which corresponds to an individual
rescaling factor applied to each resource measure independently.
\par
In the previous section, Cor.~\ref{cor:quantifier}, we showed that a quantifier
exists if the monotones satisfy the first three properties~\ref{item:M1}, \ref{item:M2}, and \ref{item:M3}.
However, when these monotones are also asymptotic continuous, property~\ref{item:M7},
we can prove that they \emph{uniquely} quantify the amount of resources contained in the
main system. This means that one cannot find other monotones $g_i$'s that give the
same equivalence classes of the $f_i$'s, but order them in a different way. Asymptotic continuity
was used in Ref.~\cite{horodecki_quantumness_2012} to show that the relative entropy distance
from the set of free states of a reversible single-resource theory is the unique measure of resource.
Thus, the following theorem (whose proof can be found in appendix~\ref{main_results}) can be
understood as a generalisation of the above result to multi-resource theories,
\begin{restatable}{thm}{uniquemeas}
\label{thm:reversible_multi}
Consider the resource theory $\rt_{\text{multi}}$ with $m$ resources, equipped with the batteries $B_i$'s,
where $i = 1, \ldots, m$. Suppose the theory satisfies the asymptotic equivalence property with respect to
the set of monotones $\left\{ f_i \right\}_{i=1}^m$. If these monotones satisfy the properties~\ref{item:M1}
-- \ref{item:M7}, then the amount of $i$-th resource contained in the main system $S$ is uniquely quantified
by the regularisation of the monotone $f_i$ (modulo a multiplicative constant).
\end{restatable}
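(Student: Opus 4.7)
The plan is to show that any alternative set of monotones $\{g_i\}_{i=1}^m$ satisfying the same hypotheses as $\{f_i\}_{i=1}^m$ --- namely properties \ref{item:M1}--\ref{item:M7} together with the asymptotic equivalence property of Def.~\ref{def:asympt_equivalence_multi} --- must satisfy $g_i^\infty = c_i\, f_i^\infty$ for some positive constant $c_i$. In conjunction with Cor.~\ref{cor:quantifier}, which already identifies $f_i^\infty$ as a valid quantifier, this establishes uniqueness up to the multiplicative constant appearing in the statement.

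The first step is to show that $g_i^\infty(\rho)$ depends on $\rho$ only through $f_i^\infty(\rho)$. Fix $i$ and pick $\rho, \sigma \in \SHs$ with $f_i^\infty(\rho) = f_i^\infty(\sigma)$. Using the batteries $B_j$ for $j \neq i$, I would choose initial and final battery states $\omega_j, \omega'_j$ such that the total value of $f_j^\infty$ is matched across $\rho \otimes \omega_1 \otimes \cdots \otimes \omega_m$ and $\sigma \otimes \omega'_1 \otimes \cdots \otimes \omega'_m$ for every $j$, taking $\omega_i = \omega'_i$. Def.~\ref{def:asympt_equivalence_multi} then supplies a sequence of maps $\tilde{\chn}_n \in \A_{\text{multi}}^{(n+o(n))}$, dressed with a sub-linear ancilla, interconverting these two composite states asymptotically in trace norm in both directions. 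Monotonicity of $g_i$ under $\A_{\text{multi}} \subseteq \A_i$, combined with properties~\ref{item:M4} and~\ref{item:M5} to absorb the partial trace and the ancilla $\eta_n^{(A)}$ into a correction bounded by $g_i(\eta_n^{(A)}) = o(n)$ (which follows from sub-extensivity~\ref{item:M6}), and with asymptotic continuity~\ref{item:M7}, then forces $g_i^\infty$ to coincide on the two composite states after dividing by $n$. Using~\ref{item:M2} for $g_i$ to separate main system and batteries and~\ref{item:M1} for $g_i$ to cancel the contributions of the batteries whose $g_i$-value is unchanged, one concludes $g_i^\infty(\rho) = g_i^\infty(\sigma)$. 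Hence there exists a function $h_i$ with $g_i^\infty(\rho) = h_i(f_i^\infty(\rho))$.

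The second step is to show that $h_i$ is linear. Applying the first step on $n$ copies of the main system and invoking the identity $f_i^\infty(\rho^{\otimes n}) = n f_i^\infty(\rho)$ built into the definition of regularisation, together with the analogous relation for $g_i^\infty$, yields $h_i(n x) = n h_i(x)$ for every non-negative integer $n$ and every $x$ in the range of $f_i^\infty$. Property~\ref{item:M3} applied to both $f_i$ and $g_i$ fixes $h_i(0) = 0$. Extending to positive rational multiples is then routine, and continuity of $h_i$ on the range follows from asymptotic continuity~\ref{item:M7}: states achieving neighbouring values of $f_i^\infty$ can be assembled from copies of the main system together with battery blocks, and are mapped by $g_i^\infty$ to neighbouring values. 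One concludes $h_i(x) = c_i\, x$ with $c_i = h_i(x_0)/x_0$ for any fixed $x_0 > 0$ in the range.

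The main obstacle is Step~1, where one must carefully track the sub-linear ancilla of Eq.~\eqref{allowed_ancilla} through the monotone $g_i$. The definition of asymptotic equivalence only guarantees $f_i(\eta_n^{(A)}) = o(n)$, not $g_i(\eta_n^{(A)}) = o(n)$; closing this gap requires sub-extensivity~\ref{item:M6} for $g_i$, which bounds the contribution of the ancilla by $o(n)$ and so keeps the per-copy inequality tight in the limit. A second subtlety is ensuring that the range of $f_i^\infty$ is rich enough for the linearity of $h_i$ on rational multiples to extend to the full range; here the assumption that batteries can be loaded to arbitrary levels of each resource, implicit in the use of property~\ref{item:M1}, plays a crucial role.
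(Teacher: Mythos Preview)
Your Step~1 is correct and uses the same machinery as the paper: asymptotic equivalence with respect to the $f_j$'s supplies the transformation, and monotonicity of $g_i$ together with~\ref{item:M4}--\ref{item:M7} forces $g_i^\infty$ to agree on the two composite states. You also correctly flag that bounding the ancilla's contribution to $g_i$ requires sub-extensivity~\ref{item:M6}, exactly as the paper does.

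Step~2 is where you diverge from the paper and where there is a genuine gap. Running Step~1 on the $n$-copy main system gives a function $h_i^{(n)}$ with $g_i^\infty = h_i^{(n)}\circ f_i^\infty$ on $\mathcal{S}(\hil_S^{\otimes n})$; combining this with extensivity of regularisation yields only $h_i^{(n)}(nx) = n\, h_i(x)$, not $h_i(nx) = n\, h_i(x)$. To obtain the latter you must compare a \emph{single-copy} state $\sigma\in\SHs$ having $f_i^\infty(\sigma)=nx$ against $\rho^{\otimes n}$, and these live in different state spaces, so Step~1 as stated does not apply. The paper handles precisely this point, and by a more direct route: it fixes $n,k$ with $n\,f_i^\infty(\rho)=k\,f_i^\infty(\sigma)$, maps $\rho^{\otimes n}$ into $\f_i^{(n)}$ and $\sigma^{\otimes k}$ into $\f_i^{(k)}$, and crucially uses the \emph{same} battery transition $\omega_i\to\omega'_i$ in both transformations (legitimate because the amount of $i$-th resource extracted is identical). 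The battery then acts as a common ruler across the two copy numbers, giving
\[
n\,g_i^\infty(\rho)=g_i^\infty(\omega'_i)-g_i^\infty(\omega_i)=k\,g_i^\infty(\sigma)
\]
in one stroke, and the contradiction with $n\,f_i^\infty(\rho)=k\,f_i^\infty(\sigma)$ follows. Your Step~2 can be repaired along the same lines --- pad the lower-copy state with a free state $\gamma\in\f_i$ and re-run Step~1 on the common copy number --- but the ``routine'' extension you claim is exactly the heart of the proportionality argument, not a formality.
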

In particular, we now consider the case of a multi-resource theory $\rt_{\text{multi}}$ that satisfies
the asymptotic equivalence property of Def.~\ref{def:asympt_equivalence_multi} with respect to
the relative entropy distances from the invariant sets $\f_i$'s. We refer to the
relative entropy distance from the set $\f_i$ as $E_{\f_i}$, whose definition can be found in
Eq.~\eqref{rel_entr_dist}. Since the multi-resource theory we consider is equipped with batteries,
and we want to be able to measure the amount of resources they contain independently of the other subsystems,
we ask the invariant sets to be of the form
\begin{equation}
\label{eq:independent_free}
\f_i = \f_{i,S} \otimes \f_{i,B_1} \otimes \ldots \otimes \f_{i,B_m},
\end{equation}
so that the main system $S$ and the batteries $B_i$'s all have they own independent invariant sets.
We now show that, under very general assumptions over the properties of the invariant sets, the regularised
relative entropy distances from these sets are the unique quantifiers of the resources, provided
that these quantities are not identically zero over the whole state space\footnote{An example where
the regularised relative entropy from an invariant set is identically zero for all states in $\SH$ is
the resource theory of asymmetry, see Ref.~\cite{gour_measuring_2009}.}. This result follows from
Thm.~\ref{thm:reversible_multi}, and from the fact that these monotones satisfy the properties~\ref{item:M1},
\ref{item:M2}, \ref{item:M3}, and \ref{item:M7} listed in the previous sections. The properties we are
interested in for the invariant sets $\left\{ \f_i \right\}_{i=1}^m$ of the theory are very general,
and they are satisfied in most of the known resource theories, see
Refs.~\cite{brandao_reversible_2010, brandao_generalization_2010}.
\begin{description}
\item[F1\label{item:F1}] The sets $\f_i$'s are closed sets.
\item[F2\label{item:F2}] The sets $\f_i$'s are convex sets.
\item[F3\label{item:F3}] Each set $\f_i$ contains at least one full-rank state.
\item[F4\label{item:F4}] The sets $\f_i$'s are closed under tensor product, that is,
$\f_i^{(k)} \otimes \f_i^{(n)} \subseteq \f_i^{(n+k)}$ for all $i = 1, \ldots , m$.
\item[F5\label{item:F5}] The sets $\f_i$'s are closed under partial tracing, that is,
$\Tr{k}{\f_i^{(n)}} \subseteq \f_i^{(n-k)}$ for all $i = 1, \ldots , m$.
\end{description}
Let us briefly comment on the above properties. Property~\ref{item:F1} requires that any
converging sequence in the set converges to an element in the set. This property is necessary
for the continuity of the resource theory. Property~\ref{item:F2}, instead, tells us that we are allowed
to forget the exact state describing the system, and therefore we can have mixture of states.
Property~\ref{item:F3} is necessary for the relative entropy distance to be physically natural,
since the quantity $\re{\rho}{\sigma}$, see Eq.~\eqref{rel_entr}, diverges when
$\text{supp}(\rho) \not\subseteq \text{supp}(\sigma)$. Finally, property~\ref{item:F4} implies that
composing two systems that do not contain any amount of $i$-th resource is not going to increase
that resource, and similarly, property~\ref{item:F5} implies that forgetting about part of a
system which does not contain resources will not create resources.
\par
When the invariant sets satisfy the above properties, the relative entropy distances
$E_{\f_i}$'s satisfy the same properties discussed in the previous section,
\begin{restatable}{prop}{relentproperties}
\label{thm:properties_rel_ent}
Consider a resource theory $\rt_{\text{multi}}$ with $m$ resources, equipped with the batteries
$B_i$'s, where $i = 1, \ldots, m$. Suppose the class of allowed operations is $\A_{\text{multi}}$
and the invariant sets are $\left\{ \f_i \right\}_{i=1}^m$. If the invariant set $\f_i$ is of the
form of Eq.~\eqref{eq:independent_free}, and it satisfies the properties~\ref{item:F1} -- \ref{item:F5},
then the relative entropy distances from this set, $E_{\f_i}$, is a regularisable monotone under
the class of allowed operations, and it obeys the properties~\ref{item:M1} -- \ref{item:M7}.
\end{restatable}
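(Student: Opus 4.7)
The plan is to verify each of properties M1--M7, together with monotonicity and regularisability, directly from the definition $E_{\f_i}(\rho) = \inf_{\sigma \in \f_i} \re{\rho}{\sigma}$, leveraging the structural assumptions F1--F5 and the product form of Eq.~\eqref{eq:independent_free}. Monotonicity under $\A_{\mathrm{multi}}$ is immediate: since $\A_{\mathrm{multi}} \subseteq \A_i$, any allowed map preserves $\f_i$, and contractivity of the relative entropy under CPTP maps gives $E_{\f_i}(\chn(\rho)) \leq \re{\chn(\rho)}{\chn(\sigma^{*})} \leq \re{\rho}{\sigma^{*}} = E_{\f_i}(\rho)$ for $\sigma^{*}$ an optimiser. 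Regularisability then follows from Fekete's lemma once subadditivity on tensor powers is established in M5, guaranteeing that $E_{\f_i}^{\infty}(\rho)$ is well defined.

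Properties M1 and M2 I would handle together. The product structure of $\f_i$, combined with the additivity of the relative entropy on product states, lets me write $E_{\f_i}$ on a product of system/battery states as a sum of the local relative entropy distances $E_{\f_{i,X}}$ subsystem by subsystem, giving M2 directly. For M1, the same decomposition shows that $f_j$ restricted to states of $B_i$ (with $j\neq i$) depends only on $\f_{j,B_i}$, and under the natural battery assumption that $B_i$ is constructed to store only the $i$-th resource (so that $\f_{j,B_i} = \SHbI$), $f_j$ is identically $0$ and hence constant there. Property M3 is trivial from $\re{\gamma}{\gamma}=0$; M4 follows from F5 combined with the data-processing inequality applied to the partial trace of an optimiser; and M5 follows from F4 together with additivity of the relative entropy on product states, by taking $\sigma_n\otimes\sigma_k$ as a feasible point for $\rho_n\otimes\rho_k$ and splitting the divergence.

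Property M6 uses F3: a full-rank $\gamma\in\f_i$ yields $E_{\f_i}(\rho_n)\leq \re{\rho_n}{\gamma^{\otimes n}} \leq n\,\|\log\gamma^{-1}\|_{\infty}$, which is $O(n)$. The main obstacle is M7, asymptotic continuity. Here my plan is to follow the Donald--Horodecki-style argument of Ref.~\cite{horodecki_entanglement_2001} adapted to an arbitrary invariant set obeying F1--F5: for sequences $\rho_n,\sigma_n\in\SHn{n}$ with $\epsilon_n:=\tfrac{1}{2}\|\rho_n-\sigma_n\|_1\to 0$, I would combine closedness and convexity of $\f_i$ (F1, F2)—which guarantee that an optimiser is attained and that convex combinations with such optimisers remain feasible—with the full-rank element guaranteed by F3, to derive an Alicki--Fannes-type bound of the form $|E_{\f_i}(\rho_n)-E_{\f_i}(\sigma_n)|\leq c\,n\,\epsilon_n\log d + h(\epsilon_n)$, where $c$ depends only on $\|\log\gamma^{-1}\|_{\infty}$ and $h$ is the binary entropy. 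Dividing by $n$ delivers M7. The delicate point is ensuring that the constants in the estimate can be chosen uniformly across all invariant sets in our class, and in handling the situation when the optimiser lies on the boundary of $\f_i$; this is precisely where F1--F3 do the essential analytic work, and it is the step I expect to require the most care.
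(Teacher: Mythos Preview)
Your proposal is correct and follows essentially the same route as the paper: verify monotonicity from contractivity of the relative entropy and invariance of $\f_i$, then check M1--M7 one by one using F1--F5 and the product structure of Eq.~\eqref{eq:independent_free}. The only notable differences are that the paper obtains regularisability directly from the bound $\tfrac{1}{n}E_{\f_i}(\rho^{\otimes n})\leq \re{\rho}{\gamma_{\text{full-rank}}}$ rather than invoking Fekete, and for M7 the paper does not re-derive the Alicki--Fannes-type estimate but simply cites Lem.~1 of Ref.~\cite{synak-radtke_asymptotic_2006} (with the refinement in Ref.~\cite{brandao_generalization_2010} that a full-rank state suffices in place of the maximally mixed one), so the careful step you flag is already packaged in the literature.
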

This result is known in the literature, see Refs.~\cite{synak-radtke_asymptotic_2006,
brandao_generalization_2010}, but we nevertheless provide a proof in appendix~\ref{additional}
to make the paper self-contained. By virtue of Thm.~\ref{thm:reversible_multi} it then follows that,
if $E^{\infty}_{\f_i}$ has a positive value over the states that are not in $\f_i$, then it is the unique
quantifier of the amount of $i$-th resource contained in the system for a multi-resource theory that
satisfies the asymptotic equivalence property with respect to these monotones. Furthermore, the amount
of $i$-th resource used to map the main system from the state $\rho$ into the state $\sigma$ is then
equal to 
\begin{equation}
\label{resource_work_ent}
\Delta W_i = E_{\f_i}^{\infty}(\rho) - E_{\f_i}^{\infty}(\sigma),
\end{equation}
for all $i = 1, \ldots , m$.
\subsection{Relaxing the conditions on the monotones}
\label{average_non_increasing}
There are situations, when we consider specific resource theories, in which some of the properties of the
set of free states are not satisfied. In particular, we can have that the set of free states does not
contain a full-rank state, that is, property~\ref{item:F3} is not satisfied. An example would be the resource
theory of energy-non-increasing maps for a system with Hamiltonian $H$,
\begin{equation}
\label{non_increasing_energy}
\A_{H} = \left\{ \chn_H \ : \ \BH \rightarrow \BH |
\ \tr{\chn_{H} (\rho) H} \leq \tr{\rho H} \ \forall \rho \in \SH \right\}.
\end{equation}
An example of a subset of $\A_{H}$ are unitary operations which commute with the Hamiltonian $H$ (as in
the resource theory of Thermal Operations). If the Hamiltonian $H$ has a non-degenerate ground
state $\ket{\text{g}}$, then it is easy to show that this state is fixed, that is,
\begin{equation}
\chn_H \left( \ket{\text{g}}\bra{\text{g}} \right) = \ket{\text{g}}\bra{\text{g}}.
\end{equation}
In fact, the operation $\chn_{\text{g}} (\cdot) = \Tr{A}{S ( \cdot \otimes
\ket{\text{g}}\bra{\text{g}}_A ) S^{\dagger}}$, where $S$ is the unitary operation implementing
the swap between the two states, belongs to $\A_{H}$ and maps all states into the ground state.
Thus, the set of free states does not contain a full-rank state, which implies that the relative
entropy distance from this set is ill-defined, and it is not asymptotic continuous. Notice
that the above argument holds even in the case of a degenerate ground state, with the
difference that the invariant set would be composed by any state with support on this degenerate
subspace.
\par
We can introduce a different monotone for this kind of resource theory, that is, the average of the
observable which is not increased by the allowed operations (modulo a constant factor). For the
example we are considering, this monotone would be
\begin{equation} \label{montone_average}
M_H(\rho) = \tr{H \rho} - E_{\text{g}},
\end{equation}
where $H$ is the Hamiltonian of the system, and $E_{\text{g}} = \tr{H \ket{\text{g}}\bra{\text{g}}}$
is the energy of the ground state. When $n$ copies of the system are considered, we define the total
Hamiltonian as $H_n = \sum_{i=1}^{n} H^{(i)}$, where $H^{(i)}$ is the Hamiltonian acting on the $i$-th
copy. In this case, it is easy to show that this quantity is equal to $0$ when evaluated on the fixed
state $\ket{\text{g}}\bra{\text{g}}$, property~\ref{item:M3}, is monotonic under partial tracing,
property~\ref{item:M4}, is additive (and therefore satisfies sub-additivity, property~\ref{item:M5}), and
it scales extensively in the number of copies of the system, thus satisfying property~\ref{item:M6}. Furthermore,
$M_H(\cdot)$ is monotonic under the class of operations (by definition of the class itself), and it is
asymptotic continuous, property~\ref{item:M7}, as shown in Prop.~\ref{average_asymp_cont} in
appendix~\ref{additional}. If batteries are introduced, we can define the operator $H$ is such a
way that properties~\ref{item:M1} and \ref{item:M2} are satisfied, see for example Sec.~\ref{thermo_example}.
\par
Thus, if one (or more) of the monotones of the multi-resource
theory is of the form given in Eq.~\eqref{montone_average}, we have that the results of the previous
section still apply, particularly Thm.~\ref{thm:reversible_multi}. Furthermore, we can quantify the change
in the resource associated with $M_H$ during a state transformation $\rho \rightarrow \sigma$ with
 Eq.~\eqref{resource_work_ent}, where the regularised relative entropy distance $E_{\f_i}^{\infty}$
is replaced with the regularised monotone $M_H^{\infty}$. As a side remark, we notice that the
monotone $M_H$ can be obtained as
\begin{equation}
M_H(\rho) = \lim_{\beta \rightarrow \infty} \frac{1}{\beta} \, \re{\rho}{\tau_{\beta}},
\end{equation}
where $\tau_{\beta} = e^{- \beta H} / Z$ is the Gibbs state of the Hamiltonian $H$, and $Z = \tr{e^{- \beta H}}$
is the partition function of the system.
\section{Bank states, interconversion relations, and the first law}
\label{interconv}
Within certain types of multi-resource theories, it is possible to inter-convert the resources stored in the
batteries, i.e., to exchange one resource for another at a given exchange rate. Examples of resource
interconversion can be found in thermodynamics, where Landauer's principle~\cite{landauer_irreversibility_1961}
tells us that energy can be exchanged for information, while a Maxwell's demon can trade information for
energy~\cite{bennett_thermodynamics_1982}. In these examples, a thermal bath is necessary to perform
the interconversion of resources. Indeed, in the following sections we show that in order to exchange
between resources one always needs an additional system, which we refer to as a \emph{bank}, that
captures the necessary properties of thermal baths in thermodynamics, and abstracts them so that they
can be applied to other resource theories. When such a system exists, we can pay a given amount
of one resource and gain a different amount of another resource, with an exchange rate that only
depends on the state describing the bank, see Thm.~\ref{thm:interconvert_relation}.
Within the thermodynamic examples we are considering, this corresponds to exchanging one
bit of information for one unit of energy, and vice versa. The exchange rate of these processes
is proportional to the temperature of the thermal bath.
\par
During a resource interconversion the state of the bank should not change its main properties, so that
we can keep using it indefinitely. Furthermore, we should always have to invest one resource in
order to gain the other. For these reasons the bank is taken to be of infinite size, and its state
to be \emph{passive}, i.e., to always contain the minimum possible values of the resources. In
fact, in the thermodynamic examples we are considering, the thermal bath has infinite size, and
its state has maximum entropy for fixed energy, or equivalently minimum energy for fixed
entropy~\cite{jaynes_information_1957}. We additionally show that the relative entropy distance from
the set of bank states plays a fundamental role in quantifying the exchange rate at which resources are
inter-converted, see Cor.~\ref{bank_equal_rel_ent}. For instance, in thermodynamics this quantity
is proportional to the Helmholtz free energy $F = E - T S$, which links together the two resources,
internal energy $E$ and information, which is proportional to $-S$. Through this quantity, one can define
the exchange rate between energy and entropy, i.e., the temperature of the thermal bath $T$. Finally,
we introduce a first-law-like relation for multi-resource theories. The first law consists of a single relation
that regulates the state transformation of a system when the agent has access to a bank for exchanging
the resources. In particular, this relation links the change in the relative entropy distance from the set of
bank states over the main system to the amount of resources exchanged by the batteries during the
transformations, see Cor.~\ref{coro:first_law}. In the example we are considering, this relation coincides
with the First Law of thermodynamics, as it connects a change in the Helmholtz free energy $\Delta F$ of
the system with the energy and information exchanged by the batteries,
\begin{equation}
\label{first_law}
\Delta F = \Delta W_E + T \, \Delta W_I,
\end{equation}
where $\Delta W_E$ is the energy exchanged by the first battery, $\Delta W_I$ is the information exchanged by
the second battery, and $T$ is the background temperature, describing the state of the bank.
\par
We now briefly discuss about the value that resources have in the different theories of thermodynamics,
and the role of the first law in connecting these resources together. Let us first consider the single-resource
theory of thermodynamics, where the system is in contact with an infinite thermal reservoir~\cite{brandao_resource_2013}.
To perform a state transformation we need to provide only one kind of resource, known as athermality
($\Delta F$), or work. Since the thermal reservoir is present, it is easy to get close to the free state, i.e.
to the thermal state at temperature $T$, because we can simply thermalise the system with the allowed
operations. However, it is difficult to go in the opposite direction, unless we use part of the athermality
stored in a battery. For this reason, a positive increment in the athermality of the battery is considered
valuable, while a negative change is considered a cost.
\par
Let us now move to the multi-resource theory of thermodynamics, whose allowed operations are
energy-preserving unitary operations~\cite{sparaciari_resource_2016}. In this case, it is easy to
see that negative and positive contributions of energy and information are equally valuable, since these
two quantities are conserved by the set of allowed operations. As a result,  the agent cannot perform
state transformations in any direction without having access to the batteries. If we now allow the
agent to use a thermal bath as a bank, and we keep the system decoupled from it (so that the agent
cannot perform operations that thermalise the system for free), we find that changing a single resource,
either energy or information, is enough to perform a generic state transformation on the system. In fact,
we can always inter-convert one resource for the other with the bank, and then change the state of the
system accordingly. Notice that, however, we still have that negative and positive change in one resource
are equally valuable.
\par
Thus, it seems that the advantage that multi-resource theories provide over single-resource theories
is that they make explicit which resources are used during a state transformation. And the link between
the single resource and the multiple ones is given by the first law. In thermodynamics, for example, we
have that the first law, Eq.~\eqref{first_law}, indicates that the amount of athermality $\Delta F$ needed
to transform a state can be actually divided in two contributions, energy $\Delta W_E$ and information
$\Delta W_I$. Notice that all of these quantities can be understood in terms of the relative entropy
distance to an invariant set of states. Athermality being measured by its relative entropy distance to the
thermal state, information and energy being the relative entropy to the maximally mixed or ground state. As
we will see, the generalised first law given in Eq.~\eqref{eq:first_law} also relates the relative entropy
to the bank state, to the relative entropies to the invariant sets of the single resource theories.
\subsection{Banks and interconversion of resources}
\label{bank_interconvert}
We now introduce the bank system, and show how this additional tool allows us to perform interconversion
between resources. To simplify the notation, we only focus on a theory with two resources. However, the
results we obtain also apply to theories with more resources, since in that case we can just select two
resources and perform interconversion while keeping the others fixed. Thus, in the following we consider a
resource theory $\rt_{\text{multi}}$ with two invariant sets $\f_1$ and $\f_2$ (each of them associated with
one of the resources), and allowed operations $\A_{\text{multi}}$. We assume the theory to satisfy the
asymptotic equivalence property of Def.~\ref{def:asympt_equivalence_multi} with respect to the relative
entropy distances from $\f_1$ and $\f_2$, and we ask the two invariant sets to satisfy the
properties~\ref{item:F1}, \ref{item:F2}, and \ref{item:F3}, while we replace properties~\ref{item:F4}
and ~\ref{item:F5} with the following, more demanding, property
\begin{description}
\item[F5b\label{item:F5b}] The invariant sets $\f_i$'s are such that $\f_i^{(n)} = \f_i^{\otimes n}$,
for all $n \in \N$.
\end{description}
The above properties implies that the relative entropy distances $E_{\f_1}$ and $E_{\f_2}$ are the
unique quantifiers for the two resources of our theory, as we have seen in Sec.~\ref{multi_rev_unique}.
From property~\ref{item:F5b} it follows that these two monotones are additive, i.e.,
$E_{\f_i}(\rho \otimes \sigma) = E_{\f_i}(\rho) + E_{\f_i}(\sigma)$ for $i = 1,2$, and consequently
that their regularisation $E^{\infty}_{\f_i}$ coincides with $E_{\f_i}$. Furthermore, the
properties~\ref{item:F2} and \ref{item:F5b} together imply that the invariant sets are composed by
a single state, i.e., $\f_i = \left\{ \rho_i \right\}$, where $\rho_i \in \SH$, for $i = 1,2$. We make
use of property~\ref{item:F5b} in Lem.~\ref{f_i_inequality}, shown in appendix~\ref{additional},
which itself is used to prove some essential properties of the set of bank states, see Def.~\ref{def:bank_state}.
This property is ultimately used to show that the exchange rate between resources is given by
the relative entropy distance from the set of states describing the bank, see Cor.~\ref{bank_equal_rel_ent}.
\par
It is important to stress that property~\ref{item:F5b} is not satisfied by every multi-resource theory.
For example, this property is satisfied by the multi-resource theory of thermodynamics, but it is violated
by other theories, like entanglement theory, where the set of free states is composed of separable states.
We are currently working to weaken this property, following the ideas presented in
Ref.~\cite{brandao_generalization_2010}, by requiring the invariant sets to be closed under permutations
of copies. This less demanding property should allow us to use the approximate de Finetti's
theorems~\cite{renner_symmetry_2007}, and to obtain similar conditions to those obtained with \ref{item:F5b}.
To study the interconversion of entanglement with some other resource, however, one can think of restricting
the state space of the theory in a way in which the resulting subset of separable states satisfies property~\ref{item:F5b},
see the example in Sec.~\ref{control_theory_ex}. Finally, it is worth noting that all the results we obtain in this
section also apply if one of the monotones, or both, is of the form shown in Eq.~\eqref{montone_average}.
Indeed, these monotones satisfy the same properties of the relative entropy distances, with the difference that
the corresponding invariant set can be composed by multiple states, and these states do not need to have full rank.
\par
Let us now consider an example of resource interconversion which will highlight the properties that we are
searching for in a bank system. Suppose we have a certain amount of euros and pounds in our wallet, and we
want to convert one into the other, for example, from pounds to euros. In order to convert these two currencies
we need to go to the bank, that we would expect to satisfy the following properties. The first property could
be referred to as \emph{passivity} of the bank, and it is represented by the fact that if we do not provide some
pounds, we cannot receive any euros (and vice versa). Second is the existence of an exchange rate, that is, the
bank will convert the two currency at a certain exchange rate, and this rate can be different depending on the
bank we use. The last property concerns the catalytic nature of the bank, since we would like a bank not to
change the exchange rate between pounds and euros as a consequence of our transaction (this last property
is approximately satisfied by real banks, at least for the amount exchanged by average costumers).
\par
The previous example shows that, in order to achieve resource interconversion, we need to introduce in our
framework an additional system, the bank, with some specific properties. Within our formalism, we consider
the same multi-partite system introduced in Sec.~\ref{quant_res}, with the main system $S$, and two
batteries $B_1$ and $B_2$. The system $S$ is now used as a bank, which has to satisfy the three essential
properties (passivity, existence of a rate, catalytic behaviour) that we have informally described in the
previous paragraph, and that we are going to formalise in the following. First of all, we need the states describing the bank to be \emph{passive}, meaning
that we should not be able to extract from this system both resources at the same time, since we always need
to pay one resource to gain another one.
Thus, the set of \emph{bank states} is defined as
\begin{definition} \label{def:bank_state}
Consider a multi-resource theory $\rt_{\text{multi}}$ satisfying the asymptotic equivalence property
with respect to the monotones $E_{\f_1}$ and $E_{\f_2}$. The \emph{set of bank states} of the theory
is a subset of the state space $\SH$ defined as,
\begin{align}
\label{set_f3}
\f_{\text{bank}} = \big\{ \rho \in \SH \ | \ \forall \, \sigma \in \SH , \
&E_{\f_1}(\sigma) > E_{\f_1}(\rho) \ \text{or} \nonumber \\
&E_{\f_2}(\sigma) > E_{\f_2}(\rho) \ \text{or} \nonumber \\
&E_{\f_1}(\sigma) = E_{\f_1}(\rho) \, \text{and} \, E_{\f_2}(\sigma) = E_{\f_2}(\rho) \, \big\}.
\end{align}
Within the set $\f_{\text{bank}}$ we can find different subsets of bank states with a fixed value of
$E_{\f_1}$ and $E_{\f_2}$. We define each of these subsets as
\begin{equation}
\label{subset_bank}
\f_{\text{bank}}\left(\bar{E}_{\f_1},\bar{E}_{\f_2}\right) =
\left\{ \rho \in \f_{\text{bank}} \ | \ E_{\f_1}(\rho) =
\bar{E}_{\f_1} \, \text{and} \, \ E_{\f_2}(\rho) = \bar{E}_{\f_2} \right\}.
\end{equation}
\end{definition}
Notice that Eq.~\eqref{set_f3} implies that no state can be found with smaller values of both
monotones $E_{\f_i}$'s. In this way, the agent is not able to transform the state of the bank in a
way in which both resources are extracted from it and stored in the batteries. Instead, they always
need to trade resources. The set of bank states $\f_{\mathrm{bank}}$ can be visualised in
the resource diagram of the theory, see Fig.~\ref{fig:bank_subset}. This set is represented by a curve
on the boundary of the state space, connecting the points associated with $\f_1$ to those associated
with $\f_2$. In appendix~\ref{convex_bound} we show that, under the current assumptions, this
curve is always convex, and in the following we focus our attention to those segments where the curve
is strictly convex.
\par
The subsets $\f_{\mathrm{bank}} \left(\bar{E}_{\f_1},\bar{E}_{\f_2}\right)$'s represent individual
points in the resource diagram describing the multi-resource theory, and they obey many of the properties
satisfied by the invariant sets $\f_i$'s. Indeed, one can show that
\begin{itemize}
\item For all $n \in \N$, we have that each subset of bank states is such that
\begin{equation}
\label{eq:add_bank}
\f^{(n)}_{\mathrm{bank}} \left(\bar{E}_{\f_1},\bar{E}_{\f_2}\right)
=
\f^{\otimes n}_{\mathrm{bank}} \left(\bar{E}_{\f_1},\bar{E}_{\f_2} \right),
\end{equation}
that is, these subsets satisfy property~\ref{item:F5b}. This equality is proved in
Prop.~\ref{additive_f3} of appendix~\ref{additional}.
\item Every subset $\f_{\mathrm{bank}}\left(\bar{E}_{\f_1},\bar{E}_{\f_2}\right)$ is convex,
property~\ref{item:F2}, as shown in Prop.~\ref{convex_f3} in appendix~\ref{additional}.
\item Every subset $\f_{\mathrm{bank}}\left(\bar{E}_{\f_1},\bar{E}_{\f_2}\right)$, and its
extensions to the many-copy case, is invariant under the class of allowed operations
$\A_{\mathrm{multi}}$ of the multi-resource theory, as shown in Lem.~\ref{lem:inv_f3} in
appendix~\ref{additional}.
\end{itemize}
\begin{figure}[t!]
\center
\includegraphics[width=0.5\textwidth]{./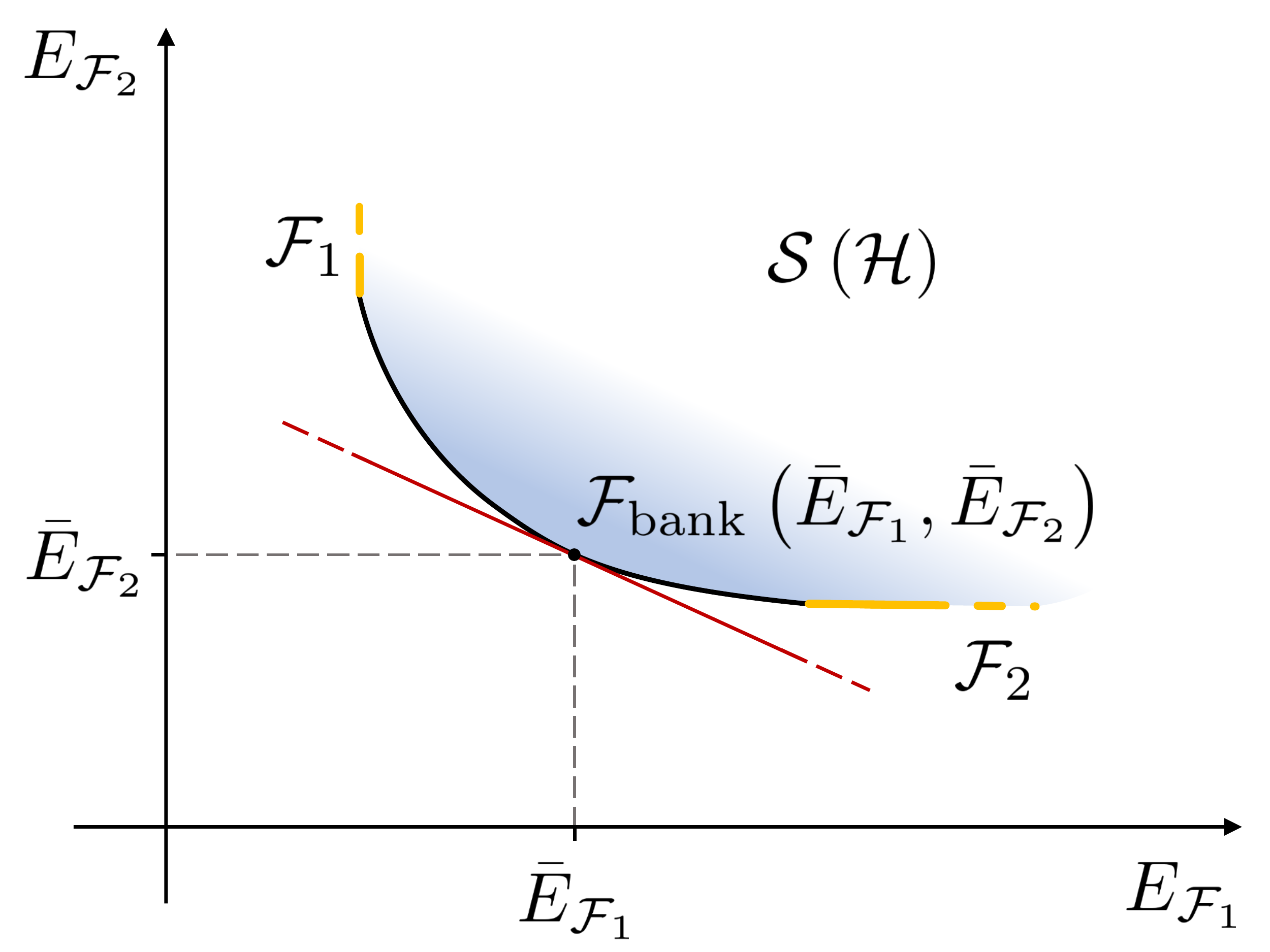}
\caption{The set of bank states introduced in Eq.~\eqref{set_f3} is represented in the
$E_{\f_1}$--$E_{\f_2}$ diagram. Only part of the state-space $\SH$ is shown, highlighted
by the blue gradient region, together with the invariant sets of the theory $\f_1$ and $\f_2$, the
two yellow segments. The black curve connecting these segments is the set of all the bank states
of the theory $\f_{\text{bank}}$. It is worth noting that this set does not include any states
contained in the invariant sets $\f_1$ and $\f_2$. Indeed, a bank state needs to have a non-zero value
of both resources in order to allow for general resource interconversion, and the states in $\f_1$ and
$\f_2$ do not contain any amount of their associated resource. A specific subset of bank states,
labelled by $\f_{\text{bank}}\left(\bar{E}_{\f_1},\bar{E}_{\f_2}\right)$, is shown on the
curve, see Eq.~\eqref{subset_bank}. Notice that, graphically, a bank state is one for
which there exists no other state in the region immediately below and left. The red line,
which is tangent to the set of bank states and passes through the point
$\f_{\text{bank}}\left(\bar{E}_{\f_1},\bar{E}_{\f_2}\right)$, is parametrised by
$f_{\text{bank}}^{\bar{E}_{\f_1},\bar{E}_{\f_2}} = 0$, see Eq.~\eqref{f3_monotone}.}
\label{fig:bank_subset}
\end{figure}
\par
The second essential property for a bank
is that the exchange rate needs only to depend on which state of the bank we choose to use. In our
framework, it is the choice of the values $\bar{E}_{\f_1}$ and $\bar{E}_{\f_2}$, defining the subset
$\f_{\text{bank}} \left(\bar{E}_{\f_1},\bar{E}_{\f_2}\right)$, that determines the exchange rate at which the
resources are converted. In order to obtain this exchange rate we introduce the following function, which
quantifies how much the properties of the bank change during a transformation, and generalises the
Helmholtz free energy used in thermodynamics. Given the subset of bank states $\f_{\text{bank}}
\left(\bar{E}_{\f_1},\bar{E}_{\f_2}\right)$, this function is defined as
\begin{equation}
\label{f3_monotone}
f_{\text{bank}}^{\bar{E}_{\f_1},\bar{E}_{\f_2}}(\rho) := \alpha \, E_{\f_1}(\rho) + \beta \, E_{\f_2}(\rho) - \gamma,
\end{equation}
where $\alpha$, $\beta$, and $\gamma$ are non-negative constant factors, which depend on the
subset of bank states we have chosen. In order
to define the linear coefficients, we impose the following two properties for this function,
\begin{description}
\item[B1\label{item:B1}] The function $f_{\text{bank}}^{\bar{E}_{\f_1},\bar{E}_{\f_2}}$ is equal to
zero over the subset $\f_{\text{bank}}\left(\bar{E}_{\f_1},\bar{E}_{\f_2}\right)$.
\item[B2\label{item:B2}] The value of this function on the states contained in the subset
$\f_{\text{bank}}\left(\bar{E}_{\f_1}, \bar{E}_{\f_2}\right)$ is minimum.
\end{description}
Notice that property~\ref{item:B1} is there to set the zero of the function, and implies that
\begin{equation}
\gamma = \alpha \, \bar{E}_{\f_1} + \beta \, \bar{E}_{\f_2}.
\end{equation}
Property~\ref{item:B2}, instead, fixes the ratio between
the constants $\alpha$ and $\beta$. This condition can be visualised in the resource diagram, and is
equivalent to the request that, in such a diagram, the bank monotone is tangent to the state space,
so that
\begin{equation}
\label{eq:tangent_f3}
f_{\mathrm{bank}}^{\bar{E}_{\f_1},\bar{E}_{\f_2}}(\rho) \geq f_{\mathrm{bank}}^{\bar{E}_{\f_1},\bar{E}_{\f_2}}(\sigma)
, \quad
\forall \, \rho \in \SH , \, \forall \, \sigma \in \f_{\mathrm{bank}}\left(\bar{E}_{\f_1},\bar{E}_{\f_2}\right).
\end{equation}
The above property is always satisfied under our working assumptions, since the curve
of bank states is convex, see Fig.~\ref{fig:bank_subset}. We refer to this function
as the \emph{bank monotone}.
\par
The bank monotone can be easily extended to the state space of $n$ copies of the system. The main
difference is that, when we consider states in $\SHn{n}$, the coefficient $\gamma$ is proportional to
the number of copies $n$, and we write $\gamma = n \left( \alpha \, \bar{E}_{\f_1} + \beta \, \bar{E}_{\f_2}
\right)$. This follows from property~\ref{item:B1}, together with the fact that the subset $\f_{\mathrm{bank}}
\left(\bar{E}_{\f_1},\bar{E}_{\f_2}\right)$ satisfies property~\ref{item:F5b}, see Eq.~\eqref{eq:add_bank}.
Since the function in Eq.~\eqref{f3_monotone} is a linear combination of the monotones
$E_{\f_1}$ and $E_{\f_2}$, it is easy to show (see also appendix~\ref{additional}) that it
satisfies the properties listed in the following proposition
\begin{restatable}{prop}{bankproperties}
\label{prop:properties_bank_mon}
Consider a resource theory $\rt_{\text{multi}}$ with allowed operations $\A_{\text{multi}}$, satisfying
asymptotic equivalence with respect to the monotones $E_{\f_1}$ and $E_{\f_2}$, i.e.~the relative
entropy distances from the invariant sets of the theory. Suppose that these sets satisfy the
properties~\ref{item:F1}, \ref{item:F2}, \ref{item:F3}, and \ref{item:F5b}. Then, the function
$f_{\text{bank}}^{\bar{E}_{\f_1},\bar{E}_{\f_2}}$ introduced in Eq.~\eqref{f3_monotone}
satisfies the following properties.
\begin{description}
\item[B3\label{item:B3}] The function $f_{\text{bank}}^{\bar{E}_{\f_1},\bar{E}_{\f_2}}$ is additive.
\item[B4\label{item:B4}] The function $f_{\text{bank}}^{\bar{E}_{\f_1},\bar{E}_{\f_2}}$ is monotonic
under partial tracing.
\item[B5\label{item:B5}] The function $f_{\text{bank}}^{\bar{E}_{\f_1},\bar{E}_{\f_2}}$ is sub-extensive,
i.e., this function scales at most linearly in the number of systems considered. More precisely, for any sequence of states
$\left\{ \rho_n \in \SHn{n} \right\}$, we have that $f_{\text{bank}}^{\bar{E}_{\f_1},\bar{E}_{\f_2}}(\rho_n) = O(n)$.
\item[B6\label{item:B6}] The function $f_{\text{bank}}^{\bar{E}_{\f_1},\bar{E}_{\f_2}}$ is asymptotic continuous.
\item[B7\label{item:B7}] The function $f_{\text{bank}}^{\bar{E}_{\f_1},\bar{E}_{\f_2}}$ is monotonic under the set
of allowed operations $\A_{\text{multi}}$, since $\alpha$ and $\beta$ are non-negative.
\end{description}
\end{restatable}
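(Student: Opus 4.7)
The plan is to derive each of properties \ref{item:B3}--\ref{item:B7} directly from the corresponding properties of the relative entropy distances $E_{\f_1}$ and $E_{\f_2}$, which are established by Prop.~\ref{thm:properties_rel_ent} (invoked with the hypotheses \ref{item:F1}, \ref{item:F2}, \ref{item:F3}; note that \ref{item:F5b} implies both \ref{item:F4} and \ref{item:F5}, so Prop.~\ref{thm:properties_rel_ent} applies). The key observation is that $f_{\text{bank}}^{\bar{E}_{\f_1},\bar{E}_{\f_2}}$ is simply a non-negative linear combination of $E_{\f_1}$ and $E_{\f_2}$, shifted by a constant $\gamma$; essentially every property of the bank monotone will inherit from a corresponding property of each $E_{\f_i}$, with only bookkeeping to track the constant $\gamma$.

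I would begin by fixing the scaling convention: on $\SHn{n}$ the constant term becomes $\gamma_n = n(\alpha \bar{E}_{\f_1} + \beta \bar{E}_{\f_2})$, as dictated by \ref{item:B1} together with Eq.~\eqref{eq:add_bank}. For \ref{item:B3}, additivity of $E_{\f_i}$ (which follows from \ref{item:F5b}, as noted in the text just before Def.~\ref{def:bank_state}) gives $f_{\text{bank}}^{\bar{E}_{\f_1},\bar{E}_{\f_2}}(\rho_n \otimes \rho_k) = \alpha E_{\f_1}(\rho_n \otimes \rho_k) + \beta E_{\f_2}(\rho_n \otimes \rho_k) - \gamma_{n+k}$, and since $\gamma_{n+k} = \gamma_n + \gamma_k$, this equals the sum of the two values. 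For \ref{item:B4}, monotonicity under partial trace of each $E_{\f_i}$ together with $\alpha,\beta \geq 0$ gives the inequality, again after checking that the constant $\gamma$ rescales correctly when we pass from $n$ to $n-k$ copies. For \ref{item:B5}, since $E_{\f_i}(\rho_n) = O(n)$ (from property \ref{item:M6} inherited by the relative entropy distances) and $\gamma_n = O(n)$ by construction, the combination is manifestly $O(n)$.

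Property \ref{item:B6} is the only one that requires a small care. Asymptotic continuity of each $E_{\f_i}$ (from \ref{item:M7}) says that if $\lVert \rho_n - \sigma_n \rVert_1 \to 0$, then $|E_{\f_i}(\rho_n) - E_{\f_i}(\sigma_n)|/n \to 0$. Since $\gamma_n$ depends only on $n$ and not on the state, it cancels in the difference $f_{\text{bank}}^{\bar{E}_{\f_1},\bar{E}_{\f_2}}(\rho_n) - f_{\text{bank}}^{\bar{E}_{\f_1},\bar{E}_{\f_2}}(\sigma_n)$, so by the triangle inequality
\begin{equation}
\frac{|f_{\text{bank}}^{\bar{E}_{\f_1},\bar{E}_{\f_2}}(\rho_n) - f_{\text{bank}}^{\bar{E}_{\f_1},\bar{E}_{\f_2}}(\sigma_n)|}{n} \leq \alpha \frac{|E_{\f_1}(\rho_n) - E_{\f_1}(\sigma_n)|}{n} + \beta \frac{|E_{\f_2}(\rho_n) - E_{\f_2}(\sigma_n)|}{n},
\end{equation}
which tends to zero. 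Finally, \ref{item:B7} is immediate: for any $\chn \in \A_{\text{multi}} \subseteq \A_1 \cap \A_2$, each $E_{\f_i}$ is non-increasing, and with $\alpha,\beta \geq 0$ and $\gamma$ constant (independent of the state) the linear combination is non-increasing as well.

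The only real conceptual subtlety, and hence the place where I would be most careful in writing the proof, is tracking the state-independent constant $\gamma$ through the many-copy setting so as to distinguish the additivity statement \ref{item:B3} (where $\gamma_{n+k} = \gamma_n + \gamma_k$ is essential) from statements like \ref{item:B6} and \ref{item:B7} (where $\gamma$ cancels or is irrelevant). Everything else is a mechanical transfer of properties through a linear combination, so no genuinely new technical obstacle arises beyond what is already handled in Prop.~\ref{thm:properties_rel_ent}.
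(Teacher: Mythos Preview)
Your treatment of \ref{item:B3}, \ref{item:B5}, \ref{item:B6}, and \ref{item:B7} is correct and matches the paper's approach: these properties really do transfer mechanically from the $E_{\f_i}$'s through the non-negative linear combination, with the constant $\gamma_n$ behaving as you describe.

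However, your argument for \ref{item:B4} has a genuine gap. You propose to use monotonicity of each $E_{\f_i}$ under partial trace (property~\ref{item:M4}) and then ``check that the constant $\gamma$ rescales correctly.'' But it does not rescale in the direction you need. Writing it out: from $E_{\f_i}(\Tr{k}{\rho_n}) \leq E_{\f_i}(\rho_n)$ and $\alpha,\beta \geq 0$ you get
\[
f_{\text{bank}}^{\bar{E}_{\f_1},\bar{E}_{\f_2}}(\Tr{k}{\rho_n})
= \alpha E_{\f_1}(\Tr{k}{\rho_n}) + \beta E_{\f_2}(\Tr{k}{\rho_n}) - \gamma_{n-k}
\leq \alpha E_{\f_1}(\rho_n) + \beta E_{\f_2}(\rho_n) - \gamma_{n-k},
\]
and the right-hand side equals $f_{\text{bank}}^{\bar{E}_{\f_1},\bar{E}_{\f_2}}(\rho_n) + (\gamma_n - \gamma_{n-k}) = f_{\text{bank}}^{\bar{E}_{\f_1},\bar{E}_{\f_2}}(\rho_n) + k(\alpha\bar{E}_{\f_1}+\beta\bar{E}_{\f_2})$. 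Since $\bar{E}_{\f_1},\bar{E}_{\f_2} > 0$ for a genuine bank state, this extra term is strictly positive and you have only shown a weaker inequality than \ref{item:B4}.

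The paper closes this gap by using a stronger input than \ref{item:M4}: under \ref{item:F5b} the relative entropy distances are in fact \emph{super-additive}, $E_{\f_i}(\rho_n) \geq E_{\f_i}(\rho_{S_1}) + E_{\f_i}(\rho_{S_2})$ (Lem.~\ref{f_i_inequality}). This yields $f_{\text{bank}}^{\bar{E}_{\f_1},\bar{E}_{\f_2}}(\rho_n) \geq f_{\text{bank}}^{\bar{E}_{\f_1},\bar{E}_{\f_2}}(\rho_{S_1}) + f_{\text{bank}}^{\bar{E}_{\f_1},\bar{E}_{\f_2}}(\rho_{S_2})$, and then the paper drops the $\rho_{S_1}$ term using the non-negativity of the bank monotone, which follows from properties~\ref{item:B1} and~\ref{item:B2} (the tangency condition, Eq.~\eqref{eq:tangent_f3}). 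So two ingredients you did not invoke---super-additivity from \ref{item:F5b} and non-negativity from \ref{item:B2}---are exactly what is needed to make \ref{item:B4} go through.
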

\par
The third and last property we demand from a bank concerns the back-reaction it experiences
during interconversion of resources. We want that, after the transformation, the state of the
bank only changes infinitesimally with respect to the bank monotone associated with it. If this
is the case, we can show that the exchange rate only changes infinitesimally, and therefore
we can keep using the bank to inter-convert between resources at the same exchange rate.
More concretely, we now consider a tripartite system composed by a bank $S$ and and two
batteries, $B_1$ and $B_2$. Each of these subsystems is composed by many copies of the
same fundamental system described by $\hil$, for which we defined the notion of bank states.
Thus, the bank $S$ is described by $\hil_S = \hil^{\otimes n}$, with $n \in \N$, and its initial state
is given by $n$ copies of the bank state $\rho \in \f_{\mathrm{bank}}\left(\bar{E}_{\f_1},
\bar{E}_{\f_2}\right)$. The batteries are described by $\hil_{B_i} = \hil^{\otimes m_i}$,
$m_i \in \N$, where $i = 1,2$. The states describing the batteries are $\omega_1 \in \SHbi$,
and $\omega_2 \in \SHbii$, respectively.
\par
A \emph{resource interconversion} is an asymptotically reversible transformation
\begin{equation}
\label{interconv_trasf}
\rho^{\otimes n} \otimes \omega_1 \otimes \omega_2
\xleftrightarrow{\text{asympt}}
\tilde{\rho}^{\otimes n} \otimes \omega'_1 \otimes \omega'_2,
\end{equation}
where $\tilde{\rho} \in \SH$, $\omega'_1 \in \SHbi$, and $\omega'_2 \in \SHbii$,
satisfying the following property, see also Fig.~\ref{fig:tangent_monotone},
\begin{description}
\item[X1\label{item:X1}] The state of the bank changes infinitesimally during the resource interconversion.\\
If $\rho \in \f_{\text{bank}}\left(\bar{E}_{\f_1},\bar{E}_{\f_2}\right) \subset \SH$, then
the state $\tilde{\rho} \in \SH$ is such that
\begin{equation}
\label{condition_x1}
f_{\text{bank}}^{\bar{E}_{\f_1},\bar{E}_{\f_2}}(\tilde{\rho}^{\otimes n})
=
f_{\text{bank}}^{\bar{E}_{\f_1},\bar{E}_{\f_2}}(\rho^{\otimes n}) + \delta_n,
\end{equation}
where $\delta_n > 0$ is such that $\delta_n \rightarrow 0$ as $n \rightarrow \infty$.
\end{description}
It is worth noting that, according to the above definition, the bank is here acting as a \emph{catalyst},
allowing for resource interconversion. Catalysts are used in resource theories to allow for state
transformations which are otherwise impossible~\cite{gour_resource_2015,chitambar_quantum_2018}.
These systems are generally described by resourceful states, and therefore are subject to strict
constraints, for example the requirement that their initial state needs to be perfectly (or approximately)
re-obtained at the end of the transformation~\cite{ng_limits_2015}. These constraints are required since,
otherwise, one might act on the catalyst and extract resources from it, thus trivializing the theory~\cite{van_dam_universal_2003}.
It is interesting to notice that our bank is similarly constrained, specifically by Eq.~\ref{condition_x1}. As
we see in the following theorem, this constrain is enough to allow for resource interconversion, but also to
ensure a non-trivial behaviour of the theory (no resource is extracted for free).
\par
We are now ready to introduce the interconversion relation which links the different amounts of
resources exchanged, weighted by the exchange rate given by the bank. The theorem is proved
in appendix~\ref{main_results}.
\begin{restatable}{thm}{interconvert}
\label{thm:interconvert_relation}
Consider a resource theory $\rt_{\text{multi}}$ with two resources, equipped with the batteries
$B_1$ and $B_2$. Suppose the theory satisfies asymptotic equivalence with respect to the
monotones $E_{\f_1}$ and $E_{\f_2}$, i.e.~the relative entropy distances from the invariant sets of the
theory, and that these sets satisfy the properties~\ref{item:F1}, \ref{item:F2}, \ref{item:F3}, and \ref{item:F5b}.
Then, the resource interconversion of Eq.~\eqref{interconv_trasf}, where the bank has to
transform in accord to condition~\ref{item:X1}, is solely regulated by the following relation,
\begin{equation}
\label{finite_interconversion}
\alpha \, \Delta W_1 = - \beta \, \Delta W_2  + \delta_n.
\end{equation}
Furthermore, when the number of copies of the bank system $n$ is sent to infinity, we have that the
above equation reduces to the following one, which we refer to as the \emph{interconversion relation},
\begin{equation}
\label{interconversion}
\Delta W_1 = - \frac{\beta}{\alpha} \, \Delta W_2,
\end{equation}
where the amount of resources exchanged $\Delta W_i$ is non-zero.
\end{restatable}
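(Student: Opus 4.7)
The plan is to apply the asymptotic equivalence property to the global transformation in Eq.~\eqref{interconv_trasf}, which after simplification using the structure of the bank monotone and condition~\ref{item:X1} will yield the claimed relation.

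First, I will invoke Def.~\ref{def:asympt_equivalence_multi}. Since $\rt_{\text{multi}}$ satisfies asymptotic equivalence with respect to $E_{\f_1}$ and $E_{\f_2}$, the reversible map in Eq.~\eqref{interconv_trasf} exists if and only if these two monotones agree on the initial and final global states. Using property~\ref{item:M2} to separate the contributions of bank and batteries, and property~\ref{item:M1} to discard the contribution of each battery to the ``wrong'' monotone, these two equalities reduce to
\begin{align}
E_{\f_1}(\rho^{\otimes n}) - E_{\f_1}(\tilde{\rho}^{\otimes n}) &= E_{\f_1}(\omega'_1) - E_{\f_1}(\omega_1) = \Delta W_1, \\
E_{\f_2}(\rho^{\otimes n}) - E_{\f_2}(\tilde{\rho}^{\otimes n}) &= E_{\f_2}(\omega'_2) - E_{\f_2}(\omega_2) = \Delta W_2,
\end{align}
where I have used Eq.~\eqref{work_Ri} together with the fact that the relative entropy distances are additive under the property~\ref{item:F5b}, so their regularisations coincide with the original monotones. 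These two equations simply express that the resource lost by the main system is the resource stored in the corresponding battery.

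Second, I will compute the bank monotone $f_{\text{bank}}^{\bar{E}_{\f_1},\bar{E}_{\f_2}}$ on both the initial state $\rho^{\otimes n}$ and the final state $\tilde{\rho}^{\otimes n}$. By property~\ref{item:B1}, we have $f_{\text{bank}}^{\bar{E}_{\f_1},\bar{E}_{\f_2}}(\rho^{\otimes n}) = 0$, while condition~\ref{item:X1} tells us that $f_{\text{bank}}^{\bar{E}_{\f_1},\bar{E}_{\f_2}}(\tilde{\rho}^{\otimes n}) = \delta_n$. Subtracting, and using the explicit linear form of the bank monotone in Eq.~\eqref{f3_monotone} (so that the $\gamma$ terms cancel), I obtain
\begin{equation}
\alpha \bigl[ E_{\f_1}(\tilde{\rho}^{\otimes n}) - E_{\f_1}(\rho^{\otimes n}) \bigr] + \beta \bigl[ E_{\f_2}(\tilde{\rho}^{\otimes n}) - E_{\f_2}(\rho^{\otimes n}) \bigr] = \delta_n.
\end{equation}
Substituting the two conservation relations from the first step recovers Eq.~\eqref{finite_interconversion} (up to the sign convention on $\delta_n$, which is fixed by whether the bank absorbs or releases the residual tangent component). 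Taking $n \to \infty$ uses property~\ref{item:X1} that $\delta_n \to 0$ to yield Eq.~\eqref{interconversion}.

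The last ingredient is showing $\Delta W_i \neq 0$. This is where I expect the main subtlety: one must argue that the tangent-line property in Eq.~\eqref{eq:tangent_f3}, together with the strict convexity of the bank curve at $\f_{\text{bank}}(\bar{E}_{\f_1},\bar{E}_{\f_2})$ (guaranteed by the convexity result in appendix~\ref{convex_bound}, restricted to the regime we focus on), forces $\tilde{\rho}$ to lie strictly above the tangent with nonzero displacement in \emph{both} coordinates. Otherwise $\tilde{\rho}$ would remain inside $\f_{\text{bank}}(\bar{E}_{\f_1},\bar{E}_{\f_2})$ and the transformation would be trivial. The remaining checks --- that the coefficients $\alpha,\beta$ are strictly positive so that the ratio $\beta/\alpha$ is well-defined, and that no further hidden constraints appear --- follow from Prop.~\ref{prop:properties_bank_mon} and from the properties~\ref{item:F1}--\ref{item:F3},~\ref{item:F5b} assumed for the invariant sets.
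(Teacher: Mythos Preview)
Your derivation of Eq.~\eqref{finite_interconversion} is correct and essentially matches the paper's: both routes amount to conserving $E_{\f_1}$ and $E_{\f_2}$ on the global state, then reading off the bank contribution via condition~\ref{item:X1} and the linear form of $f_{\text{bank}}^{\bar{E}_{\f_1},\bar{E}_{\f_2}}$. The minor sign ambiguity you flag is harmless once $\delta_n \to 0$.

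The gap is in the last paragraph. What needs to be shown is not merely that $\tilde{\rho}$ differs from $\rho$ for each finite $n$, but that one can \emph{choose} a sequence of final bank states $\tilde{\rho}$ (depending on $n$) such that $\Delta W_i$ remains bounded away from zero while $\delta_n \to 0$. Your argument does not address this: since $\Delta W_i = n\bigl(E_{\f_i}(\rho) - E_{\f_i}(\tilde{\rho})\bigr)$ and condition~\ref{item:X1} forces $\tilde{\rho}$ to approach $\rho$ as $n$ grows, you are facing an indeterminate $n \cdot 0$ and strict convexity alone does not resolve it.

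The paper supplies the missing quantitative step. Pick $\tilde{\rho}$ on the bank curve with $E_{\f_1}(\tilde{\rho}) = E_{\f_1}(\rho) - \epsilon$. Because the bank monotone is tangent to the bank curve at $\rho$ (property~\ref{item:B2}), the displacement in $E_{\f_2}$ is $\frac{\alpha}{\beta}\epsilon + O(\epsilon^2)$, so $\Delta W_1 = n\epsilon$ and $\Delta W_2 = -n\frac{\alpha}{\beta}\epsilon + O(n\epsilon^2)$, while $\delta_n = f_{\text{bank}}^{\bar{E}_{\f_1},\bar{E}_{\f_2}}(\tilde{\rho}^{\otimes n}) = O(n\epsilon^2)$. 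The key is that $\delta_n$ is \emph{second order} in $\epsilon$ while the $\Delta W_i$ are first order; taking $\epsilon \propto n^{-1}$ then makes $\Delta W_i$ finite and nonzero while $\delta_n = O(n^{-1}) \to 0$. This explicit scaling construction is what you are missing, and it is precisely where the tangent-line geometry does real work.
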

Let us highlight some properties that
a bank state needs to satisfy in order to allow for interconversion of resources from one battery to another,
and vice versa. We show that to interconvert between the resources in both directions we need a bank state containing
a non-zero amount of both resources. First notice that, since both parameters $\alpha$ and $\beta$ are
non-negative, whenever we exchange between resources, we increase the amount
contained in one of the batteries (for example, $\Delta W_1 > 0$) while decreasing the amount
contained in the other ($\Delta W_2 < 0$). However, the change in these two resources also
depends on the transformation of the bank state, see Eq.~\eqref{resource_work_i}. Therefore,
one has to consider the bank state used for interconversion, and the amount of resources
contained in it. When the bank state $\rho$ is such that $E_{\f_1}(\rho) > 0$ and $E_{\f_2}(\rho) > 0$,
then interconversion can be achieved (in both directions) between $\Delta W_1$ and $\Delta W_2$,
at the rate specified by Eq.~\eqref{interconversion}. Moreover, as far as the amount of resources in
the bank is non-zero, we can exchange any amount of one resource for the other (since we can
take the number of copies of the bank to be infinite). This is the case of thermodynamics, where
thermal states indeed contain a positive amount of both energy and entropy, the two resources of
the theory, and Eq.~\eqref{interconversion} gives the conversion rate for Landauer's erasure.
\begin{figure}[t!]
\center
\includegraphics[width=0.5\textwidth]{./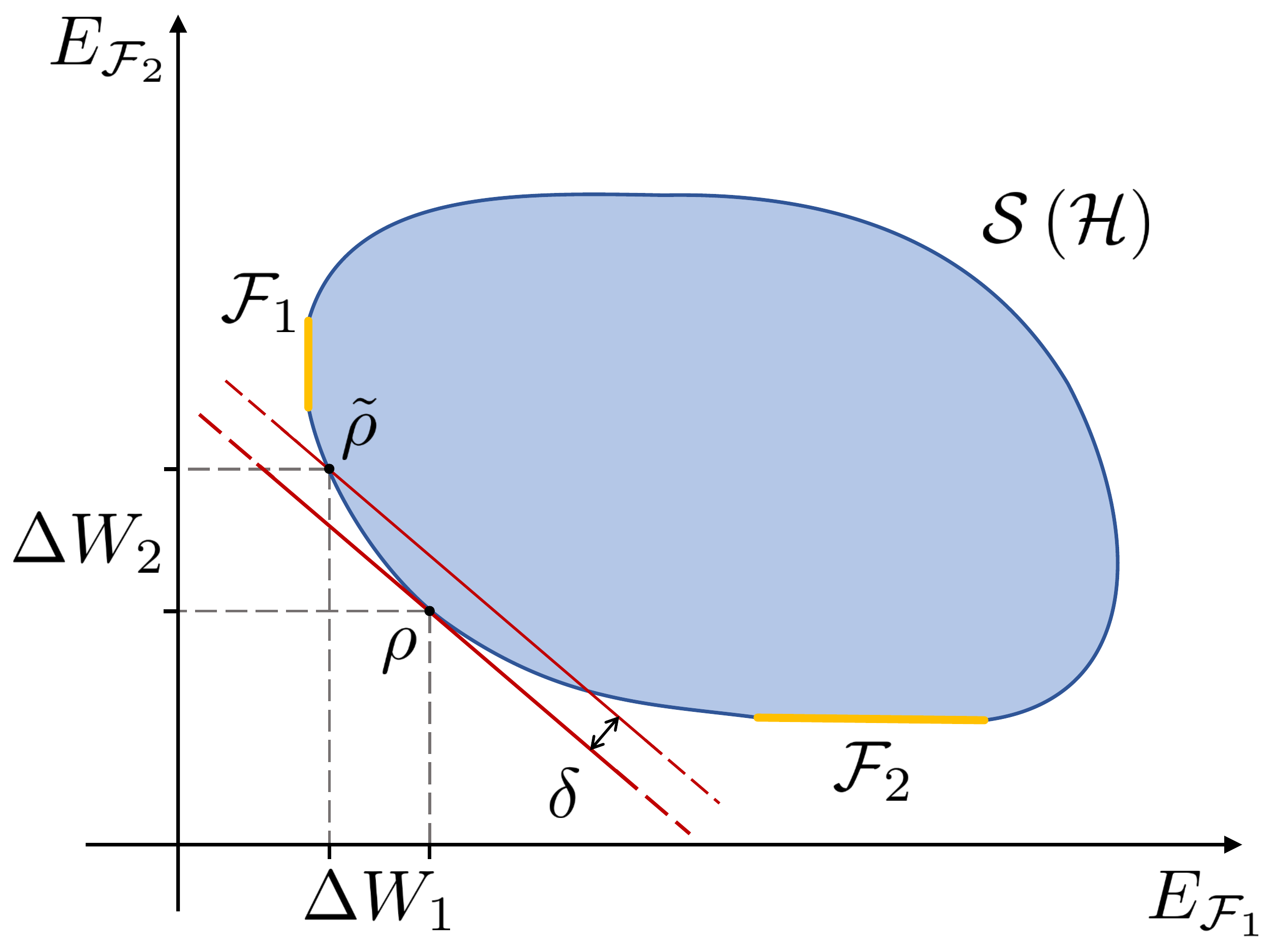}
\caption{The state-space of the theory $\rt_{\text{multi}}$ is represented in the $E_{\f_1}$--$E_{\f_2}$
diagram. The invariant sets of the theory, $\f_1$ and $\f_2$, are represented by the two yellow
segments. The set of bank states $\f_{\text{bank}}$ lies on the boundary of the state-space,
and is represented by the curve connecting the two invariant sets, see appendix~\ref{convex_bound}.
The subset of bank states $\f_{\text{bank}}\left(\bar{E}_{\f_1},\bar{E}_{\f_2}\right)$, where $\rho$ is contained,
is represented by a point in the diagram. The red line which is tangent to the state-space and passes by
the point associated to $\rho$ represents the set of states with value of the monotone $f_{\text{bank}}
^{\bar{E}_{\f_1},\bar{E}_{\f_2}}$ equal to $0$. The other line is given by all those states with a value
$\delta > 0$ of this monotone. We see that, by mapping $\rho$ into $\tilde{\rho}$, we can extract an
amount $\Delta W_1$ of the first resource, while paying an amount $\Delta W_2$ of the second resource.
Furthermore, one can show that when $\delta \rightarrow 0$, these two quantities tend to $0$ as
$\delta^{\frac{1}{2}}$, i.e., with a slower rate. It is then possible to keep the $\Delta W_i$'s finite if we
take $n \propto \delta^{-1}$ copies of the bank states, see the proof of Thm.~\ref{thm:interconvert_relation},
in appendix~\ref{main_results}. Thus, in the limit $n \rightarrow \infty$, the overall back-action on bank
states associated with the conversion of resources can be made arbitrarily small.}
\label{fig:tangent_monotone}
\end{figure}
\par
Finally, let us consider what would happen if we were to allow the states in $\f_1$ or $\f_2$ (or in their
intersection) to describe the bank. If the bank state were such that $E_{\f_1}(\rho) > 0$ and $E_{\f_2}(\rho) = 0$
(or vice versa), then we could only exchange in one direction, since we could gain the first resource while paying
the second resource (or vice versa). If the bank state did not contain any amount of resources, $E_{\f_1}(\rho) = 0$
and $E_{\f_2}(\rho) = 0$, then we could not perform interconversion at all, because we would have to reduce the
amount of one of them within the bank. However, this would not be possible since the amount of resource stored
in a (bank) state cannot be negative. As a result, the multi-resource theories in which an interesting
interconversion relation can be found are the ones in which the invariant sets of the theory do not intercept,
see the right panel of Fig.~\ref{fig:invariant_sets_structure}.
\subsection{Bank monotones and the relative entropy distance}
\label{bank_monotone}
We start this section with an example concerning different models to describe thermodynamics,
and the connection between these models. In the last part of Sec.~\ref{multi_resource}, we have
introduced a multi-resource theory whose resources are energy and entropy (or, information). For this theory,
the bank states are thermal states at a given temperature $T$. We can move from this description
of thermodynamics to a different one, based on a single-resource theory, by enlarging the class of
operations in such a way that the agent can freely add ancillary systems in a thermal state with
temperature $T$. This corresponds to the physical situation in which the system is put in contact
with an infinite thermal bath. The single-resource theory we obtain is analogous to the one of
Thermal Operations~\cite{brandao_resource_2013, horodecki_fundamental_2013}, and its
resource quantifier is unique. In fact, we can show that the bank monotone of the multi-resource
theory and the resource quantifier of the single resource theory both coincides (modulo a multiplicative
factor) with $F - F_{\beta}$, where $F$ is the Helmholtz free energy of the state whose resource
we are quantifying, and $F_{\beta}$ is the Helmholtz free energy of the thermal state with temperature
$T = \beta^{-1}$.
\par
In the following we study the connection between a general multi-resource theory and the single-resource
theory obtained by enlarging the allowed operations with the possibility of adding ancillary systems
described by bank states in $\f_{\text{bank}}\left(\bar{E}_{\f_1},\bar{E}_{\f_2}\right)$. We find that
the bank monotone of Eq.~\eqref{f3_monotone}, $f_{\text{bank}} ^{\bar{E}_{\f_1},\bar{E}_{\f_2}}$,
coincides with the unique measure of resource for the obtained single-resource theory. As a result,
we find that property~\ref{item:X1}, which regulates the exchange
of resources in the multi-resource theory, can be understood as the condition that the resource
characterising the bank does not increase during the transformation. Furthermore, we show that,
when the subset of bank states $\f_{\text{bank}}\left(\bar{E}_{\f_1},\bar{E}_{\f_2}\right)$ contains
a full-rank state, the monotone $f_{\text{bank}}^{\bar{E}_{\f_1}, \bar{E}_{\f_2}}$ is proportional to
the relative entropy distance from this subset. Let us now introduce the single-resource theory
which can be derived from $\rt_{\text{multi}}$ by allowing the possibility of adding ancillary systems
described by specific bank states.
\begin{definition}
\label{def:sing_res_constr}
Consider the two-resource theory $\rt_{\text{multi}}$ with allowed operations $\A_{\text{multi}}$
and invariant sets $\f_1$ and $\f_2$ which satisfy the properties~\ref{item:F1}, \ref{item:F2}, \ref{item:F3},
and \ref{item:F5b}. Consider the bank set $\f_{\text{bank}}\left(\bar{E}_{\f_1},\bar{E}_{\f_2}\right)$
introduced in Eq.~\eqref{subset_bank}. We define the single-resource theory $\rt_{\text{single}}$
as that theory whose class of allowed operations $\A_{\text{single}}$ is composed by the following three
fundamental operations,
\begin{enumerate}
\item Add an ancillary system described by $n \in \N$ copies of a bank state
$\rho_P \in \f_{\text{bank}}\left(\bar{E}_{\f_1},\bar{E}_{\f_2}\right)$.
\item Apply any operation $\chn \in \A_{\text{multi}}$ to system and ancilla.
\item Trace out the ancillary systems.
\end{enumerate}
The most general operation in $\A_{\text{single}}$ which does not change
the number of systems between its input and output is
\begin{equation}
\label{sin_res_map}
\chn^{\text{(s)}}(\rho) = \Tr{P^{(n)}}{\chn \left( \rho \otimes \rho_P^{\otimes n} \right)},
\end{equation}
where we are partial tracing over the degrees of freedom $P^{(n)}$, that is, over the ancillary
system initially in $\rho_P^{\otimes n}$.
\end{definition}
The bank monotone associated with the bank set $\f_{\mathrm{bank}}\left(
\bar{E}_{\f_1}, \bar{E}_{\f_2}\right)$, see Eq.~\eqref{f3_monotone}, is the unique quantifier
for the single-resource theory $\rt_{\mathrm{single}}$. In order to show the uniqueness of this
monotone, we first have to show that the single-resource theory satisfies asymptotic equivalence.
\begin{restatable}{thm}{singres}
\label{unique_f3}
Consider the two-resource theory $\rt_{\text{multi}}$ with allowed operations $\A_{\text{multi}}$, and invariant
sets $\f_1$ and $\f_2$ which satisfy the properties~\ref{item:F1}, \ref{item:F2}, \ref{item:F3}, and \ref{item:F5b}.
Suppose the theory satisfies the asymptotic equivalence property with respect to the monotones $E_{\f_1}$ and
$E_{\f_2}$. Then, given the subset of bank states $\f_{\text{bank}}\left(\bar{E}_{\f_1}, \bar{E}_{\f_2}\right)$, the
single-resource theory $\rt_{\text{single}}$ with allowed operations $\A_{\text{single}}$ satisfies the
asymptotic equivalence property with respect to $f_{\text{bank}}^{\bar{E}_{\f_1},\bar{E}_{\f_2}}$.
\end{restatable}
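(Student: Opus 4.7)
The plan is to verify both directions of Def.~\ref{def:asympt_equivalence_multi} for the single-resource theory $\rt_{\text{single}}$ with the monotone $f_{\text{bank}}^{\bar{E}_{\f_1},\bar{E}_{\f_2}}$. Since this monotone is additive by property~\ref{item:B3}, its regularisation coincides with itself, so the task reduces to showing the pointwise equivalence of the two conditions in Def.~\ref{def:asympt_equivalence_multi} for all $\rho, \sigma \in \SH$.

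For the ``only if'' direction, I would first argue that $f_{\text{bank}}^{\bar{E}_{\f_1},\bar{E}_{\f_2}}$ is a monotone under every building block of $\A_{\text{single}}$ in Def.~\ref{def:sing_res_constr}. Adding a bank ancilla preserves its value by property~\ref{item:B3} (additivity) together with the fact that $f_{\text{bank}}^{\bar{E}_{\f_1},\bar{E}_{\f_2}}$ vanishes on $\f_{\text{bank}}(\bar{E}_{\f_1},\bar{E}_{\f_2})^{\otimes K}$ (property~\ref{item:B1} extended via Eq.~\eqref{eq:add_bank}); applying any map in $\A_{\text{multi}}$ cannot increase it by~\ref{item:B7}; tracing out cannot increase it by~\ref{item:B4}. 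Combining monotonicity with asymptotic continuity (\ref{item:B6}) and the existence of a reverse conversion sequence, the same bookkeeping as in Thm.~\ref{thm:reversible_multi} forces $f_{\text{bank}}^{\bar{E}_{\f_1},\bar{E}_{\f_2}}(\rho) = f_{\text{bank}}^{\bar{E}_{\f_1},\bar{E}_{\f_2}}(\sigma)$.

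For the harder ``if'' direction, suppose $\alpha \Delta_1 + \beta \Delta_2 = 0$ with $\Delta_i = E_{\f_i}(\rho) - E_{\f_i}(\sigma)$. I would view the composite state $\tau_k = \rho \otimes \rho_P^{\otimes k}$ as a super-state on $1+k$ elementary systems and aim to asymptotically convert $\tau_k^{\otimes n}$ into $\mu_k^{\otimes n}$ via an $\A_{\text{multi}}$ operation, where $\mu_k = \sigma \otimes (\rho'_P)^{\otimes k}$ and $\rho'_P \in \f_{\text{bank}}$ is the bank state obtained by shifting $\rho_P$ along the bank curve so that $E_{\f_1}(\rho'_P) = \bar{E}_{\f_1} + \Delta_1/k$. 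Tracing out the $k$ residue systems of each super-copy is a contraction in trace norm and a legitimate step of $\A_{\text{single}}$ per Def.~\ref{def:sing_res_constr}, yielding approximately $\sigma^{\otimes n}$.

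The main obstacle is that strict convexity of the bank curve (appendix~\ref{convex_bound}), together with the tangency property~\ref{item:B2}, forces a per-super-copy monotone mismatch of order $1/k$ in $E_{\f_2}$ between $\tau_k$ and $\mu_k$: the target per-copy point $(\bar{E}_{\f_1}+\Delta_1/k,\bar{E}_{\f_2}+\Delta_2/k)$ lies strictly below the convex bank boundary by a curvature term of order $1/k^2$, so exact matching of both $E_{\f_1}$ and $E_{\f_2}$ is impossible using bank ancillae alone. The resolution is to let $k = k_n \to \infty$ slowly with $n$, so that the total $E_{\f_2}$ residual $O(n/k_n)$ is sub-linear in $n$; this residue can then be compensated by the sub-linear ancilla (with monotone values $o(n)$) already allowed in Def.~\ref{def:asympt_equivalence_multi} when invoking the multi-resource asymptotic equivalence, giving trace-norm error that vanishes as $n \to \infty$. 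The reverse conversion follows by the symmetric construction with $\rho$ and $\sigma$ swapped and the perturbation $\theta$ taken along the opposite direction of the bank curve, completing the asymptotic equivalence of $\rt_{\text{single}}$ with respect to $f_{\text{bank}}^{\bar{E}_{\f_1},\bar{E}_{\f_2}}$.
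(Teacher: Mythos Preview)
Your ``only if'' direction matches the paper's part~(a) exactly: monotonicity of $f_{\text{bank}}^{\bar{E}_{\f_1},\bar{E}_{\f_2}}$ under each ingredient of $\A_{\text{single}}$ (properties~\ref{item:B1}, \ref{item:B3}, \ref{item:B4}, \ref{item:B7}) together with asymptotic continuity~\ref{item:B6} and the existence of a reverse sequence forces equality.

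For the ``if'' direction your overall architecture is also the paper's (attach bank copies, invoke the asymptotic equivalence of $\rt_{\text{multi}}$, trace out), but there is a genuine gap in the way you dispose of the curvature mismatch. Def.~\ref{def:asympt_equivalence_multi} for $\rt_{\text{multi}}$ only guarantees a conversion sequence between states whose regularised $E_{\f_i}$ values are \emph{exactly} equal; the sub-linear ancilla $\eta_N^{(A)}$ appearing there is part of the protocol that implements such a conversion, not a device that lets you bridge a nonzero monotone gap. Your super-states $\tau_{k_n}$ and $\mu_{k_n}$ have $E_{\f_2}(\tau_{k_n}) - E_{\f_2}(\mu_{k_n}) = O(1/k_n) \neq 0$ for every finite $k_n$, so Def.~\ref{def:asympt_equivalence_multi} simply does not apply to the pair $(\tau_{k_n},\mu_{k_n})$, and letting $k_n$ grow with $n$ does not help because then you are no longer taking many copies of a \emph{fixed} state.

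The paper resolves this by absorbing the $O(1/n)$ defect into the \emph{system} rather than the ancilla: it introduces an intermediate target $\sigma_n \in \SH$ with $E_{\f_1}(\sigma_n)=E_{\f_1}(\sigma)$ and $E_{\f_2}(\sigma_n)=E_{\f_2}(\sigma)+O(n^{-1})$, chosen so that $\rho\otimes\rho_P^{\otimes n}$ and $\sigma_n\otimes\tilde{\rho}_{P,n}^{\otimes n}$ have \emph{both} monotones exactly equal. This makes the $\rt_{\text{multi}}$ asymptotic equivalence legitimately applicable for each fixed $n$, giving maps $\tilde{\chn}_{N,n}$; one then diagonalises by setting $n=g(N)$ with $g(N)\to\infty$ slowly enough that the $\rt_{\text{multi}}$-ancilla (of size $o(N)$ super-systems, i.e.\ $o(N)(1+g(N))$ elementary copies) still fits in the $o(N)$-elementary-copy budget of Def.~\ref{def:asympt_equivalence_multi} for $\rt_{\text{single}}$. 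Tracing out the bank yields $\sigma_{g(N)}^{\otimes N}$, and since $E_{\f_i}(\sigma_{g(N)})\to E_{\f_i}(\sigma)$ a second invocation of the multi-resource asymptotic equivalence completes the transformation to $\sigma^{\otimes N}$. Replacing your $\mu_k$ by $\sigma_k\otimes(\rho'_P)^{\otimes k}$ with this intermediate $\sigma_k$ would repair your argument along the same lines.
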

The proof of this theorem can be found in appendix~\ref{main_results}, and we provide a
geometric sketch of it in Fig.~\ref{fig:geometric_proof}. As a side remark, notice that the functions
$E_{\f_1}$ and $E_{\f_2}$ are not monotonic under the set of allowed operations $\A_{\text{single}}$.
This follows from the fact that we can now replace any state of the system with a state in
$\f_{\text{bank}}\left(\bar{E}_{\f_1}, \bar{E}_{\f_2}\right)$, since we are free to add an ancillary
system in such state, and to perform a swap between main system and ancilla (since this operation
belongs to $\A_{\text{multi}}$). Then, if the bank state contains a non-zero amount of resources,
meaning that $\bar{E}_{\f_i} > 0$ for $i=1,2$, we can always find a state in $\SH$ with lower
value of either $E_{\f_1}$ or $E_{\f_2}$ (but not both at the same time), and therefore the above
transformation would increase the value of this monotone.
\begin{figure}[t!]
\center
\includegraphics[width=0.45\textwidth]{./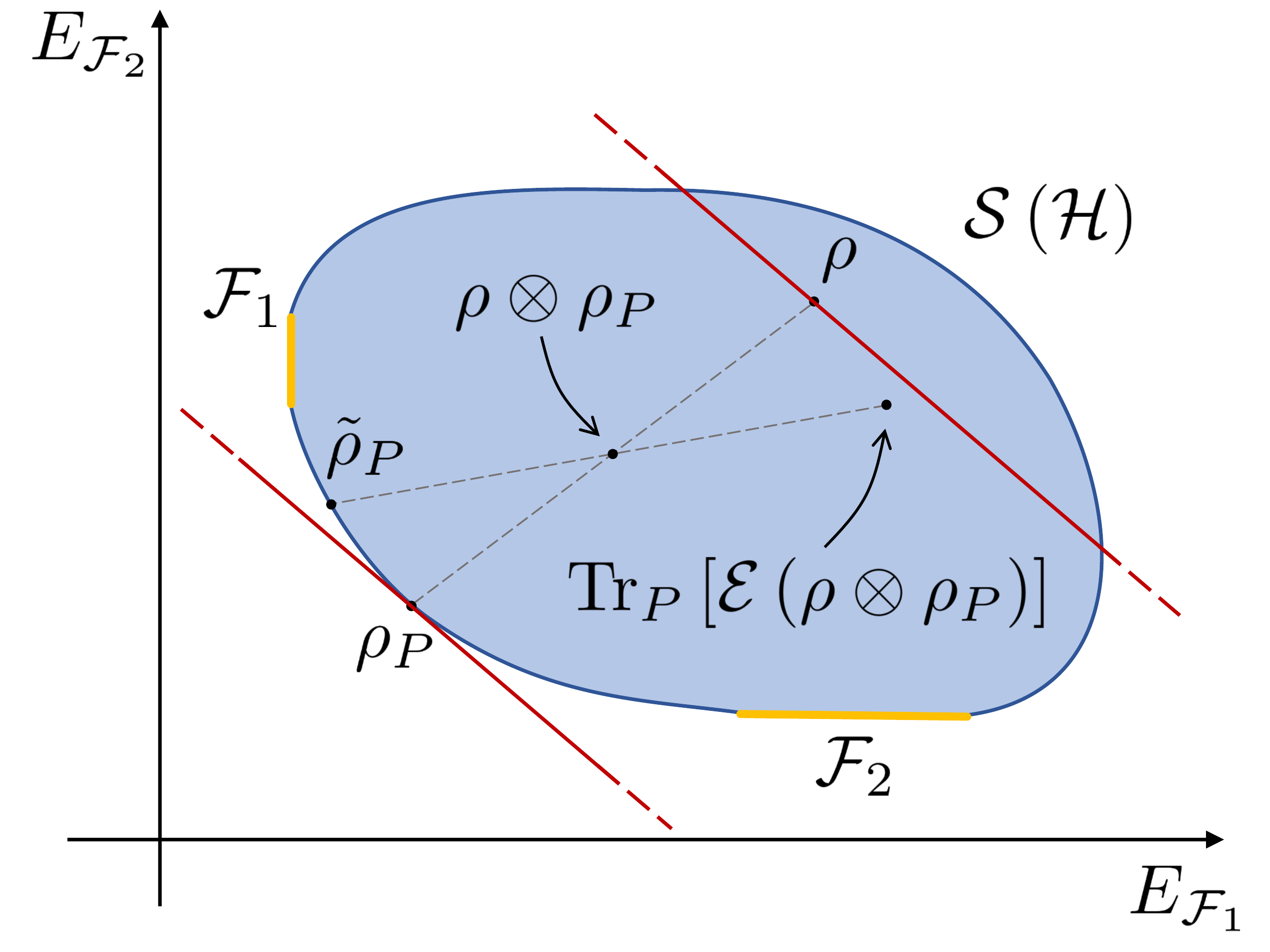}
\includegraphics[width=0.45\textwidth]{./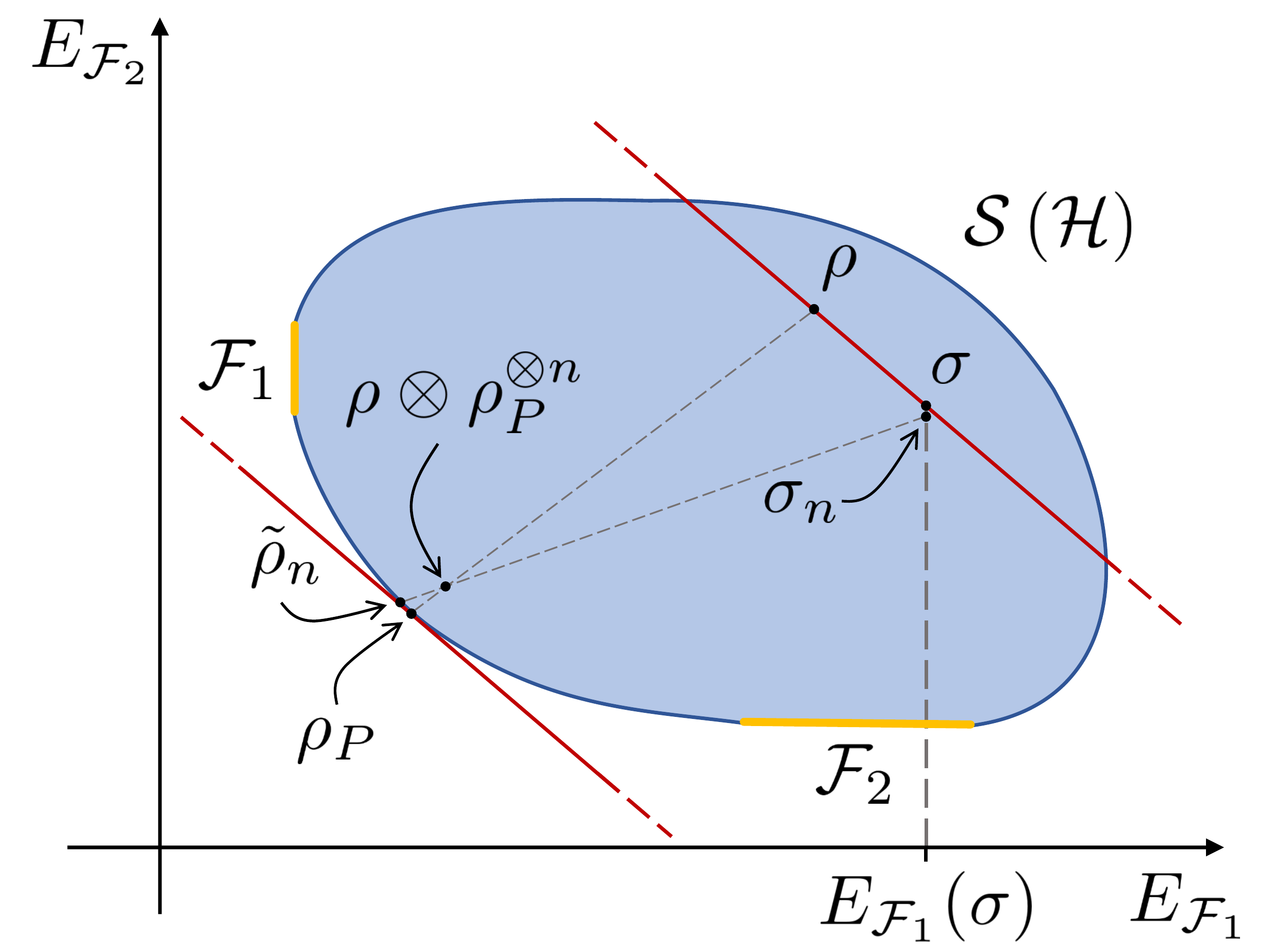}
\caption{We sketch a geometric proof of Thm.~\ref{unique_f3} using the $E_{\f_1}$--$E_{\f_2}$ diagram.
The blue region is the state-space $\SH$, the yellow segments are the invariant sets $\f_1$
and $\f_2$, and the red lines highlight the states with same value of monotone
$f_{\text{bank}}^{\bar{E}_{\f_1}, \bar{E}_{\f_2}}$. Notice that in this figure we are using the fact
that, when $\f_i$ satisfies property~\ref{item:F5b}, the monotone $E_{\f_i}$ is such that $E_{\f_i}(\rho \otimes
\sigma) = E_{\f_i}(\rho) + E_{\f_i}(\sigma)$ for any two states $\rho$ and $\sigma$ in $\SH$, see
Lem.~\ref{f_i_inequality}. To represent the state $\rho \otimes \sigma$ in the diagram, we renormalise its
values of the $E_{\f_i}$'s by dividing them by the number of copies considered, in this case by two.
{\bf Left.} We first sketch why the function $f_{\text{bank}}^{\bar{E}_{\f_1}, \bar{E}_{\f_2}}$ is
monotonic under the set of allowed operations $\A_{\text{single}}$. Consider a system initially described
by the state $\rho$, and add to it an ancillary system described by the bank state $\rho_P \in
\f_{\text{bank}}\left(\bar{E}_{\f_1},\bar{E}_{\f_2}\right)$. The global system is then represented by
a point in the middle of the segment connecting $\rho$ and $\rho_P$. We can transform the global
state with the help of a sub-linear ancilla and of the operation $\chn \in \A_{\text{multi}}$,
mapping it into the state $\sigma \otimes \tilde{\rho}_P$ with same value of $E_{\f_1}$ and $E_{\f_2}$.
If we take $\tilde{\rho}_P$ to be on the boundary of the state-space, we can easily see that $\sigma
\equiv \Tr{P}{\chn (\rho \otimes \rho_P)}$ always lies below the red line passing through $\rho$,
i.e., its value of $f_{\text{bank}}^{\bar{E}_{\f_1}, \bar{E}_{\f_2}}$ is smaller than the one for $\rho$.
{\bf Right.} We now sketch how to map between states with the same value of the monotone
$f_{\text{bank}}^{\bar{E}_{\f_1}, \bar{E}_{\f_2}}$, using the set of operations $\A_{\text{single}}$.
In this case, we compose the main system, initially described by $\rho$, with an additional one
described by $n$ copies of $\rho_P$. We then use an operation $\chn \in \A_{\text{multi}}$, together
with a sub-linear ancilla, and we ask the final state of the system, $\sigma_n = \Tr{P^{(n)}}{\chn(\rho
\otimes \rho_P^{\otimes n})}$ to have the same value of $E_{\f_1}$ of the target state $\sigma$. It is
then easy to show that, as $n \rightarrow \infty$, the state $\sigma_n$ tends to $\sigma$, while the
$n$ copies of the final state of the ancilla, $\tilde{\rho}_n$, tends to the bank state $\rho_P$.}
\label{fig:geometric_proof}
\end{figure}
\par
From the above theorem it follows an interesting link between the bank monotone $f_{\text{bank}}
^{\bar{E}_{\f_1}, \bar{E}_{\f_2}}$, defined in Eq.~\eqref{f3_monotone}, and the relative entropy distance
from the set of bank states $\f_{\text{bank}}\left(\bar{E}_{\f_1},\bar{E}_{\f_2}\right)$. Indeed, when this
set of states contains at least one full-rank state, we can prove that these two functions have to coincide,
modulo a multiplicative factor. This is a consequence of the fact that $\rt_{\text{single}}$ satisfies
asymptotic equivalence, which implies the uniqueness of the resource measure, and of the fact that
both the bank monotone and the relative entropy distance from the bank set satisfy the same properties,
in particular monotonicity under the operations in $\A_{\text{single}}$ and asymptotic continuity.
We can express this fact in the following corollary, whose proof can be found in appendix~\ref{main_results}.
\begin{restatable}{coro}{bankmonrelent}
\label{bank_equal_rel_ent}
Consider the two-resource theory $\rt_{\text{multi}}$ with allowed operations $\A_{\text{multi}}$, and invariant
sets $\f_1$ and $\f_2$ which satisfy the properties~\ref{item:F1}, \ref{item:F2}, \ref{item:F3}, and \ref{item:F5b}.
Suppose the theory satisfies the asymptotic equivalence property with respect to the monotones $E_{\f_1}$ and
$E_{\f_2}$. If the subset of bank states $\f_{\text{bank}}\left(\bar{E}_{\f_1}, \bar{E}_{\f_2}\right)$ contains a
full-rank state, then the bank monotone $f_{\text{bank}}^{\bar{E}_{\f_1},\bar{E}_{\f_2}}$ coincides with the
relative entropy distance from this subset of states, modulo a multiplicative constant.
\end{restatable}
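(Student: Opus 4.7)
My plan is to combine Theorem \ref{unique_f3} with the uniqueness statement of Theorem \ref{thm:reversible_multi} (applied in the single-resource setting $m=1$), which together say that any two asymptotically continuous, sub-extensive, sub-additive, regularisable monotones for $\rt_{\text{single}}$ that vanish on the invariant set must coincide modulo a positive multiplicative constant. The strategy is therefore to exhibit the relative entropy distance $E_{\f_{\text{bank}}(\bar{E}_{\f_1},\bar{E}_{\f_2})}$ as such a monotone and then invoke uniqueness against the bank monotone $f_{\text{bank}}^{\bar{E}_{\f_1},\bar{E}_{\f_2}}$, which Thm.~\ref{unique_f3} already promotes to a unique resource quantifier for $\rt_{\text{single}}$.

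The first step is to check that the bank subset $\f_{\text{bank}}(\bar{E}_{\f_1},\bar{E}_{\f_2})$ behaves as an admissible invariant set for $\rt_{\text{single}}$. The invariance under $\A_{\text{single}}$ follows by chaining three facts: invariance of $\f_{\text{bank}}(\bar{E}_{\f_1},\bar{E}_{\f_2})$ under $\A_{\text{multi}}$ (Lem.~\ref{lem:inv_f3}), closure under tensor product (Eq.~\eqref{eq:add_bank}, i.e.~property~\ref{item:F5b}), and closure under partial tracing — which, by the additivity of $E_{\f_1}$ and $E_{\f_2}$ in this setting (Lem.~\ref{f_i_inequality}) together with property~\ref{item:M4}, holds because tracing out bank-state factors from a tensor-power bank state returns a bank state with the same per-copy monotone values. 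Convexity (\ref{item:F2}) is Prop.~\ref{convex_f3}, closedness (\ref{item:F1}) follows from continuity of $E_{\f_1}$ and $E_{\f_2}$, and the full-rank condition (\ref{item:F3}) is the hypothesis of the corollary.

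Given these, I would apply Prop.~\ref{thm:properties_rel_ent} — or, strictly speaking, its content verbatim, since its hypotheses have just been verified with $\f_{\text{bank}}(\bar{E}_{\f_1},\bar{E}_{\f_2})$ playing the role of the invariant set — to conclude that $E_{\f_{\text{bank}}(\bar{E}_{\f_1},\bar{E}_{\f_2})}$ is regularisable and satisfies properties~\ref{item:M1}--\ref{item:M7} with respect to $\A_{\text{single}}$. In parallel, Prop.~\ref{prop:properties_bank_mon} already shows that $f_{\text{bank}}^{\bar{E}_{\f_1},\bar{E}_{\f_2}}$ satisfies these properties, and by construction it vanishes on $\f_{\text{bank}}(\bar{E}_{\f_1},\bar{E}_{\f_2})$ (property~\ref{item:B1}) and is non-negative on $\SH$ (Eq.~\eqref{eq:tangent_f3}).

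By Thm.~\ref{unique_f3}, $\rt_{\text{single}}$ satisfies asymptotic equivalence with respect to $f_{\text{bank}}^{\bar{E}_{\f_1},\bar{E}_{\f_2}}$. The single-resource instance of Thm.~\ref{thm:reversible_multi} then forces any other monotone satisfying \ref{item:M1}--\ref{item:M7} and vanishing on the invariant set to agree with $f_{\text{bank}}^{\bar{E}_{\f_1},\bar{E}_{\f_2}}$ up to a positive multiplicative constant; applying this to $E_{\f_{\text{bank}}(\bar{E}_{\f_1},\bar{E}_{\f_2})}$ yields the claim. The main obstacle I anticipate is the verification that $\f_{\text{bank}}(\bar{E}_{\f_1},\bar{E}_{\f_2})$ is genuinely invariant under the extended class $\A_{\text{single}}$ — in particular, that appending and then tracing out copies of a bank state, combined with a map in $\A_{\text{multi}}$, cannot leave a state that has lower values of both $E_{\f_1}$ and $E_{\f_2}$; this is essentially the geometric content of Fig.~\ref{fig:geometric_proof}, and should reduce to the additivity identity of Lem.~\ref{f_i_inequality} together with monotonicity of $E_{\f_i}$ under $\A_{\text{multi}}$.
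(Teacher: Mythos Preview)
Your proposal is correct and follows essentially the same route as the paper's proof: invoke Thm.~\ref{unique_f3} to get asymptotic equivalence of $\rt_{\text{single}}$ with respect to the bank monotone, verify that the bank subset $\f_{\text{bank}}(\bar{E}_{\f_1},\bar{E}_{\f_2})$ satisfies \ref{item:F1}, \ref{item:F2}, \ref{item:F3}, \ref{item:F5b} (via Props.~\ref{convex_f3}, \ref{additive_f3} and the full-rank hypothesis), check that both candidate functions are monotones under $\A_{\text{single}}$ with the required regularity properties, and conclude by uniqueness. The only cosmetic differences are that the paper cites the single-resource uniqueness result of Ref.~\cite{horodecki_quantumness_2012} (via Thm.~\ref{thm:reversible_asympt_equiv}) rather than specialising Thm.~\ref{thm:reversible_multi} to $m=1$, and it packages the monotonicity of $E_{\f_{\text{bank}}(\bar{E}_{\f_1},\bar{E}_{\f_2})}$ under $\A_{\text{single}}$ directly as Prop.~\ref{monotonicity_passive} rather than first arguing invariance of the set --- your anticipated obstacle is exactly what that proposition handles.
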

\par
We close the section with the remark that, in the currently known scenarios, the bank subsets
always contain at least a full-rank state, and in fact we find that, for these theories, the above
correspondence between the bank monotone of Eq.~\eqref{f3_monotone} and the relative entropy
distance is satisfied. An example is the multi-resource theory of thermodynamics, in which the
relative entropy distance from a thermal state at a given temperature is indeed equal to the linear
combination of the average energy and the entropy of a system. Other examples can be found in
Sec.~\ref{examples}.
\subsection{First law for multi-resource theories}
\label{sec:first_law}
We can now introduce a general first law for multi-resource theories with disjoint invariant sets,
see the right panel of Fig.~\ref{fig:invariant_sets_structure}.
In order for this law to be valid, we need access to the batteries, the bank, and the main system.
Within this setting, the first law consists of a single relation which links the different amount of
resources exchanged with the batteries, the $\Delta W_i$'s, with the change in bank monotone
over the state of the main system. The idea is that, contrary to what seen in Sec.~\ref{quant_res},
a state transformation over the main system is possible, when a bank is present, if this single relation
is satisfied. Indeed, we do not need to use a fixed amount of each resource, since they are
inter-convertible using the bank system.
\par
In more detail, we consider a theory $\rt_{\text{multi}}$ that, for simplicity, has just two resources. The invariant
sets are $\f_1$ and $\f_2$, they satisfy the properties~\ref{item:F1}, \ref{item:F2}, \ref{item:F3} and~\ref{item:F5b},
and the theory satisfies the asymptotic equivalence property with respect to the monotones $E_{\f_1}$
and $E_{\f_2}$. The global system is divided into four partitions, the main system $S$, the bank
$P$, and the batteries $B_1$ and $B_2$. We assume the bank to be initially described by a state
$\rho_P \in \f_{\text{bank}}\left(\bar{E}_{\f_1}, \bar{E}_{\f_2}\right)$, where this subset contains at
least one full-rank state. The relevant monotone for the interconversion of resources is then the
relative entropy distance from the subset $\f_{\text{bank}}\left(\bar{E}_{\f_1}, \bar{E}_{\f_2}\right)$,
as shown in Cor.~\ref{bank_equal_rel_ent}.
\par
Suppose that the main system is initially described by the state $\rho \in \SHs$, and we want to
map it into the state $\sigma \in \SHs$, with possibly a different value of $E_{\f_1}$ and  $E_{\f_2}$.
If we do not have access to the bank, then the amount of resources we need to exchange is given
by the difference of the monotones $E_{\f_i}$'s between the initial and final state of the main system,
see Eq.~\eqref{resource_work_ent} in Sec.~\ref{multi_rev_unique}. But since we have
access to the battery, we can exchange between the resources, and we are not obliged any more
to provide a fixed amount for each resource. In order to show this, consider the global initial state
$\rho \otimes \rho_P \otimes \omega_1 \otimes \omega_2$, describing the main system, the bank,
and the two batteries $B_1$ and $B_2$. Then, we (asymptotically) map this global state, using the
allowed operations $\A_{\text{multi}}$, into the final state $\sigma \otimes \tilde{\rho}_P \otimes
\omega'_1 \otimes \omega'_2$, where the final state of the bank is $\tilde{\rho}_P$, and the
batteries $B_1$ and $B_2$ have final state $\omega'_1$ and $\omega'_2$, respectively. Due to
asymptotic equivalence, this state transformation is possible only if the monotones $E_{\f_i}$'s
are preserved. However, the final state of the bank only has to satisfy property~\ref{item:X1}, and we
have shown in Sec.~\ref{bank_interconvert} that such constraint still allows us to exchange an
arbitrary amount of resources, see Thm.~\ref{thm:interconvert_relation}. As a result, there is
a single relation that regulates the state transformation over the main system,
\begin{coro}
\label{coro:first_law}
Consider the two-resource theory $\rt_{\text{multi}}$ with allowed operations $\A_{\text{multi}}$,
and invariant sets $\f_1$ and $\f_2$ which satisfy the properties~\ref{item:F1}, \ref{item:F2}, \ref{item:F3},
and \ref{item:F5b}. Suppose the theory satisfies the asymptotic equivalence property with
respect to the monotones $E_{\f_1}$ and $E_{\f_2}$, and that the global system is divided
into a main system $S$, a bank described by the set of states $\f_{\text{bank}}\left(\bar{E}_{\f_1},
\bar{E}_{\f_2}\right)$ (which contains at least one full-rank state), and two batteries $B_1$ and $B_2$.
Then, a transformation which maps the state of the main system from $\rho$ into $\sigma$, where
these states are completely general, only has to satisfy the following relation
\begin{equation}
\label{eq:first_law}
\alpha \, \Delta W_1 + \beta \, \Delta W_2
=
E_{\f_{\text{bank}}\left(\bar{E}_{\f_1}, \bar{E}_{\f_2}\right)}(\rho)
-
E_{\f_{\text{bank}}\left(\bar{E}_{\f_1}, \bar{E}_{\f_2}\right)}(\sigma),
\end{equation}
where each $\Delta W_i$ is defined as the difference in the monotone $E_{\f_i}$ over the final and
initial state of the battery $B_i$, see Eq.~\eqref{work_Ri}, and $E_{\f_{\text{bank}}\left(\bar{E}_{\f_1},
\bar{E}_{\f_2}\right)}$ is the relative entropy distance from the set of states describing the bank.
\end{coro}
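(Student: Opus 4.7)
The plan is to derive the first law by tracking two ``conservation'' equations coming from asymptotic equivalence and combining them linearly into a single one for the bank monotone. I would set up the global asymptotic transformation as in Eq.~\eqref{interconv_trasf} but with nontrivial main system:
\begin{equation*}
\rho \otimes \rho_P^{\otimes n} \otimes \omega_1 \otimes \omega_2 \xleftrightarrow{\text{asympt}} \sigma \otimes \tilde{\rho}_P^{\otimes n} \otimes \omega'_1 \otimes \omega'_2.
\end{equation*}
By Def.~\ref{def:asympt_equivalence_multi}, a reversible asymptotic protocol in $\A_{\text{multi}}$ realising this exists iff $E_{\f_i}^{\infty}$ takes the same value on both sides for $i=1,2$. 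Using additivity of $E_{\f_i}$ across system and batteries (property~\ref{item:M2}) together with property~\ref{item:M1} (battery $B_j$ stores only its own resource), these two conservation laws collapse to
\begin{equation*}
E_{\f_i}(\rho) - E_{\f_i}(\sigma) + E_{\f_i}(\rho_P^{\otimes n}) - E_{\f_i}(\tilde{\rho}_P^{\otimes n}) = \Delta W_i, \qquad i=1,2.
\end{equation*}

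Next I would take the linear combination $\alpha \cdot (i{=}1) + \beta \cdot (i{=}2)$. Inspecting Eq.~\eqref{f3_monotone}, the constant $\gamma$ cancels between the initial and final state of each subsystem, so the right-hand side collapses to differences of $f_{\text{bank}}^{\bar{E}_{\f_1},\bar{E}_{\f_2}}$:
\begin{equation*}
\alpha\,\Delta W_1 + \beta\,\Delta W_2 = f_{\text{bank}}^{\bar{E}_{\f_1},\bar{E}_{\f_2}}(\rho) - f_{\text{bank}}^{\bar{E}_{\f_1},\bar{E}_{\f_2}}(\sigma) + f_{\text{bank}}^{\bar{E}_{\f_1},\bar{E}_{\f_2}}(\rho_P^{\otimes n}) - f_{\text{bank}}^{\bar{E}_{\f_1},\bar{E}_{\f_2}}(\tilde{\rho}_P^{\otimes n}).
\end{equation*}
Property~\ref{item:B1} together with additivity (property~\ref{item:B3}) forces $f_{\text{bank}}^{\bar{E}_{\f_1},\bar{E}_{\f_2}}(\rho_P^{\otimes n}) = 0$, while condition~\ref{item:X1} yields $f_{\text{bank}}^{\bar{E}_{\f_1},\bar{E}_{\f_2}}(\tilde{\rho}_P^{\otimes n}) = \delta_n \to 0$ as $n\to\infty$. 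Sending $n\to\infty$ therefore produces
\begin{equation*}
\alpha\,\Delta W_1 + \beta\,\Delta W_2 = f_{\text{bank}}^{\bar{E}_{\f_1},\bar{E}_{\f_2}}(\rho) - f_{\text{bank}}^{\bar{E}_{\f_1},\bar{E}_{\f_2}}(\sigma).
\end{equation*}

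To close the argument I would apply Cor.~\ref{bank_equal_rel_ent}: since $\f_{\text{bank}}(\bar{E}_{\f_1},\bar{E}_{\f_2})$ contains a full-rank state, the bank monotone and $E_{\f_{\text{bank}}(\bar{E}_{\f_1},\bar{E}_{\f_2})}$ agree up to a positive multiplicative constant, which can be absorbed into the normalisation of $\alpha$ and $\beta$ (recall these are fixed only up to a common scale by properties~\ref{item:B1} and~\ref{item:B2}). This yields exactly Eq.~\eqref{eq:first_law}. For the converse direction — that any $\rho \to \sigma$ compatible with Eq.~\eqref{eq:first_law} is asymptotically realisable — I would invoke Thm.~\ref{unique_f3}: with bank ancillas added to the allowed operations, the resulting single-resource theory $\rt_{\text{single}}$ satisfies asymptotic equivalence with respect to $f_{\text{bank}}^{\bar{E}_{\f_1},\bar{E}_{\f_2}}$, so any two states with matching values can be interconverted; the interconversion relation of Thm.~\ref{thm:interconvert_relation} then supplies the freedom to split the change of $f_{\text{bank}}^{\bar{E}_{\f_1},\bar{E}_{\f_2}}$ between the two batteries according to the ratio $\alpha/\beta$.

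The main obstacle is bookkeeping the $o(n)$ slack: Def.~\ref{def:asympt_equivalence_multi} permits a sub-linear ancilla $\eta^{(A)}_n$ with $f_i(\eta^{(A)}_n)=o(n)$, and the bank back-action contributes $\delta_n = o(1)$ after normalisation by $n$. I must verify that when these error terms are combined into $\alpha\,\Delta W_1 + \beta\,\Delta W_2$, each is genuinely $o(1)$ in the per-copy accounting, so that the equality survives the $n\to\infty$ limit without spurious contributions. The linearity of $f_{\text{bank}}^{\bar{E}_{\f_1},\bar{E}_{\f_2}}$ in $E_{\f_1}$ and $E_{\f_2}$ and the sub-extensivity properties~\ref{item:M6} and~\ref{item:B5} should be sufficient to control these cross-terms.
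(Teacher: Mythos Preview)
Your proposal is correct and follows essentially the same approach as the paper: set up the global transformation on system, bank, and batteries, impose conservation of $E_{\f_1}$ and $E_{\f_2}$ via asymptotic equivalence, take the linear combination $\alpha(\cdot)+\beta(\cdot)$ to obtain the bank monotone, kill the bank contribution with properties~\ref{item:B1} and~\ref{item:X1}, and then invoke Cor.~\ref{bank_equal_rel_ent} to identify $f_{\text{bank}}^{\bar{E}_{\f_1},\bar{E}_{\f_2}}$ with the relative entropy distance from $\f_{\text{bank}}(\bar{E}_{\f_1},\bar{E}_{\f_2})$. The paper only sketches this argument in the paragraph preceding the corollary (together with the proof of Thm.~\ref{thm:interconvert_relation}), so your version is in fact more explicit about the bookkeeping, but the logic is identical.
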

\par
We refer to Eq.~\eqref{eq:first_law} as the first law of multi-resource theories. Indeed, for the resource
theory of thermodynamics, where energy and entropy are the two resources, and the bank is given by an
infinite thermal reservoir with a given temperature $T$, this equation corresponds to the First Law of
Thermodynamics. In fact, in the thermodynamic scenario we have that $\Delta W_1 = - \Delta U$, where
$U$ is the internal energy of the system, while $\Delta W_2 = \Delta S$ is the change in entropy in the
system. The change in relative entropy distance on the main system is proportional to the change in
Helmholtz free-energy, which in turn is equal to the work extracted from the system, $W$. The linear
coefficients in the equation can be computed from Eq.~\eqref{f3_monotone}, knowing that the bank
monotone is equal to the relative entropy distance from the thermal state with temperature $T$. It is
easy to show that $\alpha = T^{-1}$ and $\beta =1$ . If we re-arrange the equation, and we define
$Q = T \, \Delta S$ as the amount of heat absorbed by the system, we obtain $\Delta U = Q - W$, that
is, the First Law of Thermodynamics.
\section{Examples}
\label{examples}
In this section we present two examples of multi-resource theories where an interconversion
relation can be derived. The first one is thermodynamics for multiple conserved quantities (even
non-commuting ones), while the second one concerns local control under energetic restrictions.
In both examples we describe the state-space (and we represent it with a resource diagram), we
find the bank states of the theory, and we derive an interconversion relation for the different resources.
Furthermore, in both cases we find that the bank monotone is proportional to the relative entropy
distance from the given set of bank states, as expected from Cor.~\ref{bank_equal_rel_ent}.
\begin{figure}[t!]
    \centering

\begin{tikzpicture}
	\node[box] (initial) {\small {\bf I}. Does the theory satisfy the asymptotic equivalence property of Def.~\ref{def:asympt_equivalence_multi}?};
	\node[initbox, above = 1cm of initial] (preamble) {{\bf Instructions}
	\begin{enumerate}
	\itemsep-.3em
	\item Identify constraints and conservation laws of the theory.
	\item Define allowed operations $\A_{\text{multi}}$ and identify invariant sets $\left\{\f_i\right\}$.
	\item Use flowchart to identify  the properties of monotones and invariant sets.
	\end{enumerate}};
	\node[box, below left = 2cm of initial] (entropy) {\small {\bf II}. Is the theory asymptotically equivalent with respect to the relative entropy distances from the invariant sets, or equivalently to the monotones introduced in Sec.~\ref{average_non_increasing}?};
	\node[box, below right = 2cm of initial] (noasymteq) {\small {\bf III}. Modify the class of allowed operations $\A_{\text{multi}}$, if possible.};
	\node[box, below = 1.414cm of initial] (propF123) {\small {\bf IV}. Are the invariant sets closed (\ref{item:F1}) and convex (\ref{item:F2}),
	and do they contain a full-rank state (\ref{item:F3})$^{\text{a}}$? Can we find batteries where resources are stored individually (\ref{item:M1}) with respect to these monotones?};
	\node[box, below = 1.414cm  of entropy] (trivial) {\small {\bf V}. Can we find candidate battery subsystems
	such that each resource is stored (\ref{item:M1}) and accessed (\ref{item:M2}) individually, and the
	resource quantifiers are non-negative (\ref{item:M3})?};
	\node[box, below = .6cm  of propF123] (asymtcont) {\small {\bf VII}. Are the resource measures monotonic under partial trace
	(\ref{item:M4}), sub-additive (\ref{item:M5}), sub-extensive (\ref{item:M6}), and asymptotic continuous (\ref{item:M7})?};
	\node[box, below = 1cm  of noasymteq] (propF4orF5) {\small {\bf VI}. Which additional properties do the invariant sets satisfy?
	Are they closed under tensor product (\ref{item:F4}) and partial trace (\ref{item:F5}), or are they composed by a single state
	(\ref{item:F5b})?};
	\node[finbox, below = 1cm  of trivial] (batteries) {\small The resource theory does not allow for the use of batteries, see Sec.~\ref{quant_res}.};
	\node[finbox, below = 1.414cm  of propF4orF5] (unique) {\small The state-space of the theory can be represented in a resource diagram, and the representation is unique, see Thm.~\ref{thm:reversible_multi}.};
	\node[finbox, below = 1cm  of asymtcont] (resdiag) {\small The state-space of the theory can be represented in a resource diagram, see Fig.~\ref{fig:resource_diagram}, but the representation is not unique.};
	\node[box, below = 1cm  of unique] (intercept) {\small {\bf VIII}. The state-space of the theory can be represented in a resource diagram, and the representation is unique. Are the invariant sets disjoint, see right panel of Fig~\ref{fig:invariant_sets_structure}?};
	\node[longbox, below = 1cm of batteries.east,xshift=0.8cm] (nointerconv) {\small Resources cannot be exchanged, as the bank does not contain one or more of them.};
	\node[longbox, below = .5cm of nointerconv] (interconv) {\small It is possible to exchange one resource for another, see Thm.~\ref{thm:interconvert_relation}, and the theory admits a first law, see Cor.~\ref{coro:first_law}};
	\draw [arrow] (initial) -- node[left = .5cm] {\small YES} (entropy);
	\draw [arrow] (initial) -- node[right = .5cm] {\small NO} (noasymteq);
	\draw [arrow] (noasymteq) |- (initial);
	\draw [arrow] (entropy) -- node[below = .3cm] {\small YES} (propF123);
	\draw [arrow] (entropy) -- node[left = .2cm] {\small NO} (trivial);
	\draw [arrow] (trivial) -- node[left = .2cm] {\small NO} (batteries);
	\draw [arrow] (trivial) -- node[below = .3cm] {\small YES} (asymtcont);
	\draw [arrow] (asymtcont) -- node[below = .2cm] {\small YES} (unique);
	\draw [arrow] (asymtcont) -- node[left = .2cm] {\small NO} (resdiag);
	\draw [arrow] (propF123) -- node[above = .2cm] {\small YES} (propF4orF5);
	\draw [arrow] (propF123) -- node[above = .2cm] {NO} (noasymteq);
	\draw [arrow] (propF4orF5) -- node[right = .2cm] {\small \ref{item:F4} $\land$ \ref{item:F5}} (unique);
	\draw [arrow] (propF4orF5) |- ([xshift=0.5cm]propF4orF5.east) node[right = .2cm,yshift=-1.37cm] {\small \ref{item:F5b}} |- (intercept);
	\draw [arrow] (intercept) -- node[above = .2cm] {\small NO} (nointerconv);
	\draw [arrow] (intercept) -- node[below = .2cm] {\small YES} (interconv);
	\node[inner sep=0pt,font=\footnotesize] at ([xshift=-3.7cm,yshift=-0.5cm]current bounding box.south) (ftnta) {$^{\text{a}}$For the monotones introduced in Sec.~\ref{average_non_increasing}, property~\ref{item:F3} is not relevant.};
\end{tikzpicture}
    \caption{How to apply the results of this paper to an arbitrary resource theory.}
    \label{fig:flowchart}
\end{figure}
\par
Before we introduce the examples, we provide a flowchart~\ref{fig:flowchart} that should help the
reader in building a multi-resource theory. In particular, the flowchart clarifies in which situations
each of the results we obtain hold for a specific theory. This tool should be used as follows,
\begin{itemize}
\item The fundamental constraints and conservation laws of the task under consideration should be
identified, and together with them the resources composing the theory.
\item Given the set of resources for the theory, we define the class of allowed operations $\A_{\text{multi}}$
as in Eq.~\eqref{all_ops_multi}, and we identify the invariant sets of the theory $\left\{ \f_i \right\}_{i=1}^m$.
\item Checking whether asymptotic equivalence holds for the multi-resource theory is the first step
of the flowchart (box {\bf I} in Fig.~\ref{fig:flowchart}). To show that the theory satisfies this property, we need to find a
protocol which maps between states with same values of a given set of monotones.
\item If the theory satisfies asymptotic equivalence, we can focus on the properties of the monotones and
of the invariant sets. Following the flowchart, we can then easily identify which properties and features hold for
the theory under consideration.
\end{itemize}
The flowchart here introduced is used in the first example to clarify how to characterise a multi-resource theory.
\subsection{Thermodynamics of multiple-conserved quantities}
\label{thermo_example}
In this example we consider the resource theory of thermodynamics in the presence of
multiple conserved quantities (even in the case in which these quantities do not 
commute)~\cite{guryanova_thermodynamics_2016,yunger_halpern_microcanonical_2016,
lostaglio_thermodynamic_2017}. Our system is a $d$-level quantum system, and for simplicity,
we only consider two conserved quantities $A$ and $B$. The allowed operations are Thermal
Operations~\cite{brandao_resource_2013, horodecki_fundamental_2013}, composed by unitary
operators which commute with both $A$ and $B$. This set of maps can be obtained as a proper
subset of the intersection between the allowed operations of the following single-resource theories,
\begin{itemize}
\item The resource theory of the quantity $A$. The allowed operations are all the average-$A$-non-increasing
maps, whose invariant set is composed by a single state, $\f_A = \left\{ \ket{a_0}\bra{a_0} \right\}$, the
eigenstate of $A$ associated with its minimum eigenvalue $a_0$ (for simplicity, we here assume it to be
non-degenerate). From Sec.~\ref{average_non_increasing} it follows that this theory has a monotone of the
form $M_A(\rho) = \tr{A \rho} - a_0$.
\item The resource theory of the quantity $B$. The allowed operations are all the average-$B$-non-increasing
maps, whose invariant set is composed by a single state, $\f_B = \left\{ \ket{b_0}\bra{b_0} \right\}$, the
eigenstate of $B$ associated with its minimum eigenvalue $b_0$ (for simplicity, we here assume it to be
non-degenerate). From Sec.~\ref{average_non_increasing} it follows that this theory has a monotone of the
form $M_B(\rho) = \tr{B \rho} - b_0$.
\item The resource theory of purity, where the allowed operations are all the maps whose fix point is the
maximally-mixed state $\f_S = \left\{ \frac{\Id}{d} \right\}$ (unital maps). One monotone of the theory is
the relative entropy distance from $\frac{\Id}{d}$, that is, $E_{\f_S}(\rho) = \log d - S(\rho)$ where
$S(\cdot)$ is the von Neumann entropy. 
\end{itemize}
\par
We can now make use of the flowchart to characterise the multi-resource theory. Box {\bf I} in the
flowchart asks whether or not the considered multi-resource theory satisfies asymptotic equivalence. In
Refs.~\cite{bera_thermodynamics_2017,sparaciari_resource_2016} it has been shown that, indeed, a
resource theory of this kind does satisfy the asymptotic equivalence property of Def.~\ref{def:asympt_equivalence_multi}
with respect to the monotones $M_A$, $M_B$ and $E_{\f_S}$. Furthermore, it is easy to see that these
monotones are either relative entropy distances from the set of invariant states, or that they are of the form
given in Eq.~\eqref{montone_average}. This implies that we can answer positively to box {\bf II} in
the flowchart.
\par
We now need to consider the properties of the invariant sets of the theory, which in turn determine the
properties of the monotones. It is easy to show that these sets are closed (property~\ref{item:F1}) and
convex (property~\ref{item:F2}). Furthermore, $\f_S$ contains a full-rank state (property~\ref{item:F3}),
that implies asymptotic continuity of the associated monotone, see Refs.~\cite{synak-radtke_asymptotic_2006,
brandao_generalization_2010}. The fact that the other sets do not contain a full-rank state is not problematic
since we are considering monotones of the form of Eq.~\eqref{montone_average}, that are nevertheless
asymptotic continuous, see Prop.~\ref{average_asymp_cont}. Thus, the invariant sets satisfy all the
properties required in box {\bf IV}, and we now need to construct batteries able to store the different resources
separately (property~\ref{item:M1}).
\par
For the first kind of resource, this can be achieved by selecting two pure states with different
average values of $A$, and same average values of $B$. The battery $B_A$, storing the first kind of resource,
is then composed by a certain number of copies of these two states, where the number varies when we
extract/store the resource. A similar construction can be done for the other battery $B_B$. For the purity battery,
we can take a system with degenerate $A$ and $B$, and take states with a certain number of copies of a pure
state and mixed state. If this construction is possible, then we can answer positively to box {\bf IV} in the
flowchart.
\par
We can now study the properties of the invariant sets, specifically their closure with respect to tensor
product (property~\ref{item:F4}) and partial trace (property~\ref{item:F5}). Since each invariant set is composed
by a single state, we find that both these properties and property~\ref{item:F5b} are satisfied. Thus, from box
{\bf VI} we can  move to box {\bf VIII}, and therefore the theory can be studied with a resource diagram, see
Fig.~\ref{fig:example_monotones_thermo} and the representation is unique.
\par
Let us now consider a reversible transformation, described by the following equation
\begin{equation}
\rho^{\otimes n} \otimes \omega_A \otimes \omega_B \otimes \omega_S
\xleftrightarrow{\text{asympt}}
\sigma^{\otimes n} \otimes \omega'_A \otimes \omega'_B \otimes \omega'_S,
\end{equation}
where the $n$ copies of $\rho$ and $\rho'$ describe the main system at the beginning and the end of
the transformation, and the states $\omega_i$ and $\omega_i'$ are the initial and final states of
the battery $B_i$, for $i = A,B,S$. According to asymptotic equivalence, the transformation is possible if
\begin{align}
\Delta W_A &= M_A^{\infty}(\rho) - M_A^{\infty}(\sigma) = \tr{A \left( \rho - \sigma \right)}, \\
\Delta W_B &= M_B^{\infty}(\rho) - M_B^{\infty}(\sigma) = \tr{B \left( \rho - \sigma \right)}, \\
\Delta W_S &= E_{\f_S}^{\infty}(\rho) - E_{\f_S}^{\infty}(\sigma) = S(\sigma) - S(\rho).
\end{align}
\par
To answer the last box of the flowchart, box {\bf VIII}, we need to focus on the resources contained in
the bank states. Indeed, in order to get an interconversion relation and a first law we need the bank states to
contain a non-zero amount of each resource. This has to be the case for the current resource theory, since the
invariant sets do not intercept each other. Therefore, this theory admits a first law, as we are going to show.
It can be easily shown, using Jaynes principle~\cite{jaynes_information_1957}, that the bank states are
of the following form
\begin{equation}
\label{GGS}
\tau_{\beta_1,\beta_2} = \frac{e^{- \beta_1 A - \beta_2 B}}{Z},
\end{equation}
where the parameters $\beta_1, \beta_2 \in [0, \infty)$, and $Z = \tr{e^{- \beta_1 A - \beta_2 B}}$
is the partition function of the system. These states are known in thermodynamics as the
grand-canonical ensemble. Each $\tau_{\beta_1,\beta_2}$ is a bank state with a different value
of resource $A$, resource $B$, and purity. The value of these three resources only depends
on the parameters $\beta_1$ and $\beta_2$. In order to find the interconversion relation
we need to construct the bank monotone
\begin{equation}
f_{\text{bank}}^{\bar{\beta}_1, \bar{\beta}_2}(\rho) =
\alpha_{\bar{\beta}_1, \bar{\beta}_2} M_A(\rho) +
\gamma_{\bar{\beta}_1, \bar{\beta}_2} M_B(\rho) +
\delta_{\bar{\beta}_1, \bar{\beta}_2} E_{\f_S}(\rho) -
\xi_{\bar{\beta}_1, \bar{\beta}_2}
\end{equation}
which is equal to zero over the bank state $\tau_{\bar{\beta}_1, \bar{\beta}_2}$. Properties~\ref{item:B1}
and~\ref{item:B2} provide a geometrical way of building the monotone. If we represent the state space
in a three-dimensional diagram (where the axes are given by $M_A$, $M_B$, and $E_{\f_S}$), then
the hyperplane defined by the equation $f_{\text{bank}}^{\bar{\beta}_1, \bar{\beta}_2} = 0$ is tangent
to the state space and only intercepts it in $\tau_{\bar{\beta}_1, \bar{\beta}_2}$, see
Fig.~\ref{fig:example_monotones_thermo} for an example.
\begin{figure}[t!]
\center
\includegraphics[width=0.5\textwidth]{./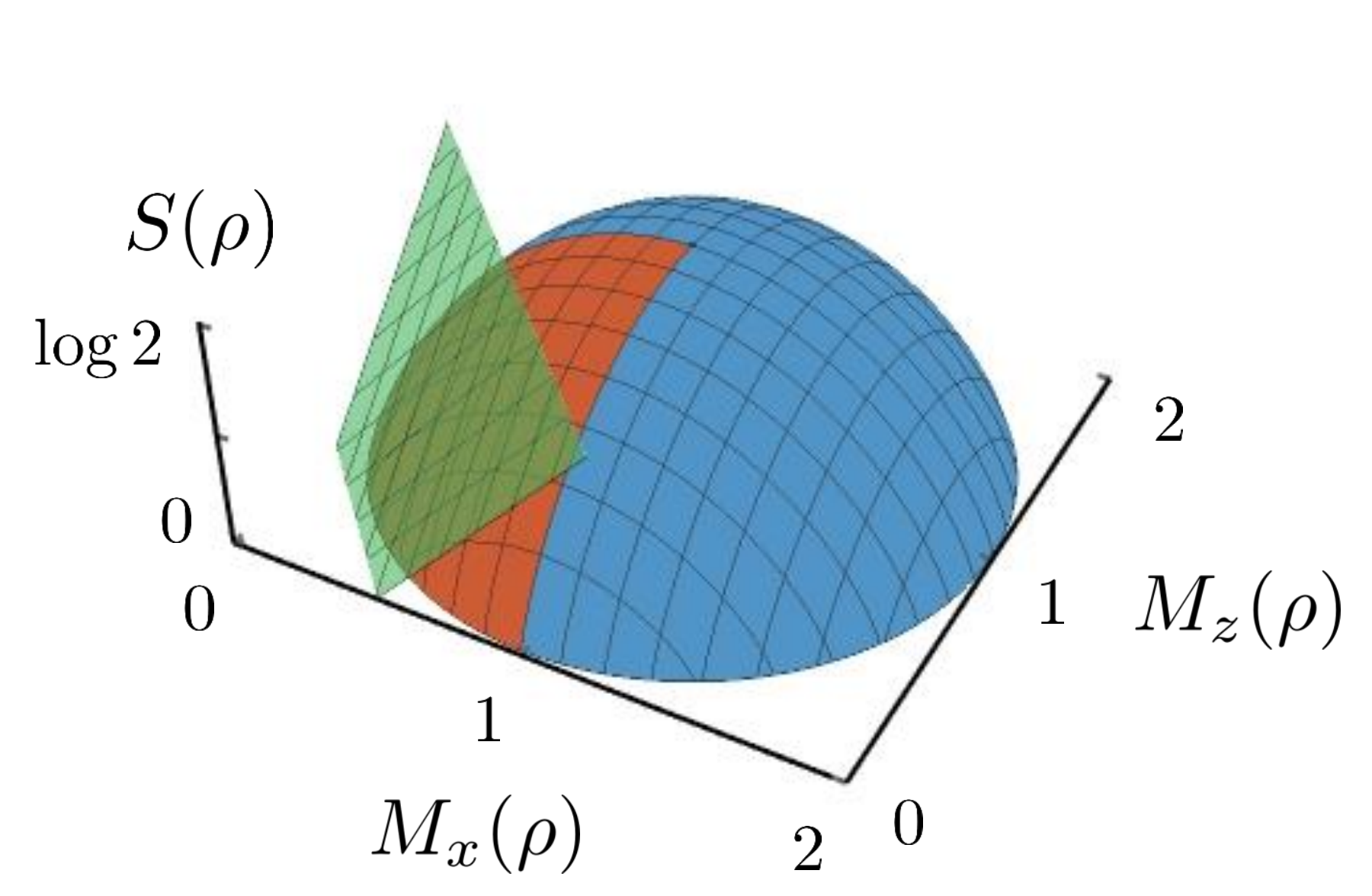}
\caption{The state space of the multi-resource theory of thermodynamics and conserved
angular momenta (along the $x$ and $z$ axes). On the surface we find the states
$\tau_{\beta_1,\beta_2}$ defined in Eq.~\eqref{GGS}, where $\beta_1$ and $\beta_2$
take values in $\R$. The red surface is the set of bank states, with $\beta_1$ and $\beta_2$
are both non-negative, and the green plane is tangent to the state space in the point
associated with $\tau_{\bar{\beta}_1, \bar{\beta}_2}$. The equation of the plane gives
the monotone $f_{\text{bank}}^{\bar{\beta}_1, \bar{\beta}_2}$.}
\label{fig:example_monotones_thermo}
\end{figure}
\par
The hyperplane defined by $f_{\text{bank}}^{\bar{\beta}_1, \bar{\beta}_2} = 0$ is identified by the
normal vector
\begin{equation}
\hat{n} = \hat{r}_1 \times \hat{r}_2 , \quad \text{where} \
\hat{r}_i =
\left(
\frac{\partial M_A(\tau_{\bar{\beta}_1, \bar{\beta}_2})}{\partial \beta_i} ;
\frac{\partial M_B(\tau_{\bar{\beta}_1, \bar{\beta}_2})}{\partial \beta_i} ;
\frac{\partial E_{\f_S}(\tau_{\bar{\beta}_1, \bar{\beta}_2})}{\partial \beta_i}
\right)^T \
\text{for} \ i = 1, 2.
\end{equation}
The parametric equation of the hyperplane then gives us the expression of the monotone,
\begin{equation}
f_{\text{bank}}^{\bar{\beta}_1, \bar{\beta}_2}(\rho) =
n_1 \left( M_A(\rho) - M_A(\tau_{\bar{\beta}_1, \bar{\beta}_2}) \right) +
n_2 \left( M_B(\rho) - M_B(\tau_{\bar{\beta}_1, \bar{\beta}_2}) \right) +
n_3 \left( E_{\f_S}(\rho) - E_{\f_S}(\tau_{\bar{\beta}_1, \bar{\beta}_2}) \right),
\end{equation}
where $n_i$ is the $i$-th component of the normal vector $\hat{n}$. By evaluating
the monotones $M_A$, $M_B$, $E_{\f_S}$, and their derivatives we find that
$f_{\text{bank}}^{\bar{\beta}_1, \bar{\beta}_2}$ is equal (modulo a positive multiplicative factor
depending on the parameters $\bar{\beta}_1$ and $\bar{\beta}_2$)
to the relative entropy distance from $\tau_{\bar{\beta}_1, \bar{\beta}_2}$,
\begin{equation}
f_{\text{bank}}^{\bar{\beta}_1, \bar{\beta}_2}(\rho) \propto E_{\tau_{\bar{\beta}_1, \bar{\beta}_2}}(\rho)
= \bar{\beta}_1 \, \tr{\rho A} + \bar{\beta}_2 \, \tr{\rho B} - S(\rho) + \log Z.
\end{equation}
Thus, the bank state $\tau_{\bar{\beta}_1, \bar{\beta}_2}$ allows us to obtain the following
interconversion relation between the three resources,
\begin{equation}
\bar{\beta}_1 \, \Delta W_A + \bar{\beta}_2 \, \Delta W_B = \Delta W_S,
\end{equation}
while the state of the bank only changes by an infinitesimal amount in terms of
$E_{\tau_{\bar{\beta}_1, \bar{\beta}_2}}$.
\subsection{Local control theory under energetic restrictions}
\label{control_theory_ex}
We now introduce a multi-resource theory describing local control under energetic
restrictions. Specifically, we consider the situation in which a quantum system
is divided into two well-defined partitions $A$ and $B$, and we can only act on the
individual partitions with non-entangling operations, which furthermore need to not
increase the energy of the overall system. This kind of simultaneous restrictions on
locality and thermodynamics has also been considered in other previous works, see for example
Refs.~\cite{hovhannisyan_entanglement_2013,huber_thermodynamic_2015,wilming_second_2016,
beny_energy_2017,lekscha_quantum_2018}. The multi-resource theory is obtained by considering
two single-resource theories, the one of entanglement and the one of energy. While this is
a well-defined multi-resource theory, it is not straightforward to prove that it is also a
reversible theory. Therefore, to provide a first law in this setting, we have to restrict
the state-space to a subset of all bipartite density operators.
\subsubsection{Set-up}
\label{setup_en_ent}
Let us consider a bipartite system, whose partitions are labelled as $A$ and $B$, with a non-local
Hamiltonian $H_{AB}$ (that is, the two partitions interact with each other, and the ground state of the
system is an entangled state). The set of allowed operations of this multi-resource theory is
obtained from the intersection of the allowed operations of the following single-resource theories,
\begin{itemize}
\item The resource theory of energy. The allowed operations are all the average-energy-non-increasing
maps, defined in Sec.~\ref{average_non_increasing}. When the Hamiltonian has non-degenerate
ground state $\ket{\text{g}}$, the fixed state of the maps is $\f_H = \ket{\text{g}}\bra{\text{g}}$.
The monotone of this resource theory is $M_H(\rho) = \tr{H \rho} - E_{\text{g}}$, where
$E_{\text{g}}$ is the eigenvalue associated with the ground state $\ket{\text{g}}$.
\item The resource theory of entanglement. The allowed operations are the asymptotically non-entangling
maps~\cite{brandao_reversible_2010}. These maps are relevant to us for two reasons. Firstly, all our results
hold in the asymptotic limit, and therefore it is reasonable to consider the set of maps which do not create
entanglement in this limit. Secondly, this is the only set of operations which provides a reversible theory for
entanglement. The monotone is $E_{\f_{\text{sep}}}(\cdot)$, where $\f_{\text{sep}}$ is the set of
separable states, invariant under the class of operations.
\end{itemize}
While the current multi-resource theory is well-defined and meaningful, it is not straightforward to prove whether
it is reversible in the sense given in Def.~\ref{def:asympt_equivalence_multi}. Furthermore, it is known that
the relative entropy of entanglement, $E_{\f_{\text{sep}}}$, is not additive (or even extensive) for all bipartite
density operator. Therefore, if we want to study interconversion of resources in this setting, we need to consider
a subset of the state-space (as well as of the invariant set $\f_{\text{sep}}$).
\par
In the following we will focus on the simplest example of a multi-resource theory of this kind. The bipartite system
is composed by two qubits, so that its Hilbert space is $\hil_{AB} = \C^2 \otimes \C^2$. The Hamiltonian
of the system is
\begin{equation}
\label{non_loc_ham}
H_{AB} = E_0 \ket{\Psi_\text{singlet}}\bra{\Psi_\text{singlet}} + E_1 \, \Pi_{\text{triplet}},
\end{equation}
where $E_0 < E_1$, the ground state is the singlet state,
\begin{equation}
\ket{\Psi_\text{singlet}} = \frac{1}{\sqrt{2}} \left( \ket{01} - \ket{10} \right),
\end{equation}
and $\Pi_{\text{triplet}} = \sum_{i=1}^3 \ket{\Psi_\text{triplet}^{(i)}}\bra{\Psi_\text{triplet}^{(i)}}$
is the projector on the triplet subspace, where
\begin{align}
\ket{\Psi_\text{triplet}^{(1)}} &= \frac{1}{\sqrt{2}} \left( \ket{01} + \ket{10} \right), \\
\ket{\Psi_\text{triplet}^{(2)}} &= \frac{1}{\sqrt{2}} \left( \ket{00} - \ket{11} \right), \\
\ket{\Psi_\text{triplet}^{(3)}} &= \frac{1}{\sqrt{2}} \left( \ket{00} + \ket{11} \right).
\end{align}
In order to get a reversible multi-resource theory, and therefore to be able to define
the interconversion relations, we consider a restricted state-space, given by the
following subset of bipartite density operators,
\begin{equation}
\s_1 = \left\{ \rho \in \s(\hil_{AB}) \, | \, \rho = \p_0 \ket{\Psi_\text{singlet}}\bra{\Psi_\text{singlet}}
+ \sum_{i=1}^{3} \p_i \ket{\Psi_\text{triplet}^{(i)}}\bra{\Psi_\text{triplet}^{(i)}}
\ , \ \text{with} \ \p_0 \geq \frac{1}{2} \right\}.
\end{equation}
There are two additional reasons why we are interested in this set of states. First of all, because the
relative entropy of entanglement $E_{\f_{\text{sep}}}$ has an analytical expression for states
which are diagonal in the Bell basis~\cite{vedral_quantifying_1997, audenaert_asymptotic_2001,
miranowicz_closed_2008} (that here coincides with the energy eigenbasis). Secondly, because it is
easy to show, see Eq.~\eqref{set_f3}, that $\s_1$ contains the bank states of the theory, that are
the interesting ones when it comes to study interconversion. Finally, it is worth noting that the
state-space $\s_1$ contains all the Gibbs states of the non-local Hamiltonian $H_{AB}$ with positive
temperatures. Within this restricted state-space we find the following subset of separable states, 
\begin{equation}
\label{css_invariant}
\f_{\text{css}} = \left\{ \rho = \frac{1}{2} \ket{\Psi_\text{singlet}}\bra{\Psi_\text{singlet}}
+ \sum_{i=1}^{3} \p_i \ket{\Psi_\text{triplet}^{(i)}}\bra{\Psi_\text{triplet}^{(i)}} \right\}.
\end{equation}
It is worth noticing that the above subset $\f_{\text{css}}$ contains all the closest-separable states
to the entangled states in our restricted state-space $\s_1$ (see Ref.~\cite{miranowicz_closed_2008}).
As a result, for any state $\rho \in \s_1$ we have that
\begin{equation}
E_{\f_{\text{sep}}}(\rho) = E_{\f_{\text{css}}}(\rho) =
1 - \h\left( \bra{\Psi_\text{singlet}}\rho\ket{\Psi_\text{singlet}} \right),
\end{equation}
where $\h(\cdot)$ is the binary entropy function. Since our focus is restricted to the sole states in
the subset $\s_1$, we will now re-define\footnote{The modified set of allowed operations makes it easier
for us to find a protocol for inter-converting resources. However, we do not exclude the possibility of being
able to perform interconversion with the original set of allowed operations, that preserve all separable
states. However, finding this protocol might be non-trivial, and could be material of future work.}
the set of allowed operations of the multi-resource theory as those energy-non-increasing maps which
only preserve the subset of separable states $\f_{\text{css}} = \f_{\text{sep}} \cap \s_1$. We can
define this class of operation as
\begin{equation}
\label{ent_en_all_ops}
\A_{\text{multi}} =
\left\{
\chn :  \s(\hil_{AB}) \rightarrow \s(\hil_{AB})
\ | \
\chn(\f_{\text{css}}) \subseteq \f_{\text{css}}
\ \text{and} \
\tr{ \chn(\rho) H_{AB}} \leq \tr{ \rho \, H_{AB}} \ \forall \, \rho \in \s(\hil_{AB})
\right\},
\end{equation}
where each $\chn \in \A_{\text{multi}}$ is a completely positive and trace preserving map.
\begin{figure}[t!]
\center
\includegraphics[width=0.35\textwidth]{./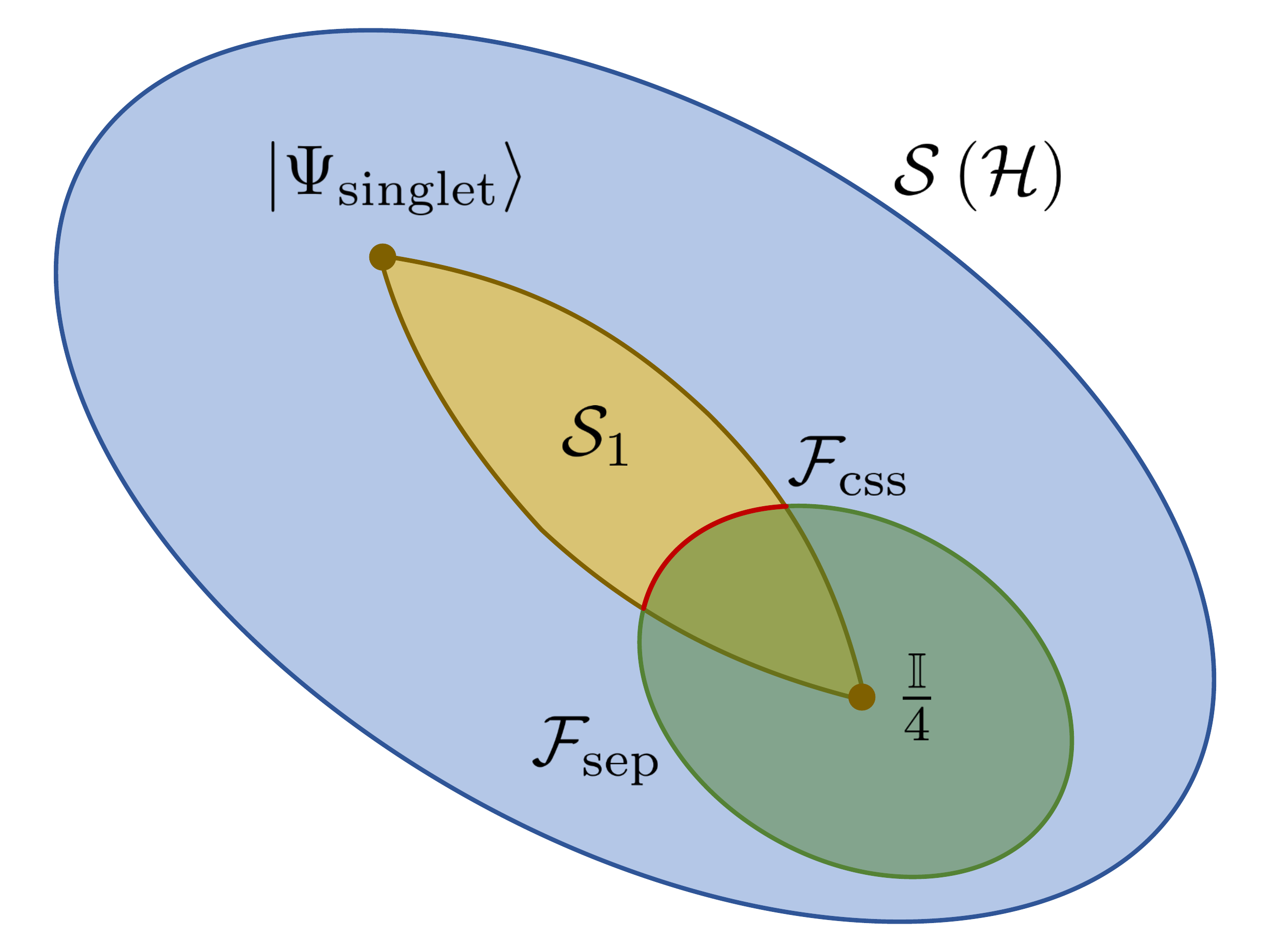}
\hspace{1cm}
\includegraphics[width=0.45\textwidth]{./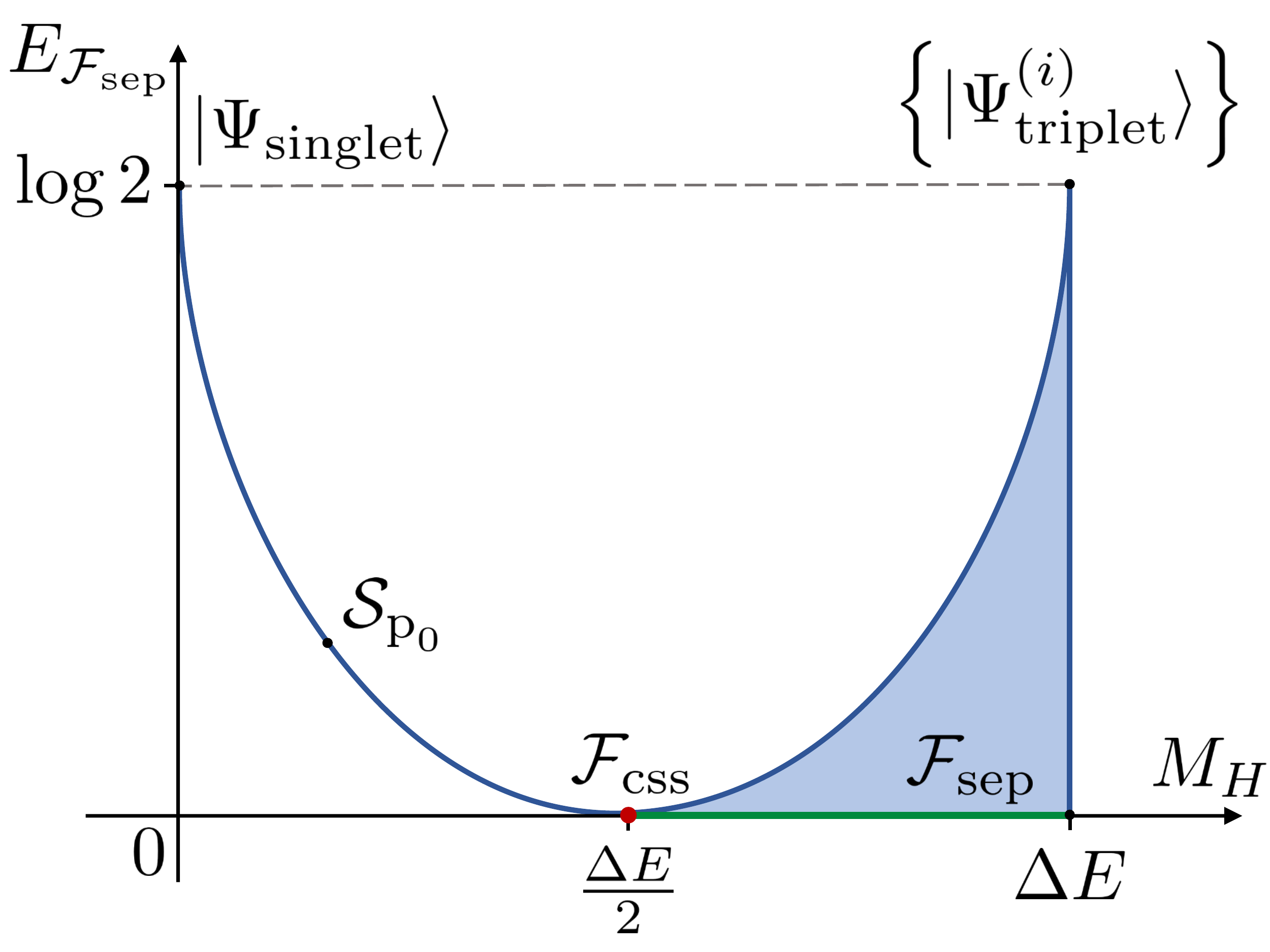}
\caption{The state-space of the multi-resource theory of
local control under energy restrictions. We consider a bipartite system composed by two qubits,
with a non-local Hamiltonian given in Eq.~\eqref{non_loc_ham}. {\bf Left.} The state-space
is represented by the blue region $\SH$, while the green region is the set of all separable states
$\f_{\text{sep}}$, and the red set on its boundary is a subset of separable states $\f_{\text{css}}$,
defined in Eq.~\eqref{css_invariant}. The yellow region contains every state diagonal in the
energy eigenbasis (Bell’s basis). In our simplified example, we restrict the state-space to a portion
of the yellow region, that is, to a subset of diagonal states. Specifically, we consider the set $\s_1$,
containing any states between the singlet and the separable subset $\f_{\text{css}}$.
{\bf Right.} The diagram represents
the set of states diagonal in the energy eigenbasis (blue region). On the left part of the diagram
we find the curve associated with the set $\s_1$, whose extreme points are the singlet and the invariant
set $\f_{\text{css}}$. On the right hand side, we have all those diagonal states with $\p_0 < \frac{1}{2}$.
The green line is the set of all separable states $\f_{\text{sep}}$. It is easy to see that, although the set of
diagonal states in the Bell's basis is convex, its representation in the diagram has not to be convex,
see comments after Lem.~\ref{lem:closed_int}, in the appendix.}
\label{state_space_energy_entanglement}
\end{figure}
\par
The two batteries we use in the theory store, respectively, energy and entanglement.
One can imagine different kinds of energy batteries. For example, we could have that only
Alice (or Bob) has access to the battery, which would imply that only one of them can change
the energy of the non-local system. However, we prefer to consider a symmetric situation in
which both Alice and Bob can interact with the battery. Moreover, we chose the battery to be
non-local, so that they are effectively using the same battery, and not two local batteries. Thus,
the battery $B_W$ is composed by $m$ copies of a two-qubit system with the same
Hamiltonian of the main system, that is,
\begin{equation}
H_W = E_0 \ket{\Psi_\text{singlet}}\bra{\Psi_\text{singlet}} + E_1 \, \Pi_{\text{triplet}}.
\end{equation}
The state of the battery is
\begin{equation}
\label{energy_batt}
\omega_W(k) = \ket{\Psi_\text{singlet}}\bra{\Psi_\text{singlet}}^{\otimes k} \otimes
\ket{\Psi_\text{triplet}^{(1)}}\bra{\Psi_\text{triplet}^{(1)}}^{\otimes m - k}, 
\end{equation}
where the excited state $\ket{\Psi_\text{triplet}^{(1)}}$ could be replaced by any other triplet
state. Notice that, in order to store/provide energy, we have to change the number of triplet
and singlet states contained in the battery, and this can be done locally by both Alice and Bob.
Moreover, even if we are changing the energy of the battery, we are not modifying its entanglement,
in accord with property~\ref{item:M1}.
\par
The second battery $B_E$ is composed by $\ell$ copies of a two-qubit system with trivial Hamiltonian
$H_E \propto \Id$ (so as to be able to exchange entanglement while preserving the energy of the battery).
We choose the state of the battery to be
\begin{equation}
\label{entang_batt}
\omega_E(h) = \ket{\Psi_\text{singlet}}\bra{\Psi_\text{singlet}}^{\otimes h} \otimes
\sigma_{\text{mm}}^{\otimes \ell - h},
\end{equation}
where the state $\sigma_{\text{mm}} \in \f_{\text{css}}$, and we take it to be the maximally-mixed state on
the subspace spanned by $\ket{\Psi_\text{singlet}}$ and $\ket{\Psi_\text{triplet}^{(1)}}$, that is
\begin{equation}
\label{max_mix_sin_trip}
\sigma_{\text{mm}} = \frac{1}{2} \ket{\Psi_\text{singlet}}\bra{\Psi_\text{singlet}}
+ \frac{1}{2} \ket{\Psi_\text{triplet}^{(1)}}\bra{\Psi_\text{triplet}^{(1)}}.
\end{equation}
The change in entanglement is measured by the change in the number of singlet states $h$.
\subsubsection{Reversibility and the interconversion relation}
In order for the present multi-resource theory to admit an interconversion relation, we first
need to show that the asymptotic equivalence property of Def.~\ref{def:asympt_equivalence_multi}
is satisfied. Let us consider the subset of states $\s_{\p_0} \subset \s_1$, where $\p_0 > \frac{1}{2}$,
defined as
\begin{equation}
\label{bank_subset_en_ent}
\s_{\p_0} = \left\{ \rho \in \s_1 \ | \ \bra{\Psi_\text{singlet}}\rho\ket{\Psi_\text{singlet}} = \p_0 \right\}.
\end{equation}
It is easy to show that all the states in this subset have the same value of the energy and entanglement
monotones, which we label $\bar{M}_H$ and $\bar{E}_{\f_{\text{css}}}$ respectively. Furthermore, for any
two states in this set, we can find an allowed operation in $\A_{\text{multi}}$, see Eq.~\eqref{ent_en_all_ops},
which maps one into the other. Indeed, consider an ancillary qutrit system described by the state $\eta =
\sum_{i=1}^3 \q_i \ket{\theta_i}\bra{\theta_i}$, and the global unitary operation $U$ acting on main
system and ancilla. The unitary operation maps $\ket{\Psi_\text{triplet}^{(i)}} \ket{\theta_j}$ into
$\ket{\Psi_\text{triplet}^{(j)}} \ket{\theta_i}$, for $i, j \in \{1,2,3\}$, and acts trivially on the remaining
basis states. Then, the operation $\chn_{\eta}(\cdot) = \Tr{A}{U \left( \cdot \otimes \eta_A
\right) U^{\dagger}} \in \A_{\text{multi}}$ maps any state $\rho \in \s_{\p_0}$ into the state
\begin{equation}
\label{rev_ent_en}
\chn_{\eta}(\rho) = \p_0 \ket{\Psi_\text{singlet}}\bra{\Psi_\text{singlet}}
+ \left( 1 - \p_0 \right) \sum_{i=1}^{3} \q_i \ket{\Psi_\text{triplet}^{(i)}}\bra{\Psi_\text{triplet}^{(i)}},
\end{equation}
where the probability distribution $\left\{ \q_i \right\}_{i=1}^3$ is defined by $\eta$. By choosing different
ancillary states $\eta$, we can reach different states in $\s_{\p_0}$, proving in this way that the
resource theory satisfies asymptotic equivalence\footnote{The operation $\chn_{\eta}(\cdot)$
we introduce is allowed since we restricted the invariant set $\f_{\text{sep}}$ to $\f_{\text{css}}$.
Indeed, the above map would not leave invariant the set of separable states $\f_{\text{sep}}$.}.
\par
We can now consider the interconversion of energy and entanglement. Together with the two batteries
$B_W$ and $B_E$, one for energy and the other for entropy, we need to use a bank system. One can
show that, when diagonal states in the energy eigenbasis are considered, bank states belongs to the set
$\s_1$ introduced in the previous section. Thus, we describe the bank system using $n \gg 1$ copies
of a state $\rho_{\text{in}} \in \s_{\p_0}$, where $\p_0 > \frac{1}{2}$ (the actual form of the state is not
relevant, since we can use the allowed operation $\chn_{\eta}$ to freely select any state in this
set). In order to obtain an interconversion relation, we need to find an allowed operation in $\A_{\text{multi}}$,
acting on the global state of bank and batteries, which modifies the state of the batteries (by exchanging
resources) while leaving the state of the bank almost unchanged with respect to the relative entropy
distance from $\s_{\p_0}$.
\par
In appendix~\ref{protocol_example} we provide a protocol which performs the following
resource interconversion using an allowed operation $\A_{\text{multi}}$, 
\begin{equation}
\rho_{\text{in}}^{\otimes n} \otimes \omega_W(k) \otimes \omega_E(h)
\xleftrightarrow{\text{asympt}}
\rho_{\text{fin}}^{\otimes n} \otimes \omega_W(k') \otimes \omega_E(h').
\end{equation}
In the above transformation, the initial state of the bank $\rho_{\text{in}}$ is mapped into a state
$\rho_{\text{fin}} \in \s_{\p'_0}$, where $\p'_0 = \p_0 + O(n^{-1})$. The energy battery $B_W$
is mapped from the initial state $\omega_W(k)$, containing $k$ copies of the ground state of
$H_{AB}$, into the final state $\omega_W(k')$ with $k' = k + \Delta k$ copies of this ground state,
where $\Delta k > 0$ is arbitrary big. Likewise, the entanglement battery $B_E$ changes from
the initial state $\omega_E(h)$, containing $h$ singlets, to the final state $\omega_E(h')$
containing $h' = h - \log \frac{\p_0}{1-\p_0} \, \Delta k$ singlets. From the above transformation
one is able to derive an interconversion relation between energy and entanglement,
\begin{equation}
\label{int_rel_ent_ene}
\Delta W_W = - \frac{\Delta E}{\log \frac{\p_0}{1-\p_0}} \, \Delta W_E,
\end{equation}
where $\Delta W_W = M_H \left( \omega_W(k') \right) - M_H \left( \omega_W(k) \right)$ is the
amount of energy exchanged, $\Delta W_E = E_{\f_{\text{css}}} \left( \omega_E(h') \right) -
E_{\f_{\text{css}}} \left( \omega_E(h) \right)$ is the amount of entanglement exchanged, and
$\Delta E = E_1 - E_0$ is the energy gap of the Hamiltonian $H_{AB}$. Additionally, we find
that the change in monotone $E_{\s_{\p_0}}$ between the initial and final global state of the
bank is negligible (for $n \rightarrow \infty$), in accord with property~\ref{item:X1}.
\section{Conclusions}
\label{end}
{\bf From multiple constraints to a resource theory.}
With the present work we set the mathematical ground for the development of
resource theories with multiple resources able to describe new physical scenarios. Our
construction of multi-resource theories is based on the definition of their class of allowed
operations. First, we pinpoint the resources that compose the theory, and we introduce the
corresponding single-resource theories. Then, we define the set of allowed operations for
the multi-resource theory as the one composed by the maps in the intersection of the different
classes of allowed operations of each single-resource theory, Eq.~\eqref{all_ops_multi}.
This construction leaves the theory with multiple invariant sets, some of which are the sets of
free states of the relevant single-resource theories. It is worth remarking again that, in
multi-constraint theories, there is a difference between the set of free states and the invariant
sets (in contrast with the case of single-resource theories), and a multi-resource theory can
have multiple invariant sets and no free states,  Fig.~\ref{fig:invariant_sets_structure}.
\par
{\bf Reversibility.}
Together with the introduction of a general framework for multi-resource theories, we have studied
the properties of these reversible theories. In particular, to analyse reversibility when multiple
resources are present, we have first introduced the asymptotic equivalence property, see
Def.~\ref{def:asympt_equivalence_multi}. This property implies that a unique monotone can be
used to quantify each resource. Furthermore, in the case of single-resource theories, it coincides with the usual
notion of reversible rates of conversion. We know of multi-resource theories that satisfy this property,
see the two examples provided in Sec.~\ref{examples}. However, it would be interesting to study
which of the other, already existing, multi-resource theories satisfy the property of
Def.~\ref{def:asympt_equivalence_multi}. Ultimately, one would hope to find some general condition
according to which a multi-resource theory is reversible, similarly to what has been found in
Ref.~\cite{brandao_reversible_2015}. 
\par
{\bf The role of batteries.}
A crucial feature of our framework is the presence of batteries, used to store and quantify the
resources exchanged during a state transformation over the main system. While batteries can be
defined for single-resource theories as well, they do not seem to play the same fundamental
role in that case, since one can quantify the amount of resource contained in a system using
the conversion rate, see Def.~\ref{def:rate_conversion} in appendix~\ref{rev_theory_sing}.
However, the conversion rate is linked to a change in the number of copies, for example
$\rho^{\otimes n} \to \sigma^{\otimes k}$, where it is implicitly assumed that the remaining
$|n-k|$ copies of the system are in a free state. Since the framework allows us to model
theories with no free states, we cannot change the number of systems with the allowed operations,
and therefore we need to use batteries to quantify the amount of resources. We have seen in this
paper what are the main properties for these batteries, primarily property~\ref{item:M1}, which
requires each battery to store one and only one of the resource. It would be interesting to study
these systems more carefully, possibly linking them to the kind of batteries used for fluctuation
theorems~\cite{alhambra_entanglement_2017,alhambra_fluctuating_2016-1,renes_relative_2016,
morris_quantum_2018}, which are described by states in a big superposition, so as to always remain
uncorrelated from the main system during a state transformation~\cite{van_dam_universal_2003,
harrow_entanglement_2010}. A different line of research in this direction could involve the
study of correlated and entangled batteries, already explored in the setting of the single-resource
theory of thermodynamics~\cite{alicki_entanglement_2013,hovhannisyan_entanglement_2013}.
\par
{\bf Interconversion and further examples.}
We have studied the interconversion of resources and we have introduced a first law for multi-resource
theories, Eq.~\eqref{eq:first_law}, valid when the theories are reversible and the invariant sets are
disjoint. We have provided two examples of theories with a first law, one related to thermodynamics,
and the other concerning a theory of local control under energy restriction. In this latter example, we
have studied an extremely simplified case, due to the fact that reversibility has not been proved in general
for this theory. Due to the high importance of both non-locality and thermodynamics in the field of quantum
technology and many-body physics, we believe that a complete analysis of this multi-resource theory would
be useful. Furthermore, it would be interesting to know which other multi-resource theories allow for an
interconversion relation, and whether it is possible to define interconversion for theories with a different
structure of invariant sets, by for instance relaxing the assumptions made on the bank. For example, one
could consider bank states from which both resources could in principle be extracted, and forbid such
extraction by further constraining the class of allowed operations.
\par
{\bf Multiple ways to build a multiple-resource theory.}
In general, there could be different ways to intersect constraints in order to obtain the same final
resource theory, and some of these constructions are a better fit for the analysis presented here than others.
For example, the resource theory of thermodynamics equipped with Thermal Operations can be built as the
intersection of either (1) the resource theories of information and energy, as we have done in
Sec.~\ref{bank_monotone}, or (2) the resource theories of athermality and coherence~\cite{aberg_catalytic_2014,
lostaglio_description_2015, kwon_clock--work_2018}. However, the most convenient setting for the study of this
latter construction is the single-copy regime, since in the many-copy scenario coherence is lost, as this
quantity scales sub-linearly in the number of copies of the system considered.
\par
{\bf Beyond the asymptotic limit.}
The concrete results presented here for reversibility and interconversion of resources are only
valid in the asymptotic limit where many independent and identically distributed copies of a system
are considered. However, the general framework we introduced to describe resource theories with multiple
resources can also be applied to scenarios with a single system. Understanding how resources
can be exchanged in the single-copy regime, and studying the corrections to the first law in such a regime
are worthwhile questions to pursue. We believe that extending the notion of batteries to the single-shot
regime should be the first step toward the definition of a complete framework for multi-resource theories.
However, we anticipate that this will be a highly non-trivial task, since in the single-copy case a resource is
not generally quantified by a single measure, which complicates the definition of batteries, currently given
through property~\ref{item:M1}. These difficulties are exemplified by the single-shot version of the resource
theory of thermodynamics, where the $\alpha$-R{\'e}nyi divergences from the thermal state are all valid
resource measures. A possible way forward in this setting might be the definition of an arbitrary notion of
resource, for instance in terms of the number of resourceful states contained in the battery. Otherwise, the
fluctuation relations for arbitrary resources~\cite{alhambra_entanglement_2017} and their connection to
majorization could be useful conecpts for quantifying resources in the single-shot regime.
\tocless{\section*}{Acknowledgement}
We thank the anonymous TQC referees for feedbacks, and Tobias Fritz for detailed comments on
a previous version of this manuscript, CS is supported by the EPSRC (grant number {EP/L015242/1}).
LdR acknowledges support from the Swiss National Science Foundation through SNSF project
No.~{200020$\_$165843} and through the National Centre of Competence in Research \emph{Quantum
Science and Technology} (QSIT), and from  the FQXi grant \emph{Physics of the observer}. CMS is
supported by the Engineering and Physical Sciences Research Council (EPSRC) through the doctoral
training grant {1652538}, and by  Oxford-Google DeepMind graduate scholarship. CMS would like to
thank the Department of Physics and Astronomy at UCL  for their hospitality. PhF acknowledges support from the Swiss
National Science Foundation (SNSF) through the Early PostDoc.Mobility Fellowship No. {P2EZP2$\_$165239} hosted
by the Institute for Quantum Information and Matter (IQIM) at Caltech, from the IQIM which is a National Science
Foundation (NSF) Physics Frontiers Center (NSF Grant {PHY-1733907}), from the Department of Energy
Award {DE-SC0018407}, as well as from the Deutsche Forschungsgemeinschaft (DFG) Research Unit {FOR 2724}.
JO is supported by the Royal Society, and by an EPSRC Established Career Fellowship. We thank the COST Network
{MP1209} in Quantum Thermodynamics.
\tocless{\subsection*}{Author contributions}
All authors contributed significantly to the ideas behind this work and to the development of the
general framework (Sec.~\ref{multi_res_framework}). CS, LdR and JO developed the results on
batteries, bank states and the first law (Secs.~\ref{rev_theory_mult}, \ref{interconv}, \ref{examples}).
CS wrote the proofs and initial draft.
\newpage
\appendix
{\Large{\sc{Appendix}}}
\section{Reversibility and asymptotic equivalence for single-resource theories}
\label{rev_theory_sing}
In this section we show that, for a single-resource theory, the asymptotic equivalence
property of Def.~\ref{def:asympt_equivalence_multi} is equivalent to the notion of reversibility
given in terms of rates of conversion. Let us first introduce the concept of rate of conversion
for a single-resource theory, see Ref.~\cite{horodecki_quantumness_2012}. The definition of
rate we use coincides with the one used in the literature, with the difference that we are
making explicit use of the partial trace and of the addition of free states. In fact, we prefer
not to include these operations within the set $\A$, as we want the allowed operations to
preserve the number of copies of the system they act over (with the exception of sub-linear
ancillae).
\begin{definition}
\label{def:rate_conversion}
Consider a single-resource theory with allowed operations $\A$ and free states $\f$, and
two states $\rho, \sigma \in \SH$. We define the \emph{rate of conversion} from $\rho$ to
$\sigma$ as
\begin{align}
R(\rho \rightarrow \sigma) = \sup \bigg\{ \frac{k_n}{n} \ | \
&\text{{\rm either}} \
\lim_{n \rightarrow \infty}
\left(
\min_{\tilde{\chn}_n}
\left\| \Tr{n - k_n}{ \tilde{\chn}_n (\rho^{\otimes n}) } - \sigma^{\otimes k_n} \right\|_1
\right) = 0 \nonumber \\
&\text{{\rm or}} \
\lim_{n \rightarrow \infty}
\left(
\min_{\tilde{\chn}_{k_n}}
\left\| \tilde{\chn}_{k_n} (\rho^{\otimes n} \otimes \gamma_{k_n - n}) - \sigma^{\otimes k_n} \right\|_1
\right) = 0 \ , \nonumber \\
&\text{{\rm where}} \ \gamma_{k_n - n} \in \f^{(k_n - n)}
\bigg\}.
\end{align}
where the maps $\tilde{\chn}_n$ have been defined in Eq.~\eqref{allowed_ancilla}, and they
are of the form $\tilde{\chn}_n (\cdot) = \Tr{A}{\chn_n ( \cdot \otimes \eta^{(A)}_n ) }$,
with $\chn_n \in \A^{(n+o(n))}$ and $\eta^{(A)}_n \in \SHn{o(n)}$.
\end{definition}
\par
Now that the notion of rate is defined, we introduce the concept of \emph{reversible} single-resource
theory,
\begin{definition} \label{def:reversibility}
A single-resource theory with allowed operations $\A$ and free states $\f$ is \emph{reversible} if,
given any non-free states $\rho, \sigma \in \SH$, the rate of conversion from $\rho$ to $\sigma$
is such that $R(\rho \rightarrow \sigma) \in (0 , \infty)$, and $R(\rho \rightarrow \sigma) R(\sigma
\rightarrow \rho) = 1$.
\end{definition}
The above notion of reversibility is based on the rates of conversion between two resourceful states.
However, it is not clear how to extend Def.~\ref{def:rate_conversion} to the case of multiple resources,
since the set of free states might be empty for multi-resource theories. For this reason, we have introduced
the property of asymptotic equivalence in Sec.~\ref{asympt_eqiv_prop}. This property also
apply to the single-resource theory case, when $m = 1$.
\par
Now we want to show that Defs.~\ref{def:asympt_equivalence_multi} and \ref{def:reversibility}, for a
single-resource theory, coincide. First, let us introduce a function $f : \SHn{n} \rightarrow \R$ (more
formally, a family of functions) with the following properties,
\begin{description}
\item[SM1\label{item:SM1}] For each $n \in \N$, the function $f$ is monotonic under the set of
allowed operations $\A^{(n)}$, that is
\begin{equation}
f \left( \chn_n \left( \rho_n \right) \right) \leq f \left( \rho_n \right) , \qquad 
\forall \, \rho_n \in \SHn{n} \ , \
\forall \, \chn_n \in \A^{(n)}.
\end{equation}
\item[SM2\label{item:SM2}] For each $n \in \N$, the function $f$ is equal to $0$ for all states
$\gamma_n \in \f^{(n)}$, that is
\begin{equation}
f \left( \gamma_n \right) = 0 , \qquad  \forall \, \gamma_n \in \SHn{n}.
\end{equation}
\item[SM3\label{item:SM3}] The function $f$ is asymptotic continuous.
\item[SM4\label{item:SM4}] The function $f$ is monotonic under partial tracing, that is
\begin{equation}
f \left( \Tr{k}{\rho_n} \right) \leq f \left( \rho_n \right)
, \qquad
\forall \, n, k \in \N \ , \ k < n
\ , \
\forall \, \rho_n \in \SHn{n}.
\end{equation}
\item[SM5\label{item:SM5}] For each $n,k \in \N$, the function $f$ is sub-additive, that is
\begin{equation}
f \left( \rho_n \otimes \rho_k \right) \leq f \left( \rho_n \right) + f \left( \rho_k \right)
, \qquad
\forall \, \rho_n \in \SHn{n}
\ , \
\forall \, \rho_k \in \SHn{k}.
\end{equation}
\item[SM6\label{item:SM6}] For any given sequence of states $\left\{ \rho_n \in \SHn{n} \right\}$,
the function $f$ scales sub-extensively, that is, $f \left( \rho_n \right) = O(n)$.
\end{description}
Notice that property~\ref{item:SM6} implies that the function $f$ is regularisable. Furthermore, the
value of $f$ is preserved if we add free states, that is,
\begin{equation}
\label{free_zero}
f(\rho_n \otimes \gamma_k) = f(\rho_n)
, \qquad \forall \, \rho_n \in \SHn{n} \ , \ \forall \, \gamma_k \in \f^{(k)},
\end{equation}
which follows from properties~\ref{item:SM2}, \ref{item:SM4}, and \ref{item:SM5}.
\par
The first lemma we introduce show that the rate of conversion of a reversible single-resource theory
is linked to the function $f$ satisfying the above properties. Notice that this proof is analogous to
the one of Ref.~\cite{horodecki_entanglement_2001}, with the difference that we are allowing for the
presence of a sub-linear ancilla in the definition of rate, following the notion of ``seed regularisation''
introduced in Ref.~\cite[Sec.~9]{fritz_resource_2015}.
\begin{lem} \label{lem:reversible_rate}
Consider a reversible resource theory with allowed operations $\A$ and free states $\f$, and the function
$f$ satisfying~\ref{item:SM1} -- \ref{item:SM6}. Then, for all non-free states $\rho, \sigma \in \SH$, we have that
\begin{equation}
R(\rho \rightarrow \sigma) = \frac{f^{\infty}(\rho)}{f^{\infty}(\sigma)}
\end{equation}
\end{lem}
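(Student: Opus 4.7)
I will prove the identity by establishing the two one-sided bounds
\begin{equation}
R(\rho \rightarrow \sigma) \;\leq\; \frac{f^{\infty}(\rho)}{f^{\infty}(\sigma)}
\qquad \text{and} \qquad
R(\rho \rightarrow \sigma) \;\geq\; \frac{f^{\infty}(\rho)}{f^{\infty}(\sigma)},
\end{equation}
where the first inequality follows from monotonicity and asymptotic continuity of $f$, and the second one follows from reversibility of the theory (Def.~\ref{def:reversibility}) by applying the first inequality to the reverse direction.

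\textbf{Upper bound.} Fix any sequence of rates $k_n/n$ achievable in Def.~\ref{def:rate_conversion}; I will show $\lim_n (k_n/n) \leq f^{\infty}(\rho)/f^{\infty}(\sigma)$. There are two cases to handle corresponding to the two alternatives in the definition. In the shrinking case $k_n \leq n$, the achieving map has the form $\tilde{\chn}_n(\cdot) = \Tr{A}{\chn_n(\cdot \otimes \eta_n^{(A)})}$ with $\chn_n \in \A^{(n+o(n))}$ and $\eta_n^{(A)}\in\SHn{o(n)}$. Chaining properties~\ref{item:SM5}, \ref{item:SM1}, \ref{item:SM4}, together with $f(\eta_n^{(A)}) = O(o(n)) = o(n)$ from \ref{item:SM6}, yields
\begin{equation}
f\!\left(\rho^{\otimes n}\right) + o(n)
\;\geq\; f\!\left(\Tr{n-k_n}{\tilde{\chn}_n(\rho^{\otimes n})}\right).
\end{equation}
Asymptotic continuity~\ref{item:SM3} then gives $|f(\Tr{n-k_n}{\tilde{\chn}_n(\rho^{\otimes n})}) - f(\sigma^{\otimes k_n})| = o(k_n) = o(n)$, so $f(\rho^{\otimes n})/n \geq (k_n/n)\,f(\sigma^{\otimes k_n})/k_n - o(1)$, and sending $n\to\infty$ produces the desired inequality. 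In the growing case $k_n \geq n$, I first use Eq.~\eqref{free_zero} to absorb the free-state padding $\gamma_{k_n-n}$ into $\rho^{\otimes n}$ without changing $f$, after which the same chain of inequalities applies.

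\textbf{Lower bound.} Since the theory is reversible, $R(\rho\to\sigma)\,R(\sigma\to\rho) = 1$ and both rates are strictly positive and finite. Applying the upper bound proven above to the reverse transformation gives $R(\sigma \to \rho) \leq f^{\infty}(\sigma)/f^{\infty}(\rho)$, provided $f^{\infty}(\rho) > 0$. Taking reciprocals,
\begin{equation}
R(\rho \to \sigma) \;=\; \frac{1}{R(\sigma \to \rho)} \;\geq\; \frac{f^{\infty}(\rho)}{f^{\infty}(\sigma)},
\end{equation}
which combined with the upper bound yields the claim. Positivity of $f^{\infty}$ on non-free states follows from reversibility itself: if $f^{\infty}(\sigma)$ vanished on some non-free $\sigma$, the upper bound applied to $\sigma\to\rho$ would force $R(\sigma\to\rho)=0$, contradicting reversibility.

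\textbf{Main obstacles.} The routine bookkeeping is easy, but two points require care. First, the sub-linear ancilla $\eta_n^{(A)}$ is arbitrary and not assumed to be a free state, so I cannot eliminate it via Eq.~\eqref{free_zero}; instead I rely on sub-extensivity~\ref{item:SM6} to argue that $f(\eta_n^{(A)})=o(n)$, which is exactly what makes the ancilla invisible in the rate. Second, the error in the trace-norm convergence must be promoted to an error in $f$-values in a regime where the number of copies $k_n$ can diverge; this is where asymptotic continuity~\ref{item:SM3} is essential, and one has to verify that $o(k_n)/n \to 0$ in both cases, which holds because $k_n = O(n)$ (the rate is finite by reversibility).
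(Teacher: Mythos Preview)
Your proof is correct and follows essentially the same approach as the paper: chain the monotone properties~\ref{item:SM1}, \ref{item:SM4}, \ref{item:SM5}, \ref{item:SM6} to bound $f\big(\tilde{\chn}_n(\rho^{\otimes n})\big)$ by $f(\rho^{\otimes n})+o(n)$, invoke asymptotic continuity~\ref{item:SM3}, regularise, and then obtain the reverse inequality from reversibility $R(\rho\to\sigma)R(\sigma\to\rho)=1$. Your treatment is in fact slightly more explicit than the paper's in two respects: you separately handle the growing case $k_n\geq n$ using Eq.~\eqref{free_zero} (the paper simply assumes $R(\rho\to\sigma)\leq 1$ and declares the other case analogous), and you justify positivity of $f^{\infty}$ on non-free states, which the paper leaves implicit.
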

\begin{proof}
Let consider $\rho$ and $\sigma$ such that $R(\rho \rightarrow \sigma) \leq 1$
(the proof of the other case is equivalent). Then, there exists a sequence
of operations $\left\{ \tilde{\chn}_n \right\}$ of the form given in
Eq.~\eqref{allowed_ancilla} such that
\begin{equation}
\lim_{n \rightarrow \infty}
\left\| \Tr{n - k_n}{ \tilde{\chn}_n (\rho^{\otimes n}) } - \sigma^{\otimes k_n}
\right\|_1 = 0
\end{equation}
where $\lim_{n \rightarrow \infty} \frac{k_n}{n} = R( \rho \rightarrow \sigma )$.
If we use the asymptotic continuity of the function $f$, property~\ref{item:SM3}, we obtain
\begin{equation}
f \left( \Tr{n - k_n}{ \tilde{\chn}_n (\rho^{\otimes n}) } \right) =
f \left( \sigma^{\otimes k_n} \right) + o ( k_n ).
\end{equation}
Let us now consider the lhs of the above equation. Using the properties of the monotone
$f$, together with the definition of $\tilde{\chn}_n$ in terms of sub-linear ancillae and
allowed operations, we can prove the following chain of inequalities,
\begin{align}
f \left( \Tr{n - k_n}{ \tilde{\chn}_n (\rho^{\otimes n}) } \right)
&\leq f \left( \tilde{\chn}_n (\rho^{\otimes n}) \right)
= f \left( \Tr{A}{\chn_n ( \rho^{\otimes n} \otimes \eta^{(A)}_n ) } \right)
\leq f \left( \chn_n ( \rho^{\otimes n} \otimes \eta^{(A)}_n ) \right) \nonumber \\
&\leq f \left( \rho^{\otimes n} \otimes \eta^{(A)}_n \right) 
\leq f \left( \rho^{\otimes n} \right) + f \left( \eta^{(A)}_n \right)
\leq f \left( \rho^{\otimes n} \right) + o(n)
\end{align}
where the first and second inequalities follow from property~\ref{item:SM4}, the equality
follows from the definition of $\tilde{\chn}_n$, see Eq.~\eqref{allowed_ancilla}, the third
inequality follows from monotonicity under allowed operations, property~\ref{item:SM1}, the
forth inequality from sub-additivity, property~\ref{item:SM5}, and the last one from the fact
that the ancillary system is sub-linear in $n$ together with property~\ref{item:SM6}. Thus,
combining the last two equations, we get
\begin{equation}
f \left( \rho^{\otimes n} \right) \geq f \left( \sigma^{\otimes k_n} \right) + o ( n ).
\end{equation}
We can now divide the left and right hand side of the above equation by $n$, obtaining
\begin{equation}
\frac{1}{n} \, f \left( \rho^{\otimes n} \right) \geq
\frac{k_n}{n} \, \frac{1}{k_n} \, f \left( \sigma^{\otimes k_n} \right) + o ( 1 ).
\end{equation}
By taking the limit of $n \rightarrow \infty$, and using the fact that $f$ is regularisable
(which follows from property~\ref{item:SM6}) together with the definition of rate, we get
\begin{equation}
f^{\infty} \left( \rho \right) \geq R(\rho \rightarrow \sigma) \, f^{\infty} \left( \sigma \right).
\end{equation}
We can also consider the reverse transformation, mapping $n$ copies of the state $\sigma$
into $k'_n$ copies of $\rho$. Using the same steps used above, together with the fact that
the monotone $f$ is equal to zero over free states, property~\ref{item:SM2}, we can show that
\begin{equation}
f^{\infty} \left( \sigma \right) \geq R(\sigma \rightarrow \rho) \, f^{\infty} \left( \rho \right).
\end{equation}
If we now use the reversibility property, which implies $R(\sigma \rightarrow \rho) = \frac{1}{R(\rho \rightarrow \sigma)}$,
we find that
\begin{equation}
\frac{f^{\infty} \left( \rho \right)}{f^{\infty} \left( \sigma \right)}
\geq
R(\rho \rightarrow \sigma)
\geq 
\frac{f^{\infty} \left( \rho \right)}{f^{\infty} \left( \sigma \right)}
\end{equation}
which proves the lemma.
\end{proof}
Furthermore, we introduce a second small lemma, that can be found in
Ref.~\cite[Prop.~13]{donald_uniqueness_2002}, 
\begin{lem} \label{lem:add_regularised_mon}
Given a regularisable function $f : \SHn{n} \rightarrow \R$, the regularised version is extensive,
\begin{equation}
f^{\infty}(\rho^{\otimes k}) = k \, f^{\infty}(\rho) \ , \ \forall \, \rho \in \SH \ , \ \forall \, k \in \N.
\end{equation}
\end{lem}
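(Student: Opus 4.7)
The plan is essentially a change-of-variable argument in the definition of the regularisation. Starting from the definition
\begin{equation}
f^{\infty}(\rho^{\otimes k}) = \lim_{n \rightarrow \infty} \frac{f\left( (\rho^{\otimes k})^{\otimes n} \right)}{n},
\end{equation}
I would rewrite $(\rho^{\otimes k})^{\otimes n} = \rho^{\otimes kn}$ and then multiply and divide by $k$ inside the limit to obtain
\begin{equation}
f^{\infty}(\rho^{\otimes k}) = \lim_{n \rightarrow \infty} k \cdot \frac{f(\rho^{\otimes kn})}{kn} = k \cdot \lim_{n \rightarrow \infty} \frac{f(\rho^{\otimes kn})}{kn}.
\end{equation}

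The key observation at this point is that the sequence $\left\{ \frac{f(\rho^{\otimes m})}{m} \right\}_{m \in \N}$ converges by hypothesis (since $f$ is regularisable, the limit $f^{\infty}(\rho)$ exists and is finite). Any subsequence of a convergent sequence converges to the same limit, so the subsequence indexed by $m = kn$ satisfies
\begin{equation}
\lim_{n \rightarrow \infty} \frac{f(\rho^{\otimes kn})}{kn} = \lim_{m \rightarrow \infty} \frac{f(\rho^{\otimes m})}{m} = f^{\infty}(\rho).
\end{equation}
Combining the two displayed equations yields the claim $f^{\infty}(\rho^{\otimes k}) = k \, f^{\infty}(\rho)$.

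There is no real obstacle here beyond bookkeeping: the proof is essentially an elementary observation about limits of subsequences. The only mild subtlety worth emphasising is that one does \emph{not} need to invoke any additivity property of $f$ itself (such as \ref{item:SM5}); the extensivity is a property of the regularisation alone and relies solely on the existence of the defining limit. This is consistent with the fact that, in the main text, sub-additivity and sub-extensivity are imposed on $f$ precisely to guarantee that $f^{\infty}$ exists, after which extensivity of $f^{\infty}$ comes for free from this simple subsequence argument.
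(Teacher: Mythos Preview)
Your proof is correct and is essentially the same argument as the paper's: both reduce to the observation that the subsequence $\left\{ f(\rho^{\otimes kn})/(kn) \right\}_n$ of the convergent sequence $\left\{ f(\rho^{\otimes m})/m \right\}_m$ has the same limit $f^{\infty}(\rho)$. The paper merely dresses this up as an abstract change-of-variable statement (replacing $n$ by $g(m) = km$ inside a limit), but the content is identical to your direct subsequence argument.
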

\begin{proof}
Consider a function $h : \R \rightarrow \R$, such that $\lim_{n \rightarrow \infty} h(n) = L < \infty$.
This is equivalent to say that
\begin{equation} \label{limit_def}
\forall \, \epsilon > 0, \exists \, c \in \R \ : \ | h(n) - L | < \epsilon, \ \forall \, n > c.
\end{equation}
Let us now consider an invertible function $g : \R \rightarrow \R$, and consider $m \in \R$ such
that $n = g(m)$. Then, we can rewrite Eq.~\eqref{limit_def} as
\begin{equation}
\forall \, \epsilon > 0, \exists \, c \in \R \ : \ | h(g(m)) - L | < \epsilon, \ \forall \, g(m) > c,
\end{equation}
and by defining $\tilde{c} = g^{-1}(c)$, we get
\begin{equation}
\forall \, \epsilon > 0, \exists \, \tilde{c} \in \R \ : \ | h(g(m)) - L | < \epsilon, \ \forall \, m > \tilde{c}.
\end{equation}
Therefore, we have $\lim_{m \rightarrow \infty} h(g(m)) = L $.
\par
If we choose $h(n) = \frac{1}{n} f(\rho^{\otimes n})$, whose limit is $L = f^{\infty} (\rho)$,
and we use the reversible function $g(m) = k \cdot m$ where $k \in \N$ is fixed, we get
\begin{equation}
f^{\infty} (\rho) = \lim_{m \rightarrow \infty} \frac{1}{k \cdot m} f(\rho^{\otimes k \cdot m}) =
\frac{1}{k} \lim_{m \rightarrow \infty} \frac{1}{m} f( ( \rho^{\otimes k})^{\otimes m} ) =
\frac{1}{k} f^{\infty} (\rho^{\otimes k}),
\end{equation}
which proves the lemma.
\end{proof}
We can now show that a single-resource theory which is reversible also satisfies the
asymptotic equivalence property, and vice versa.
\begin{thm} \label{thm:reversible_asympt_equiv}
Consider the resource theory with allowed operations $\A$ and free states $\f$. If the theory is
reversible, then it satisfies the asymptotic equivalence property with respect to a function $f$
satisfying the properties~\ref{item:SM1} -- \ref{item:SM6}, and viceversa.
\end{thm}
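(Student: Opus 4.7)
The plan is to prove the two implications separately, using Lemma~\ref{lem:reversible_rate} as the main lever. For the direction where reversibility implies asymptotic equivalence, I would invoke Lemma~\ref{lem:reversible_rate} to obtain $R(\rho \to \sigma) = f^\infty(\rho)/f^\infty(\sigma)$ for all non-free $\rho, \sigma$. For the forward implication of Def.~\ref{def:asympt_equivalence_multi}, assume $f^\infty(\rho) = f^\infty(\sigma)$: both rates equal $1$, so Def.~\ref{def:rate_conversion} furnishes sequences mapping $\rho^{\otimes n}$ to $\sigma^{\otimes k_n}$ (modulo tracing out $n-k_n$ systems or appending $k_n - n$ free states) with $k_n/n \to 1$. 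I would promote these to asymptotic-equivalence maps $\rho^{\otimes n} \to \sigma^{\otimes n}$ by folding the $|n-k_n| = o(n)$ discrepancy into the sub-linear ancilla of Eq.~\eqref{allowed_ancilla}: in the deficit case the ancilla includes $\sigma^{\otimes (n-k_n)}$, whose $f$-value is $o(n)$ by sub-extensivity (\ref{item:SM6}); in the surplus case it includes $\gamma_{k_n-n}$, whose $f$-value vanishes by~\ref{item:SM2}, and the excess output systems are traced away. For the converse, applying $f$ to the given maps and chaining monotonicity (\ref{item:SM1}, \ref{item:SM4}), sub-additivity (\ref{item:SM5}), and the sub-extensive bound on the ancilla, followed by division by $n$, the limit $n \to \infty$, and asymptotic continuity (\ref{item:SM3}), delivers $f^\infty(\rho) \geq f^\infty(\sigma)$, and the reverse maps close the equality.

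For the direction where asymptotic equivalence implies reversibility, the upper bound $R(\rho \to \sigma) \leq f^\infty(\rho)/f^\infty(\sigma)$ follows verbatim from the first half of the proof of Lemma~\ref{lem:reversible_rate}, which uses only~\ref{item:SM1}--\ref{item:SM6} and not reversibility. To obtain the matching lower bound, I would approximate $R := f^\infty(\rho)/f^\infty(\sigma)$ from below by rationals $p/q$ and work in the $q$-copy extension of the resource theory, viewed as a resource theory on $\hil^{\otimes q}$ that inherits $f$ as a monotone. By Lemma~\ref{lem:add_regularised_mon}, the regularisation in the extended theory evaluates to $q f^\infty(\rho)$ on $\rho^{\otimes q}$ and to $p f^\infty(\sigma)$ on $\sigma^{\otimes p} \otimes \gamma_{q-p}$ (the latter via Eq.~\eqref{free_zero}); after a sub-linear adjustment that equalises these per-block values---absorbed into the ancilla allowed by Eq.~\eqref{allowed_ancilla}---asymptotic equivalence in the extended theory yields conversions which, unpacked to the original theory, read $\rho^{\otimes qN} \to \sigma^{\otimes pN}$ and give $R(\rho \to \sigma) \geq p/q$. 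Letting $p/q \to R$ closes the bound, and combining with the analogous inequality for the reverse rate gives $R(\rho \to \sigma)\,R(\sigma \to \rho) = 1$.

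The main obstacle is the lower bound just described, particularly when $R$ is irrational. Asymptotic equivalence only certifies conversions between states with exactly matching regularised monotones, whereas rates involve integer ratios $k_n/n$ that only approximate $R$. The adjustment argument must control the residual mismatch $q f^\infty(\rho) - p f^\infty(\sigma)$, which is bounded by a constant for each rational approximation, using only a sub-linear correction that is uniform in the number of blocks $N$ and whose per-copy contribution vanishes asymptotically---so that Def.~\ref{def:asympt_equivalence_multi} applies cleanly at the block level of the extended theory without spoiling the asymptotic rate in the original theory.
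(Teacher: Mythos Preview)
Your treatment of the direction \emph{reversible $\Rightarrow$ asymptotic equivalence} is essentially the paper's argument. The only cosmetic difference is that, in the case $k_n\leq n$, the paper absorbs $\rho^{\otimes(n-k_n)}$ into the ancilla and regards the transformation as $\rho^{\otimes k_n}\to\sigma^{\otimes k_n}$, whereas you absorb $\sigma^{\otimes(n-k_n)}$ and aim directly for $\rho^{\otimes n}\to\sigma^{\otimes n}$; both are valid once one checks the ancilla has $o(n)$ size and $o(n)$ monotone value. Your direct chain of inequalities for the converse of Def.~\ref{def:asympt_equivalence_multi} is also fine (the paper routes this through Lemma~\ref{lem:reversible_rate} instead, but the content is the same).

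For the direction \emph{asymptotic equivalence $\Rightarrow$ reversible}, your approach genuinely diverges from the paper's, and the obstacle you flag is real and unresolved in your sketch. For a fixed rational $p/q<R$ the per-block deficit $\delta:=q\,f^\infty(\rho)-p\,f^\infty(\sigma)$ is a strictly positive \emph{constant}, so over $N$ blocks the accumulated mismatch is $N\delta=\Theta(N)$. By~\ref{item:SM6} an ancilla on $a$ systems carries $f$-value $O(a)$, hence absorbing $N\delta$ requires $a=\Theta(N)$, which is not sub-linear in $N$ (nor in $Nq$, since $q$ is fixed). Good Diophantine approximations make $\delta$ small but do not make it vanish, and letting $q$ grow with $N$ destroys the block structure you rely on. So as written the lower bound $R(\rho\to\sigma)\geq p/q$ does not follow from asymptotic equivalence in the extended theory.

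The paper sidesteps this entirely. It assumes (via a footnote) that the ratio $f^\infty(\rho)/f^\infty(\sigma)$ is rational, picks $n,k$ with $n\,f^\infty(\rho)=k\,f^\infty(\sigma)$ exactly, and applies asymptotic equivalence to $\rho^{\otimes n}$ and $\sigma^{\otimes k}\otimes\gamma_{n-k}$ (using Eq.~\eqref{free_zero} and Lemma~\ref{lem:add_regularised_mon}). This yields $R(\rho\to\sigma)\geq k/n$ and, symmetrically, $R(\sigma\to\rho)\geq n/k$. The product is then $\geq 1$; the paper closes by the standard observation that the product of forward and backward rates cannot exceed $1$ (else a cycle would manufacture resource from free states), so the product equals $1$. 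Note this last step replaces your upper-bound-from-Lemma-\ref{lem:reversible_rate} argument: the paper never computes the rate exactly, only bounds the product from both sides. If you want to repair your route, the cleanest fix is to adopt the paper's exact-match construction under the rationality assumption and the product-$\leq 1$ closing; handling genuinely irrational ratios would require an additional continuity argument that neither your sketch nor the paper supplies.
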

\begin{proof}
{\bf (a)} Let us first assume that the theory is reversible. Then, if we consider two non-free states $\rho,
\sigma \in \SH$ such that $f^{\infty}(\rho) = f^{\infty}(\sigma)$, and we use Lem.~\ref{lem:reversible_rate},
we find that the rate of conversion is $R(\rho \rightarrow \sigma) = 1$. Then, there exists a sequence
of operations  $\left\{ \tilde{\chn}_n \right\}$ that approach this limit in one of two ways. In one case,
we have
\begin{equation}
\left\| \Tr{n - k_n}{\tilde{\chn}_n (\rho^{\otimes n})} - \sigma^{\otimes k_n} \right\|_1 \rightarrow 0.
\end{equation}
Notice that, since we have $\frac{k_n}{n} \rightarrow 1$, it follows that $n - k_n = o(n)$. Then, the above
equation coincides with the second part of Def.~\ref{def:asympt_equivalence_multi}, where we are
mapping $\rho^{\otimes k_n}$ into $\sigma^{\otimes k_n}$, and the sub-linear ancilla is $\eta'^{(A)}_n
= \eta^{(A)}_n \otimes \rho^{n - k_n}$, where $\eta^{(A)}_n$ is completely arbitrary, and come from the
definition of $\tilde{\chn}_n$. Alternatively, we can have that the sequence of maps is such that
\begin{equation}
\left\|
\tilde{\chn}_{k_n} (\rho^{\otimes n} \otimes \gamma_{k_n - n} )
-
\sigma^{\otimes k_n}
\right\|_1 \rightarrow 0.
\end{equation}
We now use the monotonicity of the trace distance under discarding subsystems to obtain
\begin{equation}
\left\| \Tr{k_n - n}{\tilde{\chn}_{k_n} (\rho^{\otimes n} \otimes \gamma_{k_n - n} )} - \sigma^{\otimes n} \right\|_1 \rightarrow 0.
\end{equation}
Again, the above equation coincides with the second part of Def.~\ref{def:asympt_equivalence_multi}, where
we are mapping $\rho^{\otimes n}$ into $\sigma^{\otimes n}$, and the sub-linear ancilla is $\eta'^{(A)}_n
= \eta^{(A)}_n \otimes \gamma_{k_n - n}$. This proves the validity of one direction of the asymptotic
equivalence property. To prove the other direction (existence of a sequence of maps implies same value
of the monotone on the two states), we can use the fact that, if there exists a sequence of maps
$\left\{ \tilde{\chn}_n \right\}$ sending $\rho^{\otimes n}$ into $\sigma^{\otimes n}$, then the rate
of conversion is $R(\rho \rightarrow \sigma) = 1$. Then, with the help of Lem.~\ref{lem:reversible_rate},
which is valid for reversible theories, we obtain that $f^{\infty}(\rho) = f^{\infty}(\sigma)$. This proves the
other direction of the asymptotic equivalence property.
\par
{\bf (b)} Let now assume that the theory satisfies the asymptotic equivalence property. Consider any two
non-free states $\rho, \sigma \in \SH$, and suppose that $f^{\infty}(\rho) \leq f^{\infty}(\sigma)$ (in
the other case, the proof would follow analogously to the one we are presenting). Take $n,k \in \N$
such that $n \, f^{\infty}(\rho) = k \, f^{\infty}(\sigma)$, and let us use the extensivity of $f^{\infty}$,
Lem.~\ref{lem:add_regularised_mon}. Then, we have $f^{\infty}(\rho^{\otimes n}) = f^{\infty}
(\sigma^{\otimes k})$. Using the property of the function $f$ shown in Eq.~\eqref{free_zero},
we find that
\begin{equation}
f^{\infty}(\rho^{\otimes n}) = f^{\infty}(\sigma^{\otimes k} \otimes \gamma_{n - k}),
\end{equation}
where we add the free state $\gamma_{n - k} \in \f^{(n - k)}$ to the right hand side since $n \geq k$.
Then, we can use the asymptotic equivalence property, which implies the existence of a sequence of
maps $\left\{ \tilde{\chn}_{m \cdot n} \right\}_m$, see Eq.~\ref{allowed_ancilla}, such that
\begin{equation}
\lim_{m \rightarrow \infty}
\left\| \chn_{m \cdot n} (\rho^{\otimes m \cdot n}) -
\sigma^{\otimes m \cdot k} \otimes \gamma_{n - k}^{\otimes m} \right\|_1 = 0.
\end{equation}
If we use the monotonicity of the trace distance under partial tracing, we find that
\begin{equation}
\lim_{m \rightarrow \infty}
\left\| \Tr{m \cdot (n-k)}{\chn_{m \cdot n} (\rho^{\otimes m \cdot n})} -
\sigma^{\otimes m \cdot k} \right\|_1 = 0.
\end{equation}
The existence of this sequence of maps implies that the rate of conversion $R(\rho \rightarrow \sigma)
\geq \frac{k}{n}$. At the same time, we can use asymptotic equivalence to find a sequence of maps
$\left\{ \tilde{\chn}'_{m \cdot n} \right\}_m$ performing the reverse process. Using a similar argument
to the one presented above, we find that $R(\sigma \rightarrow \rho) \geq \frac{n}{k}$. As a result, we
find that the product of the forward and reverse rates of conversion is $R(\rho \rightarrow \sigma)
R(\sigma \rightarrow \rho) \geq 1$. However, this product cannot be higher than one, as otherwise
we would be able to perform a cyclic transformation turning free states intro resourceful one, which
is forbidden under allowed operations, see also Ref.~\cite{popescu_thermodynamics_1997}.
Therefore, we find that $R(\rho \rightarrow \sigma) R(\sigma \rightarrow \rho) = 1$, which closes the
proof.
\end{proof}
\section{Convex boundary and bank states}
\label{convex_bound}
In the following, we consider the case of a two-resource theory $\rt_{\text{multi}}$ defined on the Hilbert
space $\hil$. The set of allowed operations is $\A_{\text{multi}} = \A_1 \cap \A_2$, where each $\A_i$ is
a subset of the set of all CPTP maps that leave the set of states $\f_i$ invariant, $i = 1, 2$. We ask the
resource theory $\rt_{\text{multi}}$ to satisfy the asymptotic equivalence property with respect to the
monotones $E_{\f_1}$ and $E_{\f_2}$. Furthermore, we assume that the two invariant sets satisfy the
properties~\ref{item:F1}--\ref{item:F5}. Thus, it follows
from Thm.~\ref{thm:reversible_multi} and Prop.~\ref{thm:properties_rel_ent} that the two monotones
$E^{\infty}_{\f_1}$ and $E^{\infty}_{\f_2}$ uniquely quantify the resources in this theory. As a result,
we can represent the state-space of $\rt_{\text{multi}}$ in a two-dimensional diagram, as shown in
Fig.~\ref{fig:resource_diagram}.
\par
We choose the two invariant sets of the theory to be disjoints, i.e., $\f_1 \cap \f_2 = \emptyset$, and
we focus on the set of bank states $\f_{\text{bank}} \subset \SH$. Since in this section we are
not making any assumption on the additivity (or extensivity) of the monotones $E_{\f_i}$'s, we have
that the set of bank states is here defined as
\begin{align}
\label{eq:bank_reg}
\f_{\text{bank}} = \big\{ \rho \in \SH \ | \ \forall \, \sigma \in \SH , \
&E^{\infty}_{\f_1}(\sigma) > E^{\infty}_{\f_1}(\rho) \ \text{or} \nonumber \\
&E^{\infty}_{\f_2}(\sigma) > E^{\infty}_{\f_2}(\rho) \ \text{or} \nonumber \\
&E^{\infty}_{\f_1}(\sigma) = E^{\infty}_{\f_1}(\rho) \, \text{and} \, E^{\infty}_{\f_2}(\sigma) = E^{\infty}_{\f_2}(\rho) \, \big\}.
\end{align}
Notice that this set coincides with the one of Eq.~\eqref{set_f3} when property~\ref{item:F5b} is
satisfied, and therefore the results we obtain in this appendix apply to Sec.~\ref{interconv} as well.
It is easy to show that $E^{\infty}_{\f_2}(\rho) > E^{\infty}_{\f_2}(\f_2) = 0 \ \forall \, \rho \in \f_1$,
and similarly $E^{\infty}_{\f_1}(\rho) > E^{\infty}_{\f_1}(\f_1) = 0 \ \forall \, \rho \in \f_2$. Moreover,
inside both invariant sets $\f_1$ and $\f_2$ we can find a subset of states with minimum value of the monotones
$E^{\infty}_{\f_2}$ and $E^{\infty}_{\f_1}$, respectively. We define these sets as
\begin{align}
\label{f_1_min}
\f_{1, \min} &= \left\{ \sigma \in \f_1 \, | \, E^{\infty}_{\f_2}(\sigma) = \min_{\rho \in \f_1} E^{\infty}_{\f_2}(\rho) \right\}
\subseteq \f_1, \\
\label{f_2_min}
\f_{2, \min} &= \left\{ \sigma \in \f_2 \, | \, E^{\infty}_{\f_1}(\sigma) = \min_{\rho \in \f_2} E^{\infty}_{\f_1}(\rho) \right\}
\subseteq \f_2.
\end{align}
Given these two subsets, we can then define the following real intervals,
\begin{align}
\label{interval_1}
I_1 &= \left[ E^{\infty}_{\f_1}(\f_1) = 0 \, ; \, E^{\infty}_{\f_1}(\f_{2, \min}) \right], \\
\label{interval_2}
I_2 &= \left[ E^{\infty}_{\f_2}(\f_2) = 0 \, ; \, E^{\infty}_{\f_2}(\f_{1, \min}) \right].
\end{align}
In what follows, we make use of the following two properties of the monotones $E^{\infty}_{\f_i}$'s,
\begin{itemize}
\item \emph{Asymptotic continuity}, which follows from the assumptions~\ref{item:F1}--\ref{item:F5} over the sets $\f_i$'s,
as shown in Refs.~\cite{christandl_structure_2006,brandao_reversible_2015}.
\item \emph{Convexity}, which follows from the assumptions~\ref{item:F2} and \ref{item:F4} over the sets $\f_i$'s,
as shown in Ref.~\cite{donald_uniqueness_2002}, Prop.~13.
\end{itemize}
We can now state the following lemma, concerning the value of the monotones $E^{\infty}_{\f_i}$'s for bank states.
\par
\begin{lem}
\label{lem:closed_int}
Consider the multi-resource theory $\rt_{\text{multi}}$ with allowed operations $\A_{\text{multi}}$, and
invariant sets $\f_1$ and $\f_2$ which satisfy properties~\ref{item:F1}--\ref{item:F5}, and $\f_1 \cap \f_2
= \emptyset$. If the theory satisfies the asymptotic equivalence property with respect to the monotones
$E_{\f_1}$ and $E_{\f_2}$, then for all bank states $\rho \in \f_{\text{bank}}$ we have that
$E^{\infty}_{\f_1}(\rho) \in I_1$ and $E^{\infty}_{\f_2}(\rho) \in I_2$.
\end{lem}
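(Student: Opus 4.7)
The plan is to reduce the lemma to the Pareto-optimality condition encoded in Eq.~\eqref{eq:bank_reg}. First I observe that the lower bounds $E^{\infty}_{\f_1}(\rho)\geq 0$ and $E^{\infty}_{\f_2}(\rho)\geq 0$ are immediate, since both monotones inherit non-negativity from the relative entropy. Only the upper bounds $E^{\infty}_{\f_1}(\rho)\leq E^{\infty}_{\f_1}(\f_{2,\min})$ and $E^{\infty}_{\f_2}(\rho)\leq E^{\infty}_{\f_2}(\f_{1,\min})$ require work, and by symmetry it suffices to establish the first.

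I would proceed by contradiction. Assume $\rho\in\f_{\text{bank}}$ but $E^{\infty}_{\f_1}(\rho)>E^{\infty}_{\f_1}(\f_{2,\min})$, and pick any $\sigma\in\f_{2,\min}\subseteq\f_2$. By definition of $\f_{2,\min}$ in Eq.~\eqref{f_2_min}, we have $E^{\infty}_{\f_1}(\sigma)=E^{\infty}_{\f_1}(\f_{2,\min})<E^{\infty}_{\f_1}(\rho)$, and because $\sigma\in\f_2$ we have $E^{\infty}_{\f_2}(\sigma)=0\leq E^{\infty}_{\f_2}(\rho)$. Now I test $\sigma$ against the three alternatives defining a bank state in Eq.~\eqref{eq:bank_reg}: the first fails because the strict inequality on $E^{\infty}_{\f_1}$ runs in the opposite direction; the second fails because $E^{\infty}_{\f_2}(\sigma)\leq E^{\infty}_{\f_2}(\rho)$ without strictness; and the third (joint equality) fails because the values of $E^{\infty}_{\f_1}$ already differ. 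Hence $\rho$ cannot be a bank state, contradicting the hypothesis.

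A small technical point to address is that $\f_{2,\min}$ is non-empty, so that $\sigma$ can actually be chosen. This follows from property~\ref{item:F1} (which makes $\f_2$ closed, hence compact inside the compact state space) together with the continuity of $E^{\infty}_{\f_1}$ on $\f_2$ — a consequence of its asymptotic continuity established in Prop.~\ref{thm:properties_rel_ent}. The symmetric argument, obtained by swapping the indices $1\leftrightarrow 2$ and using $\f_{1,\min}$ together with a witness $\sigma'\in\f_{1,\min}\subseteq\f_1$ with $E^{\infty}_{\f_1}(\sigma')=0$, yields the upper bound for the second monotone.

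I do not anticipate a genuine obstacle: the proof is essentially a one-line Pareto-dominance argument, and the disjointness hypothesis $\f_1\cap\f_2=\emptyset$ is not strictly required for the logic — it only ensures that the upper endpoints of $I_1$ and $I_2$ are strictly positive, so that the intervals are non-degenerate and the conclusion non-trivial. The only conceptual care is to verify that the edge case $E^{\infty}_{\f_2}(\rho)=0$ still leads to a contradiction, which it does because the strict inequality on the $E^{\infty}_{\f_1}$ component breaks the joint-equality alternative.
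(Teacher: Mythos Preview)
Your proposal is correct and follows essentially the same approach as the paper: argue by contradiction, take a witness $\sigma\in\f_{2,\min}$ (respectively $\f_{1,\min}$), and observe that the resulting inequalities $E^{\infty}_{\f_1}(\sigma)<E^{\infty}_{\f_1}(\rho)$ and $E^{\infty}_{\f_2}(\sigma)\leq E^{\infty}_{\f_2}(\rho)$ violate the Pareto-optimality defining $\f_{\text{bank}}$. Your write-up is simply more explicit than the paper's terse version --- you spell out the lower bound, the non-emptiness of $\f_{2,\min}$, and the case-by-case check of the three alternatives --- but the underlying argument is identical.
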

\begin{proof}
Suppose, for example, that there exists a bank state $\rho \in \f_{\text{bank}}$ such that
$E^{\infty}_{\f_1}(\rho) \notin I_1$, that is, $\exists \, \sigma \in \f_{2, \min}$ such that
$E^{\infty}_{\f_1}(\sigma) < E^{\infty}_{\f_1}(\rho)$. By definition of $\f_2$ we also have
that $E^{\infty}_{\f_2}(\sigma) \leq E^{\infty}_{\f_2}(\rho)$. These two inequalities, however,
contradict the fact that $\rho$ is a bank state, see Eq.~\eqref{eq:bank_reg}, and conclude
the proof.
\end{proof}
It is easy to show that for all $\bar{E}_{\f_1} \in I_1$ there exists (at least) one state $\rho \in \SH$
such that $E^{\infty}_{\f_1}(\rho) = \bar{E}_{\f_1}$, and the same holds for $I_2$. The proof that
$\forall \, \bar{E}_{\f_1} \in I_1, \ \exists \, \rho \in \SH \, : \, E^{\infty}_{\f_1}(\rho) = \bar{E}_{\f_1}$
follows from two facts: ({\it i}) $\SH$ is a compact and path-connected set, and therefore its image
under the (asymptotic) continuous function $E^{\infty}_{\f_1}$ is a compact and path-connected set
in $\R$, that is, a closed and bounded interval $I_{1,\SH}$, and ({\it ii}) $I_1 \subseteq I_{1,\SH}$.
\par
Let us now define, in the $E^{\infty}_{\f_1}$--$E^{\infty}_{\f_2}$ diagram, the curve of bank states,
which lies on part of the boundary of the state-space, as per definition in Eq.~\eqref{eq:bank_reg}.
The curve is defined as
\begin{equation}
\label{curve_bank}
\gamma_{\text{bank}} = \left\{ \left( E^{\infty}_{\f_1}(\rho) , E^{\infty}_{\f_2}(\rho) \right) \ | \ \rho \in \f_{\text{bank}} \right\},
\end{equation}
where $\f_{\text{bank}}$ is the set of bank states of the theory. It is easy to see that this curve is
completely contained within the subset of $\R^2$ given by $I_1 \times I_2$. Together with this curve,
we can introduce the real-valued function $c_{\text{bank}} \, : \, I_1 \rightarrow I_2$, defined as
\begin{equation}
\label{bank_function}
c_{\text{bank}}(E_{\f_1}) =
\text{if} \, \left( \exists \, P \in \gamma_{\text{bank}} \ \text{such that} \ P[0] = E_{\f_1} \right) \ \text{return} \ P[1].
\end{equation}
Essentially, this function checks the first element of the tuples in $\gamma_{\text{bank}}$,
and returns the second element of the tuple whose first element is equal to $E_{\f_1}$.
Since $I_1$ is a closed interval in $\R$, we have that for all $E_{\f_1} \in I_1$, the
function $c_{\text{bank}}$ is well-defined. See Fig.~\ref{fig:bank_curve} for the
representation of the above curve of bank states in the resource diagram of the theory.
\begin{figure}[t!]
\center
\includegraphics[width=0.5\textwidth]{./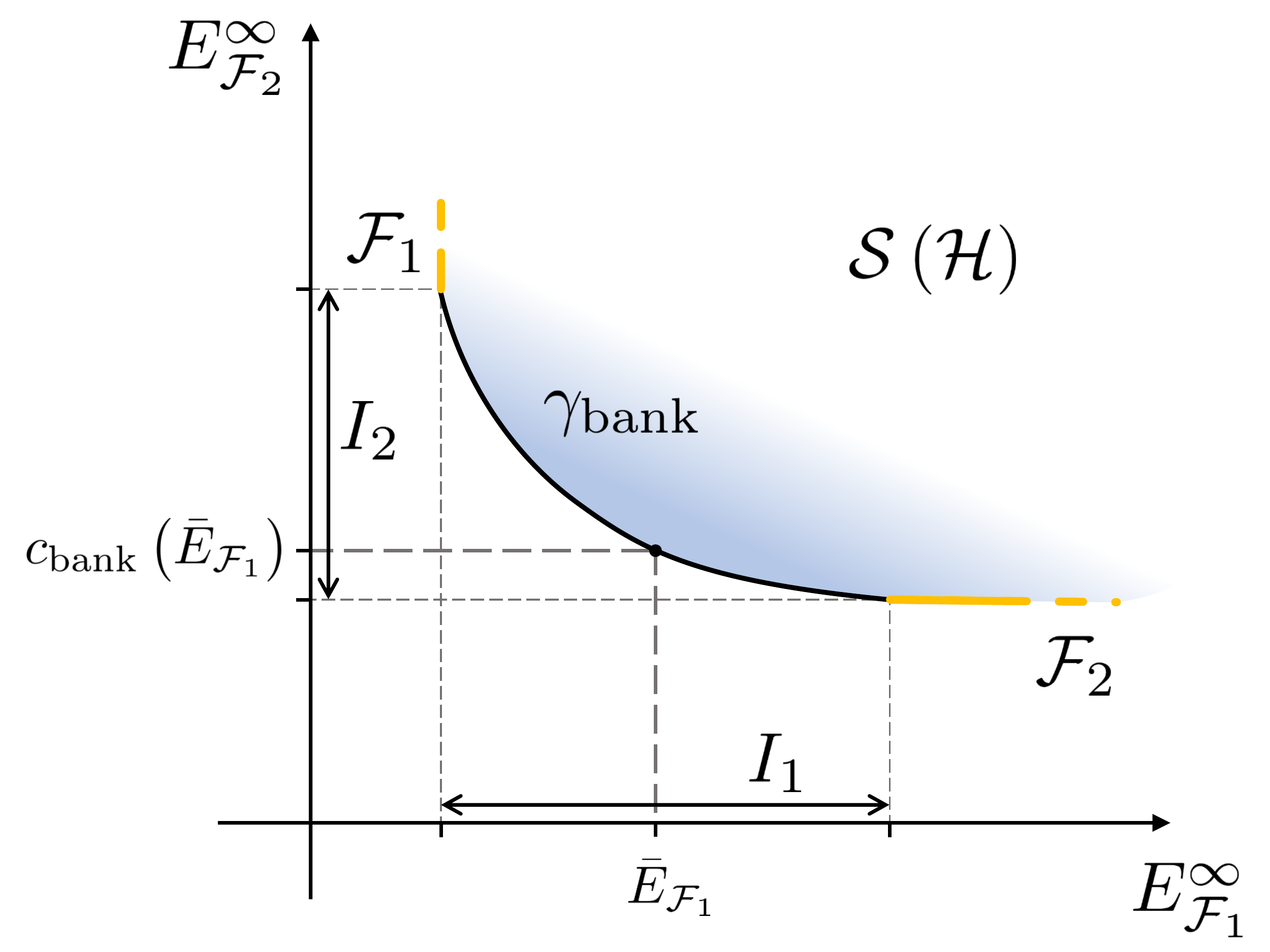}
\caption{We represent part of the state-space $\SH$ in the $E^{\infty}_{\f_1}$--$E^{\infty}_{\f_2}$ diagram. In
the figure, the green segment is the invariant set $\f_1$, the yellow one is $\f_2$, and the black
curve connecting these two segments is $\gamma_{\text{bank}}$, the curve of bank states of
the theory, see Eq.~\eqref{curve_bank}. On the $E^{\infty}_{\f_1}$-axis we highlight the interval $I_1$
defined in Eq.~\eqref{interval_1}, and similarly for the interval $I_2$ on the $E^{\infty}_{\f_2}$-axis. Furthermore,
the action of the function $c_{\text{bank}} : I_1 \rightarrow I_2$, defined in Eq.~\eqref{bank_function},
is shown for the input value $\bar{E}_{\f_1}$.}
\label{fig:bank_curve}
\end{figure}
\par
We will now prove the following two propositions, which assure that the monotone
$f_{\text{bank}}^{\bar{E}_{\f_1},\bar{E}_{\f_2}}$ of Eq.~\eqref{f3_monotone} satisfies
the property~\ref{item:B2}. This first proposition essentially tells us that the function
$c_{\text{bank}}$ is monotonic decreasing.
\begin{prop}
\label{monotone_curve}
For all $P_A, P_B \in \gamma_{\text{bank}}$, where $P_A = \left( E_{\f_1}^{(A)} , E_{\f_2}^{(A)} \right)$
and $P_B = \left( E_{\f_1}^{(B)} , E_{\f_2}^{(B)} \right)$, we have that
\begin{equation}
E_{\f_1}^{(A)} < E_{\f_1}^{(B)} \Leftrightarrow E_{\f_2}^{(A)} > E_{\f_2}^{(B)}.
\end{equation}
\end{prop}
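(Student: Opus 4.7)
The proof is essentially a direct unpacking of the definition of $\f_{\text{bank}}$ from Eq.~\eqref{eq:bank_reg}. The plan is to pick representative states $\rho_A, \rho_B \in \f_{\text{bank}}$ whose images in the resource diagram are the points $P_A, P_B \in \gamma_{\text{bank}}$, and then exploit the fact that \emph{both} states are required to satisfy the bank condition. The key idea is that the three alternatives in the definition are exhaustive, so ruling out two forces the third.

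First I would establish the forward direction: assume $E^{(A)}_{\f_1} < E^{(B)}_{\f_1}$ and show $E^{(A)}_{\f_2} > E^{(B)}_{\f_2}$. Since $\rho_B \in \f_{\text{bank}}$, applying Eq.~\eqref{eq:bank_reg} to $\rho_B$ with the test state $\sigma = \rho_A$ gives three possibilities: $E^{\infty}_{\f_1}(\rho_A) > E^{\infty}_{\f_1}(\rho_B)$, or $E^{\infty}_{\f_2}(\rho_A) > E^{\infty}_{\f_2}(\rho_B)$, or equality of both monotones on $\rho_A$ and $\rho_B$. The first option is directly excluded by our assumption that $E^{(A)}_{\f_1} < E^{(B)}_{\f_1}$, and the third option is likewise excluded because the two values of $E^{\infty}_{\f_1}$ are strictly different. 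Consequently the middle option must hold, giving $E^{(A)}_{\f_2} > E^{(B)}_{\f_2}$, as desired.

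The reverse direction follows by swapping the roles of $E_{\f_1}$ and $E_{\f_2}$ in the argument, since the bank-state condition Eq.~\eqref{eq:bank_reg} is symmetric in the two monotones. To make the equivalence fully clean, I would also note the contrapositive edge case: if $E^{(A)}_{\f_1} = E^{(B)}_{\f_1}$, then the only clause of Eq.~\eqref{eq:bank_reg} compatible with $\rho_B \in \f_{\text{bank}}$ applied to $\sigma = \rho_A$ is the equality clause, forcing $E^{(A)}_{\f_2} = E^{(B)}_{\f_2}$ as well, and vice versa. This confirms that strict inequalities on one axis correspond to strict inequalities on the other, yielding the stated equivalence.

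There is no real obstacle here: the entire argument is a two-line case analysis on the exhaustive trichotomy in the definition of $\f_{\text{bank}}$. The only subtlety worth flagging is that one must apply the bank-state hypothesis to \emph{each} of $\rho_A$ and $\rho_B$ in turn (not just to one of them), since the definition is not symmetric in the roles of $\rho$ and $\sigma$ within a single invocation.
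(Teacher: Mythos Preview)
Your proposal is correct and follows essentially the same argument as the paper: pick representative bank states $\rho_A,\rho_B$, apply the trichotomy in the definition of $\f_{\text{bank}}$ to $\rho_B$ with test state $\sigma=\rho_A$, rule out the first and third clauses by the strict-inequality hypothesis, and conclude the second clause holds; the reverse direction and the equality edge case are handled symmetrically. Your remark that one must invoke the bank condition for \emph{both} $\rho_A$ and $\rho_B$ to nail down the equality case is a small clarification beyond what the paper makes explicit, but otherwise the two proofs coincide.
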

\begin{proof}
We prove the propositions in a single direction, as the other follows in analogue manner.
Suppose that $E_{\f_1}^{(A)} < E_{\f_1}^{(B)}$, and consider the states $\rho_A$, $\rho_B \in
\f_{\text{bank}}$ such that $E^{\infty}_{\f_1}(\rho_A) = E_{\f_1}^{(A)}$, and $E^{\infty}_{\f_1}(\rho_B) = E_{\f_1}^{(B)}$.
Since $\rho_B$ belongs to the set of bank states, we have that one of the following conditions,
see Eq.~\eqref{set_f3}, has to be satisfied for all states $\sigma \in \SH$,
\begin{enumerate}
\item $E^{\infty}_{\f_1}(\sigma) > E^{\infty}_{\f_1}(\rho_B)$.
\item $E^{\infty}_{\f_2}(\sigma) > E^{\infty}_{\f_2}(\rho_B)$.
\item $E^{\infty}_{\f_1}(\sigma) = E^{\infty}_{\f_1}(\rho_B)$ and $E^{\infty}_{\f_2}(\sigma) = E^{\infty}_{\f_2}(\rho_B)$.
\end{enumerate}
Let us then take $\sigma = \rho_A$. In this case, options 1 and 3 are not possible, since they
contradict the hypothesis. Therefore, option 2 has to be valid, which implies that
$E^{\infty}_{\f_2}(\rho_A) > E^{\infty}_{\f_2}(\rho_B)$. In a similar manner, if $E_{\f_1}^{(A)} = E_{\f_1}^{(B)}$,
the only possible option for $\rho_B$ would have been $E^{\infty}_{\f_2}(\rho_A) = E^{\infty}_{\f_2}(\rho_B)$,
which concludes the proof.
\end{proof}
The second propositions tells us, instead, that the function $c_{\text{bank}}$ is convex.
\begin{prop}
\label{convex_curve}
For all $P_A, P_B \in \gamma_{\text{bank}}$, where $P_A = \left( E_{\f_1}^{(A)} , E_{\f_2}^{(A)} \right)$
and $P_B = \left( E_{\f_1}^{(B)} , E_{\f_2}^{(B)} \right)$, and for all $\lambda \in [0,1]$, there exists a
$P_C \in \gamma_{\text{bank}}$, where $P_C = \left( E_{\f_1}^{(C)} , E_{\f_2}^{(C)} \right)$, such that
\begin{align}
E_{\f_1}^{(C)} &= \lambda \, E_{\f_1}^{(A)} + ( 1 - \lambda ) \, E_{\f_1}^{(B)}, \\
E_{\f_2}^{(C)} &\leq \lambda \, E_{\f_2}^{(A)} + ( 1 - \lambda ) \, E_{\f_2}^{(B)}
\end{align}
\end{prop}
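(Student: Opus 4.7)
Without loss of generality, invoking Prop.~\ref{monotone_curve}, I will assume $E_{\f_1}^{(A)} \leq E_{\f_1}^{(B)}$, which by the monotonicity of the bank curve forces $E_{\f_2}^{(A)} \geq E_{\f_2}^{(B)}$. The plan is to construct an explicit witness state in $\SH$ whose first regularised monotone equals $E_{\f_1}^{(C)} := \lambda\, E_{\f_1}^{(A)} + (1-\lambda)\, E_{\f_1}^{(B)}$ exactly, and whose second regularised monotone lies at or below the chord value $E_{\f_2}^{\text{(chord)}} := \lambda\, E_{\f_2}^{(A)} + (1-\lambda)\, E_{\f_2}^{(B)}$. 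The desired $P_C$ will then be read off from $c_{\text{bank}}$ at the first coordinate $E_{\f_1}^{(C)}$.

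For the witness I will use the convex path $\rho_t := t\,\rho_A + (1-t)\,\rho_B \in \SH$, $t \in [0,1]$, where $\rho_A, \rho_B \in \f_{\text{bank}}$ realise $P_A, P_B$. Two facts about the regularised monotones are needed: continuity of $E^{\infty}_{\f_i}$ on $\SH$, which follows from asymptotic continuity (Prop.~\ref{thm:properties_rel_ent}); and convexity of $E^{\infty}_{\f_i}$ on $\SH$, which propagates from convexity of the single-copy $E_{\f_i}$ through sub-additivity and closure under tensor product (properties~\ref{item:F2} and \ref{item:F4}), along the lines of Refs.~\cite{donald_uniqueness_2002,brandao_generalization_2010}. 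Continuity of $t \mapsto E^{\infty}_{\f_1}(\rho_t)$, combined with its boundary values $E_{\f_1}^{(B)}$ at $t=0$ and $E_{\f_1}^{(A)}$ at $t=1$, gives via the intermediate value theorem some $t^\ast \in [0,1]$ with $E^{\infty}_{\f_1}(\rho_{t^\ast}) = E_{\f_1}^{(C)}$.

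Next I will bound the second coordinate. Convexity of $E^{\infty}_{\f_1}$ at $\rho_{t^\ast}$ gives $E_{\f_1}^{(C)} \leq t^\ast E_{\f_1}^{(A)} + (1-t^\ast) E_{\f_1}^{(B)}$, which rearranges to $(\lambda - t^\ast)(E_{\f_1}^{(A)} - E_{\f_1}^{(B)}) \leq 0$; since $E_{\f_1}^{(A)} \leq E_{\f_1}^{(B)}$, this forces $t^\ast \leq \lambda$. Convexity of $E^{\infty}_{\f_2}$ then yields $E^{\infty}_{\f_2}(\rho_{t^\ast}) \leq t^\ast E_{\f_2}^{(A)} + (1-t^\ast) E_{\f_2}^{(B)}$, and because $E_{\f_2}^{(A)} - E_{\f_2}^{(B)} \geq 0$ together with $t^\ast \leq \lambda$ this upper bound is itself at most $E_{\f_2}^{\text{(chord)}}$. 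Hence $\rho_{t^\ast}$ is the desired witness, and by the definition of the bank curve in Eq.~\eqref{bank_function} the point $P_C := (E_{\f_1}^{(C)}, c_{\text{bank}}(E_{\f_1}^{(C)})) \in \gamma_{\text{bank}}$ has second coordinate at most $E^{\infty}_{\f_2}(\rho_{t^\ast}) \leq E_{\f_2}^{\text{(chord)}}$, which is the claimed inequality.

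The main obstacle I anticipate is the convexity of the \emph{regularised} monotones $E^{\infty}_{\f_i}$: single-copy convexity of $E_{\f_i}$ is immediate from property~\ref{item:F2} and joint convexity of the relative entropy, but lifting it to the regularisation is delicate because $(t \rho_A + (1-t)\rho_B)^{\otimes n}$ does not factor as $t\, \rho_A^{\otimes n} + (1-t)\,\rho_B^{\otimes n}$. The cleanest route is to expand $\rho_t^{\otimes n}$ via the multinomial theorem, apply convexity of $E_{\f_i}$ term-by-term, and use sub-additivity on product states (from property~\ref{item:F4}) to bound each permuted factor $\rho_A^{\otimes k} \otimes \rho_B^{\otimes(n-k)}$ by $k\, E_{\f_i}(\rho_A) + (n-k)\, E_{\f_i}(\rho_B)$, yielding a per-copy estimate that survives the $n \to \infty$ limit once $\rho_A, \rho_B$ are themselves replaced by many-copy blocks to convert the right-hand side into $E^{\infty}_{\f_i}$.
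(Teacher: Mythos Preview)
Your proof is correct and rests on the same two pillars as the paper's: convexity of the regularised monotones $E^{\infty}_{\f_1}$ and $E^{\infty}_{\f_2}$ (which the paper simply imports from Ref.~\cite{donald_uniqueness_2002}, Prop.~13, so your worry in the final paragraph is legitimate but already addressed in the literature), together with the bank-state definition to pass from a convex-combination witness in $\SH$ to the actual bank point $P_C$ at abscissa $E_{\f_1}^{(C)}$. The only organisational difference is a detour: you invoke the intermediate value theorem to locate $t^\ast$ with $E^{\infty}_{\f_1}(\rho_{t^\ast}) = E_{\f_1}^{(C)}$ exactly, and then deduce $t^\ast \leq \lambda$ to transport the bound on $E^{\infty}_{\f_2}$. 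The paper avoids this by taking the fixed $\lambda$-mixture $\sigma := \lambda\rho_A + (1-\lambda)\rho_B$ directly as the witness: convexity of $E^{\infty}_{\f_1}$ gives only $E^{\infty}_{\f_1}(\sigma) \leq E_{\f_1}^{(C)}$, but since $\rho_C$ is a bank state, the defining trichotomy in Eq.~\eqref{eq:bank_reg} already forces $E^{\infty}_{\f_2}(\rho_C) \leq E^{\infty}_{\f_2}(\sigma)$, which convexity of $E^{\infty}_{\f_2}$ then bounds by the chord value. Your route works, but the IVT step and the $t^\ast \leq \lambda$ comparison are dispensable once one notices that the bank definition handles the inequality $E^{\infty}_{\f_1}(\sigma) \leq E_{\f_1}^{(C)}$ just as well as equality.
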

\begin{proof}
Let us consider, without losing in generality, that $E_{\f_1}^{(A)} < E_{\f_1}^{(B)}$, and take
$\rho_C \in \f_{\text{bank}}$ such that $E^{\infty}_{\f_1}(\rho_C) =  \lambda \, E_{\f_1}^{(A)} +
( 1 - \lambda ) \, E_{\f_1}^{(B)}$. This state always exists since $I_1$ is a closed interval
(and therefore is path-connected). Let us now define $\rho_A$, $\rho_B \in \f_{\text{bank}}$
such that $E^{\infty}_{\f_1}(\rho_A) = E_{\f_1}^{(A)}$, and $E^{\infty}_{\f_1}(\rho_B) = E_{\f_1}^{(B)}$. By
convexity of the regularised relative entropy distance $E^{\infty}_{\f_1}$, it follows that
\begin{equation}
E^{\infty}_{\f_1}(\rho_C) = \lambda \, E_{\f_1}^{(A)} + ( 1 - \lambda ) \, E_{\f_1}^{(B)} \geq
E^{\infty}_{\f_1} \left( \lambda \, \rho_A + ( 1 - \lambda ) \, \rho_B \right).
\end{equation}
Then, it is easy to show that
\begin{equation}
E^{\infty}_{\f_2}(\rho_C) \leq E^{\infty}_{\f_2} \left( \lambda \, \rho_A + ( 1 - \lambda ) \, \rho_B \right)
\leq \lambda \, E_{\f_2}^{(A)} + ( 1 - \lambda ) \, E_{\f_2}^{(B)},
\end{equation}
where the first inequality follows from Prop.~\ref{monotone_curve}, and the second
one from the convexity of $E^{\infty}_{\f_2}$. Since $\rho_C \in \f_{\text{bank}}$, the point
$P_C = \left(E^{\infty}_{\f_1}(\rho_C), E^{\infty}_{\f_2}(\rho_C) \right)$ is a point on the curve
$\gamma_{\text{bank}}$.
\end{proof}
It is easy to see that the above propositions imply that $c_{\text{bank}}$ is (strictly)
monotonic decreasing, and convex. Since this function is defined on the closed interval
$I_1 \in \R$, we have that $c_{\text{bank}}$ is continuous (except, maybe, at its endpoints).
Therefore, we can always define the monotone $f_{\text{bank}}^{\bar{E}_{\f_1},
\bar{E}_{\f_2}}$ of Eq.~\eqref{f3_monotone}, and it always satisfies condition~\ref{item:B2}.
Finally, it is worth noticing that all the results apply if one (or both) the monotones are of the
form of Eq.~\eqref{montone_average}, since they satisfy all the necessary properties, in
particular they are linear in both the tensor product and the admixture of states.
\section{Energy-entanglement interconversion protocol}
\label{protocol_example}
In this section we provide a protocol, based on the compression
theorems~\cite{schumacher_quantum_1995} known in quantum information theory, to
perform interconversion of energy and entanglement using two batteries and a bank,
see Sec.~\ref{setup_en_ent} for revising the set-up we use. In our protocol, we assume that
the bank is initially described by $n \gg 1$ copies of a generic state $\rho \in \s_{\p_0}$, where
$\p_0 > \frac{1}{2}$, see Eq.~\eqref{bank_subset_en_ent}, while the batteries $B_W$ and $B_E$
are initially in the states $\omega_W(k)$ and $\omega_E(h)$, respectively.
\par
Our first step consists in using the allowed operation $\chn_{\eta} \in \A_{\text{multi}}$, see
Eq.~\eqref{rev_ent_en}, with $\eta = \ket{\theta_1}\bra{\theta_1}$, to map the generic bank state
$\rho$ into
\begin{equation}
\rho_{\text{in}} = \p_0 \ket{\Psi_\text{singlet}}\bra{\Psi_\text{singlet}}
+ \left( 1 - \p_0 \right) \ket{\Psi_\text{triplet}^{(1)}}\bra{\Psi_\text{triplet}^{(1)}}.
\end{equation}
Thus, the bank system is now described by $n$ copies of the state $\rho_{\text{in}}$. Due to
the central limit theorem, we can well approximate the state of the bank with an ensemble of its
typical states, and in the following we will focus on the strongly typical ensemble,
\begin{equation}
\Pi_{\text{st}} = \frac{1}{d_{\text{st}}} \sum_{i=1}^{d_{\text{st}}} \pi_i
\left(
\ket{\Psi_\text{singlet}}\bra{\Psi_\text{singlet}}^{\otimes n \, \p_0}
\otimes
\ket{\Psi_\text{triplet}^{(1)}}\bra{\Psi_\text{triplet}^{(1)}}^{\otimes n \, ( 1 - \p_0)}
\right),
\end{equation}
where $d_{\text{st}} \approx 2^{n \h(\p_0)}$ is the number of states contained in the strongly typical
set, the $\pi_i$'s are the elements of the symmetric group acting on $n$ copies of the two-qubit system,
and $\h(\cdot)$ is the binary entropy. Then, we can use a unitary operation to re-order the states in
$\Pi_{\text{st}}$ so as to obtain
\begin{equation}
\Pi_{\text{st}}' =
\sigma_{\text{mm}}^{\otimes n \h(\p_0)}
\otimes
\ket{\Psi_\text{singlet}}\bra{\Psi_\text{singlet}}^{\otimes n \left( 1 - \h(\p_0) \right)},
\end{equation}
where $\sigma_{\text{mm}}$ is the separable state introduced in Eq.~\eqref{max_mix_sin_trip}.
It is easy to see that this transformation, while leaving the amount of entanglement in the bank
constant, $E_{\f_{\text{css}}}(\rho_{\text{in}}^{\otimes n}) = E_{\f_{\text{css}}}(\Pi_{\text{st}}')$,
might not preserve the average energy. For this reason, while transforming the bank we also
transform the energy battery, mapping $\omega_W(k)$ into $\omega_W(k+\Delta k)$ to keep
the energy fixed.
\par
We can now exchange some singlets with the entanglement battery. For example, we can perform
a swap between the bank and the battery, moving in this way an integer number $r$ of singlets from
the bank into the battery. This transformation maps the state of the bank into 
\begin{equation}
\Pi_{\text{st}}'' =
\sigma_{\text{mm}}^{\otimes n \h(\p_0) + r}
\otimes
\ket{\Psi_\text{singlet}}\bra{\Psi_\text{singlet}}^{\otimes n \left( 1 - \h(\p_0) \right) - r},
\end{equation}
and transforms the state of the entanglement battery from $\omega_E(h)$ into $\omega_E(h + r)$.
Furthermore, the transformation also modify the energy of the bank, so that we need to map the
state of the energy battery from $\omega_W(k+\Delta k)$ to $\omega_W(k+\Delta k')$.
It is then possible to map the state $\Pi_{\text{st}}''$ into
\begin{equation}
\Pi_{\text{st}}''' = \frac{1}{d'_{\text{st}}} \sum_{i=1}^{d'_{\text{st}}} \pi_i
\left(
\ket{\Psi_\text{singlet}}\bra{\Psi_\text{singlet}}^{\otimes n \, \p'_0}
\otimes
\ket{\Psi_\text{triplet}^{(1)}}\bra{\Psi_\text{triplet}^{(1)}}^{\otimes n \, ( 1 - \p'_0)}
\right),
\end{equation}
where $\p'_0$ is chosen in order to satisfy the equality 
\begin{equation}
n \h(\p_0) + r = n \h(\p'_0),
\end{equation}
and $d'_{\text{st}} = 2^{n \h(\p'_0)}$. The state $\Pi_{\text{st}}'''$ is the strongly typical
ensemble associated with $n$ copies of the state
\begin{equation}
\rho_{\text{fin}} = \p'_0 \ket{\Psi_\text{singlet}}\bra{\Psi_\text{singlet}}
+ \left( 1 - \p'_0 \right) \ket{\Psi_\text{triplet}^{(1)}}\bra{\Psi_\text{triplet}^{(1)}},
\end{equation}
where it is easy to show that the probability of occupation of the singlet is $\p'_0
\approx \p_0 - \frac{r}{n} \frac{1}{\log \frac{\p_0}{1-\p_0}}$ for $n \gg 1$. The
transformation mapping $\Pi_{\text{st}}''$ into $\Pi_{\text{st}}'''$ preserves the entanglement
of the bank, while changing its energy. Therefore, while acting on the bank we have to modify the
state of the energy battery as well, from $\omega_W(k+\Delta k')$ to $\omega_W(k+\Delta k'')$.
In this way, we have modified the bank system by mapping $n$ copies of $\rho_{\text{in}}$ into
$n$ copies of $\rho_{\text{fin}}$, and we kept entanglement and energy fixed on the global system
by modifying the states of the batteries. Notice that the protocol can be extended to the typical
ensembles by using a sub-linear ancillary system, and by considering corrections to the exchanged
energy and entanglement of order $O(\sqrt{n})$. 
\par
During the protocol, the bank has exchanged $r$ singlets with the battery $B_E$, so that the
gain in entanglement for this battery is
\begin{equation}
\Delta W_E = E_{\f_{\text{css}}} \left( \omega_E(h + r) \right)
- E_{\f_{\text{css}}} \left( \omega_E(h) \right) = r.
\end{equation}
In order to compute the amount of energy exchanged between the bank and the battery
$B_W$, we consider the difference in average energy between $\rho_{\text{in}}^{\otimes n}$
and $\rho_{\text{fin}}^{\otimes n}$. In this way, we find that the amount of energy exchanged is
\begin{equation}
\Delta W_W  = M_H \left( \omega_W(k+\Delta k'') \right) - M_H \left( \omega_W(k) \right) =
- \frac{\Delta E}{\log \frac{\p_0}{1-\p_0}} r, 
\end{equation}
that is, energy has been paid in order to gain entanglement during the process. The interconversion
relation between the two resources is given by
\begin{equation}
\label{int_rel_ent_ene_app}
\Delta W_W = - \frac{\Delta E}{\log \frac{\p_0}{1-\p_0}} \, \Delta W_E,
\end{equation}
and we only need to show that the bank state has changed in a negligible way with respect
to the related bank monotone. It is worth noting that, since the current theory satisfies
all the properties we have considered in the main text, the bank monotone coincides, modulo
a multiplicative constant, with the relative entropy distance from the set of states $\s_{\p_0}$
initially describing the bank.
\par
Indeed, it is easy to show that the relative entropy distance from this set is given by a linear
combination of the monotones $E_{\f_{\text{css}}}$ and $M_H$. For $\rho \in \s_1$ we find
that
\begin{equation}
\label{rel_ent_lin_ent_energy}
E_{\s_{\p_0}}(\rho) = \inf_{\sigma \in \s_{\p_0}} \re{\rho}{\sigma}
= \left( E_{\f_{\text{css}}}(\rho) - \bar{E}_{\f_{\text{css}}} \right) +
\frac{\log \frac{\p_0}{1-\p_0}}{\Delta E} \left( M_H(\rho) - \bar{M}_H \right),
\end{equation}
where we recall that $\bar{E}_{\f_{\text{css}}} = E_{\f_{\text{css}}}(\sigma)$ and $\bar{M}_H =
M_H(\sigma)$, for any state $\sigma \in \s_{\p_0}$. The linear coefficient in the rhs of
Eq.~\eqref{rel_ent_lin_ent_energy} is the (inverse) exchange rate that we find in the
interconversion relation, Eq.~\eqref{int_rel_ent_ene_app}. If we now consider the initial
and final state of the bank, and we study how much the state is changed by the above
protocol with respect to $E_{\s_{\p_0}}$, we find that
\begin{equation}
E_{\s_{\p_0}}(\rho_{\text{fin}}^{\otimes n}) - E_{\s_{\p_0}}(\rho_{\text{in}}^{\otimes n}) = O(n^{-1}),
\end{equation}
so that, when $n \rightarrow \infty$, we obtain that the state of the bank is only infinitesimally changed,
and can be used again to perform another resource interconversion with the same initial exchange rate.
\section{Proofs}
\subsection{Main results}
\label{main_results}
In the first part of this appendix we provide the proofs of the results presented in the main text.
We start with the proof of the following theorem, where it is shown that a multi-resource theory which
satisfies the asymptotic equivalence property of Def.~\ref{def:asympt_equivalence_multi} has a unique
quantifier for each of the resources present in the theory. This theorem is introduced in
Sec.~\ref{multi_rev_unique}.
\uniquemeas*
\begin{proof}
Let us prove that $f_1^{\infty}$ uniquely quantifies the amount of $1$-st resource contained in the main
system (the proof for the other $f_{i \neq 1}$'s is analogous). We prove the theorem by contradiction.
Suppose that there exists two monotones $f_1$ and $g_1$ satisfying the properties~\ref{item:M1} --
\ref{item:M7}, such that
\begin{enumerate}
\item $\exists \, \rho \in \SHs$, where $\rho \not\in \f_1$, for which $f_1^{\infty}(\rho) = g_1^{\infty}(\rho)$
(this is always possible by rescaling the monotone $g$).
\item $\exists \, \sigma \in \SHs$, where $\sigma \not\in \f_1$, for which $f_1^{\infty}(\sigma) \neq
g_1^{\infty}(\sigma)$ (that is, $f_1$ is not unique).
\end{enumerate}
Consider now the values of $f_1^{\infty}(\rho)$ and $f_1^{\infty}(\sigma)$. If these are equal, it is easy to
see, using the asymptotic equivalence property, that $f_1$ is unique. Suppose instead that they are not
equal. Then, there exists $n, k \in \N$\footnote{Where we assume that all physically meaningful
values of the $f_i^{\infty}$'s are in $\Q$, which we recall is dense in $\R$.} such that
\begin{equation} \label{relation_rho_sigma_f1}
n \, f_1^{\infty}(\rho) = k \, f_1^{\infty}(\sigma).
\end{equation}
\par
Let us consider the system together with the batteries $B_i$'s, initially in the state $\rho^{\otimes n}
\otimes \omega_1 \otimes \ldots \otimes \omega_m$. Then, we take the states $\omega'_i \in \SHbI$,
where $i = 1 , \ldots , m$, such that
\begin{align}
\label{unique_condition_i} 
f_i^{\infty}( \rho^{\otimes n} \otimes \omega_1 \otimes \ldots \otimes \omega_m )
&=
f_i^{\infty}( \gamma_n \otimes \omega'_1 \otimes \ldots \otimes \omega'_m )
\ , \ \forall \, i \in \left\{ 1 , \ldots , m \right\}, \\
f_j^{\infty}( \omega_i ) &= f_j^{\infty}( \omega'_i )
\ , \ \forall \, i, j \in \left\{ 1 , \ldots , m \right\}, \ i \neq j,
\end{align}
where $\gamma_n \in \f_1^{(n)}$. Due to the asymptotic equivalence property, the
conditions in Eq.~\eqref{unique_condition_i} imply that there exists a sequence of
maps $\left\{ \tilde{\chn}_N \right\}_N$ of the form of Eq.~\eqref{allowed_ancilla}
such that
\begin{equation} \label{rev_map}
\lim_{N \rightarrow \infty}
\left\|
\tilde{\chn}_N \left(
\left( \rho^{\otimes n} \otimes \omega_1 \otimes \ldots \otimes \omega_m  \right)^{\otimes N}
\right)
-
\left(\gamma_n \otimes \omega'_1 \otimes \ldots \otimes \omega'_m  \right)^{\otimes N}
\right\|_1 = 0,
\end{equation}
as well as another sequence of maps performing the reverse transformation. From
the asymptotic continuity of $g_1$, property~\ref{item:M7}, it then follows that
\begin{equation}
g_1 \left(
\tilde{\chn}_N \left(
\left( \rho^{\otimes n} \otimes \omega_1 \otimes \ldots \otimes \omega_m  \right)^{\otimes N}
\right)
\right)
=
g_1 \left( \left(\gamma_n \otimes \omega'_1 \otimes \ldots \otimes \omega'_m  \right)^{\otimes N} \right)
+
o(N).
\end{equation}
Let us consider the lhs of the above equation, and recall that the map $\tilde{\chn}_N$ is
obtained by applying an allowed operation to $N$ copies of the system together with a sub-linear
ancilla $\eta^{(A)}_N$, see Eq.~\eqref{allowed_ancilla}. For simplicity, let us refer to $\rho^{\otimes n}
\otimes \omega_1 \otimes \ldots \otimes \omega_m$ as $\Omega$ in the following chain of inequalities,
\begin{align}
g_1 \left( \tilde{\chn}_N \left( \Omega^{\otimes N} \right) \right)
&= g_1 \left( \Tr{A}{\chn_N \left( \Omega^{\otimes N} \otimes \eta^{(A)}_N \right)} \right)
\leq g_1 \left( \chn_N \left( \Omega^{\otimes N} \otimes \eta^{(A)}_N \right) \right)
\leq g_1 \left( \Omega^{\otimes N} \otimes \eta^{(A)}_N \right) \nonumber \\
&\leq g_1 \left( \Omega^{\otimes N} \right) + g_1 \left( \eta^{(A)}_N \right)
\leq g_1 \left( \Omega^{\otimes N} \right) + o(N)
\end{align}
where the first inequality follows from property~\ref{item:M4}, the second one from the monotonicity
of $g_1$ under allowed operations, the third one from the sub-additivity of $g_1$, property~\ref{item:M5},
and the last inequality from property~\ref{item:M6} and the fact that the ancilla is sub-linear in $N$. If
we now combine this equation with the previous one, we divide both sides by $N$, and we send it to
infinity, we obtain that the regularised version of $g_1$ is such that,
\begin{equation}
g_1^{\infty} \left( \rho^{\otimes n} \otimes \omega_1 \otimes \ldots \otimes \omega_m \right)
\geq
g_1^{\infty} \left( \gamma_n \otimes \omega'_1 \otimes \ldots \otimes \omega'_m  \right).
\end{equation}
By using the same argument for the sequence of maps performing the reverse transformation,
we find that the above equation needs to hold as an equality, that is,
\begin{equation}
g_1^{\infty} \left( \rho^{\otimes n} \otimes \omega_1 \otimes \ldots \otimes \omega_m \right)
=
g_1^{\infty} \left( \gamma_n \otimes \omega'_1 \otimes \ldots \otimes \omega'_m  \right).
\end{equation}
We can now separate each contribution to $g_1$ thanks to the property~\ref{item:M2}, use the
fact that the batteries $B_{i \neq 1}$'s are not changing their value of $g_1$, property~\ref{item:M1},
and the fact that the final state of the system does not contain any resource associated with
$g_1$, property~\ref{item:M3}. Then, we find that
\begin{equation} \label{g1_rho_battery}
n \, g_1^{\infty} \left( \rho \right) = g_1^{\infty} \left( \omega'_1 \right) - g_1^{\infty} \left( \omega_1 \right),
\end{equation}
where we have also used Lem.~\ref{lem:add_regularised_mon}. The same result follows for
$f_1$, so that we find that
\begin{equation} \label{f1_rho_battery}
n \, f_1^{\infty} \left( \rho \right) = f_1^{\infty} \left( \omega'_1 \right) - f_1^{\infty} \left( \omega_1 \right).
\end{equation}
\par
If we now consider Eqs.~\eqref{relation_rho_sigma_f1} and \eqref{f1_rho_battery}, we find that
\begin{equation}
k \, f_1^{\infty} \left( \sigma \right) = f_1^{\infty} \left( \omega'_1 \right) - f_1^{\infty} \left( \omega_1 \right).
\end{equation}
We can add to the above equation the term $f_1^{\infty} \left( \gamma_k \right)$, where $\gamma_k \in
\f_1^{(k)}$, since this term is equal to zero due to property~\ref{item:M3}. Then, we find
\begin{equation} \label{f1_sigma_cont}
k \, f_1^{\infty} \left( \sigma \right) + f_1^{\infty} \left( \omega_1 \right)
=
f_1^{\infty} \left( \gamma_{k} \right) + f_1^{\infty} \left( \omega'_1 \right).
\end{equation}
Now, we want to introduce the initial and final states of the batteries $B_{i\neq1}$'s, so as to be sure
that the transformation from $\sigma^{\otimes k}$ into $\gamma_k$ does not violate the conservation
of the other resources. Specifically, we introduce $\omega_{i}$, $\omega''_{i} \in \SHbI$ for $i \neq 1$,
such that
\begin{align}
f_i^{\infty} \left(
\sigma^{\otimes k} \otimes \omega_1 \otimes \omega_2 \otimes \ldots \otimes \omega_m
\right)
&=
f_i^{\infty} \left(
\gamma_k \otimes \omega'_1 \otimes \omega''_2 \otimes \ldots \otimes \omega''_m
\right)
\ , \ \forall \, i \in \left\{ 2 , \ldots , m \right\}, \label{fi_sigma_gamma} \\
f_1^{\infty}( \omega_i ) &= f_1^{\infty}( \omega''_i )
\ , \ \forall \, i \in \left\{ 2 , \ldots , m \right\}, \label{f1_Bi} \\
f_j^{\infty}( \omega_i ) &= f_j^{\infty}( \omega''_i )
\ , \ \forall \, i, j \in \left\{ 2 , \ldots , m \right\}, \ i \neq j.
\end{align}
Then, using the constraints of Eq.~\eqref{f1_Bi} over the states of the $B_{i\neq1}$'s batteries,
we can re-write Eq.~\eqref{f1_sigma_cont} as
\begin{equation}
k \, f_1^{\infty} \left( \sigma \right) + f_1^{\infty} \left( \omega_1 \right) + f_1^{\infty} \left( \omega_2 \right)
+ \ldots + f_1^{\infty} \left( \omega_m \right)
=
f_1^{\infty} \left( \gamma_{k} \right) + f_1^{\infty} \left( \omega'_1 \right) + f_1^{\infty} \left( \omega''_2 \right)
+ \ldots + f_1^{\infty} \left( \omega''_m \right).
\end{equation}
If we now use Lem.~\ref{lem:add_regularised_mon} and property~\ref{item:M1}, we find that
\begin{equation} \label{f1_sigma_gamma}
f_1^{\infty} \left(
\sigma^{\otimes k} \otimes \omega_1 \otimes \omega_2 \otimes \ldots \otimes \omega_m
\right)
=
f_1^{\infty} \left(
\gamma_k \otimes \omega'_1 \otimes \omega''_2 \otimes \ldots \otimes \omega''_m
\right)
\end{equation}
From Eqs.~\eqref{fi_sigma_gamma} and \eqref{f1_sigma_gamma} it follows, using the asymptotic
equivalence property, that there exists a sequence of maps $\left\{ \tilde{\chn}'_N \right\}_N$
such that
\begin{equation}
\lim_{N \rightarrow \infty}
\left\|
\tilde{\chn}'_N \left(
\left( \sigma^{\otimes k} \otimes \omega_1 \otimes \omega_2 \otimes \ldots \otimes \omega_m  \right)^{\otimes N}
\right)
-
\left( \gamma_k \otimes \omega'_1 \otimes \omega''_2 \otimes \ldots \otimes \omega''_m  \right)^{\otimes N}
\right\|_1 = 0,
\end{equation}
as well as a related sequence of maps performing the reverse transformation. Using the properties of
$g_1$, as we did before, we find that
\begin{equation} \label{g1_sigma_battery}
k \, g_1^{\infty} \left( \sigma \right) = g_1^{\infty} \left( \omega'_1 \right) - g_1^{\infty} \left( \omega_1 \right).
\end{equation}
Then, combining Eqs.~\eqref{g1_rho_battery} and \eqref{g1_sigma_battery}, we obtain that
\begin{equation}
n \, g_1^{\infty} \left( \rho \right)  = k \, g_1^{\infty} \left( \sigma \right).
\end{equation}
Finally, using Eq.~\eqref{relation_rho_sigma_f1} and the initial assumption on the state $\rho$,
we find that
\begin{equation}
f_1^{\infty} \left( \sigma \right)  = g_1^{\infty} \left( \sigma \right),
\end{equation}
which contradicts our initial assumption. Therefore, $f_1^{\infty}$ uniquely quantify the amount of
$1$-st resource contained in the main system.
\end{proof}
In the next theorem, first stated in Sec.~\ref{bank_interconvert}, we show that in the
presence of a bank two resources can always be exchanged one for the other, while the state of
the bank is only infinitesimally modified by the resource interconversion.
\interconvert*
\begin{proof}
Let us consider the resource interconversion of Eq.~\eqref{interconv_trasf}, where a global
operation is performed over bank and batteries, and the sole constraint over the bank system is
given by condition~\ref{item:X1}. As we discussed in Sec.~\ref{quant_res}, in order for the transformation
to happen, the conditions of Eq.~\eqref{eq:fi_condition_Ri} need to be satisfied for both monotones
$E_{\f_1}$ and $E_{\f_2}$, which in particular implies that the amount of resources exchanged with
the batteries is
\begin{equation}
\label{intconv_conds}
\Delta W_i = n \left( E_{\f_i}(\rho) - E_{\f_i}(\tilde{\rho}) \right) , \qquad i = 1, 2,
\end{equation}
where we have used property~\ref{item:F5b}. Furthermore, since $f_{\text{bank}}^{\bar{E}_{\f_1},\bar{E}_{\f_2}}$
is monotonic under the set of allowed operations, property~\ref{item:B7}, we find that
\begin{equation}
f_{\text{bank}}^{\bar{E}_{\f_1},\bar{E}_{\f_2}}(\rho^{\otimes n} \otimes \omega_1 \otimes \omega_2) =
f_{\text{bank}}^{\bar{E}_{\f_1},\bar{E}_{\f_2}}(\tilde{\rho}^{\otimes n} \otimes \omega'_1 \otimes \omega'_2).
\end{equation}
Then, since the global system is given by many copies of $\hil$, and since the bank monotone
is additive, property~\ref{item:B3}, we can separate the contribution given by bank and batteries.
Furthermore, from the definition of bank monotone, Eq.~\eqref{f3_monotone}, and the main
property of the batteries, condition~\ref{item:M1}, it follows that
\begin{equation}
\alpha \left( E_{\f_1}(\rho^{\otimes n}) + E_{\f_1}(\omega_1) \right)
+
\beta \left( E_{\f_2}(\rho^{\otimes n}) +E_{\f_2}(\omega_2) \right)
=
\alpha \left( E_{\f_1}(\tilde{\rho}^{\otimes n}) + E_{\f_1}(\omega'_1) \right)
+
\beta \left( E_{\f_2}(\tilde{\rho}^{\otimes n}) + E_{\f_2}(\omega'_2) \right).
\end{equation}
Now, if we re-order the terms in the above equation, and we use Eq.~\eqref{f3_monotone} again,
we obtain
\begin{equation}
f_{\text{bank}}^{\bar{E}_{\f_1},\bar{E}_{\f_2}}(\rho^{\otimes n})
-
f_{\text{bank}}^{\bar{E}_{\f_1},\bar{E}_{\f_2}}(\tilde{\rho}^{\otimes n})
=
\alpha \left( E_{\f_1}(\omega'_1) - E_{\f_1}(\omega_1) \right)
+
\beta \left( E_{\f_2}(\omega'_2) - E_{\f_2}(\omega_2) \right).
\end{equation}
If we use property~\ref{item:X1} together with the definitions of $\Delta W_1$ and $\Delta W_2$ given in
Eq.~\eqref{work_Ri}, we get that
\begin{equation}
\alpha \, \Delta W_1 = - \beta \, \Delta W_2 + \delta_n,
\end{equation}
where $\delta_n \rightarrow 0$ as $n$ tends to infinity. However, we are still left to show that, when
$n \rightarrow \infty$, the amount of resources exchanged by the batteries remains finite.
\par
Let us first recall that the way in which the monotone $f_{\text{bank}}^{\bar{E}_{\f_1}, \bar{E}_{\f_2}}$
is built implies that this monotone is tangent to the state-space, see property~\ref{item:B2} and
Fig.~\ref{fig:tangent_monotone}. As a result, we have that the curve of bank states, see
Eq.~\eqref{curve_bank} in appendix~\ref{convex_bound}, can be approximate, in the neighbourhood
of $\f_{\text{bank}}\left(\bar{E}_{\f_1},\bar{E}_{\f_2}\right)$, by a line. This implies that, if we take
the state $\tilde{\rho}$ in the set of bank states $\f_{\text{bank}}$, such that
\begin{equation}
E_{\f_1}(\tilde{\rho}) = E_{\f_1}(\rho) - \epsilon,
\end{equation}
where we recall $\rho \in \f_{\text{bank}}\left(\bar{E}_{\f_1},\bar{E}_{\f_2}\right)$, and $\epsilon \ll 1$,
we find that the value of the monotone $E_{\f_2}$ for this state is
\begin{equation}
E_{\f_1}(\tilde{\rho}) = E_{\f_2}(\rho) + \frac{\alpha}{\beta} \epsilon + O(\epsilon^2).
\end{equation}
Then, it is easy to see that, if we map $\rho$ into $\tilde{\rho}$ during the resource interconversion,
we obtain the following
\begin{equation}
\label{rate_of_convergence}
\Delta W_1 = n \, \epsilon \quad , \quad
\Delta W_2 = - n \, \frac{\alpha}{\beta} \epsilon + O(n \, \epsilon^2) \quad , \quad
\delta_n = O(n \, \epsilon^2),
\end{equation}
where the first two equations follow from Eq.~\eqref{intconv_conds}, while the last one is given
by Eq.~\eqref{condition_x1}. Thus, if we take $\epsilon \propto \frac{1}{n}$, and we send $n$ to
infinity, we get that the amount of resources $\Delta W_i$ exchanged during the transformations
are finite and their value is arbitrary, while the change in the bank monotone over the bank system
$\delta_n$ is infinitesimal. 
\end{proof}
The next theorem can be found in Sec.~\ref{bank_monotone}. The theorem states that, given a
multi-resource theory with a non-empty set of bank states, we can always build a single-resource theory
out of it, by extending the class of allowed operations with the possibility of adding ancillary systems described
by the bank states, see Def.~\ref{def:sing_res_constr}. In particular, we show that if the multi-resource
theory satisfies the asymptotic equivalence property, so does the single-resource theory with respect to the
bank monotone of Eq.~\eqref{f3_monotone}.
\singres*
\begin{proof}
{\bf (a)} We start the proof by showing that, for the single resource theory $\rt_{\mathrm{single}}$,
the second statement in Def.~\ref{def:asympt_equivalence_multi} implies the first one. In other words, we
want to show that for any two states $\rho, \sigma \in \SH$ which can be asymptotically mapped into one
another with the allowed operations $\A_{\mathrm{single}}$, the value of the bank monotone on the two
states is the same. Suppose there exists a sequence of operations $\left\{ \tilde{\chn}_N^{\mathrm{(s)}}
\right\}_N$ such that $\lim_{N \rightarrow \infty} \left\| \tilde{\chn}_N^{\mathrm{(s)}} (\rho^{\otimes N})
- \sigma^{\otimes N} \right\|_1 = 0$, where these maps are of the form
\begin{equation}
\label{allowed_ancilla_single}
\tilde{\chn}_N^{\mathrm{(s)}} (\cdot) = \Tr{A}{ \chn_N^{\mathrm{(s)}} ( \cdot \otimes \eta_N^{(A)} ) },
\end{equation}
with $\eta_N^{(A)} \in \SHn{o(N)}$ an arbitrary state of a sub-linear ancilla, and $\chn_N^{\mathrm{(s)}}$
an allowed operation for $\rt_{\mathrm{single}}$. Likewise, suppose there is a sequence of maps
that perform the reverse transformation. If we use the asymptotic continuity of the bank monotone,
property~\ref{item:B6}, it follows that
\begin{equation}
f_{\mathrm{bank}}^{\bar{E}_{\f_1}, \bar{E}_{\f_2}} \left( \tilde{\chn}_N^{\mathrm{(s)}} (\rho^{\otimes N}) \right)
=
f_{\mathrm{bank}}^{\bar{E}_{\f_1}, \bar{E}_{\f_2}} \left( \sigma^{\otimes N} \right) + o(N).
\end{equation}
Then, by using the properties~\ref{item:B1} -- \ref{item:B7} of the bank monotone, we can prove the
following chain of inequalities for the lhs of the above equation
\begin{align}
f_{\mathrm{bank}}^{\bar{E}_{\f_1}, \bar{E}_{\f_2}} \left( \tilde{\chn}_N^{\mathrm{(s)}} (\rho^{\otimes N}) \right)
&=
f_{\mathrm{bank}}^{\bar{E}_{\f_1}, \bar{E}_{\f_2}}
\left(
\Tr{A}{ \chn_N^{\mathrm{(s)}} ( \rho^{\otimes N} \otimes \eta_N^{(A)} ) }
\right)
\leq
f_{\mathrm{bank}}^{\bar{E}_{\f_1}, \bar{E}_{\f_2}}
\left(
\chn_N^{\mathrm{(s)}} ( \rho^{\otimes N} \otimes \eta_N^{(A)} )
\right) \nonumber \\
&\leq
f_{\mathrm{bank}}^{\bar{E}_{\f_1}, \bar{E}_{\f_2}}
\left(
\rho^{\otimes N} \otimes \eta_N^{(A)}
\right)
=
f_{\mathrm{bank}}^{\bar{E}_{\f_1}, \bar{E}_{\f_2}} \left( \rho^{\otimes N} \right)
+
f_{\mathrm{bank}}^{\bar{E}_{\f_1}, \bar{E}_{\f_2}} \left( \eta_N^{(A)} \right) \nonumber \\
&\leq
f_{\mathrm{bank}}^{\bar{E}_{\f_1}, \bar{E}_{\f_2}} \left( \rho^{\otimes N} \right)
+
o(N)
\end{align}
where the first inequality follows from monotonicity under partial trace, property~\ref{item:B4}, the
second one from monotonicity under the allowed operations $\A_{\mathrm{single}}$ (that we still
need to show), the equality follows from additivity, property~\ref{item:B3}, and the last inequality
from the sub-extensivity of the monotone, property~\ref{item:B5}. If we use the same argument for the
sequence of maps performing the reverse transformation, and we regularise the monotones by dividing
the equations by the number of copies $N$, and sending $N$ to infinity, we find that
\begin{equation}
f_{\mathrm{bank}}^{\bar{E}_{\f_1}, \bar{E}_{\f_2}} \left( \rho \right)
=
f_{\mathrm{bank}}^{\bar{E}_{\f_1}, \bar{E}_{\f_2}} \left( \sigma \right),
\end{equation}
which proves the asymptotic equivalence property in one direction.
\par
We still need to show that the bank monotone is monotonic under the allowed operations
$\A_{\mathrm{single}}$ of the single-resource theory. Recall that the most general of these
operations, Eq.~\eqref{sin_res_map}, is given by
\begin{equation}
\chn^{\mathrm{(s)}} ( \rho ) = \Tr{P^{(n)}}{\chn ( \rho \otimes \rho_P^{\otimes n} )},
\end{equation}
where $\chn \in \A_{\mathrm{multi}}$, and we add $n \in \N$ copies of the bank state $\rho_P
\in \f_{\mathrm{bank}}\left(\bar{E}_{\f_1}, \bar{E}_{\f_2}\right)$. Then, using the properties of the
bank monotone, we can show that
\begin{align}
f_{\mathrm{bank}}^{\bar{E}_{\f_1}, \bar{E}_{\f_2}} \left( \chn^{\mathrm{(s)}} ( \rho ) \right)
&=
f_{\mathrm{bank}}^{\bar{E}_{\f_1}, \bar{E}_{\f_2}} \left( \Tr{P^{(n)}}{\chn ( \rho \otimes \rho_P^{\otimes n} ) } \right)
\leq
f_{\mathrm{bank}}^{\bar{E}_{\f_1}, \bar{E}_{\f_2}} \left( \chn ( \rho \otimes \rho_P^{\otimes n} ) \right)
\nonumber \\
&\leq
f_{\mathrm{bank}}^{\bar{E}_{\f_1}, \bar{E}_{\f_2}} \left( \rho \otimes \rho_P^{\otimes n} \right) 
=
f_{\mathrm{bank}}^{\bar{E}_{\f_1}, \bar{E}_{\f_2}} \left( \rho \right)
+
f_{\mathrm{bank}}^{\bar{E}_{\f_1}, \bar{E}_{\f_2}} \left( \rho_P^{\otimes n} \right) \nonumber \\
&=
f_{\mathrm{bank}}^{\bar{E}_{\f_1}, \bar{E}_{\f_2}} \left( \rho \right),
\end{align}
where the first inequality follows from property~\ref{item:B4}, the second one from the monotonicity under
the allowed operations $\A_{\mathrm{multi}}$, property~\ref{item:B7}, and the last two equalities from additivity,
property~\ref{item:B3}, and the fact that the bank monotone is equal to zero over the bank states,
property~\ref{item:B1}, respectively.
\par
{\bf (b)} We now want to prove the other direction of the asymptotic equivalence property for the resource
theory $\rt_{\mathrm{single}}$, i.e., that the first statement in Def.~\ref{def:asympt_equivalence_multi}
implies the second one. In other words, we want to show that for all states $\rho$, $\sigma \in \SH$ such that
$f_{\mathrm{bank}}^{\bar{E}_{\f_1}, \bar{E}_{\f_2}} \left( \rho \right) = f_{\mathrm{bank}}^{\bar{E}_{\f_1}, \bar{E}_{\f_2}}
\left( \sigma \right)$, there exists a sequence of operations $\left\{ \tilde{\chn}^{(s)}_N \right\}_N$ of
the form given in Eq.~\eqref{allowed_ancilla_single}, mapping $N$ copies of $\rho$ into $N$ copies of $\sigma$,
where $N \rightarrow \infty$. Before proving this part of the theorem, we recall that, given the bank state
$\rho_P \in \f_{\mathrm{bank}} \left(\bar{E}_{\f_1},\bar{E}_{\f_2}\right)$, all other bank states $\tilde{\rho}_P \in
\f_{\mathrm{bank}}$ are such that, if $E_{\f_1} (\tilde{\rho}_P) = E_{\f_1} (\rho_P) + \delta$ with $\delta \ll 1$, then
\begin{equation}
\label{first_order_approx}
E_{\f_2} ( \tilde{\rho}_P ) = E_{\f_2} (\rho_P) - \frac{\alpha}{\beta} \, \delta + O(\delta^2),
\end{equation}
which follows from the fact that $f_{\mathrm{bank}}^{\bar{E}_{\f_1}, \bar{E}_{\f_2}} = 0$ parametrises the
line which is tangent to the state space and passes through the point $\left( \bar{E}_{\f_1}, \bar{E}_{\f_2}
\right)$, see appendix~\ref{convex_bound}.
\par
Given the two states $\rho, \sigma \in \SH$ with same value of the monotone $f_{\mathrm{bank}}
^{\bar{E}_{\f_1}, \bar{E}_{\f_2}}$, let us introduce the sequences of states $\left\{ \sigma_n \in
\SH \right\}_n$ and $\left\{ \tilde{\rho}_{P,n} \in \f_{\mathrm{bank}} \right\}_n$ such that, for $n \in \N$ big
enough, we have
\begin{align}
\label{constraint_sigma_1}
E_{\f_1} (\sigma_n) &= E_{\f_1} (\sigma) \\
\label{constraint_rho_tilde}
E_{\f_1} (\rho \otimes \rho_P^{\otimes n}) &= E_{\f_1} (\sigma_n \otimes ( \tilde{\rho}_{P,n} )^{\otimes n}), \\
\label{constraint_existence}
E_{\f_2}(\rho \otimes \rho_P^{\otimes n}) &= E_{\f_2}(\sigma_n \otimes ( \tilde{\rho}_{P,n} )^{\otimes n}),
\end{align}
where $\rho_P \in \f_{\mathrm{bank}}\left(\bar{E}_{\f_1},\bar{E}_{\f_2}\right)$. From the above equations,
and from the additivity of $E_{\f_1}$, which follows from property~\ref{item:F5b}, we obtain that
\begin{equation}
E_{\f_1}(\tilde{\rho}_{P,n}) = E_{\f_1}(\rho_P) + \frac{1}{n} \left( E_{\f_1}(\rho) - E_{\f_1}(\sigma) \right).
\end{equation}
Notice that, for $n \rightarrow \infty$, we have that $\frac{1}{n} \left( E_{\f_1}(\rho) - E_{\f_1}(\sigma) \right) \rightarrow
0$, and therefore, for $n$ sufficiently big, it follows from Eq.~\eqref{first_order_approx} that
\begin{equation}
\label{first_order_passive}
E_{\f_2}(\tilde{\rho}_{P,n}) = E_{\f_2}(\rho_P) - \frac{\alpha}{\beta}
\, \frac{1}{n} \left( E_{\f_1}(\rho) - E_{\f_1}(\sigma) \right) + O(n^{-2}).
\end{equation}
If we now combine Eq.~\eqref{constraint_existence} and \eqref{first_order_passive} together, we use the additivity of
$E_{\f_2}$, and we use the fact that $\rho$ and $\sigma$ have the same value of the bank monotone, we obtain the
following
\begin{equation}
\label{constraint_sigma_2}
E_{\f_2}(\sigma_n) = E_{\f_2}(\sigma) + O(n^{-1}).
\end{equation}
\par
Let us now focus on the operations mapping $\rho$ into $\sigma$. We do this in two steps. First, we
use the fact that the theory $\rt_{\mathrm{multi}}$ satisfies asymptotic equivalence, and we consider the
Eqs.~\eqref{constraint_rho_tilde} and \eqref{constraint_existence}. These equations imply that, for all
$n \in \N$, there exists of a sequence of maps $\left\{ \tilde{\chn}_{N,n} \right\}_N$ such that
\begin{equation}
\label{asy_eq_mult}
\lim_{N \rightarrow \infty} \left\|
\tilde{\chn}_{N,n} \left(
\left( \rho \otimes \rho_P^{\otimes n} \right)^{\otimes N}
\right) 
-
\left( \sigma_n \otimes ( \tilde{\rho}_{P,n} )^{\otimes n} \right)^{\otimes N}
\right\|_1
=
0.
\end{equation}
As per definition of asymptotic equivalence, the maps $\tilde{\chn}_{N,n} \ : \ \SHn{N (n+1)}
\rightarrow \SHn{N (n+1)}$ are of the form
\begin{equation}
\tilde{\chn}_{N,n} ( \cdot ) = \Tr{A}{\chn_{N,n} \left( \cdot \otimes \eta^{(A)}_N  \right)}
\end{equation}
where the map $\chn_{N,n}$ is an allowed operation of $\rt_{\mathrm{multi}}$ acting on system and ancilla,
and the state of the ancilla is $\eta^{(A)}_N \in \mathcal{S} \left( \left( \hil^{\otimes n+1} \right)^{\otimes f(N)}
\right)$, where $f(N) = o(N)$. Notice that, in particular, we can take $n$ to be a monotonic function of $N$, $n = g(N)$,
such that $\lim_{N \rightarrow \infty} g(N) = \infty$ and $f(N) g(N) = o(N)$. For example, if $f(N) \propto N^{1/2}$,
we can chose $g(N) \propto N^{1/4}$, so that their product is $N^{3/4} = o(N)$.
\par
We can now define the sequence of maps $\left\{ \tilde{\chn}^{\mathrm{(s)}}_{N} \right\}_N$ acting on
$\SHn{N}$. These maps are defined as
\begin{equation}
\tilde{\chn}^{\mathrm{(s)}}_{N} (\rho^{\otimes N}) =
\Tr{P}{\tilde{\chn}_{N,g(N)} \left(
 \rho^{\otimes N} \otimes \rho_P^{\otimes N g(N)}
\right)},
\end{equation}
where we are tracing out the part of the system which was initially in the state $\rho_P^{\otimes N g(N)}$.
It is interesting to notice that this system is super-linear in the number of copies $N$ of $\rho$, a condition
that seems to be necessary to achieve the conversion, see Ref.~\cite{brandao_resource_2013} for an
example in thermodynamics. We can re-write these maps as
\begin{equation}
\label{single_sequence_ancilla}
\tilde{\chn}^{\mathrm{(s)}}_{N} (\rho^{\otimes N}) =
\Tr{A}{\chn^{\mathrm{(s)}}_{N} \left(
 \rho^{\otimes N} \otimes \eta^{(A)}_N
\right)},
\end{equation}
where we recall that the ancillary system still lives on a sub-linear number of copies of $\hil$, due
to our choice of the function $g(N)$, and the operation $\chn^{\mathrm{(s)}}_{N}$ is an allowed
operations for the theory $\rt_{\mathrm{single}}$ -- compare it with Eq.~\eqref{sin_res_map} --
defined as
\begin{equation}
\chn^{\mathrm{(s)}}_{N} ( \cdot ) = \Tr{P}{ \chn_{N,g(N)} \left( \cdot \otimes \rho_P^{\otimes N g(N)} \right) }.
\end{equation}
If we now use Eq.~\eqref{asy_eq_mult} together with the monotonicity of the trace distance under partial tracing, we find that
\begin{equation}
\lim_{N \rightarrow \infty} \left\|
\tilde{\chn}^{\mathrm{(s)}}_{N} (\rho^{\otimes N})
-
\left( \sigma_{g(N)} \right)^{\otimes N}
\right\|_1
=
0.
\end{equation}
\par
To conclude the proof, we notice that the sequence of states $\left\{ \sigma_{g(N)} \right\}_N$ does not
need to converge to $\sigma$ with respect to the trace distance. However, if we consider the regularisation
of the $E_{\f_i}$'s on these states, we find that
\begin{equation}
\lim_{N \rightarrow \infty} \frac{1}{N} E_{\f_i}( \sigma_{g(N)}^{\otimes N} ) = E_{\f_i}(\sigma)
, \quad i = 1,2,
\end{equation}
which follows from Eqs.~\eqref{constraint_sigma_1} and \eqref{constraint_sigma_2}.
Then, we can use the asymptotic equivalence of $\rt_{\text{multi}}$, which tells us that there
exists a second sequence of allowed operations, and a sub-linear ancilla, such that we
can asymptotically transform the state of the system into $\sigma$. This concludes the proof.
\end{proof}
The following corollary is stated in Sec.~\ref{bank_monotone}, and it shows that the bank monotone
introduced in Eq.~\eqref{f3_monotone} coincides with the relative entropy distance from the set of bank
states $\f_{\text{bank}}\left(\bar{E}_{\f_1}, \bar{E}_{\f_2}\right)$.
\bankmonrelent*
\begin{proof}
We first notice that Thm.~\ref{unique_f3} promises us that, under the current assumptions over the
theory $\rt_{\text{multi}}$, we can construct a single-resource theory $\rt_{\text{single}}$ with
allowed operations $\A_{\text{single}}$ as in Def.~\ref{def:sing_res_constr}, which satisfies asymptotic equivalence
with respect to the bank monotone $f_{\text{bank}}^{\bar{E}_{\f_1}, \bar{E}_{\f_2}}$. Furthermore, since
this monotone satisfies the properties~\ref{item:SM1} -- \ref{item:SM6} listed in appendix~\ref{rev_theory_sing},
we can use Thm.~\ref{thm:reversible_asympt_equiv}
in the same appendix to prove that this single resource theory is reversible. If we then use the results of
Ref.~\cite{horodecki_quantumness_2012}, we obtain that this monotone is the unique measure of resource for
the theory $\rt_{\text{single}}$.
\par
What we need to show in this proof is that, actually, both the bank monotone defined in
Eq.~\eqref{f3_monotone} and the relative entropy distance from the set of bank states
$\f_{\text{bank}}\left(\bar{E}_{\f_1}, \bar{E}_{\f_2}\right)$ satisfy the properties
from~\ref{item:SM1} to \ref{item:SM6}, and therefore by uniqueness these two functions
need to coincide (modulo a multiplicative constant). That the bank monotone satisfies these
properties is easy to show. Indeed, its monotonicity under the class of operations $\A_{\text{single}}$,
property~\ref{item:SM1}, is proved in part {\bf (a)} of Thm.~\ref{unique_f3}. Furthermore,
all other properties directly follow from property~\ref{item:B1} and the ones listed in
Prop.~\ref{prop:properties_bank_mon}.
\par
Showing that the relative entropy distance from the set of states $\f_{\text{bank}}\left(\bar{E}_{\f_1},
\bar{E}_{\f_2}\right)$ satisfies the same properties is not difficult either. First, we recall that the invariant
sets of the theory, $\f_1$ and $\f_2$, satisfy the properties~\ref{item:F1}, \ref{item:F2}, \ref{item:F3}
and~\ref{item:F5b} by hypothesis. This in turn implies that the subset of bank states under consideration
satisfies properties~\ref{item:F1}, \ref{item:F2} and~\ref{item:F5b}, as it follows from the Props.~\ref{convex_f3}
and \ref{additive_f3} in appendix~\ref{additional}. That the subset $\f_{\text{bank}}\left(\bar{E}_{\f_1},
\bar{E}_{\f_2}\right)$ contains a full-rank state, property~\ref{item:F3}, is an hypothesis of this corollary.
\par
With the help of the above properties we can show that the relative entropy distance from
$\f_{\text{bank}}\left(\bar{E}_{\f_1},\bar{E}_{\f_2}\right)$ satisfies the same properties
of the bank monotone. That this relative entropy is monotonic under the set of operations
$\A_{\text{single}}$, property~\ref{item:SM1}, is shown in Prop.~\ref{monotonicity_passive}.
Furthermore, property~\ref{item:SM2} follows from the definition of relative entropy distance,
see Eq.~\eqref{rel_entr_dist}, while property~\ref{item:SM3} follows from the fact that
$\f_{\text{bank}}\left(\bar{E}_{\f_1}, \bar{E}_{\f_2}\right)$ satisfies the properties~\ref{item:F1},
\ref{item:F2}, and \ref{item:F3}. The properties~\ref{item:SM4} and \ref{item:SM5} follow from
the additivity of the set of bank states, property~\ref{item:F5b}. Finally, the fact that the monotone
scales sub-extensively, property~\ref{item:SM6}, is a consequences of the additivity of the set of bank states, as well
as of the fact that a full-rank state is contained in this set, properties~\ref{item:F5b} and
\ref{item:F3}, respectively.
\end{proof}
\subsection{Technical results}
\label{additional}
In this section we provide some minor results that are used to prove some of the main theorems
in the paper. In particular, the next proposition is used in Sec.~\ref{multi_rev_unique}, together with
Thm.~\ref{thm:reversible_multi}, to show that a multi-resource theory satisfying asymptotic
equivalence with respect to the relative entropy distances from its invariant sets has unique
resource quantifiers. This proposition is already known in the literature, see the references
inside the proof.
\relentproperties*
\begin{proof}
Let us first show that the relative entropy distance $E_{\f_i}$ is a monotone for the multi-resource
theory $\rt_{\text{multi}}$, and that its regularisation is well-defined. These are necessary assumptions
we have made in Def.~\ref{def:asympt_equivalence_multi}. The fact that $E_{\f_i}$ is monotonic under
the class of allowed operations $\A_{\text{multi}}$, and that in particular it is monotonic under the allowed
operations in $\A_i$, follows from the argument provided in the last paragraph of Sec.~\ref{single_resource}, and
from the fact that $\A_{\text{multi}}$ is obtained from the intersection of all the other classes of allowed
operations, see Eq.~\eqref{all_ops_multi}. Furthermore, that the regularisation of $E_{\f_i}$ exists follows
from the properties~\ref{item:F3} and \ref{item:F4}. In fact, for all $\rho \in \SH$, we have that
\begin{equation}
\label{regularisation_proof}
\frac{1}{n} E_{\f_i}(\rho^{\otimes n}) = \frac{1}{n} \inf_{\gamma_n \in \f^{(n)}} \re{\rho^{\otimes n}}{\gamma_n}
\leq \frac{1}{n} \inf_{\gamma \in \f} \re{\rho^{\otimes n}}{\gamma^{\otimes n}} =
\inf_{\gamma \in \f} \re{\rho}{\gamma} \leq \re{\rho}{\gamma_{\text{full-rank}}}
\end{equation}
where the first inequality follows from the fact that the invariant sets are closed under
tensor product, property~\ref{item:F4}, and the second inequality from the fact that they
contain at least one full-rank state $\gamma_{\text{full-rank}}$, property~\ref{item:F3}.
Since the rhs of Eq.~\eqref{regularisation_proof} is finite, and independent of $n$, we
have that the regularisation of the $E_{\f_i}$'s is well-defined.
\par
In order for the monotone to satisfy the property~\ref{item:M1}, we can simply choose the states of the battery
$B_i$ to have a fixed value of the monotones $E_{\f_{j \neq i}}$, for all $j \in \left\{1, \ldots, m \right\}$.
Property~\ref{item:M2}, instead, follows from the fact that we want the batteries to be independent from
each other, so as to address them individually. As a result, we choose the global invariant sets to be of the
form $\f_i = \f_{i,S} \otimes \f_{i,B_1} \otimes \ldots \otimes \f_{i,B_m}$, where $i = 1, \ldots, m$, the main
system is $S$, and the $B_i$'s refer to the batteries. This implies that the relative entropy
distances from these sets are additive over system and batteries. However, it is still possible for $\f_i^{\otimes
n}$ to be a proper subset of $\f_i^{(n)}$, since on the main systems or batteries we do not ask any additivity
property. The validity of property~\ref{item:M3} for $E_{\f_i}$ follows straightforwardly from the definition
of relative entropy distance, see Eq.~\eqref{rel_entr_dist}. That $E_{\f_i}$ satisfies property~\ref{item:M4}
follows from property~\ref{item:F5}, since for all $\rho_n \in \SHn{n}$ we have that
\begin{align}
E_{\f_i}(\Tr{k}{\rho_n})
&= \inf_{\gamma_{n-k} \in \f_i^{(n-k)}} \re{\Tr{k}{\rho_n}}{\gamma_{n-k}}
\leq \inf_{\gamma_n \in \f_i^{(n)}} \re{\Tr{k}{\rho_n}}{\Tr{k}{\gamma_n}} \nonumber \\
&\leq \inf_{\gamma_n \in \f_i^{(n)}} \re{\rho_n}{\gamma_n}
= E_{\f_i}(\rho_n),
\end{align}
where the first inequality follows from property~\ref{item:F5}, and the second one from the monotonicity
of the relative entropy under CPTP maps. The monotones $E_{\f_i}$'s are also sub-additive,
property~\ref{item:M5}, since for any two states $\rho_n \in \SHn{n}$ and $\rho_k \in \SHn{k}$ we
have that
\begin{align}
E_{\f_i}(\rho_n \otimes \rho_k)
&= \inf_{\gamma_{n+k} \in \f_i^{(n+k)}} \re{\rho_n \otimes \rho_k}{\gamma_{n+k}}
\leq \inf_{\gamma_n \in \f_i^{(n)},\gamma_k \in \f_i^{(k)}}
\re{\rho_n \otimes \rho_k}{\gamma_n \otimes \gamma_k} \nonumber \\
&= \inf_{\gamma_n \in \f_i^{(n)}} \re{\rho_n}{\gamma_n} + \inf_{\gamma_k \in \f_i^{(k)}} \re{\rho_k}{\gamma_k}
= E_{\f_i}(\rho_n) + E_{\f_i}(\rho_k),
\end{align}
where the inequality follows from property~\ref{item:F4} of the set $\f_i$. Property~\ref{item:M6} for
the relative entropy distance $E_{\f_i}$ follows from similar considerations to the one presented in
Eq.~\eqref{regularisation_proof}. In fact, we have that for all $\rho_n \in \SHn{n}$,
\begin{align}
E_{\f_i}(\rho_n)
&= \inf_{\gamma_n \in \f_i^{(n)}} \re{\rho_n}{\gamma_n}
\leq \re{\rho_n}{\gamma_{\text{full-rank}}^{\otimes n}}
= - S(\rho_n) - \tr{ \rho_n \log \gamma_{\text{full-rank}}^{\otimes n} } \nonumber \\
&\leq - \tr{ \rho_n \log \gamma_{\text{full-rank}}^{\otimes n} }
\leq n \log \lambda_{\text{min}}^{-1},
\end{align}
where the first inequality follows from the fact that $\f_i$ contains a full-rank state, property~\ref{item:F3},
the second one from the fact that the von Neumann entropy is non-negative, and the last one from the
fact that the optimal case is obtained when $\rho_n$ is given by $n$ copies of the pure state associated
with the minimum eigenvalue $\lambda_{\text{min}}$ of the full-rank state $\gamma_{\text{full-rank}}$.
Finally, in Ref.~\cite{synak-radtke_asymptotic_2006}, Lem.~1, it was shown that the relative entropy
distance from a set $\f$ satisfying properties~\ref{item:F1}, \ref{item:F2}, and \ref{item:F3} is asymptotic
continuous. In the proof, it was required the set $\f$ to contain the maximally-mixed state. However, as
it was noticed in Ref.~\cite{brandao_generalization_2010}, Lem.~C.3, one simply needs $\f$ to contain
a full-rank state. Thus, under the above properties on the free set, we have that $E_{\f_i}$ satisfies the
property~\ref{item:M7}.
\end{proof}
The next proposition collects the properties of the bank monotone defined in Eq.~\eqref{f3_monotone}.
\bankproperties*
\begin{proof}
Most of the properties listed above follows straightforwardly from the ones of the invariant sets
$\f_i$'s. Here, we only focus on property~\ref{item:B4}, stating that
\begin{equation}
f_{\text{bank}}^{\bar{E}_{\f_1},\bar{E}_{\f_2}}(\Tr{k}{\rho_n})
\leq
f_{\text{bank}}^{\bar{E}_{\f_1},\bar{E}_{\f_2}}(\rho_n)
, \qquad
\forall \, n, k \in \N \ , \ k < n \ , \ \forall \, \rho_n \in \SHn{n}.
\end{equation}
In order to prove the above property, we make use of Lem.~\ref{f_i_inequality} and of the definition
of bank monotone, see Eq.~\eqref{f3_monotone}. First, let us divide the $n$ copies of the system into
two sets, so that $\hil^{\otimes n} = \hil^{\otimes k} \otimes \hil^{\otimes n-k}$, and in the following
equation we refer to $S_1$ as the system described by the first $k$ copies, and to $S_2$ as the
system described by the last $n-k$ copies. In particular, $\rho_{S_1} = \Tr{n-k}{\rho_n} \in \SHn{k}$,
and $\rho_{S_2} = \Tr{k}{\rho_n} \in \SHn{n-k}$. Then, we have the following chain of inequalities
\begin{align}
f_{\text{bank}}^{\bar{E}_{\f_1},\bar{E}_{\f_2}}(\rho_n)
&= \alpha \left( E_{\f_1}(\rho_n) - n \, \bar{E}_{\f_1} \right)
+
\beta \left( E_{\f_2}(\rho_n) - n \, \bar{E}_{\f_2} \right) \nonumber \\
&\geq \alpha \left( E_{\f_1}(\rho_{S_1}) + E_{\f_1}(\rho_{S_2}) - n \, \bar{E}_{\f_1} \right)
+ \beta \left( E_{\f_2}(\rho_{S_1}) + E_{\f_2}(\rho_{S_2}) - n \, \bar{E}_{\f_2} \right) \nonumber \\
&=
\alpha \left( E_{\f_1}(\rho_{S_1}) - k \, \bar{E}_{\f_1} \right)
+
\beta \left( E_{\f_2}(\rho_{S_1}) - k \, \bar{E}_{\f_2} \right) \nonumber \\
&+
\alpha \left( E_{\f_1}(\rho_{S_2}) - (n-k) \bar{E}_{\f_1} \right)
+
\beta \left( E_{\f_2}(\rho_{S_2}) - (n-k) \bar{E}_{\f_2} \right) \nonumber \\
&= f_{\text{bank}}^{\bar{E}_{\f_1},\bar{E}_{\f_2}}(\rho_{S_1})
+
f_{\text{bank}}^{\bar{E}_{\f_1},\bar{E}_{\f_2}}(\rho_{S_2})
\geq
f_{\text{bank}}^{\bar{E}_{\f_1},\bar{E}_{\f_2}}(\rho_{S_2})
=
f_{\text{bank}}^{\bar{E}_{\f_1},\bar{E}_{\f_2}}(\Tr{k}{\rho_n}),
\end{align}
where the first inequality follows from Lem.~\ref{f_i_inequality}, and the second one from the
fact that the bank monotone is non-negative, which itself follows from properties~\ref{item:B1}
and \ref{item:B2}.
\end{proof}
The following proposition is used in Sec.~\ref{average_non_increasing} to show that single-resource
theories whose class of allowed operations does not increase the average value of a given observable admit
a monotone that is asymptotic continuous, see property~\ref{item:M7}.
\begin{prop}
\label{average_asymp_cont}
Consider an Hilbert space $\hil$ with dimension $d$, an Hermitian operator $A \in \BH$, and
the function $M_A \, : \, \SH \rightarrow \R$ defined as
\begin{equation}
M_A(\rho) = \tr{A \rho} - a_0,
\end{equation}
where $\rho \in \SH$ is an element of the state-space, and $a_0$ is the minimum eigenvalue of $A$.
When $n$ copies of the Hilbert space are considered, $\hil_n = \otimes_{i=1}^n \hil^{(i)}$, the above
operator is extended as $A_n = \sum_{i=1}^n A^{(i)}$, where $A^{(i)} \in \BH$ acts on the $i$-th copy
of the Hilbert space. Then, the function $M_A$ is asymptotic continuous.
\end{prop}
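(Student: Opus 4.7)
The plan is to show asymptotic continuity by a direct operator-norm bound: since $M_A$ is an affine functional of the state, the difference $M_A(\rho_n) - M_A(\sigma_n)$ depends linearly on $\rho_n - \sigma_n$, and the Hölder-type trace inequality $|\tr{XY}| \leq \|X\|_{\infty} \|Y\|_1$ then reduces everything to bounding the operator norm of the extended observable.

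First I would rewrite the monotone on $n$ copies in a form that makes the cancellation of the constant transparent. Observe that the minimum eigenvalue of $A_n = \sum_{i=1}^n A^{(i)}$ equals $n a_0$, so it is natural to work with the shifted operator $\tilde{A} = A - a_0 \Id$, which is positive semidefinite with spectral range $\Delta := a_{\max} - a_0 \geq 0$, where $a_{\max}$ is the largest eigenvalue of $A$. Then $M_A(\rho) = \tr{\tilde{A}\rho}$, and on $n$ copies $M_A(\rho_n) = \tr{\tilde{A}_n \rho_n}$ with $\tilde{A}_n = \sum_{i=1}^n \tilde{A}^{(i)}$.

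Next I would estimate the difference. Linearity gives
\begin{equation}
M_A(\rho_n) - M_A(\sigma_n) = \tr{\tilde{A}_n \left( \rho_n - \sigma_n \right)},
\end{equation}
and Hölder's inequality for Schatten norms yields $\left| \tr{\tilde{A}_n (\rho_n - \sigma_n)} \right| \leq \|\tilde{A}_n\|_{\infty} \|\rho_n - \sigma_n\|_1$. Since each $\tilde{A}^{(i)}$ acts on a distinct tensor factor and is bounded above by $\Delta \Id$, we have $\tilde{A}_n \leq n \Delta \, \Id$, hence $\|\tilde{A}_n\|_{\infty} \leq n \Delta$. Combining these bounds,
\begin{equation}
\frac{\left| M_A(\rho_n) - M_A(\sigma_n) \right|}{n} \leq \Delta \left\| \rho_n - \sigma_n \right\|_1.
\end{equation}
Taking $n \to \infty$ and using the hypothesis $\|\rho_n - \sigma_n\|_1 \to 0$ gives the desired vanishing, since $\Delta$ is a fixed constant independent of $n$.

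There is no serious obstacle here: the linearity of $M_A$ in the state and the extensive (rather than super-extensive) growth of $\|\tilde{A}_n\|_{\infty}$ both conspire to make the factor of $n^{-1}$ in the asymptotic continuity definition exactly compensate the linear growth of the observable. The only small subtlety worth flagging in the write-up is that the constant shift $-n a_0$ in the definition of $M_A$ on $n$ copies cancels in the difference, so it plays no role in the continuity estimate; this is already encoded by passing from $A$ to $\tilde{A}$ at the outset.
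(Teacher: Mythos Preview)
Your proof is correct and follows essentially the same route as the paper: both reduce the difference $M_A(\rho_n)-M_A(\sigma_n)$ to a trace against the extended observable, bound it by the operator norm times the trace distance (you cite H\"older directly, the paper does it by hand via the eigendecomposition of $\rho_n-\sigma_n$), and then use that this operator norm grows only linearly in $n$. Your shift to $\tilde{A}=A-a_0\Id$ yields the slightly sharper constant $\Delta=a_{\max}-a_0$ in place of the paper's $\|A\|_\infty$, but this is cosmetic.
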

\begin{proof}
Consider two states $\rho_n$, $\sigma_n \in \mathcal{S}(\hil^{\otimes n})$, such that $\left\|
\rho_n - \sigma_n \right\|_1 \rightarrow 0$ for $n \rightarrow \infty$. We are interested in the
difference between the value of the function $M_A$ evaluated on $\rho_n$ and $\sigma_n$.
By definition,
\begin{equation}
\left| M_A(\rho_n) - M_A(\sigma_n) \right| = \left| \tr{ \left( \rho_n - \sigma_n \right) A_n } \right|.
\end{equation}
Now, we can diagonalise the operator $\rho_n - \sigma_n = \sum_{\lambda} \lambda
\ket{\psi_{\lambda}} \bra{\psi_{\lambda}}$. Then, we find
\begin{equation}
\left| \tr{ \left( \rho_n - \sigma_n \right) A_n } \right| =
\left| \sum_{\lambda} \lambda \bra{\lambda} A_n \ket{\lambda} \right| \leq
\sum_{\lambda} \left| \lambda \right| \left| \bra{\lambda} A_n \ket{\lambda} \right| \leq
\sum_{\lambda} \left| \lambda \right| \left\| A_n \right\|_{\infty},
\end{equation}
where we are using the operator norm $\| O \|_{\infty} = \sup_{\ket{\psi} \in \hil}
\frac{\| O \ket{\psi} \|}{\| \ket{\psi} \|}$, and the last inequality straightforwardly follows
from the definition of operator norm. Then, due to the way in which we have defined
$A_n$, it is easy to show that $\| A_n \|_{\infty} = n \, \| A \|_{\infty}$, and therefore
\begin{equation}
\sum_{\lambda} \left| \lambda \right| \left\| A_n \right\|_{\infty} =
n \left\| A \right\|_{\infty} \sum_{\lambda} \left| \lambda \right| =
n \left\| A \right\|_{\infty} \left\| \rho_n - \sigma_n \right\|_1.
\end{equation}
Finally, notice that $\dim \hil_n = d^n$, where $d$ is fixed by the initial choice of $\hil$.
Then, we have,
\begin{equation}
\left| M_A(\rho_n) - M_A(\sigma_n) \right| \leq
n \, \log d \, \left\| A \right\|_{\infty} \left\| \rho_n - \sigma_n \right\|_1.
\end{equation}
If we now divide by $n$ both side of the inequality, we get that
\begin{equation}
\frac{\left| M_A(\rho_n) - M_A(\sigma_n) \right|}{n}
\leq
\log d \, \left\| A \right\|_{\infty} \left\| \rho_n - \sigma_n \right\|_1,
\end{equation}
and if we send $n \rightarrow \infty$, we obtain that $\frac{1}{n} \,
\left| M_A(\rho_n) - M_A(\sigma_n) \right| \rightarrow 0$,
which proves the theorem.
\end{proof}
\par
The following lemma states that, when the sets $\f_i$'s are such that $\f_i^{(n)} = \f_i^{\otimes n}$
for all $n \in \N$, property~\ref{item:F5b}, the relative entropy distances from these sets are super-additive. This
lemma is used in Prop.~\ref{additive_f3} and Thm.~\ref{unique_f3}.
\begin{lem}
\label{f_i_inequality}
Consider a state $\rho_{S_1,S_2} \in \SHn{2}$, and suppose that the sets $\f_1$ and $\f_2$ satisfy
property~\ref{item:F5b}, that is, $\f_i^{(n)} = \f_i^{\otimes n}$ for all $n \in \N$, $i = 1,2$.
Then, the relative entropy distances from these sets, $E_{\f_1}$ and $E_{\f_2}$, are such that
\begin{equation}
\label{eq:fi_ineq}
E_{\f_i}(\rho_{S_1,S_2}) \geq E_{\f_i}(\rho_{S_1}) + E_{\f_i}(\rho_{S_2}) \ , \ i = 1,2,
\end{equation}
where $\rho_{S_1} = \Tr{S_2}{\rho_{S_1,S_2}}$, and similarly $\rho_{S_2} = \Tr{S_1}{\rho_{S_1,S_2}}$.
Furthermore, the above inequality is saturated if and only if $\rho_{S_1,S_2} = \rho_{S_1} \otimes
\rho_{S_2}$. The result extends trivially to the case in which $n > 2$ copies of the system are considered.
\end{lem}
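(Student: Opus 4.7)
The plan is to prove the inequality via a direct chain-rule identity for the relative entropy, which will also give the equality condition essentially for free. The key observation is that under property~\ref{item:F5b}, every element of $\f_i^{(2)}$ is of the form $\gamma_1 \otimes \gamma_2$ with $\gamma_1, \gamma_2 \in \f_i$, so the infimum defining $E_{\f_i}(\rho_{S_1,S_2})$ ranges only over product states. This is the crucial structural fact that makes a clean proof possible.

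The first step is to verify the identity
\begin{equation}
\re{\rho_{S_1,S_2}}{\gamma_1 \otimes \gamma_2}
= \re{\rho_{S_1,S_2}}{\rho_{S_1} \otimes \rho_{S_2}} + \re{\rho_{S_1}}{\gamma_1} + \re{\rho_{S_2}}{\gamma_2},
\end{equation}
valid for all $\gamma_1, \gamma_2 \in \SH$ (assuming the supports are compatible). This follows immediately by expanding $\log(\gamma_1 \otimes \gamma_2) = \log \gamma_1 \otimes \Id + \Id \otimes \log \gamma_2$, tracing against $\rho_{S_1,S_2}$, and recognising the marginals on the $S_1$ and $S_2$ sides. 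The mixed term $\re{\rho_{S_1,S_2}}{\rho_{S_1} \otimes \rho_{S_2}}$ is of course the quantum mutual information $I(S_1{:}S_2)_{\rho}$, which is non-negative and vanishes iff $\rho_{S_1,S_2} = \rho_{S_1} \otimes \rho_{S_2}$.

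The second step is to take the infimum of both sides of the identity over $\gamma_1, \gamma_2 \in \f_i$. The term $\re{\rho_{S_1,S_2}}{\rho_{S_1} \otimes \rho_{S_2}}$ is independent of the $\gamma_j$'s and can be pulled out, while the two remaining relative entropies depend only on a single $\gamma_j$ each and optimise independently. Invoking property~\ref{item:F5b} for $n=2$, which guarantees that $\f_i^{(2)} = \f_i \otimes \f_i$, the left-hand infimum computes $E_{\f_i}(\rho_{S_1,S_2})$ exactly (and not merely an upper bound restricted to product free states). One therefore obtains the exact identity
\begin{equation}
E_{\f_i}(\rho_{S_1,S_2})
= \re{\rho_{S_1,S_2}}{\rho_{S_1} \otimes \rho_{S_2}} + E_{\f_i}(\rho_{S_1}) + E_{\f_i}(\rho_{S_2}),
\end{equation}
from which the inequality of Eq.~\eqref{eq:fi_ineq} follows by positivity of mutual information, and the equality condition follows from Klein's inequality (the saturation case of Klein's inequality forces $\rho_{S_1,S_2} = \rho_{S_1} \otimes \rho_{S_2}$). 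The extension to $n > 2$ parts is then a straightforward induction, splitting $S_1 \cup \cdots \cup S_n$ as $S_1$ versus the rest and iterating.

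I do not anticipate any serious obstacle: once property~\ref{item:F5b} is invoked to restrict the free set to products, the argument reduces to the standard chain-rule identity for relative entropy and basic positivity. The one subtlety worth stating carefully is that the identity is being used as an equality, not merely a bound: the inequality $E_{\f_i}(\rho_{S_1,S_2}) \geq \ldots$ would follow from any inclusion $\f_i^{\otimes 2} \subseteq \f_i^{(2)}$, but to get the sharp form with the mutual information as the gap one needs the reverse inclusion $\f_i^{(2)} \subseteq \f_i^{\otimes 2}$ as well, which is precisely the content of~\ref{item:F5b}. It is also worth noting that the proof never uses convexity, closedness, or full-rank properties of $\f_i$ beyond what is needed to make the relative entropies finite, so the result is quite robust.
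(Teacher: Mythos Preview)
Your proof is correct and follows essentially the same approach as the paper: both use property~\ref{item:F5b} to restrict the infimum to product states $\gamma_1\otimes\gamma_2$, expand $\log(\gamma_1\otimes\gamma_2)$ additively, and invoke subadditivity of the von Neumann entropy (equivalently, non-negativity of mutual information). Your packaging via the chain-rule identity $D(\rho_{S_1S_2}\|\gamma_1\otimes\gamma_2)=I(S_1{:}S_2)_\rho+D(\rho_{S_1}\|\gamma_1)+D(\rho_{S_2}\|\gamma_2)$ is slightly cleaner, as it makes the gap explicit and thereby handles the equality condition more transparently than the paper's proof, which leaves that case implicit.
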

\begin{proof}
Let us consider the monotone $E_{\f_1}$, as the following argument can be equally applied to $E_{\f_2}$.
By definition of relative entropy distance, we have that
\begin{equation}
\label{part_1_fi_ineq}
E_{\f_1} (\rho_{S_1,S_2}) = \inf_{\sigma_{S_1,S_2} \in \f_1^{(2)}} D( \rho_{S_1,S_2} \| \sigma_{S_1,S_2} )
= - S(\rho_{S_1,S_2}) + \inf_{\sigma_{S_1,S_2} \in \f_1^{(2)}} \left( - \tr{\rho_{S_1,S_2} \log \sigma_{S_1,S_2}} \right),
\end{equation}
where $S(\rho_{S_1,S_2}) = - \tr{\rho_{S_1,S_2} \log \rho_{S_1,S_2}}$ is the Von Neumann entropy
of the state $\rho_{S_1,S_2}$. From the sub-additivity of the Von Neumann entropy, we have that
\begin{equation}
\label{superadd_negent}
- S(\rho_{S_1,S_2}) \geq - S(\rho_{S_1}) - S(\rho_{S_2}),
\end{equation}
while from the property~\ref{item:F5b} it follows that
\begin{align}
\inf_{\sigma_{S_1,S_2} \in \f_1^{(2)}}
\left( - \tr{\rho_{S_1,S_2} \log \sigma_{S_1,S_2}} \right)
&= \inf_{\sigma_{S_1}, \sigma_{S_2} \in \f_1}
\left( - \tr{\rho_{S_1,S_2} \log \sigma_{S_1} \otimes \sigma_{S_2}} \right) \nonumber \\
&= \inf_{\sigma_{S_1}, \sigma_{S_2} \in \f_1} 
\left( - \tr{\rho_{S_1} \log \sigma_{S_1}} - \tr{\rho_{S_2} \log \sigma_{S_2}} \right) \nonumber \\
&= \inf_{\sigma_{S_1} \in \f_1} \left( - \tr{\rho_{S_1} \log \sigma_{S_1}} \right)
+ \inf_{\sigma_{S_2} \in \f_1} \left( - \tr{\rho_{S_2} \log \sigma_{S_2}} \right).
\label{part_2_fi_ineq}
\end{align}
From Eqs.~\eqref{part_1_fi_ineq}, \eqref{superadd_negent}, and \eqref{part_2_fi_ineq} it follows that
\begin{align}
E_{\f_1} (\rho_{S_1,S_2}) &\geq
\inf_{\sigma_{S_1} \in \f_1} \left( - S(\rho_{S_1}) - \tr{\rho_{S_1} \log \sigma_{S_1}} \right)
+ \inf_{\sigma_{S_2} \in \f_1} \left( - S(\rho_{S_2}) - \tr{\rho_{S_2} \log \sigma_{S_2}} \right) \nonumber \\
&= E_{\f_1} (\rho_{S_1}) + E_{\f_1} (\rho_{S_2}).
\end{align}
\end{proof}
The following proposition is used in Sec.~\ref{bank_interconvert}, in Prop.~\ref{monotonicity_passive},
and in Cor.~\ref{bank_equal_rel_ent}. The proposition states that, when the curve of bank states is
strictly convex, and we consider $n$ copies of a bank system, the set of bank states $\f^{(n)}_{\text{bank}}$
is given by the tensor product of $n$ copies of states that are in the set $\f_{\text{bank}}$, each of them
with the same value of monotones $E_{\f_1}$ and $E_{\f_2}$.
\begin{prop}
\label{additive_f3}
Suppose the sets $\f_1$ and $\f_2$ satisfy property~\ref{item:F5b}, that is, $\f_i^{(n)} = \f_i^{\otimes n}$
for all $n \in \N$, $i = 1,2$, and the set of bank states $\f_{\text{bank}}$ is represented by a strictly convex
curve in the resource diagram. Consider the set of bank states $\f_{\text{bank}}\left(\bar{E}_{\f_1},\bar{E}_{\f_2}
\right)$ defined in Eq.~\eqref{subset_bank}, where $E_{\f_1}$ and $E_{\f_2}$ are the relative entropy distances
from the sets $\f_1$ and $\f_2$, respectively. Then, when $n \in \N$ copies of the bank system are considered,
we find that the set of bank states coincides with
\begin{equation}
\f^{(n)}_{\text{bank}} = \left\{ \rho_1 \otimes \ldots \otimes \rho_n \in \SHn{n} \ | \
\exists \, \bar{E}_{\f_1} , \bar{E}_{\f_2} \ \text{such that} \
\rho_1, \ldots , \rho_n \in \f_{\text{bank}}\left(\bar{E}_{\f_1},\bar{E}_{\f_2}\right) \right\}.
\end{equation}
Furthermore, we have that for all subset of bank state $\f_{\text{bank}}\left(\bar{E}_{\f_1},\bar{E}_{\f_2}\right)
\subset \SH$, the corresponding bank subset in $\SHn{n}$ is such that
\begin{equation}
\label{additivity_bank}
\f^{(n)}_{\text{bank}}\left(\bar{E}_{\f_1},\bar{E}_{\f_2}\right)
=
\f^{\otimes n}_{\text{bank}}\left(\bar{E}_{\f_1},\bar{E}_{\f_2}\right).
\end{equation}
\end{prop}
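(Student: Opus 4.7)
The plan is to prove both inclusions of the claimed equality of sets, using two workhorses that are available under the hypotheses. The first is Lem.~\ref{f_i_inequality}, which under property~\ref{item:F5b} gives super-additivity $E_{\f_i}(\rho_{S_1 S_2}) \geq E_{\f_i}(\rho_{S_1}) + E_{\f_i}(\rho_{S_2})$ with equality if and only if $\rho_{S_1 S_2} = \rho_{S_1} \otimes \rho_{S_2}$ (and hence, by iteration, the analogous statement for $n$ factors). The second is convexity of each $E_{\f_i}$, which lets me collapse an $n$-copy witness to a single-copy one. The strict convexity of the bank curve $c_{\text{bank}}$ from appendix~\ref{convex_bound} will play the decisive role in identifying all the marginals of an $n$-copy bank state with the same single-copy subset.

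For the inclusion $\supseteq$, suppose that $\rho_1, \ldots, \rho_n \in \f_{\text{bank}}(\bar{E}_{\f_1}, \bar{E}_{\f_2})$ and, for contradiction, that their tensor product $\rho_n := \rho_1 \otimes \cdots \otimes \rho_n$ is not bank in $\SHn{n}$. Then there is some $\sigma_n \in \SHn{n}$ with $E_{\f_i}(\sigma_n) \leq E_{\f_i}(\rho_n) = n\bar{E}_{\f_i}$ for both $i$, strict in at least one. I would take the single-site marginals $\sigma_k$ of $\sigma_n$ and apply super-additivity to get $\tfrac{1}{n}\sum_k E_{\f_i}(\sigma_k) \leq \bar{E}_{\f_i}$ for both $i$, strict in at least one. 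Setting $\tau := \tfrac{1}{n}\sum_k \sigma_k \in \SH$ and using convexity of $E_{\f_i}$, I obtain $E_{\f_i}(\tau) \leq \bar{E}_{\f_i}$ for both $i$ with strict inequality in at least one, contradicting the assumption that every $\rho_k$ lies in $\f_{\text{bank}}(\bar{E}_{\f_1}, \bar{E}_{\f_2})$.

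For the reverse inclusion, I would take $\rho_n \in \f^{(n)}_{\text{bank}}$ with single-site marginals $\rho_k$ and proceed in three steps. First, the iterated Lem.~\ref{f_i_inequality} says that if $\rho_n$ is not the tensor product of its marginals, then $\sum_k E_{\f_i}(\rho_k) < E_{\f_i}(\rho_n)$ strictly for both $i$; taking $\sigma_n = \rho_1 \otimes \cdots \otimes \rho_n$ as a competitor would then violate the bank property of $\rho_n$, so $\rho_n$ must factor. Second, each $\rho_k$ must itself be bank at the single-copy level, since replacing a non-bank marginal by a strictly dominated state and leaving the other factors alone would, by additivity of $E_{\f_i}$ under property~\ref{item:F5b}, again contradict the bank property of $\rho_n$. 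Third, to show that all the marginals lie in the same subset, I would use strict convexity of $c_{\text{bank}}$: if the coordinates $(E_{\f_1}(\rho_k), E_{\f_2}(\rho_k))$ were not all identical, Jensen's inequality would give $c_{\text{bank}}\bigl(\tfrac{1}{n}\sum_k E_{\f_1}(\rho_k)\bigr) < \tfrac{1}{n}\sum_k E_{\f_2}(\rho_k)$, and picking a single-copy bank state $\tau$ at this averaged abscissa (which exists by the continuity and path-connectedness arguments of appendix~\ref{convex_bound}) would produce $\tau^{\otimes n}$ with the same $E_{\f_1}$ but strictly smaller $E_{\f_2}$ than $\rho_n$, contradicting the bank property.

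Equation~\eqref{additivity_bank} then falls out of this characterization: any element of $\f^{(n)}_{\text{bank}}(\bar{E}_{\f_1}, \bar{E}_{\f_2})$ is forced to be a tensor product of marginals lying in some common $\f_{\text{bank}}(\bar{e}_1, \bar{e}_2)$, and additivity of $E_{\f_i}$ under property~\ref{item:F5b} pins down $\bar{e}_i = \bar{E}_{\f_i}$, the reverse inclusion following from the $\supseteq$ case already proved. The main obstacle I anticipate is the third step of the reverse inclusion: it crucially uses both the existence of a single-copy bank state at any prescribed abscissa (resting on the compactness and path-connectedness arguments of appendix~\ref{convex_bound}) and the strict-convexity hypothesis on $c_{\text{bank}}$; the remaining arguments are essentially bookkeeping with the super-additivity and convexity inequalities.
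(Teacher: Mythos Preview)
Your argument is correct and rests on the same two pillars as the paper's proof: the super-additivity of $E_{\f_i}$ from Lem.~\ref{f_i_inequality} (with its equality case forcing factorisation), and the strict convexity of the bank curve (Prop.~\ref{convex_curve}). The logical content is essentially the same.

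The organisation differs in one noteworthy way. The paper compresses everything into a single sweep for $n=2$: for \emph{any} $\sigma_{S_1,S_2}\in\SHn{2}$ it produces a tensor square $\rho_{P_3}^{\otimes 2}$ of a single-copy bank state weakly dominating $\sigma_{S_1,S_2}$ in both $E_{\f_i}$'s, with equality precisely when $\sigma_{S_1,S_2}$ is a product of two states from the same bank subset. To build this dominating square, the paper first dominates each marginal $\sigma_{S_j}$ by a single-copy bank state $\rho_{P_j}$ (Eq.~\eqref{passiv_ineq}) and then averages on the bank curve via Prop.~\ref{convex_curve}. You instead prove $\supseteq$ and $\subseteq$ separately for general $n$; in the $\supseteq$ direction you collapse the $n$-copy witness $\sigma_n$ to a single-copy state $\tau=\tfrac{1}{n}\sum_k\sigma_k$ using convexity of $E_{\f_i}$ on the state space (which needs~\ref{item:F2}), rather than passing through per-marginal bank dominators. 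Your $\subseteq$ step~3 is essentially the paper's convexity-of-the-bank-curve argument, made explicit via Jensen for $c_{\text{bank}}$. Both routes are sound; yours is slightly more detailed and handles arbitrary $n$ directly, while the paper's is more compact but leaves the extension to $n>2$ to the reader.
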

\begin{proof}
We prove the theorem for $n = 2$, as the argument extends trivially for $n > 2$.
Consider a state $\sigma_{S_1,S_2} \in \SHn{2}$. From Lem.~\ref{f_i_inequality},
it follows that
\begin{equation}
\label{correlat_ineq}
E_{\f_i}(\sigma_{S_1,S_2}) \geq E_{\f_i}(\sigma_{S_1}) + E_{\f_i}(\sigma_{S_2}) \ , \ i = 1,2, 
\end{equation}
where $\sigma_{S_1} = \Tr{S_2}{\sigma_{S_1,S_2}}$, $\sigma_{S_2} = \Tr{S_1}{\sigma_{S_1,S_2}}$,
and the inequality is saturated iff $\sigma_{S_1,S_2} = \sigma_{S_1} \otimes \sigma_{S_2}$.
Now, for both the states $\sigma_{S_1}, \sigma_{S_2} \in \SH$, select the bank states
$\rho_{P_1}, \rho_{P_2} \in \f_{\text{bank}}$ such that
\begin{equation}
\label{passiv_ineq}
E_{\f_i}(\sigma_{S_j}) \geq E_{\f_i}(\rho_{P_j}) \ , \ i,j = 1,2.
\end{equation}
Recall now that, in the $E_{\f_1}$--$E_{\f_2}$ diagram, the curve of bank state is convex (see Prop.~\ref{convex_curve}),
and therefore given $\rho_{P_1}, \rho_{P_2} \in \f_{\text{bank}}$, we can find another $\rho_{P_3} \in \f_{\text{bank}}$
such that
\begin{equation}
\label{convex_ineq}
\frac{1}{2} E_{\f_i}(\rho_{P_1}) + \frac{1}{2} E_{\f_i}(\rho_{P_2}) \geq E_{\f_i}(\rho_{P_3}) \ , \ i = 1,2,
\end{equation}
where the inequality (when the curve is strictly convex) is saturated iff $\rho_{P_1}$, $\rho_{P_2}$, and
$\rho_{P_3}$ all belong to the same subset $\f_{\text{bank}}\left(\bar{E}_{\f_1},\bar{E}_{\f_2}\right)$. By
combining Eqs.~\eqref{correlat_ineq}, \eqref{passiv_ineq}, and \eqref{convex_ineq}, together with
property~\ref{item:F5b} of the sets $\f_1$ and $\f_2$ (that implies the additivity of the corresponding relative
entropy distances), we find that for all $\sigma_{S_1,S_2} \in \SHn{2}$, it exists a $\rho_{P_3} \in \f_{\text{bank}}$
such that
\begin{equation}
E_{\f_i}(\sigma_{S_1,S_2}) \geq E_{\f_i}(\rho_{P_3}^{\otimes 2}) \ , \ i = 1, 2
\end{equation}
where the inequality is saturated iff $\sigma_{S_1,S_2} = \sigma_{S_1} \otimes \sigma_{S_2}$, and
both $\sigma_{S_1}$ and $\sigma_{S_2}$ belong to the same subset $\f_{\text{bank}}\left(\bar{E}_{\f_1},
\bar{E}_{\f_2}\right)$. Due to the definition of bank states given in Eq.~\eqref{set_f3}, the thesis of this
proposition follows.
\end{proof}
The next proposition shows that, when the invariant sets $\f_1$ and $\f_2$ are convex
sets, the set of bank states $\f_{\text{bank}}\left(\bar{E}_{\f_1},\bar{E}_{\f_2} \right)$, defined in
Eq.~\eqref{subset_bank}, is convex as well. This proposition is used in Sec.~\ref{bank_interconvert},
as well as in Thm.~\ref{bank_equal_rel_ent}.
\begin{prop}
\label{convex_f3}
Suppose that $\f_1$ and $\f_2$ are convex sets, property~\ref{item:F2}, and consider the relative
entropy distances from these two sets, $E_{\f_1}$ and $E_{\f_2}$. Then, the set of bank states
$\f_{\mathrm{bank}}\left(\bar{E}_{\f_1},\bar{E}_{\f_2}\right)$ is convex, as well as its extension to
the $n$-copy case, $\f^{(n)}_{\mathrm{bank}}\left(\bar{E}_{\f_1},\bar{E}_{\f_2}\right)$, defined in
Eq.~\eqref{additivity_bank}.
\end{prop}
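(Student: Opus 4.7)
The plan is to reduce convexity of $\f_{\mathrm{bank}}\left(\bar{E}_{\f_1},\bar{E}_{\f_2}\right)$ to two facts: (i) the relative entropy distance from a convex set is convex, and (ii) the defining property of a bank state prevents the upper bound obtained from (i) from being strict. First I would recall that joint convexity of the relative entropy together with property~\ref{item:F2} implies that $E_{\f_i}$ is convex: given $\sigma_1,\sigma_2\in\f_i$, their mixture $\sigma_\lambda=\lambda\sigma_1+(1-\lambda)\sigma_2$ lies in $\f_i$, so
\begin{equation}
E_{\f_i}(\rho_\lambda)\le D(\rho_\lambda\|\sigma_\lambda)\le \lambda D(\rho_1\|\sigma_1)+(1-\lambda)D(\rho_2\|\sigma_2),
\end{equation}
and infimising over $\sigma_1,\sigma_2$ yields $E_{\f_i}(\rho_\lambda)\le \lambda E_{\f_i}(\rho_1)+(1-\lambda)E_{\f_i}(\rho_2)$ for $i=1,2$.

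Next, given $\rho_1,\rho_2\in\f_{\mathrm{bank}}\left(\bar{E}_{\f_1},\bar{E}_{\f_2}\right)$ and $\lambda\in[0,1]$, set $\rho_\lambda=\lambda\rho_1+(1-\lambda)\rho_2$. The convexity just established gives $E_{\f_i}(\rho_\lambda)\le \bar{E}_{\f_i}$ for $i=1,2$. I would then argue both inequalities must be equalities: if, say, $E_{\f_1}(\rho_\lambda)<\bar{E}_{\f_1}$ while $E_{\f_2}(\rho_\lambda)\le \bar{E}_{\f_2}$, then choosing $\sigma=\rho_\lambda$ in the definition~\eqref{set_f3} applied to $\rho_1$ fails all three disjuncts, contradicting $\rho_1\in\f_{\mathrm{bank}}$. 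The same argument rules out $E_{\f_2}(\rho_\lambda)<\bar{E}_{\f_2}$, so $E_{\f_i}(\rho_\lambda)=\bar{E}_{\f_i}$ for $i=1,2$.

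It remains to check that $\rho_\lambda\in\f_{\mathrm{bank}}$, i.e., it satisfies the defining condition of Eq.~\eqref{set_f3}. Suppose by contradiction that some $\sigma\in\SH$ satisfies $E_{\f_1}(\sigma)\le \bar{E}_{\f_1}$ and $E_{\f_2}(\sigma)\le \bar{E}_{\f_2}$ with at least one inequality strict; since $\bar{E}_{\f_i}=E_{\f_i}(\rho_1)$, the same $\sigma$ would violate the bank condition for $\rho_1$, again a contradiction. Hence $\rho_\lambda\in\f_{\mathrm{bank}}\left(\bar{E}_{\f_1},\bar{E}_{\f_2}\right)$.

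For the many-copy case, I would run the identical argument in $\SHn{n}$, using that $\f_i^{(n)}$ inherits convexity (this is a mild additional assumption, or follows from Prop.~\ref{additive_f3} under property~\ref{item:F5b} since $\f_i^{\otimes n}$ is convex when $\f_i$ is) so that $E_{\f_i}$ on $\SHn{n}$ is again convex, and then apply the same two-step argument to the bank subset $\f^{(n)}_{\mathrm{bank}}\left(\bar{E}_{\f_1},\bar{E}_{\f_2}\right)$. The only subtle point, and the step I expect to require the most care, is the second: making sure that equality of both $E_{\f_i}(\rho_\lambda)$ with $\bar{E}_{\f_i}$ follows cleanly from the bank definition without circularity, which is why I proceed by contradiction against $\rho_1$ (or symmetrically $\rho_2$) rather than trying to construct the equality directly.
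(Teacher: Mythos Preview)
Your proposal is correct and follows essentially the same route as the paper's proof: establish convexity of $E_{\f_i}$ via joint convexity of the relative entropy and convexity of $\f_i$, obtain $E_{\f_i}(\rho_\lambda)\le\bar{E}_{\f_i}$, and then use the bank condition of Eq.~\eqref{set_f3} applied to $\rho_1$ with $\sigma=\rho_\lambda$ to force equality. One small remark: your parenthetical that ``$\f_i^{\otimes n}$ is convex when $\f_i$ is'' is not true in general, but under property~\ref{item:F5b} (which is implicitly in force whenever Eq.~\eqref{additivity_bank} is invoked) the paper observes that $\f_i$ is a singleton, so $\f_i^{(n)}=\f_i^{\otimes n}$ is trivially convex and the $n$-copy argument goes through exactly as you outline.
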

\begin{proof}
Let us consider two states $\rho_1$, $\rho_2 \in \f_{\text{bank}}\left(\bar{E}_{\f_1},\bar{E}_{\f_2}\right)$.
For these two states, there exists $\sigma_1, \sigma_2 \in \f_1$ such that
\begin{subequations}
\label{add_init_cond}
\begin{align}
E_{\f_1}(\rho_1) = \re{\rho_1}{\sigma_1} = \bar{E}_{\f_1}, \\
E_{\f_1}(\rho_2) = \re{\rho_2}{\sigma_2} = \bar{E}_{\f_1}.
\end{align}
\end{subequations}
Then, for all $\lambda \in [0,1]$, we have
\begin{align}
E_{\f_1} \big( \lambda \, \rho_1 + (1 - \lambda) \, \rho_2 \big) &=
\inf_{\sigma \in \f_1} \re{\lambda \, \rho_1 + (1 - \lambda) \, \rho_2}{\sigma} \nonumber \\
&\leq \re{\lambda \, \rho_1 + (1 - \lambda) \, \rho_2}{\lambda \, \sigma_1 + (1 - \lambda) \, \sigma_2}
\nonumber \\
&\leq \lambda \, \re{\rho_1}{\sigma_1} + (1 - \lambda) \, \re{\rho_2}{\sigma_2} = \bar{E}_{\f_1},
\end{align}
where the first inequality follows from the fact that $\f_1$ is convex, property~\ref{item:F2}, and the second inequality
from the joint convexity of the relative entropy. In the same way, it follows that
\begin{equation}
E_{\f_2} \big( \lambda \, \rho_1 + (1 - \lambda) \, \rho_2 \big) \leq \bar{E}_{\f_2}.
\end{equation}
Since $\rho_1$, $\rho_2 \in \f_{\text{bank}}\left(\bar{E}_{\f_1},\bar{E}_{\f_2}\right)$, they satisfy the properties
of Eq.~\eqref{set_f3}, and therefore it has to be that, for all $\lambda \in [0,1]$,
\begin{equation}
E_{\f_1} \big( \lambda \, \rho_1 + (1 - \lambda) \, \rho_2 \big) = \bar{E}_{\f_1} \ \text{and} \
E_{\f_2} \big( \lambda \, \rho_1 + (1 - \lambda) \, \rho_2 \big) = \bar{E}_{\f_2}.
\end{equation}
Thus, we have that $\lambda \, \rho_1 + (1 - \lambda) \, \rho_2 \in \f_{\text{bank}}\left(\bar{E}_{\f_1},\bar{E}_{\f_2}\right)$.
This result can be extended to the case of $n \in \N$ copies of the system, where
the bank set $\f^{(n)}_{\mathrm{bank}}\left(\bar{E}_{\f_1},\bar{E}_{\f_2}\right)$ is defined as in
Eq.~\eqref{additivity_bank}. In this case, the proof is analogous to the one considered above,
with the exception that in the rhs of Eqs.~\eqref{add_init_cond}, and of the following ones, we
add the multiplicative factor $n$.
\end{proof}
The next lemma is used in Prop.~\ref{monotonicity_passive}. The lemma states that, given the
class of operations $\A_{\text{multi}}$ for which $\f_1$ and $\f_2$ are invariant sets, the set of bank
states $\f_{\text{bank}}$, defined in Eq.~\eqref{set_f3}, is invariant as well.
\begin{lem}
\label{lem:inv_f3}
Consider a resource theory $\rt_{\text{multi}}$ with allowed operations $\A_{\text{multi}}$, and two
invariant sets $\f_1$ and $\f_2$. Consider the subset of bank states $\f_{\text{bank}}\left(\bar{E}_{\f_1},
\bar{E}_{\f_2}\right)$ as defined in Eq.~\eqref{subset_bank}. Then, for all $\chn \in \A_{\text{multi}}$,
we have that $\f_{\text{bank}}\left(\bar{E}_{\f_1}, \bar{E}_{\f_2}\right)$ is an invariant set, that is
\begin{equation}
\chn \left( \rho \right) \in \f_{\mathrm{bank}}\left(\bar{E}_{\f_1}, \bar{E}_{\f_2}\right)
, \quad \forall \, \rho \in \f_{\mathrm{bank}}\left(\bar{E}_{\f_1}, \bar{E}_{\f_2}\right)
\end{equation}
Analogously, the set of bank states describing $n$ copies of the bank system is invariant under the
class of allowed operations $\A^{(n)}_{\text{multi}}$.
\end{lem}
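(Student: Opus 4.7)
The plan is to show invariance by combining two simple ingredients: the monotonicity of the relative entropy distances $E_{\f_1}$ and $E_{\f_2}$ under every $\chn \in \A_{\text{multi}}$, and the defining ``passivity'' condition of the bank subset. Concretely, fix $\rho \in \f_{\text{bank}}(\bar{E}_{\f_1},\bar{E}_{\f_2})$ and any $\chn \in \A_{\text{multi}}$. Since $\A_{\text{multi}} = \A_1 \cap \A_2$ and $E_{\f_i}$ is a monotone of the single-resource theory $\rt_i$, we have
\begin{equation*}
E_{\f_i}(\chn(\rho)) \leq E_{\f_i}(\rho) = \bar{E}_{\f_i}, \qquad i=1,2.
\end{equation*}

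Next I apply the bank condition of Eq.~\eqref{set_f3} to $\rho$ with the particular test state $\sigma = \chn(\rho)$. The first two alternatives $E_{\f_1}(\sigma) > \bar{E}_{\f_1}$ and $E_{\f_2}(\sigma) > \bar{E}_{\f_2}$ are excluded by the inequalities just derived, so the only remaining possibility is $E_{\f_1}(\chn(\rho)) = \bar{E}_{\f_1}$ and $E_{\f_2}(\chn(\rho)) = \bar{E}_{\f_2}$. Thus the output has the correct monotone values; it only remains to argue that it is itself a bank state. But since $\chn(\rho)$ has exactly the same values $\bar{E}_{\f_1},\bar{E}_{\f_2}$ of the two monotones as $\rho$, the trichotomy in Eq.~\eqref{set_f3} holds for $\chn(\rho)$ with respect to every $\sigma \in \SH$ exactly as it did for $\rho$. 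Hence $\chn(\rho) \in \f_{\text{bank}}(\bar{E}_{\f_1},\bar{E}_{\f_2})$.

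For the many-copy case, the same argument goes through verbatim after replacing $\A_{\text{multi}}$ by $\A^{(n)}_{\text{multi}} = \cap_i \A^{(n)}_i$ and the monotones by their natural extensions to $\SHn{n}$: the monotonicity of $E_{\f_i}$ on $\SHn{n}$ under every map in $\A^{(n)}_{\text{multi}}$ still holds, and the passivity definition of the $n$-copy bank set is formally identical, so the same two-line argument applies. No additional assumption (such as \ref{item:F5b}) is needed here, since we never decompose the output across tensor factors — we only compare the monotone values before and after the map.

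I do not foresee a real obstacle: the result is essentially a direct consequence of how the passivity condition is phrased (``no state lies strictly below $\rho$ in both coordinates''), together with contractivity of $E_{\f_i}$ under $\A_{\text{multi}}$. The only subtlety is to remember that $\f_{\text{bank}}(\bar{E}_{\f_1},\bar{E}_{\f_2}) \subseteq \f_{\text{bank}}$, so one must verify both the level-set constraint and membership in $\f_{\text{bank}}$, but the latter is immediate once the monotone values are shown to be preserved.
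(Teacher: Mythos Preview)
Your argument is correct and essentially identical to the paper's own proof: both use monotonicity of $E_{\f_1}$ and $E_{\f_2}$ under $\A_{\text{multi}}$ to rule out the two strict-inequality alternatives in the bank definition when testing $\sigma=\chn(\rho)$, forcing the monotone values to be preserved. Your extra remark that membership in $\f_{\text{bank}}$ depends only on the pair $(E_{\f_1},E_{\f_2})$ is a minor clarification the paper leaves implicit.
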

\begin{proof}
Let us consider $\rho \in \f_{\text{bank}}\left(\bar{E}_{\f_1}, \bar{E}_{\f_2}\right)$, as well as the state
$\chn(\rho)$ obtained by applying the map $\chn \in \A_{\text{multi}}$ to the bank state.
Due to the monotonicity of $E_{\f_1}$ and $E_{\f_2}$, we have that $E_{\f_1}\left(\chn(\rho)\right)
\leq E_{\f_1}\left(\rho\right)$, and $E_{\f_2}\left(\chn(\rho)\right) \leq E_{\f_2}\left(\rho\right)$.
Recall now that $\rho$ is a bank state, which implies that $\forall \, \sigma \in \SH$, one (or more) of
the following options holds
\begin{enumerate}
\item \label{ineq_f1} $E_{\f_1}(\sigma) > E_{\f_1}(\rho)$.
\item \label{ineq_f2} $E_{\f_2}(\sigma) > E_{\f_2}(\rho)$.
\item \label{eq_f1_f2} $E_{\f_1}(\sigma) = E_{\f_1}(\rho)$ and $E_{\f_2}(\sigma) = E_{\f_2}(\rho)$.
\end{enumerate}
However, the monotonicity conditions given by $E_{\f_1}$ and $E_{\f_2}$ implies that $\chn(\rho)$
violates options \ref{ineq_f1} and \ref{ineq_f2}, so that option \ref{eq_f1_f2} is the only possible one.
But this implies that $E_{\f_1}(\chn(\rho)) = E_{\f_1}(\rho)$ and $E_{\f_2}(\chn(\rho)) =
E_{\f_2}(\rho)$, meaning that $\chn(\rho) \in \f_{\text{bank}}\left(\bar{E}_{\f_1}, \bar{E}_{\f_2}\right)$.
The same argument applies to the set $\f_{\text{bank}}^{(n)}$, when $n$ copies of the system are
considered. Indeed, this case is analogous to the one considered above, with the sole difference that now
the state $\rho \in \f_{\text{bank}}^{(n)}$, the state $\sigma \in \SHn{n}$, and the operations we use
are in the class $\A_{\text{multi}}^{(n)}$ defined in Sec.~\ref{multi_resource}.
\end{proof}
The last proposition of this section shows that the relative entropy distance from the set
$\f_{\text{bank}}\left(\bar{E}_{\f_1},\bar{E}_{\f_2}\right)$ is monotonic under the class of operations
$\A_{\text{single}}$, introduced in Def.~\ref{def:sing_res_constr}. This proposition is used in
Cor.~\ref{bank_equal_rel_ent}.
\begin{prop}
\label{monotonicity_passive}
Consider a multi-resource theory $\rt_{\text{multi}}$ with two resources, whose allowed operations
$\A_{\text{multi}}$ leave the sets $\f_1$ and $\f_2$ invariant. Suppose these invariant sets satisfy
the properties~\ref{item:F1}, \ref{item:F2}, \ref{item:F3}, and \ref{item:F5b}. Then, the relative entropy
distance from the subset of bank states $\f_{\text{bank}}\left(\bar{E}_{\f_1},\bar{E}_{\f_2}\right)$ is
monotonic under both the class of operations $\A_{\text{multi}}$ and the class $\A_{\text{single}}$
introduced in Def.~\ref{def:sing_res_constr}.
\end{prop}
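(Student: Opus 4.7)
The goal is to establish monotonicity for the two classes separately, since $\A_{\text{multi}} \subseteq \A_{\text{single}}$ does not hold in an obvious way (the latter enlarges the former by ancilla addition and partial tracing). The plan is to reduce both claims to the monotonicity of the quantum relative entropy under CPTP maps, combined with the structural results already established for the bank sets, namely Lem.~\ref{lem:inv_f3} (invariance of $\f_{\text{bank}}\left(\bar{E}_{\f_1},\bar{E}_{\f_2}\right)$ under $\A_{\text{multi}}$) and Prop.~\ref{additive_f3} (additivity $\f^{(n)}_{\text{bank}} = \f_{\text{bank}}^{\otimes n}$).

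For the $\A_{\text{multi}}$ part, I would follow the standard contractive-distance argument sketched in Eq.~\eqref{monotone_contract}: for any $\chn \in \A_{\text{multi}}$ and any $\sigma \in \f_{\text{bank}}\left(\bar{E}_{\f_1},\bar{E}_{\f_2}\right)$, monotonicity of the relative entropy gives $D(\chn(\rho)\|\chn(\sigma)) \leq D(\rho\|\sigma)$, and Lem.~\ref{lem:inv_f3} ensures $\chn(\sigma)$ is still in the bank subset, so the infimum over the target set can only shrink. Taking $\inf$ over $\sigma$ on the right then yields the required bound. This part is essentially a one-line consequence of Lem.~\ref{lem:inv_f3}.

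For the $\A_{\text{single}}$ part, I would decompose a general $\chn^{\text{(s)}}$ as in Eq.~\eqref{sin_res_map} into three sub-steps and control each one. Step (i): tensoring with $\rho_P^{\otimes n}$, $\rho_P \in \f_{\text{bank}}\left(\bar{E}_{\f_1},\bar{E}_{\f_2}\right)$, leaves the relative-entropy distance unchanged, i.e.
\begin{equation}
E_{\f^{(n+1)}_{\text{bank}}\left(\bar{E}_{\f_1},\bar{E}_{\f_2}\right)}(\rho \otimes \rho_P^{\otimes n}) = E_{\f_{\text{bank}}\left(\bar{E}_{\f_1},\bar{E}_{\f_2}\right)}(\rho).
\end{equation}
The inequality $\leq$ is immediate by taking a product ansatz $\sigma \otimes \rho_P^{\otimes n}$ and noting $D(\rho_P^{\otimes n}\|\rho_P^{\otimes n}) = 0$; the reverse inequality uses Prop.~\ref{additive_f3}, which tells us every element of $\f^{(n+1)}_{\text{bank}}$ factorises as a tensor product of states in $\f_{\text{bank}}$, so that the additivity of $D$ on product states turns the infimum into a sum of single-copy infima, each of which vanishes on $\rho_P$. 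Step (ii): the action of $\chn \in \A_{\text{multi}}$ on the joint system lies in the $\A_{\text{multi}}$ class of the larger Hilbert space, and $\f^{(n+1)}_{\text{bank}}$ is invariant under it (Lem.~\ref{lem:inv_f3} extended to $n+1$ copies), so the first part of the proposition applied at level $n+1$ gives a non-increase of $E_{\f^{(n+1)}_{\text{bank}}}$. Step (iii): the partial trace is CPTP, and for any $\sigma_1 \otimes \cdots \otimes \sigma_{n+1} \in \f^{(n+1)}_{\text{bank}}$, monotonicity of $D$ under the partial trace over $P^{(n)}$ bounds $D(\mathrm{Tr}_{P^{(n)}}[\tau] \| \sigma_1)$ by $D(\tau \| \sigma_1 \otimes \cdots \otimes \sigma_{n+1})$; taking the infimum over all $\sigma_i$'s on the right and then over $\sigma_1$ on the left yields $E_{\f_{\text{bank}}}(\mathrm{Tr}_{P^{(n)}}[\tau]) \leq E_{\f^{(n+1)}_{\text{bank}}}(\tau)$. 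Chaining (i)--(iii) delivers the claim.

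The only subtle point I anticipate is step (i), where one has to exploit Prop.~\ref{additive_f3} carefully: without the equality $\f^{(n+1)}_{\text{bank}} = \f_{\text{bank}}^{\otimes (n+1)}$, one could only show the easy inequality $\leq$, and the relative entropy distance could in principle be strictly smaller after the ancilla is appended (because one would be allowed to optimise over a larger set of correlated reference states). It is precisely property~\ref{item:F5b} on the underlying invariant sets $\f_1, \f_2$, which propagates to the bank sets via Prop.~\ref{additive_f3}, that rules this out and makes the chain of (in)equalities close. Steps (ii) and (iii) are then routine invocations of Lem.~\ref{lem:inv_f3} and the data processing inequality, respectively.
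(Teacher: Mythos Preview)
Your proposal is correct and follows essentially the same approach as the paper's proof: the paper likewise breaks the argument into (1) invariance under adding bank-state ancillas via Prop.~\ref{additive_f3}, (2) monotonicity under $\A_{\text{multi}}$ via Lem.~\ref{lem:inv_f3} plus data processing, and (3) monotonicity under the partial trace via Prop.~\ref{additive_f3} again. The only cosmetic difference is the order of presentation and that the paper shows the equality in step~(i) in one stroke rather than as two inequalities.
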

\begin{proof}
{\bf 1}. Here we show monotonicity of the relative entropy distance with respect to the addition
of an ancillary system described by $n \in \N$ copies of a bank states. Consider the state $\rho \in
\SH$, and the bank state $\rho_P \in \f_{\text{bank}}\left(\bar{E}_{\f_1},\bar{E}_{\f_2}\right)$.
Then, we have
\begin{align}
E_{\f_{\text{bank}}\left(\bar{E}_{\f_1},\bar{E}_{\f_2}\right)}(\rho \otimes \rho_P^{\otimes n})
&= \inf_{\sigma, \sigma_{P_1}, \ldots , \sigma_{P_n} \in \f_{\text{bank}}\left(\bar{E}_{\f_1},\bar{E}_{\f_2}\right)}
\re{\rho \otimes \rho_P^{\otimes n}}{\sigma \otimes \sigma_{P_1} \otimes \ldots \otimes \sigma_{P_n}} \nonumber \\
&= \inf_{\sigma \in \f_{\text{bank}}\left(\bar{E}_{\f_1},\bar{E}_{\f_2}\right)} \re{\rho}{\sigma}
+ \sum_{i=1}^n \inf_{\sigma_{P_i} \in \f_{\text{bank}}\left(\bar{E}_{\f_1},\bar{E}_{\f_2}\right)}
\re{\rho_P}{\sigma_{P_i}} \nonumber \\
&= \inf_{\sigma \in \f_{\text{bank}}\left(\bar{E}_{\f_1},\bar{E}_{\f_2}\right)}
\re{\rho}{\sigma} = E_{\f_{\text{bank}}\left(\bar{E}_{\f_1},\bar{E}_{\f_2}\right)}(\rho),
\end{align}
where the first equality follows from Prop.~\ref{additive_f3}, and the last one
from the fact that $\rho_P \in \f_{\text{bank}}\left(\bar{E}_{\f_1},\bar{E}_{\f_2}\right)$.
\par
{\bf 2}. Now we show monotonicity of the relative entropy distance with respect to the allowed operations
$\A_{\text{mulit}}$. Let us consider a state $\rho \in \SH$, together with an operation $\chn \in
\A_{\text{multi}}$. Then, we have that
\begin{align}
\label{monotonicity_eq}
E_{\f_{\text{bank}}\left(\bar{E}_{\f_1},\bar{E}_{\f_2}\right)}\big(\chn(\rho)\big) &=
\inf_{\sigma \in \f_{\text{bank}}\left(\bar{E}_{\f_1},\bar{E}_{\f_2}\right)}
\re{\chn(\rho)}{\sigma} \leq
\inf_{\sigma \in \f_{\text{bank}}\left(\bar{E}_{\f_1},\bar{E}_{\f_2}\right)}
\re{\chn(\rho)}{\chn(\sigma)} \nonumber \\
&\leq \inf_{\sigma \in \f_{\text{bank}}\left(\bar{E}_{\f_1},\bar{E}_{\f_2}\right)}
\re{\rho}{\sigma} = E_{\f_{\text{bank}}\left(\bar{E}_{\f_1},\bar{E}_{\f_2}\right)}(\rho),
\end{align}
where the first inequality follows from Lem.~\ref{lem:inv_f3}, and the second one from the
monotonicity of the relative entropy under CPTP maps. This result trivially extends to the case
in which we have multiple copies of the system, since in Lem.~\ref{lem:inv_f3} we have shown
that $\f^{(n)}_{\text{bank}}$ is invariant under the allowed operations $\A_{\text{multi}}^{(n)}$ for
all $n \in \N$.
\par
{\bf 3}. We show the monotonicity of the relative entropy with respect to partial tracing when
the ancillary system is composed by just one copy. However, the result straightforwardly extends
to the case in which the ancillary system is composed by $n \in \N$ copies. Let us consider the state
$\rho_{S_1,S_2} \in \SHn{2}$. Then, we have that
\begin{align}
E_{\f_{\text{bank}}\left(\bar{E}_{\f_1},\bar{E}_{\f_2}\right)}(\Tr{S_2}{\rho_{S_1,S_2}})
&= \inf_{\sigma_{S_1} \in \f_{\text{bank}}\left(\bar{E}_{\f_1},\bar{E}_{\f_2}\right)}
\re{\Tr{S_2}{\rho_{S_1,S_2}}}{\sigma_{S_1}} \nonumber \\
&= \inf_{\sigma_{S_1}, \sigma_{S_2} \in \f_{\text{bank}}\left(\bar{E}_{\f_1},\bar{E}_{\f_2}\right)}
\re{\Tr{S_2}{\rho_{S_1,S_2}}}{\Tr{S_2}{\sigma_{S_1} \otimes \sigma_{S_2}}} \nonumber \\
&\leq \inf_{\sigma_{S_1}, \sigma_{S_2} \in \f_{\text{bank}}\left(\bar{E}_{\f_1},\bar{E}_{\f_2}\right)}
\re{\rho_{S_1,S_2}}{\sigma_{S_1} \otimes \sigma_{S_2}} \nonumber \\
&=
E_{\f_{\text{bank}}\left(\bar{E}_{\f_1},\bar{E}_{\f_2}\right)}(\rho_{S_1,S_2}),
\end{align}
where the second equality follows from Prop.~\ref{additive_f3}, while the inequality follows from
the monotonicity of the relative entropy distance under CPTP maps.
\end{proof}
\bibliographystyle{unsrtnat}
\bibliography{./biblio_firstlaw}
\end{document}